\title{Concrete Branching Bisimilarity for Processes with Time-outs}
\author{Gaspard Reghem}{ENS Paris-Saclay, Université Paris-Saclay, France}{gaspard.reghem@ens-paris-saclay.fr}{}{}
\author{Rob J. van Glabbeek}{School of Informatics, University of Edinburgh, UK \and School of Computer Science and Engineering, University of New South Wales, Sydney, Australia \and \url{http://theory.stanford.edu/~rvg/}}{rvg@cs.stanford.edu}{https://orcid.org/0000-0003-4712-7423}{}
\authorrunning{G. Reghem and R.\,J. van Glabbeek}
\keywords{Reactive Systems, Time-outs, Branching Bisimilarity, Modal Characterisation, Congruence, Axiomatisation}
\begin{document}

\maketitle

\begin{abstract}
This paper provides an adaptation of branching bisimilarity to reactive systems with time-outs that does not enable eliding of time-out transitions. Multiple equivalent definitions are procured, along with a modal characterisation and a proof of its congruence property for a standard process algebra with recursion. The last section presents a complete axiomatisation for guarded processes without infinite sequences of unobservable actions.
\end{abstract}

\section{Introduction} \label{sec:intro}

\emph{Strong bisimilarity} \cite{Mi90ccs} is the default semantic equivalence on labelled transition systems (LTSs), modelling systems that move from state to state by performing discrete, uninterpreted actions. In \cite{strongreactivebisimilarity}, it has been generalised, under the name \emph{strong reactive bisimilarity},
to LTSs that feature, besides the hidden action $\tau$ \cite{Mi90ccs}, an unobservable \emph{time-out} action $\rt$ \cite{vG21}, modelling the end of a time-consuming activity from which we abstract. This addition significantly increases the expressiveness of the model \cite{vG23a,strongreactivebisimilarity}.

Applied to the verification of realistic distributed systems, strong bisimilarity is too fine an equivalence, especially because it does not cater to abstraction from internal activity. \emph{Branching bisimilarity} \cite{branching} is a variant that does abstract from internal activity, and lies at the basis of many verification toolsets \cite{BGKLNVWWW19,GLMS11}. The present paper, as well as \cite{RG24}, generalises branching bisimilarity to LTSs with time-outs, thereby combining the virtues of \cite{strongreactivebisimilarity} and \cite{branching}. The resulting notion of \emph{branching reactive bisimilarity}, proposed in \cite{RG24}, elides time-outs, in the sense that---using the process algebra notation to be formally introduced in Section~\ref{sec:process algebra}---the processes $a.\rt.b.0$ and $a.\rt.\rt.b.0$ (as well as $a.\rt.\tau.\rt.b.0$) are branching reactive bisimilar. Both require an unquantified positive but finite amount of rest between the actions $a$ and $b$. In this paper we propose a \emph{\tb reactive bisimilarity} that instead treats time-outs more like visible actions. We support this notion through a modal characterisation, congruence results for a standard process algebra with recursion, and a complete axiomatisation.

The addition of the time-out action $\rt$ aims at modelling the passage of time while staying in the realm of \emph{untimed} process algebra. Here, ``untimed'' means that our framework does not facilitate measuring time, even though it models whether a system can pause in some state or not. We assume that the execution of any action is instantaneous; thus, time elapses in states only. The amount of time spent in a state is dictated by the interaction of the system with an external entity called its \emph{environment}.

We call a system \emph{reactive} if it interacts with an environment able to allow or disallow visible actions. The environment represents a user or other systems, running in parallel, which has no control over $\tau$ or $\rt$ actions. If $X$ is the set of visible actions currently allowed by the environment and the system can perform any transition labelled by an element of $X \cup \{\tau\}$ then it will perform one of those transitions immediately. When a visible action is performed, it triggers the environment to choose a new set of allowed actions. If the environment is allowing $X$ and the system cannot perform any transition labelled by $\tau$ or any allowed action, then the system is said to be \emph{idling}. When the system idles, time-outs become executable, but the environment can also get impatient and choose a new $X$ before any time-out occurs. 
\advance\textheight 3pt

\advance\textheight -3pt
We have supposed that the environment cannot synchronise with the execution of a time-out, thus implying that, right after executing a time-out, the environment is still allowing the same set of allowed actions as before this execution. For example, the process $a.P + \rt.(a.Q + \tau.R)$ will never reach $Q$ because, for the time-out to happen, the environment has to block $a$ and so $a.Q + \tau.R$ can only be reached when the environment blocks $a$. In this case, the $\tau$-transition is always executed before the environment can allow $a$ again. 

Similarly, strong and [concrete] branching reactive bisimilarity satisfy the process algebraic law $\tau.P + \rt.Q = \tau.P$, essentially giving $\tau$ priority over $\rt$. Whereas this could have been formalised through an operational semantics in which the process $\tau.P + \rt.Q$ lacks an outgoing $\rt$-transition, here, and in \cite{strongreactivebisimilarity}, we derive an LTS for a standard process algebra with time-outs in a way that treats $\rt$ just like any other action. Instead, the priority of $\tau$ over $\rt$ is implemented in the reactive bisimilarity: its says that even though the transition
$\tau.P + \rt.Q \step\rt Q$ is present in our LTS, it will never be taken. This approach is not only simpler, it also generalises better to choices like $b.P + \rt.Q$, where the priority of $b$ over $\rt$ is conditional on the environment in which the system is placed, namely on whether or not this environment allows the $b$-action to occur.

From the system's perspective, the environment can be in two kinds of states: either allowing a specific set of actions, or being triggered to change. Our model does not stipulate how much time the environment takes to choose a new set of allowed actions once triggered, or even if it will ever make such a choice. Thus, the system could perform some transitions while the environment is triggered, especially those labelled $\tau$. In our view, the most natural way to see the environment is as another system executed in parallel, while enforcing synchronisation on all visible actions. This implies that the environment allows a set $X$ of actions when it idles in a state whose set of initial actions is $X$, and the environment is triggered when it is not idling, especially when it can perform a $\tau$-transition. In this paradigm, while the environment is triggered, any action can be allowed for a brief amount of time. However, there is no reason to believe that it will necessarily settle down on a specific set. For instance, this can happen if the environment reaches a \emph{divergence}: an infinite sequence of $\tau$-transitions.

In \cite{vG93}, seven (or nine) forms of branching bisimilarity are classified; they differ only in the treatment of divergence. In the present paper we are chiefly interested in divergence-free processes, on grounds that in the intuition of \cite{strongreactivebisimilarity} any sequence of $\tau$-transitions could be executed in time zero; yet we do wish to allow infinite sequences of $\rt$-transitions.
For divergence-free process all these forms of branching bisimilarity coincide.
Nevertheless, we do not formally exclude divergences, and in their presence our \tb reactive bisimilarity generalises the \emph{stability respecting branching bisimilarity} of \cite{vG93}, which differs from the default version from \cite{branching} through the presence of Clause 2.e of Definition~\ref{def:intuitive}. There does not exist a plausible reactive generalisation of the default version.

Section \ref{sec:brb} supplies the formal definition of \tb reactive bisimilarity as well as its rooted version, which will be shown to be its congruence closure. It also provides equivalent definitions that reduce our bisimilarity to a non-reactive one and illustrate that \tb reactive bisimilarity coincides with stability respecting branching bisimilarity in the absence of time-outs.

Section \ref{sec:modal} gives a modal characterisation of \tb reactive bisimilarity and its rooted version on an extension of the Hennessy-Milner logic. Section \ref{sec:congruence} introduces the process algebra $\ccsp$ along with an alternative characterisation of \tb reactive bisimilarity that will be used to prove that rooted \tb reactive bisimilarity is a full congruence for $\ccsp$.

Section \ref{sec:axiom} displays a complete axiomatisation of our bisimilarity on different fragments of $\ccsp$. Most completeness proofs rely on standard techniques like equation merging, but the very last one uses a relatively new method called ``canonical representatives''.

\section{Branching Reactive Bisimilarity}\label{sec:brb}

A \emph{labelled transition system} (LTS) is a triple $(\closed,Act,\rightarrow)$ with $\closed$ a set (of \emph{states} or \emph{processes}), $Act$ a set (of \emph{actions}) and ${\rightarrow}\in\closed\times Act\times\closed$. In this paper we consider LTSs with $Act:= A\uplus\{\tau,\rt\}$, where $A$ is a set of \emph{visible actions}, $\tau$ is the \emph{hidden or invisible action}, and $\rt$ the \emph{time-out action}. Let $A_\tau := A \cup \{\tau\}$. $P \step{\alpha} P'$ stands for $(P,\alpha,P') \in {\rightarrow}$ and these triplets are called \emph{transitions}. Moreover, $P \step{\opt{\alpha}} P'$ denotes that either $\alpha = \tau$ and $P = P'$, or $P \step{\alpha} P'$. Furthermore, \emph{paths} are sequences of connected transitions and $\pathtau$ is the reflexive-transitive closure of $\steptau$. The set of \emph{initial} actions of a process $P \in\closed$ is $\init{P}:=\{\alpha\in A_\tau \mid P{\step \alpha}\}$. Here $P{\step \alpha}$ means that there is a $Q$ with $P \step\alpha Q$.

\begin{definition}\rm \label{def:intuitive}
    A \emph{\tb reactive bisimulation} is a symmetric\footnote{meaning that $(P,Q)\in \R \Leftrightarrow (Q,P)\in \R$ and $(P,X,Q)\in \R \Leftrightarrow (Q,X,P)\in \R$} relation $\R \subseteq (\closed\times\closed) \cup (\closed\times\Pow(A)\times\closed)$ such that, for all $P,Q \in \closed$ and $X \subseteq A$,
    \begin{enumerate}
        \item if $\R(P,Q)$ then
        \begin{enumerate}
            \item if $P \step{\alpha} P'$ with $\alpha \in A_\tau$ then there is a path $Q \pathtau Q_1 \step{\opt{\alpha}} Q_2$ with $\R(P,Q_1)$ and $\R(P',Q_2)$,
            \item for all $Y \subseteq A$, $\R(P,Y,Q)$;
        \end{enumerate}
        \item if $\R(P,X,Q)$ then
        \begin{enumerate} 
            \item if $P \steptau P'$ then there is a path $Q \pathtau Q_1 \step{\opt{\tau}} Q_2$ with $\R(P,X,Q_1)$ and $\R(P',X,Q_2)$,
            \item if $P \step{a} P'$ with $a \in X$ then there is a path $Q \pathtau Q_1 \step{a} Q_2$ with $\R(P,X,Q_1)$ and $\R(P',Q_2)$,
            \item if $\deadend{P}{X}$ then there is a path $Q \pathtau Q_0$ with $\R(P,Q_0)$,
            \item if $\deadend{P}{X}$ and $P \step{\rt} P'$ then there is a path $Q \pathtau Q_1 \step{\rt} Q_2$ with $\R(P',X,Q_2)$,
            \item if $P \nsteptau$ then there is a path $Q \pathtau Q_0 \nsteptau$.
        \end{enumerate}
    \end{enumerate}
For $P,Q \mathbin\in \closed$, if there exists a \tb reactive bisimulation $\R$ with $\R(P,Q)$ (resp.\ $\R(P,X,Q)$) then $P$ and $Q$ are said to be \emph{\tb reactive bisimilar} (resp.\ \emph{\tb $X$-bisimilar}), which is denoted $P \bisimtbrc Q$ (resp.\ $P \bisimtbrc[X] Q$).
\end{definition}

\noindent
To build the above definition, the definition of a strong reactive bisimulation \cite{strongreactivebisimilarity} was modified in a branching manner \cite{branching}. Intuitively, a triplet $\R(P,X,Q)$ affirms that $P$ and $Q$ behave similarly when the environment allows (only) the set of actions in $X$ to occur, whereas a couple $\R(P,Q)$ says that $P$ and $Q$ behave in the same way when the environment has been triggered to change. As said before, the environment can be seen as a system executed in parallel while enforcing the synchronisation of all visible actions. 

Clause 1 captures the scenario of a triggered environment: if $P$ can perform a visible or invisible action then $Q$ has to be able to match it; and the environment can settle on a set $Y$ of allowed actions at any moment. Time-outs are not considered because these can occur only when the system idles, and idling can happen only when the environment has stabilised on a set of allowed actions. One might notice that, in \cite{strongreactivebisimilarity}, the first clause was only required for invisible actions. However, there the case $\alpha\neq\tau$ is actually implied by the other clauses. If in our definition Clause 1.a were restricted to invisible actions then $\bisimtbrc$ would not be a congruence for the parallel operator, as shown in Appendix \ref{app:examples}.

Clause 2 depicts the scenario of an environment allowing $X$. $\tau$-transitions have to be matched
since the environment cannot disallow them, and their execution does not trigger the environment to
change. Visible actions have to be matched only if they are allowed, and their execution triggers
the environment. Triggering the environment or not explains why Clause 2a matches $Q_2$ in a triplet
and Clause 2b in a couple. If $P$ idles (i.e.\ $\deadend{P}{X}$) then the environment can be
triggered, thus, $Q$ has to be able to instantaneously reach a state $Q_0$ related to $P$ in a
triggered environment.\footnote{By Lemma~\ref{lem:obvious}.4 we can even choose $Q_0$ such that $Q_0 \nsteptau$, so that $\init{Q_0}=\init{P}$.} If $P$ idles and has an outgoing time-out transition then $Q$ has to be able to match it in a branching manner. Unlike the branching reactive bisimulation, this bisimulation does not allow to elide time-out transitions, thus, the matching resembles the one with visible action except that the execution of a time-out transition does not trigger the environment.\footnote{It is not necessary to match $P$ and $Q_1$ as it is implied by other clauses (see Lemma \ref{lem:obvious}.1), nor to require that $\deadend{Q_1}{X}$ because it is also implied by the other clauses (see Lemma \ref{lem:obvious}.4).}
Lastly, a stability respecting clause \cite{vG93} was added for practical reasons. In Appendix \ref{app:examples}, an example shows that without it $\bisimtbrc$ would not even be an equivalence. For the important class of \emph{divergence-free} systems, without infinite sequences $Q_0 \steptau Q_{1} \steptau \dots$, Clause 2.e is easily seen to be redundant.

\begin{lemma}\label{lem:obvious}
  Let $\R$ be a \tb reactive bisimulation.
  \begin{enumerate}
  \item If $\R(P,X,Q)$, $P \nsteptau$ and $Q \pathtau Q'$ then also $\R(P,X,Q')$.
  \item If $\R(P,Q)$ or $\R(P,X,Q)$, $P \nsteptau$ and $Q \nsteptau$ then $\init{Q}=\init{P}$.
  \item If $\R(P,X,Q)$, $\deadend{P}{X}$ and $Q \nsteptau$ then $\R(P,Q)$.
  \item If $\R(P,X,Q)$ and $\deadend{P}{X}$ then there is a path $Q \pathtau Q_0$ with $\R(P,Q_0)$, $Q_0\nsteptau$ and $\init{Q_0}=\init{P}$.
  \end{enumerate}
\end{lemma}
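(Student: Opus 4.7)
The plan is to prove the four items in the order stated, since each one feeds into the next.

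For item~1, I would induct on the length of the path $Q \pathtau Q'$. The base case is immediate. For the inductive step, decompose the path as $Q \steptau Q_1 \pathtau Q'$. By symmetry of $\R$ together with Clause~2.a applied to the transition $Q \steptau Q_1$, one obtains a matching path $P \pathtau P_1 \step{\opt\tau} P_2$ with $\R(Q_1,X,P_2)$. Since $P \nsteptau$ the whole matching path collapses to $P_1 = P = P_2$, yielding $\R(P,X,Q_1)$ by symmetry, and the induction hypothesis closes the step.

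For item~2, consider the couple case $\R(P,Q)$ first: take $\alpha \in \init{P}$; since $P \nsteptau$, $\alpha \in A$. Clause~1.a provides $Q \pathtau Q_1 \step{\opt\alpha} Q_2$; $Q \nsteptau$ forces $Q_1 = Q$, and $\alpha \neq \tau$ turns $\step{\opt\alpha}$ into a genuine $\alpha$-transition, so $\alpha \in \init{Q}$. Symmetry gives the reverse inclusion. The triplet case $\R(P,X,Q)$ follows an analogous pattern, matching each $\alpha \in X \cap \init{P}$ via Clause~2.b and reaching the couple fragment through Clause~2.c to cover any residual action outside $X$. For item~3, Clause~2.c applied to $\R(P,X,Q)$ with $\deadend{P}{X}$ produces $Q \pathtau Q_0$ with $\R(P,Q_0)$; $Q \nsteptau$ collapses the path to $Q_0 = Q$, giving $\R(P,Q)$ directly.

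Item~4 chains the earlier items. Clause~2.c delivers $Q \pathtau Q'$ with $\R(P,Q')$; Clause~1.b promotes this to $\R(P,\emptyset,Q')$; Clause~2.e (with $P \nsteptau$, which follows from $\deadend{P}{X}$) extends to a further path $Q' \pathtau Q_0 \nsteptau$. Item~1 then upgrades to $\R(P,\emptyset,Q_0)$; item~3 (with $\deadend{P}{\emptyset}$ automatic, as $P \nsteptau$) produces the couple $\R(P,Q_0)$; and item~2 applied to this couple gives $\init{Q_0} = \init{P}$. The only delicate point I foresee is the triplet case of item~2, where actions outside $X$ are not directly matched by any clause of Definition~\ref{def:intuitive}; handling them requires a careful case-split that passes through the couple detour supplied by Clause~2.c.
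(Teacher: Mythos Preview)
Your arguments for items~1, 3 and~4 are correct and close to the paper's (your item~3 is in fact more direct than the paper's, which needlessly routes through Clause~2.e before invoking 2.c; your item~4 detours through $X=\emptyset$ whereas the paper stays with the given $X$ throughout, but both routes work).

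The gap is in the triplet case of item~2. Your proposed ``couple detour via Clause~2.c'' only fires when $\deadend{P}{X}$, so it is unavailable precisely in the case you flagged as delicate --- namely when $\init{P}$ contains both some $a\in X$ and some $b\notin X$. In that situation Clause~2.b matches the $a$, but nothing in Definition~\ref{def:intuitive} forces $Q$ to exhibit the $b$. In fact the triplet half of item~2 is \emph{false} as stated: take $P=a.0+b.0$, $Q=a.0$, $X=\{a\}$; the symmetric closure of $\{(P,X,Q),(0,0)\}\cup\{(0,Y,0)\mid Y\subseteq A\}$ is readily checked to satisfy every clause of Definition~\ref{def:intuitive}, yet $\init{P}\neq\init{Q}$. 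The paper's own one-line proof (``direct consequence of \dots\ 2.b'') shares this oversight. Fortunately, every downstream use of item~2 --- in item~4, and elsewhere in the paper --- applies it to a \emph{couple} $\R(P,Q_0)$ already obtained via item~3 (or to a triplet with $\deadend{P}{X}$, which reduces to the couple case), so the slip is harmless; the clean fix is to drop the triplet disjunct from item~2, or to add the hypothesis $\deadend{P}{X}$ there.
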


\begin{proof}
\begin{enumerate}
    \item This is an immediate consequence of the symmetric counterpart of Clause 2.a (where $Q$ takes a $\tau$-step).
    When that clause yields $P \pathtau P_1 \step{\opt{\tau}} P_2$ we have $P_2=P$.
    \item This is a direct consequence of Clause 1.a or 2.b and its symmetric counterpart.
    \item By Clause 2.e there is path $Q \pathtau Q_0$ with $Q_0 \nsteptau$. By Claim 1 of this lemma, $\R(P,X,Q_0)$. Thus, by Clause 2.c there is a path $Q_0 \mathbin{\pathtau} Q_1$ with $\R(\hspace{-1pt}P,\hspace{-1pt}Q_1\hspace{-1pt})$, but $Q_1\mathop{=}Q_0\mathop{=}Q$ since $Q \!\nsteptau$.
    \item By Clause 2.e there is path $Q \pathtau Q_0$ with $Q_0 \nsteptau$. By Claim 1 of this lemma, $\R(P,X,Q_0)$. That $\init{Q_0}=\init{P}$ and $\R(P,Q_0)$ follows by Claims 2 and 3 of this lemma.
\popQED
\end{enumerate}
\end{proof}

\noindent
In \cite{branching}, branching bisimilarity is expressed in multiple equivalent ways. For practical purposes, our definition uses the semi-branching format, which is equivalent to the branching format thanks to the following lemma.

\begin{lemma}[Stuttering Lemma] \label{lem:stuttering}
    Let $P, P^\dag, P^\ddag, Q \in \closed$, if $P \bisimtbrc Q$, $P^\ddag \bisimtbrc Q$ (resp.\ $P \bisimtbrc[X] Q$, $P^\ddag \bisimtbrc[X] Q$) and $P \steptau P^\dag \steptau P^\ddag$ then $P^\dag \bisimtbrc Q$ (resp.\ $P^\dag \bisimtbrc[X] Q$).
\end{lemma}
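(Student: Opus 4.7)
The plan is to follow the standard van Glabbeek--Weijland construction: extend $\bisimtbrc$ (and its triple refinement) with a single stuttering witness and verify the result is a \tb reactive bisimulation. For the pair version of the lemma, set
\[ \R := \bisimtbrc \cup \bigcup_{Y\subseteq A}\bisimtbrc[Y] \cup \{(P^\dag,Q),(Q,P^\dag)\} \cup \bigcup_{Y\subseteq A}\{(P^\dag,Y,Q),(Q,Y,P^\dag)\}, \]
noting that by Clause 1.b the hypotheses $P \bisimtbrc Q$ and $P^\ddag \bisimtbrc Q$ promote to $P \bisimtbrc[Y] Q$ and $P^\ddag \bisimtbrc[Y] Q$ for every $Y$. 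For the triple version of the lemma only $(P^\dag,X,Q)$ and its symmetric counterpart need to be added to $\bisimtbrc \cup \bisimtbrc[X]$. Once $\R$ is shown to be a \tb reactive bisimulation, the claim follows.

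The verification splits by clause. Elements inherited from $\bisimtbrc$ or $\bisimtbrc[Y]$ pose no problem. Clause 1.b for the new pair is satisfied by construction. The main case is Clauses 1.a, 2.a, 2.b applied to a transition $P^\dag \step{\alpha} P^\dag_1$. If this transition is the prescribed step $P^\dag \steptau P^\ddag$, take the trivial match with $Q_1 = Q_2 = Q$ and $\opt{\tau}$ read as identity, relying on $\R(P^\dag, Q)$ from the construction and $\R(P^\ddag, Q)$ from the hypothesis. Otherwise, consider the combined $P$-path $P \pathtau P^\dag \step{\alpha} P^\dag_1$ and invoke the ``computation matching'' property of semi-branching bisimulations for $P \bisimtbrc Q$ (resp.\ $P \bisimtbrc[Y] Q$ or $P \bisimtbrc[X] Q$): this yields a response $Q \pathtau Q_1 \step{\opt\alpha} Q_2$ with $P^\dag \bisimtbrc Q_1$ and $P^\dag_1 \bisimtbrc Q_2$ (or the appropriate triple-to-pair mixture when $\alpha$ is a visible action consumed from the triple), so both endpoints land inside $\R$. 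The symmetric direction, matching $Q \step{\alpha} Q'$ from $P^\dag$, is handled by using $P^\ddag \bisimtbrc Q$ to obtain a match $P^\ddag \pathtau P_1 \step{\opt\alpha} P_2$, then prepending the single step $P^\dag \steptau P^\ddag$ to yield a valid path $P^\dag \pathtau P_1 \step{\opt\alpha} P_2$; the endpoint relations to $Q$ and $Q'$ again lie in $\bisimtbrc \subseteq \R$.

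Clauses 2.c, 2.d and 2.e are vacuous for the added triples because $P^\dag \steptau P^\ddag$ gives $P^\dag \steptau$, which precludes both $\deadend{P^\dag}{X}$ and $P^\dag \nsteptau$. The main obstacle is the ``computation matching'' property used above; it is a standard fact about semi-branching bisimulations but is not isolated as a separate lemma in the excerpt, so in a full write-up I would justify it by a short induction on the length of the $\tau$-path $P \pathtau P^\dag$, iterating Clause 1.a (resp.\ 2.a) and re-using the bisimilar intermediate state reached in $Q$ at each step before applying the final matching clause to the last transition $P^\dag \step{\alpha} P^\dag_1$.
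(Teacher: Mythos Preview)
Your approach is the same stuttering-closure idea the paper uses, specialised to the maximal bisimulation $\bisimtbrc \cup \bigcup_Y \bisimtbrc[Y]$ rather than an arbitrary $\R$, and it works in outline. There are, however, two slips.

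First, for the triple version you propose adding $(P^\dag,X,Q)$ to $\bisimtbrc \cup \bisimtbrc[X]$. That base is not itself a \tb reactive bisimulation: Clause~1.b requires that every pair $(P',Q')\in\bisimtbrc$ be accompanied by $(P',Y,Q')$ for \emph{all} $Y$, so you must include $\bigcup_{Y} \bisimtbrc[Y]$, exactly as you did for the pair version. Second, and more substantively, your claim that Clauses~2.c--2.e are vacuous is only half right. They are vacuous for the added triple $(P^\dag,Y,Q)$ because $P^\dag\steptau$, but by symmetry you must also discharge them for $(Q,Y,P^\dag)$, where the hypotheses $\deadend{Q}{Y}$ or $Q\nsteptau$ are perfectly possible. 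These cases are handled by the same ``prepend $P^\dag\steptau P^\ddag$'' trick you already used for the symmetric direction of 1.a/2.a/2.b: from $P^\ddag \bisimtbrc[Y] Q$ (or $\bisimtbrc$) obtain the required path out of $P^\ddag$, then extend it back to $P^\dag$. Once you patch these two points the argument goes through; the paper's version avoids the first issue by starting from an arbitrary bisimulation $\R$ (which already satisfies 1.b) and avoids the second only by explicitly writing out both sides of the added pairs/triples in its case analysis.
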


\begin{proof}
    Let $\R$ be a \tb reactive bisimulation. Let's define $\R' := \R \cup \{(P^\dag,Q),(Q,P^\dag) \mid \exists P,P^\ddag \in \closed, P \pathtau P^\dag \pathtau P^\ddag \wedge \R(P,Q) \wedge \R(P^\ddag,Q)\} \cup \{(P^\dag,X,Q),\linebreak[3](Q,X,P^\dag) \mid \exists P,P^\ddag \mathbin\in \closed, P \pathtau P^\dag \pathtau P^\ddag \wedge \R(P,X,Q) \wedge \R(P^\ddag,X,Q)\}$. $\R'$ is symmetric by definition and $\R'$ is a \tb reactive bisimulation, as proven in Appendix \ref{app:intro}.
\end{proof}

\begin{proposition} \label{prop:equivalence}
    $\bisimtbrc$ and $(\bisimtbrc[X])_{X \subseteq A}$ are equivalence relations.
\end{proposition}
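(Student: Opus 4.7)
The plan is to establish reflexivity, symmetry and transitivity separately. For \textbf{reflexivity}, I would take the candidate relation
\[
\R_{\mathrm{id}} := \{(P,P)\mid P\in\closed\}\cup\{(P,X,P)\mid P\in\closed,\ X\subseteq A\},
\]
which is visibly symmetric, and verify the five matching clauses mechanically: any transition of $P$ is matched by the same transition on the right (with the empty $\tau$-path in front), and Clauses 2.c--e hold with $Q_0=Q_1=Q_2=P$. This is entirely routine. For \textbf{symmetry}, nothing to do: any witnessing relation $\R$ is by fiat symmetric, so $P\bisimtbrc Q$ (resp.\ $P\bisimtbrc[X] Q$) immediately yields $Q\bisimtbrc P$ (resp.\ $Q\bisimtbrc[X] P$).

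The main work is \textbf{transitivity}. Given witnesses $\R_1$ for $P\bisimtbrc Q$ and $\R_2$ for $Q\bisimtbrc R$ (resp.\ the $X$-indexed versions), I propose the candidate
\[
\R := \R_1\,;\,\R_2\ \cup\ (\R_1\,;\,\R_2)^{-1},
\]
where $;$ denotes componentwise relational composition over the middle process, applied both to couples and to triplets (matched on the same $X$). Symmetry of $\R$ is built in by taking the union with the inverse. I would then verify the five clauses by the usual ``diagram chase'': for each move of $P$, first apply the relevant clause of $\R_1$ to produce a path $Q\pathtau Q_1\step{\opt{\alpha}} Q_2$ with appropriate $\R_1$-relationships, and then push each step of that path through $\R_2$ one at a time, concatenating the resulting paths on the $R$-side.

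The subtle point, and the main obstacle, is that pushing the $\tau$-segment $Q\pathtau Q_1$ through $\R_2$ produces intermediate states on the $R$-side that must be shown to lie in $\R_1\,;\,\R_2$ with the appropriate intermediate $Q$. Here the \textbf{Stuttering Lemma} (Lemma~\ref{lem:stuttering}) does exactly the needed work: along a $\tau$-path $Q\steptau Q'\steptau Q''$ between endpoints that are both $\bisimtbrc$- (resp.\ $\bisimtbrc[X]$-)related to the same $R$-state, every intermediate $Q'$ is too, so the composition can be formed at each intermediate point. This is also what justifies our use of the semi-branching format and is why the Stuttering Lemma was established in advance. Clauses 2.c and 2.d are handled similarly by first using Lemma~\ref{lem:obvious}.4 to make the $R_1$-match stable, then invoking the corresponding clause of $\R_2$; Clause 2.e (stability-respecting) follows by chaining the two applications of 2.e and again using the Stuttering Lemma to close under intermediate $\tau$-steps. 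Once these verifications go through, $\R$ is a \tb reactive bisimulation relating $P$ and $R$ (resp.\ $P$, $X$ and $R$), completing transitivity.
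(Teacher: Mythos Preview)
Your approach is essentially the paper's: the candidate relation $(\R_1\mathbin{;}\R_2)\cup(\R_1\mathbin{;}\R_2)^{-1}$ is exactly the paper's $(\R_1\circ\R_2)\cup(\R_2\circ\R_1)$ (using symmetry of $\R_1,\R_2$), and the verification is a direct diagram chase.

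The one point worth correcting is your invocation of the Stuttering Lemma. It is not needed, and the paper does not use it. In the semi-branching format, pushing the $\tau$-segment $Q\pathtau Q_1$ through $\R_2$ step by step already yields a state $R_0$ with $\R_2(Q_1,R_0)$ at the end; then the single step $Q_1\step{\opt{\alpha}}Q_2$ gives $R_0\pathtau R_1\step{\opt{\alpha}}R_2$ with $\R_2(Q_1,R_1)$ and $\R_2(Q_2,R_2)$. Since we already have $\R_1(P,Q_1)$ and $\R_1(P',Q_2)$ from the first half, the composed pairs $(P,R_1)$ and $(P',R_2)$ fall out directly---no intermediate $R$-states need to be related to anything. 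Your worry that ``intermediate states on the $R$-side must be shown to lie in $\R_1\mathbin{;}\R_2$'' is a misreading of what Clause~1.a demands: only the last state before the $\step{\opt{\alpha}}$ step and the target of that step must be related. The Stuttering Lemma is what makes the semi-branching and branching formats \emph{equivalent}; once you are working in the semi-branching format (as here), transitivity is a plain composition argument. Clauses 2.c--2.e go through the same way, using Lemma~\ref{lem:obvious} where you indicated.
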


\begin{proof}
    Reflexivity and symmetry are trivial following the definition. For transitivity, consider two \tb reactive bisimulations $\R_1$ and $\R_2$. Let's define $\R := (\R_1 \circ \R_2) \cup (\R_2 \circ \R_1)$. Here $\R_1 \circ \R_2 := \{(P,Q) \mid \exists R.~\R(P,R) \wedge \R(R,Q)\} \cup \{(P,X,Q) \mid \exists R.~\R(P,X,R) \wedge \R(R,X,Q)\}$.\linebreak[3] $\R$ is symmetric by definition and $\R$ is a \tb reactive bisimulation, as proven in Appendix \ref{app:intro}.
\end{proof}

\subsection{Rooted Version}

A well-known limitation of branching bisimilarity $\bisimb$ is that it fails to be a congruence for the choice operator $+$. For example, $a \bisimb \tau.a$ but $a + b \,\not\!\bisimb \tau.a + b$. Since the objective is to define a congruence, instead of $\bisimtbrc$ we use the \emph{congruence closure} of $\bisimtbrc$\,, which is the coarsest congruence included in $\bisimtbrc$\,. 

\begin{definition}\rm\label{def:rooted intuitive}
    A \emph{rooted \tb reactive bisimulation} is a symmetric relation $\R \subseteq (\closed\times\closed)\cup(\closed\times\Pow(A)\times\closed)$ such that, for all $P,Q \in \closed$ and $X \subseteq A$,
    \begin{enumerate}
        \item if $\R(P,Q)$
        \begin{enumerate}
            \item if $P \step\alpha P'$ with $\alpha \in A_\tau$ then there is a transition $Q \step\alpha Q'$ with $P' \bisimtbrc Q'$,
            \item for all $Y \subseteq A$, $\R(P,Y,Q)$;
        \end{enumerate}
        \item if $\R(P,X,Q)$
        \begin{enumerate}
            \item if $P \steptau P'$ then there is a transition $Q \steptau Q'$ with $P' \bisimtbrc[X] Q'$,
            \item if $P \step{a} P'$ with $a \in X$ then there is a transition $Q \step{a} Q'$ with $P' \bisimtbrc Q'$,
            \item if $\deadend{P}{X}$ then $\R(P,Q)$,
            \item if $\deadend{P}{X}$ and $P \step{\rt} P'$ then there is a transition $Q \step{\rt} Q'$ with $P' \bisimtbrc[X] Q'$.\vspace{3pt}
        \end{enumerate}
    \end{enumerate}
For $P,Q \in \closed$, if there exists a rooted \tb reactive bisimulation $\R$ with $\R(P,Q)$ (resp.\ $\R(P,X,Q)$) then $P$ and $Q$ are said to be \emph{rooted \tb reactive bisimilar} (resp.\ rooted \tb $X$-bisimilar), which is denoted $P \bisimrtbrc Q$ (resp.\ $P \bisimrtbrc[X] Q$).
\end{definition}
A rooted version of a bisimulation consists in enforcing a stricter matching on the first transition of a system. In the branching case, the first transition is matched in the strong manner. The stability respecting clause can be removed, as it is now implied by the other clauses. Rooting the bisimilarity is the standard technique to obtain its congruence closure; later $\bisimrtbrc$ will be proven to be a congruence. As any \tb reactive bisimulation relating $P+b$ and $Q+b$, for a fresh action $b$, induces a rooted \tb reactive bisimulation relating $P$ and $Q$, it then follows that $\bisimrtbrc$ is the coarsest congruence included in $\bisimtbrc$. Since $\bisimtbrc$ is an equivalence, the proof of Proposition \ref{prop:equivalence} can be adapted to $\bisimrtbrc$ in a straightforward way.

\begin{proposition} \label{prop:rooted equivalence}
    $\bisimrtbrc$ and $(\bisimrtbrc[X])_{X \subseteq A}$ are equivalence relations.
\end{proposition}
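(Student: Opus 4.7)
The plan is to mirror the proof of Proposition~\ref{prop:equivalence}, exploiting that $\bisimtbrc$ and each $\bisimtbrc[X]$ are already known to be equivalences. Reflexivity is witnessed by the identity relation $\{(P,P) \mid P \in \closed\} \cup \{(P,X,P) \mid P \in \closed,\ X \subseteq A\}$, which trivially satisfies every clause of Definition~\ref{def:rooted intuitive}; symmetry holds by construction.

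For transitivity, given rooted \tb reactive bisimulations $\R_1, \R_2$, I would set $\R := (\R_1 \circ \R_2) \cup (\R_2 \circ \R_1)$, with composition defined component-wise on pairs and on triples (joining triples only when their set $X$ agrees). I would then verify each clause of Definition~\ref{def:rooted intuitive} for a generic $\R(P,Q)$ or $\R(P,X,Q)$ obtained via a middle process $R$. For Clauses 1.a, 2.a, 2.b and 2.d the transitions supplied by $\R_1$ and $\R_2$ chain together in the obvious way, and the targets are related by $\bisimtbrc$ or $\bisimtbrc[X]$ via transitivity from Proposition~\ref{prop:equivalence}. Clause 1.b is immediate from the corresponding clause of each $\R_i$.

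The one step that requires a small auxiliary observation is Clause 2.c. Suppose $\R(P,X,Q)$ arises from $\R_1(P,X,R)$ and $\R_2(R,X,Q)$ together with $\deadend{P}{X}$. Clause 2.c for $\R_1$ yields $\R_1(P,R)$, but to conclude $\R(P,Q) \in \R_1 \circ \R_2$ we additionally need $\R_2(R,Q)$, and for this we must know $\deadend{R}{X}$. I would establish this by a short symmetry argument: if $R \steptau R'$ or $R \step{a} R'$ with $a \in X$, then applying Clause 2.a or 2.b to the symmetric triple $\R_1(R,X,P)$ would produce a matching transition from $P$, contradicting $\deadend{P}{X}$. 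Hence $\deadend{R}{X}$, after which Clause 2.c for $\R_2$ gives $\R_2(R,Q)$, and the same derivation handles the hypothesis of Clause 2.d as well.

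The only (minor) obstacle is noticing that $\deadend{\cdot}{X}$ propagates across rooted \tb $X$-bisimilarity; once this is in place, the rest is a routine adaptation of the argument for Proposition~\ref{prop:equivalence}, and the same construction applied to the restrictions of $\R_1,\R_2$ to triples with a fixed $X$ also delivers transitivity of $\bisimrtbrc[X]$ for each $X \subseteq A$.
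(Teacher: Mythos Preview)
Your argument is correct and matches the paper's intended approach: the paper simply states that the proof of Proposition~\ref{prop:equivalence} adapts straightforwardly once $\bisimtbrc$ is known to be an equivalence, and you have spelled out exactly that adaptation, including the one non-trivial step (propagating $\deadend{\cdot}{X}$ to the intermediate process~$R$ via Clauses~2.a/2.b applied symmetrically).

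One small wrinkle: your final sentence about ``restricting $\R_1,\R_2$ to triples with a fixed $X$'' does not quite work as stated, since such a restriction is no longer a rooted \tb reactive bisimulation (Clause~2.c needs pairs, and Clause~1.b then needs triples for every~$Y$). But no restriction is needed: the full composed relation $\R$ you already built contains the triple $(P,X,Q)$ whenever $\R_1(P,X,R)$ and $\R_2(R,X,Q)$, so transitivity of each $\bisimrtbrc[X]$ falls out directly.
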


\subsection{Alternative Forms of Definition~\ref{def:intuitive}}

Definition~\ref{def:intuitive} can be rephrased in various ways. First of all, using Requirements 1.b and 2.c, one can move Requirement 2.d from Clause 2 (dealing with triples $(P,X,Q)$) to Clause~1 (dealing with pairs $(P,Q)$), now adding a universal quantifier over $X$ to the requirement. Next, Requirement 2.e can be copied under Clause~1. This makes Clause 1.b unnecessary, thereby obtaining a definition in which the triples $(P,X,Q)$ are encountered only after taking a $\rt$-transition. In this form it is obvious that \tb reactive bisimilarity reduces to the classical stability respecting branching bisimilarity for systems without $\rt$-transitions. We have chosen the form of Definition~\ref{def:intuitive} over the above alternatives, because we believe it comes with more natural intuitions for its plausibility.

In Appendix~\ref{app:gbrb} a further modification of Definitions~\ref{def:intuitive} and~\ref{def:rooted intuitive} is proposed, called \emph{generalised [rooted] \tb reactive bisimulation}. We show that each [rooted] \tb reactive bisimulation is a generalised [rooted] \tb reactive bisimulation, and two systems are [rooted] \tb reactive bisimilar iff they are related by a generalised [rooted] \tb reactive bisimulation.
This characterisation of $\bisimtbrc$ and $\bisimrtbrc$ will be used in the proofs of Theorem~\ref{thm:modal characterisation} and Proposition~\ref{prop:time-out bisim}.

In \cite{Pohlmann}, Pohlmann introduces an encoding which maps strong reactive bisimilarity to strong bisimilarity where time-outs are considered as any visible action. This encoding in essence places a given process in a most general environment, one that features environment time-out actions $\rt_\varepsilon$, as well as actions $\varepsilon_X$ for settling in a state that allows exactly the actions in $X$. This proves that reactive equivalences can be expressed as non-reactive ones at the cost of increasing the processes' size. Thus, any tool set able to work on strong bisimulation could theoretically deal with its reactive counterpart. 

In Appendix \ref{app:Pohlmann}, this encoding is slightly modified to yield a similar result for \tb reactive bisimulation and its rooted version, for the latter result also employing actions $\rt_X$. It appears that these modifications do not impact its effect on strong reactive bisimilarity. This encoding maps our bisimilarity to the traditional stability respecting branching bisimilarity and $\bisimrtbrc$ to rooted stability respecting branching bisimilarity \cite{vG93,modalstab}.

\begin{definition}\rm \label{def:non-reactive}
    A \emph{stability respecting branching bisimulation} is a symmetric relation $\R \subseteq \closed\times\closed$ such that, for all $P,Q \in \closed$, if $\R(P,Q)$ then
    \begin{enumerate}
        \item if $P \step{\alpha} P'$ with $\alpha \in Act \cup\{\rt_\epsilon,\epsilon_X \mid X \subseteq A\}$ then there is a path $Q \pathtau Q_1 \step{\opt{\alpha}} Q_2$ with $\R(P,Q_1)$ and $\R(P',Q_2)$,
        \item if $P \nsteptau$ then there is a path $Q \pathtau Q_0 \nsteptau$.
\vspace{3pt}

\end{enumerate}
For $P,Q \in \closed$, if there exists a stability respecting branching bisimulation $\R$ with $\R(P,Q)$ then $P$ and $Q$ are said to be \emph{stability respecting branching bisimilar}, which is denoted $P \bisimb Q$.
\end{definition}

\begin{definition}\rm \label{def:rooted non-reactive}
    A \emph{rooted stability respecting branching bisimulation} is a symmetric relation $\R \subseteq \closed\times\closed$ such that, for all $P,Q \in \closed$, if $\R(P,Q)$ then
    \begin{enumerate}
        \item if $P \mathord{\step{\alpha}} P'$ with $\alpha \mathord\in Act \mathop\cup\{\rt_\epsilon,\epsilon_X \mathbin{\mid} X \mathord\subseteq A\}$ then there is a transition $Q \mathop{\step{\alpha}} Q'$ with $P' \bisimb Q'\!$.\vspace{3pt}
    \end{enumerate}
For $P,Q \in \closed$, if there exists a rooted stability respecting branching bisimulation $\R$ with $\R(P,Q)$ then $P$ and $Q$ are said to be \emph{rooted stability respecting branching bisimilar}, which is denoted $P \bisimrb Q$.
\end{definition}

\section{Modal Characterisation} \label{sec:modal}

The Hennessy-Milner logic \cite{HM85} expresses properties of the behaviour of processes in an LTS\@. In \cite{strongreactivebisimilarity}, the modality $\langle X\rangle\varphi$ was added to obtain a modal characterisation of strong reactive bisimilarity ($\rbis{}{r}$). 

\begin{definition}\rm \label{def:Hennessy-Milner}
    The class $\logic$ of \emph{reactive Hennessy-Milner formulas} is defined as follows, where $I$ is an index set, $\alpha \in A_\tau$, $a \in A$ and $X \subseteq A$,
    \begin{center}
        $\varphi ::= \top \mid \bigwedge\limits_{i \in I} \varphi_i \mid \neg\varphi \mid \langle\alpha\rangle\varphi \mid \langle X\rangle\varphi$
    \end{center}
    \vspace{-6pt}
\pagebreak[3]
\end{definition}

\begin{table}[ht]
\begin{center}
    \begin{tabular}{l c l l c l}
        $P \models \top$ &&& $P \models_Y \top$ \\
        $P \models \bigwedge_{i\in I}\varphi_i$ & iff & $\forall i \in I, P \models \varphi_i$ & $P \models_Y \bigwedge_{i\in I}\varphi_i$ & iff & $\forall i \in I, P \models_Y \varphi_i$ \\
        $P \models \neg\varphi$ & iff & $P \not\models \varphi$ & $P \models_Y \neg\varphi$ & iff & $P \not\models_Y \varphi$ \\
        $P \models \langle\alpha\rangle\varphi$ & iff & $\exists P \step{\alpha} P', P' \models \varphi$ & $P \models_Y \langle\tau\rangle\varphi$ & iff & $\exists P \step{\tau} P', P' \models_Y \varphi$ \\
        $P \models_Y \langle a\rangle\varphi$ & iff & \multicolumn{4}{l}{$(a \in Y \vee \deadend{P}{Y}) \wedge \exists P \step{a} P', P' \models \varphi$} \\
        $P \models \langle X\rangle\varphi$ & iff & \multicolumn{4}{l}{$\deadend{P}{X} \wedge \exists P \step{\rt} P', P' \models_X \varphi$} \\
        $P \models_Y \langle X\rangle\varphi$ & iff & \multicolumn{4}{l}{$\deadend{P}{X \cup Y} \wedge \exists P \step{\rt} P', P' \models_X \varphi$} \\
    \end{tabular}
\end{center}
    \caption{Semantics of $\models$ and $(\models_Y)_{Y \subseteq A}$}
    \label{tab:formulas semantics}
\vspace{-2.5ex}
\end{table}

\noindent
The satisfaction rules of $\logic$ are given in Table \ref{tab:formulas semantics}. $P \models \varphi$ means that $P$ satisfies $\varphi$ when the environment is triggered, and $P \models_Y \varphi$ indicates that $P$ satisfies $\varphi$ when the environment allows $Y$. The modality $\langle X\rangle\varphi$ expresses that a process can idle in its current state during a period in which the environment allows the actions in $X$ and from which it can perform a time-out transition to a state which satisfies $\varphi$ while the environment keeps allowing $X$. The definition above captures that $P \models_Y \varphi$ whenever $\init{P}\cap (Y\cup\{\tau\}) = \emptyset$ and $P \models \varphi$. This is because the environment may choose to change during a period of idling.
The modal characterisation theorem of \cite{strongreactivebisimilarity} says \( P \rbis{}{r} Q ~~\Leftrightarrow~~ \forall \varphi\in\logic.~(P \models \varphi \Leftrightarrow Q \models \varphi)\;. \)

To obtain a modal characterisation of [rooted] \tb relative bisimilarity, we need a few other derived modalities. First of all, $\langle\epsilon\rangle\varphi := \bigvee_{i \in \nat}\langle\tau\rangle^i\varphi$. To lessen the notations, for all $\alpha \in A_\tau$, $\langle\hat{\alpha}\rangle\varphi$ denotes $\varphi \vee \langle\tau\rangle\varphi$ if $\alpha = \tau$, $\langle\alpha\rangle\varphi$ otherwise. Moreover, $\varphi \wedge \langle\hat{\alpha}\rangle\varphi'$ is shortened to $\varphi\langle\hat{\alpha}\rangle\varphi'$.
The satisfaction rules of these new modalities can be derived from the basic ones: see Table \ref{tab:operator semantics}.

\begin{table}[h]
\begin{center}
    \begin{tabular}{l c l l c l}
        $P \models \langle\hat{\alpha}\rangle\varphi$ & iff & $\exists P \step{\opt{\alpha}} P', P' \models \varphi$ & $P \models_Y \langle\hat{\tau}\rangle\varphi$ & iff & $\exists P \step{\opt{\tau}} P', P' \models_Y \varphi$ \\
        $P \models \langle\epsilon\rangle\varphi$ & iff & $\exists P \pathtau P', P' \models \varphi$ & $P \models_Y \langle\epsilon\rangle\varphi$ & iff & $\exists P \pathtau P', P' \models_Y \varphi$ \\
    \end{tabular}
\end{center}
\caption{Semantics of $\models$ and $(\models_Y)_{Y \subseteq A}$ for the derived modalities}
\label{tab:operator semantics}
\vspace{-2ex}
\end{table}

\begin{definition}\rm \label{def:subclass}
    The sub-classes $\logic_b^c$ and $\logic_b^{cr}$ are defined as follows, where $I$ is an index set, $\alpha \in A_\tau$, $X \subseteq A$, $\varphi, \varphi' \in \logic_b^c$ and $\psi \in \logic_b^{cr}$,
    \begin{align*}
        \varphi &::= \top \mid \bigwedge_{i \in I} \varphi_i \mid \neg\varphi \mid \langle\epsilon\rangle(\varphi\langle\hat{\alpha}\rangle\varphi') \mid \langle\epsilon\rangle\langle X\rangle\varphi' \mid \langle\epsilon\rangle\neg\langle\tau\rangle\top \tag{$\logic_b^c$} \\
        \psi &::= \top \mid \bigwedge_{i \in I} \psi_i \mid \neg\psi \mid \langle\alpha\rangle\varphi \mid \langle X\rangle\varphi \tag{$\logic_b^{cr}$}
    \end{align*}
\end{definition}

\noindent
The last option for $\logic_b^c$, inspired by \cite{modalstab}, is used to encompass the stability respecting Clause 2.e of Definition~\ref{def:intuitive}.

\begin{theorem} \label{thm:modal characterisation}
    Let $P,Q \in \closed$. For all $X \subseteq A$,
    \begin{itemize}
        \item $P \bisimtbrc Q$ iff $\forall \varphi \in \logic_b^c, P \models \varphi \Leftrightarrow Q \models \varphi$,
        \item $P \bisimtbrc[X] Q$ iff $\forall \varphi \in \logic_b^c, P \models_X \varphi \Leftrightarrow Q \models_X \varphi$,
        \item $P \bisimrtbrc Q$ iff $\forall \psi \in \logic_b^{cr}, P \models \psi \Leftrightarrow Q \models \psi$,
        \item $P \bisimrtbrc[X] Q$ iff $\forall \psi \in \logic_b^{cr}, P \models_X \psi \Leftrightarrow Q \models_X \psi$.
    \end{itemize}
\end{theorem}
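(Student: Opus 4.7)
The plan is to prove the four equivalences in two stages: first the unrooted statements (bullets 1 and 2), then the rooted ones (bullets 3 and 4), which will be reduced to the former. For the unrooted stage I would carry out a single simultaneous structural induction on $\varphi \in \logic_b^c$, proving both the $\models$ and $\models_X$ variants in lockstep so that the mutually recursive shape of Definition~\ref{def:intuitive} (pairs versus triples) matches the mutually recursive shape of Table~\ref{tab:formulas semantics}.

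For the soundness direction, I would fix a \tb reactive bisimulation $\R$ with $\R(P,Q)$ or $\R(P,X,Q)$ and show that if $P$ satisfies $\varphi$ then so does $Q$. The Boolean cases are immediate, so the work lies in the three compound modalities of $\logic_b^c$. For $\langle\epsilon\rangle(\varphi\langle\hat\alpha\rangle\varphi')$ I would unpack a witness path $P \pathtau P_1 \step{\opt\alpha} P_2$ with $P_1\models\varphi$ and $P_2\models\varphi'$, then iterate Clauses~1.a/2.a to transport it into $Q$, invoking Lemma~\ref{lem:stuttering} to keep intermediate states related to $P_1$ so that the induction hypothesis applies at the correct places; equivalently one can switch to the generalised form of Appendix~\ref{app:gbrb}, which is exactly designed to avoid this stuttering bookkeeping. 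For $\langle\epsilon\rangle\langle X\rangle\varphi'$ I would travel along $\tau$-transitions with Clause~2.a, then apply Lemma~\ref{lem:obvious}.4 to reach a stable and idling witness, and finally use Clause~2.d to match the $\rt$-transition, with the induction hypothesis providing satisfaction of $\varphi'$ at the target under $\models_X$. For $\langle\epsilon\rangle\neg\langle\tau\rangle\top$ I would invoke the stability-respecting Clause~2.e.

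For the completeness direction I would define $\R$ by modal equivalence: $\R(P,Q)$ iff $P$ and $Q$ agree on all $\logic_b^c$-formulas under $\models$, and $\R(P,X,Q)$ iff they agree under $\models_X$. The task is then to check that $\R$ (or, if more convenient, the generalised variant of Appendix~\ref{app:gbrb}) satisfies each clause of Definition~\ref{def:intuitive}. Clauses~2.c, 2.d and 2.e can be treated directly using the modalities $\langle\epsilon\rangle\top$, $\langle X\rangle\top$ and $\langle\epsilon\rangle\neg\langle\tau\rangle\top$, while Clause~1.b is built into the definition of $\R$.

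The main technical step is the semi-branching requirement in Clauses~1.a, 2.a and 2.b, which asks for a single path $Q \pathtau Q_1 \step{\opt\alpha} Q_2$ realising \emph{both} appropriate relations at $Q_1$ and $Q_2$ simultaneously. The standard trick is: for every $R\in\closed$ not modally equivalent to $P$, pick a distinguishing formula $\chi_R \in \logic_b^c$ satisfied by $P$ but not by $R$ (closing $\logic_b^c$ under negation orients each witness the same way), and collect them into $\Phi := \bigwedge_R \chi_R$; similarly form $\Phi'$ relative to $P'$. Then $P$ satisfies $\langle\epsilon\rangle(\Phi\langle\hat\alpha\rangle\Phi')$ via the trivial path $P \pathtau P \step{\opt\alpha} P'$, hence $Q$ does too, and the witness supplied by $Q$ is forced to have both of its relevant states modally equivalent to $P$ and $P'$ respectively, giving the required match; this is precisely where the unbounded conjunction in $\logic_b^c$ is essential. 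Once the unrooted cases are settled, the rooted cases fall out by a short induction on $\psi \in \logic_b^{cr}$: the only non-trivial clauses are $\langle\alpha\rangle\varphi$ and $\langle X\rangle\varphi$, and for each the rooted clauses of Definition~\ref{def:rooted intuitive} supply a single matching transition whose target is $\bisimtbrc$- or $\bisimtbrc[X]$-related to that of $P$, at which point bullets~1 and~2 of the theorem transfer satisfaction of the inner $\varphi \in \logic_b^c$.
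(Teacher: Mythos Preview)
Your overall plan---structural induction for soundness, modal-equivalence relation for completeness, the distinguishing-conjunction trick for the transfer clauses, and then a short reduction for the rooted part---is exactly the paper's approach. The paper likewise does the $(\Leftarrow)$ direction by checking that modal equivalence is a bisimulation, and explicitly uses the generalised form of Appendix~\ref{app:gbrb} for that.

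There is, however, a real gap in your completeness sketch if you try to verify Definition~\ref{def:intuitive} directly rather than the generalised form. Clause~1.b is \emph{not} ``built into the definition of $\R$'': $\R(P,Q)$ means agreement of $P,Q$ under $\models$, whereas $\R(P,Y,Q)$ means agreement under $\models_Y$, and these differ---for instance $\langle\epsilon\rangle(\varphi\langle\hat a\rangle\varphi')$ under $\models_Y$ imposes the side condition $a\in Y \vee \deadend{P_1}{Y}$ that is absent under $\models$. Likewise Clause~2.c cannot be discharged with $\langle\epsilon\rangle\top$ (which every process satisfies): the conclusion requires $\R(P,Q_0)$, i.e.\ agreement under $\models$, but the only formulas you have available at this point are being evaluated under $\models_X$, and the $\langle\epsilon\rangle$-modality under $\models_X$ keeps you in $\models_X$-land. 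For Clause~2.d, $\langle X\rangle\top$ is not even a formula of $\logic_b^c$ (only $\langle\epsilon\rangle\langle X\rangle\varphi$ is), and in any case you need the distinguishing-conjunction trick there too, not a trivial $\top$, to force $\R(P',X,Q_2)$.

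The fix is the one you already mention as an option but do not commit to: verify the \emph{generalised} bisimulation of Appendix~\ref{app:gbrb} instead. That definition has no analogue of Clause~1.b or~2.c, and its timeout clauses are phrased so that a single application of the distinguishing-conjunction trick with $\langle\epsilon\rangle\langle X\rangle\varphi$ suffices. The paper takes exactly this route; once you do the same, the remainder of your plan goes through.
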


\begin{proof}
    $(\Rightarrow)$ The four propositions are proven simultaneously by structural induction on $\logic_b^c$ and $\logic_b^{cr}$ in Appendix \ref{app:modal}.

    $(\Leftarrow)$ Let $\equiv \; := \{(P,Q) \mid \forall \varphi \in \logic_b^c, P \models \varphi \Leftrightarrow Q \models \varphi\} \cup \{(P,X,Q) \mid \forall \varphi \in \logic_b^c, P \models_X \varphi \Leftrightarrow Q \models_X \varphi\}$, and $\equiv^r \; := \{(P,Q) \mid \forall \psi \in \logic_b^{cr}, P \models \psi \Leftrightarrow Q \models \psi\} \cup \{(P,X,Q) \mid \forall \psi \in \logic_b^{cr}, P \models_X \psi \Leftrightarrow Q \models_X \psi\}$. It suffices to check that $\equiv$ [resp.\ $\equiv^r$] is a generalised [rooted] \tb reactive bisimulation. This is done in Appendix \ref{app:modal}.
\end{proof}

\section{Process Algebra and Congruence} \label{sec:congruence}
\label{sec:process algebra}

The process algebra $\ccsp$ is composed of classical operators from the well-known process algebras CCS \cite{Mi90ccs}, CSP \cite{BHR84,OH86} and ACP \cite{BW90,Fok00}, as well as the time-out action $\rt$ and two \emph{environment operators} from \cite{strongreactivebisimilarity}, that were added in order to enable a complete axiomatisation.

\begin{definition}\rm \label{def:ccsp syntax}
    Let $V$ be a countable set of variables, the \emph{expressions} of $\ccsp$ are recursively defined as follows:
    \begin{align*}
        E ::= 0 \mid x \mid \alpha.E \mid E + F \mid E \parallel_S F \mid \tau_I(E) \mid \rename(E) \mid \theta_L^U(E) \mid \psi_X(E) \mid \langle y | \equa \rangle
    \end{align*}
    where $x \in V$, $\alpha \in Act$, $S, I, L, U, X \subseteq A$, $L \subseteq U$, $\rename \subseteq A\times A$, $\equa$ is a \emph{recursive specification}: a set of equations $\{x\mathbin=\equa_x \mid x \mathbin\in V_\equa\}$ with $V_\equa \subseteq V$ and each $\equa_x$ a $\ccsp$ expression, and $y\mathbin\in V_\equa$.\linebreak[3] We require that all sets ${\{b\mid (a,b)\in \rename\}}$ are finite.
\end{definition}

\noindent
$0$ stands for a system which cannot perform any action. The expression $\alpha.E$ represents a system that first performs $\alpha$ and then $E$. The expression $E + F$ represents a choice to behave like $E$ or $F$. The parallel composition $E \parallel_S F$ synchronises the execution of $E$ and $F$, but only when performing actions in $S$. $\tau_I(E)$ represents the system $E$ where all actions $a \mathbin\in I$ are transformed into~$\tau$. The operator $\rename$ renames a given action $a\mathbin\in A$ into a choice between all actions $b$ with $(a,b)\mathbin\in \rename$. $\langle y | \equa\rangle$ is the $y$-component of a solution of $\equa$. 

$\ccsp$ also has two environment operators that help to develop a complete axiomatisation (like the left merge for ACP). $\theta_L^U(E)$ is the expression $E$ plunged into an environment $X$ such that $L \subseteq X \subseteq U$. $\theta_X^X(E)$ is denoted $\theta_X(E)$. $\psi_X(E)$ plunges $E$ into the environment $X$ if a time-out occurs, but, has no effect if any other action is performed. The operational semantics of $\ccsp$ is given in Figure~\ref{fig:ccsp semantics}. All operators except the environment ones follow the semantics of CCS, CSP or ACP\@. As $\theta_L^U(E)$ simulates the expression $E$ plunged in an environment $L \subseteq X \subseteq U$, it has no effect on $\tau$-transitions, which do not trigger the environment. Moreover, $\theta_L^U$ restricts the ability to perform visible actions to those allowed by the environment (i.e.\ included in $U$) and performing these actions triggers the environment. However, if the expression idles (i.e.\ $\deadend{E}{L}$) then it might trigger the environment and $\theta_L^U(E)$ acts like $E$. $\psi_X(E)$ supposes that time-outs are performed while the environment allows $X$, thus, it has no effect on actions that are not $\rt$. However, if $E$ can perform a time-out while the environment allows $X$ (i.e. $\deadend{E}{X}$) then $\psi_X(E)$ can perform the time-out while plunging the expression in the environment $X$.

\begin{figure}[ht]
\centering
    \begin{prooftree}
        \AxiomC{\textcolor{white}{$x \step{\alpha} y$}}
        \UnaryInfC{$\alpha.x \step{\alpha} x$}
        \DisplayProof
        \hskip 2em
        \AxiomC{$x \step{\alpha} x'$}
        \UnaryInfC{$x + y \step{\alpha} x'$}
        \DisplayProof
        \hskip 2em
        \AxiomC{$y \step{\alpha} y'$}
        \UnaryInfC{$x + y \step{\alpha} y'$}
    \end{prooftree}
    
    \begin{prooftree}
        \AxiomC{$x \step{a} x' \wedge \rename(a,b)$}
        \UnaryInfC{$\rename(x) \step{b} \rename(x')$}
        \DisplayProof
        \hskip 2em
        \AxiomC{$x \steptau x'$}
        \UnaryInfC{$\rename(x) \steptau \rename(x')$}
        \DisplayProof
        \hskip 2em
        \AxiomC{$x \step{\rt} x'$}
        \UnaryInfC{$\rename(x) \step{\rt} \rename(x')$}
    \end{prooftree}

    \begin{prooftree}
        \AxiomC{$x \step{\alpha} x' \wedge \alpha \not\in S$}
        \UnaryInfC{$x\parallel_S y \step{\alpha} x' \parallel_S y$}
        \DisplayProof
        \hskip 2em
        \AxiomC{$y \step{\alpha} y' \wedge \alpha \not\in S$}
        \UnaryInfC{$x \parallel_S y \step{\alpha} x \parallel_S y'$}
        \DisplayProof
        \hskip 2em
        \AxiomC{$x \step{a} x' \wedge y \step{a} y' \wedge a \in S$}
        \UnaryInfC{$x \parallel_S y \step{a} x' \parallel_S y'$}
    \end{prooftree}

    \begin{prooftree}
        \AxiomC{$x \step{\alpha} x' \wedge \alpha \not\in I$}
        \UnaryInfC{$\tau_I(x) \step{\alpha} \tau_I(x')$}
        \DisplayProof
        \hskip 2em
        \AxiomC{$x \step{a} x' \wedge a \in I$}
        \UnaryInfC{$\tau_I(x) \steptau \tau_I(x')$}
        \DisplayProof
        \hskip 2em
        \AxiomC{$\langle \equa_x | \equa \rangle \step{\alpha} x'$}
        \UnaryInfC{$\langle x | \equa\rangle \step{\alpha} x'$}
    \end{prooftree}

    \begin{prooftree}
        \AxiomC{$x \steptau x'$}
        \UnaryInfC{$\theta_L^U(x) \steptau \theta_L^U(x')$}
        \DisplayProof
        \hskip 2em
        \AxiomC{$x \step{a} x' \wedge a \in U$}
        \UnaryInfC{$\theta_L^U(x) \step{a} x'$}
        \DisplayProof
        \hskip 2em
        \AxiomC{$x \step{\alpha} x' \wedge \alpha \neq t$}
        \UnaryInfC{$\psi_X(x) \step{\alpha} x'$}
    \end{prooftree}

    \begin{prooftree}
        \AxiomC{$x \step{\alpha} x' \wedge \deadend{x}{L}$}
        \UnaryInfC{$\theta_L^U(x) \step{\alpha} x'$}
        \DisplayProof
        \hskip 2em
        \AxiomC{$x \step{\rt} x' \wedge \deadend{x}{X}$}
        \UnaryInfC{$\psi_X(x) \step{\rt} \theta_X(x')$}
    \end{prooftree}
    \caption{Operational semantics of $\ccsp$}
    \label{fig:ccsp semantics}
\end{figure}

All $\equa_x$ are considered to be sub-expressions of $\langle y | \equa\rangle$. An occurrence of a variable $x$ is \emph{bound} in $E \in \ccsp$ iff it occurs in a sub-expression $\langle y |\equa\rangle$ of $E$ such that $x \in V_\equa$; otherwise it is \emph{free}. An expression $E$ is \emph{invalid} if it has a sub-expression $\theta_L^U(F)$ or $\psi_X(F)$ such that a variable occurrence is free in $F$, but bound in $E$. An example justifying this condition can be found in \cite{strongreactivebisimilarity}. The set of valid expressions of $\ccsp$ is denoted $\expr$. If an expression is valid and all of its variable occurrences are bound then it is \emph{closed} and we call it a \emph{process}; the set of processes is denoted $\closed$.

A \emph{substitution} is a partial function $\rho: V \rightharpoonup E$. The application $E[\rho]$ of a substitution $\rho$ to an expression $E \in \expr$ is the result of the simultaneous replacement, for all $x \in \dom{\rho}$, of each free occurrence of $x$ by the expression $\rho(x)$, while renaming bound variables to avoid name clashes. We write $\langle E|\equa\rangle$ for the expression $E$ where any $y \in V_\equa$ is substituted by $\langle y | \equa\rangle$. 

\subsection{Time-out Bisimulation}

Thanks to the environment operator $\theta_L^U$, it is possible to express our bisimilarity in a much more succinct way. Indeed, $\theta_X$ was defined so that $P \bisimtbrc[X] Q$ if and only if $\theta_X(P) \bisimtbrc \theta_X(Q)$.

\begin{definition}\rm \label{def:time-out bisim}
    A \emph{\tb time-out bisimulation} is a symmetric relation ${\tbisim} \subseteq \closed\times\closed$ such that, for all $P,Q \in \closed$, if $P \tbisim Q$ then
    \begin{enumerate}
        \item if $P \step{\alpha} P'$ with $\alpha \in A_\tau$ then there is a path $Q \pathtau Q_1 \step{\opt{\alpha}} Q_2$ with $P \tbisim Q_1$ and $P' \tbisim Q_2$
        \item if $\deadend{P}{X}$ and $P \step{\rt} P'$ then there is a path $Q \pathtau Q_1 \step{\rt} Q_2 $ with $\theta_X(P') \tbisim \theta_X(Q_2)$
        \item if $P \nsteptau$ then there is a path $Q \pathtau Q_0 \nsteptau$.
    \end{enumerate}
\end{definition}

\noindent
Note that in Condition~2 above one also has $P \tbisim Q_1$ and consequently $\deadend{Q_{1}}{X}$. A rooted version of \tb time-out bisimulation can be defined in the same vein.

\begin{definition}\rm \label{def:rooted time-out bisim}
    A \emph{rooted \tb time-out bisimulation} is a symmetric relation ${\tbisim} \subseteq \closed\times\closed$ such that, for all $P,Q \in \closed$ such that $P \tbisim Q$,
    \begin{enumerate}
        \item if $P \step{\alpha} P'$ with $\alpha \in A_\tau$ then there is a step $Q \step{\alpha} Q'$ such that $P' \bisimtbrc Q'$
        \item if $\init{P}\cap(X\cup\{\tau\}) \mathbin= \emptyset$ and $P \mathbin{\step{\rt}} P'$ then there is a step $Q \mathbin{\step{\rt}} Q'$ such that $\theta_X(P') \mathbin{\bisimtbrc} \theta_X(Q').$
    \end{enumerate}
\end{definition}

\begin{proposition} \label{prop:time-out bisim}
    Let $P,Q \in \closed$, 
    \begin{enumerate}
        \item $P \bisimtbrc Q$ (resp.\ $P \bisimtbrc[X] Q$) iff there exists a \tb time-out bisimulation $\tbisim$ with $P \tbisim Q $ (resp.\ $(\theta_X(P) \tbisim \theta_X(Q)$),
        \item $P \bisimtbrc[X] Q$ if and only if $\theta_X(P) \bisimtbrc \theta_X(Q)$,\label{corr}
        \item $P \bisimrtbrc Q$ (resp.\ $P \bisimrtbrc[X] Q$) iff there exists a rooted \tb time-out bisimulation $\tbisim$ with $P \tbisim Q $ (resp.\ $(\theta_X(P) \tbisim \theta_X(Q)$).
    \end{enumerate}
\end{proposition}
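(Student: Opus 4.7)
The plan is to prove part 1 by explicit constructions in each direction, then derive part 2 as a direct corollary and part 3 by an analogous pair of constructions adapted to rootedness. The bridge is the operational semantics of $\theta_Y$ (Figure~\ref{fig:ccsp semantics}), which mirrors Clauses 2.a, 2.b and 2.d of Definition~\ref{def:intuitive} applied to a triple $(P_0,Y,Q_0)$: a $\tau$-transition of $\theta_Y(P_0)$ keeps the wrapper, an $a$-transition with $a\in Y$ sheds it (since the environment is retriggered), and a time-out or exit-$a$-transition is enabled precisely when $\deadend{P_0}{Y}$, also shedding the wrapper.

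For direction $(\Rightarrow)$ of part 1, given a \tb reactive bisimulation $\R$, I would define
\[ \tbisim \;:=\; \{(P_0,Q_0)\mid \R(P_0,Q_0)\}\;\cup\;\{(\theta_Y(P_0),\theta_Y(Q_0))\mid \R(P_0,Y,Q_0)\}, \]
and verify Definition~\ref{def:time-out bisim} by case analysis on the source transition; each case reduces to the clause of Definition~\ref{def:intuitive} dictated by the applicable $\theta_Y$ rule (using Clauses 1.b and 2.d for time-outs out of unwrapped pairs, and Clause 2.e for stability). The main obstacle is the time-out case for a wrapped pair $(\theta_Y(P_0),\theta_Y(Q_0))$ under an ambient set $X$: the target path $\theta_Y(Q_0) \pathtau \theta_Y(Q_1) \step{\rt} R_2$ requires $\deadend{Q_1}{Y}$, and the result state $R_2$ must satisfy $(\theta_X(P_0'),\theta_X(R_2))\in\tbisim$ rather than merely $(\theta_Y(P_0'),\theta_Y(R_2))\in\tbisim$. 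I would resolve this by invoking Lemma~\ref{lem:obvious}.4 to pick $Q_1$ stable with $\init{Q_1}=\init{P_0}$, and then applying Clause 1.b with $X$ followed by Clause 2.d to obtain $R_2$ with $\R(P_0',X,R_2)$, so that $(\theta_X(P_0'),\theta_X(R_2))\in\tbisim$ as required.

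For direction $(\Leftarrow)$, I would set $\R := \{(P_0,Q_0) \mid P_0 \tbisim Q_0\} \cup \{(P_0, Y, Q_0) \mid \theta_Y(P_0) \tbisim \theta_Y(Q_0)\}$ and reverse the case analysis. The principal hurdle is Clause 1.b of Definition~\ref{def:intuitive}, which requires $P_0 \tbisim Q_0$ to extend to $\theta_Y(P_0) \tbisim \theta_Y(Q_0)$ for every $Y$. I would address this by first enlarging $\tbisim$ to $\tbisim \cup \{(\theta_Y(P_0),\theta_Y(Q_0)) \mid P_0 \tbisim Q_0,\ Y \subseteq A\}$ and checking, via the $\theta_Y$ rules, that the enlargement remains a \tb time-out bisimulation; as announced in Section~2.3, I would also appeal to the generalised \tb reactive bisimulation framework of Appendix~\ref{app:gbrb} to smooth the semi-branching matching in the remaining clauses.

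Part 2 then follows from both directions of part 1: $P \bisimtbrc[X] Q$ iff some \tb time-out bisimulation relates $\theta_X(P)$ and $\theta_X(Q)$, iff $\theta_X(P) \bisimtbrc \theta_X(Q)$ (now applying part~1 to the pair). For part 3, I would carry out the analogous constructions with strong root matching on the initial transition (the stability clause is dropped at the root level), and translate the tails of matched transitions, related via $\bisimtbrc$ or $\bisimtbrc[X]$, into time-out bisimilarity of the wrapped processes using parts~1 and~2.
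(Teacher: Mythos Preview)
Your plan is correct and matches the paper's proof: the same two translations
\[
\tbisim := \{(P_0,Q_0)\mid \R(P_0,Q_0)\}\cup\{(\theta_Y(P_0),\theta_Y(Q_0))\mid \R(P_0,Y,Q_0)\}
\quad\text{and}\quad
\R := \{(P_0,Q_0)\mid P_0\tbisim Q_0\}\cup\{(P_0,Y,Q_0)\mid \theta_Y(P_0)\tbisim\theta_Y(Q_0)\}
\]
are used, part~2 is a corollary, and part~3 repeats the constructions under root conditions, relying on parts~1 and~2 for the tails. Your handling of the time-out case for wrapped pairs via Lemma~\ref{lem:obvious}.4 is exactly what the paper does.

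One remark on the $(\Leftarrow)$ direction of part~1: your enlargement of $\tbisim$ by all pairs $(\theta_Y(P_0),\theta_Y(Q_0))$ with $P_0\tbisim Q_0$ is unnecessary. The paper simply verifies that the $\R$ defined from the \emph{original} $\tbisim$ is a \emph{generalised} \tb reactive bisimulation (Definition~\ref{def:generalised}), which has no analogue of Clause~1.b at all --- triples arise only after time-outs. So the obstacle you identify disappears once you commit to the generalised definition, and there is nothing to ``smooth'' in the remaining clauses either; they check directly against the $\theta_Y$ rules. Your enlargement would also work (it effectively builds the closure that Proposition~\ref{prop:generalised} constructs internally), but it duplicates effort already packaged in Appendix~\ref{app:gbrb}.
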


\begin{proof}
  Note that Proposition~\ref{prop:time-out bisim}.\ref{corr} is a trivial corollary of \ref{prop:time-out bisim}.1.
  
  Let $\R$ be a [generalised rooted] \tb reactive bisimulation, let's define ${\tbisim} := \{(P,Q) \mid \R(P,Q)\} \cup \{(\theta_X(P),\theta_X(Q)) \mid \R(P,X,Q)\}$. $\tbisim$ is a [rooted] \tb time-out bisimulation, as proven in Appendix \ref{app:time-out}. Let $\tbisim$ be a [rooted] \tb time-out bisimulation, let's define $\R = \{(P,Q) \mid P \tbisim Q\} \cup \{(P,X,Q) \mid \theta_X(P) \tbisim \theta_X(Q)\}$. $\R$ is a [rooted] generalised \tb reactive bisimulation, as proven in Appendix \ref{app:time-out}.
\end{proof}

\noindent
Time-out bisimulations are very practical as there are no triplets to deal with anymore.

\subsection{Congruence}

Until now, bisimilarity was only defined between closed expressions, but any relation ${\sim} \subseteq \closed\times\closed$ can be extended to $\expr\times\expr$ in the following way: $E \sim F$ iff $\forall \rho: V \rightarrow \closed,\; E[\rho] \sim F[\rho]$. It can be extended further to substitutions $\rho, \nu \in V \rightharpoonup \expr$ by $\rho \sim \nu$ iff $\dom{\rho} = \dom{\nu}$ and $\forall x \in \dom{\rho},\; \rho(x) \sim \nu(x)$.

\begin{definition}\rm \label{def:congruence}
    An equivalence ${\sim} \subseteq \expr\times\expr$ is a congruence for an $n$-ary operator $f$ if $P_i \sim Q_i$ for all $i=0,\dots,n{-}1$ implies $f(P_0,...,P_{n-1}) \sim f(Q_0,...,Q_{n-1})$. It is a \emph{lean congruence} if, for all $E \in \expr$ and all $\rho, \nu \in V \rightharpoonup \expr$ such that $\rho \sim \nu$, $E[\rho] \sim E[\nu]$. It is a \emph{full congruence} if 
    \begin{enumerate}
        \item it is a congruence for all operators in the language, and
        \item for all recursive specifications $\equa, \equa'$ with $V_\equa = V_{\equa'}$ and $x \in V_\equa$ such that $\langle x|\equa\rangle,\langle x|\equa'\rangle \in \closed$, if $\forall y \in V_\equa,\; \equa_y \sim \equa'_y$ then $\langle x|\equa\rangle \sim \langle x|\equa'\rangle$.
    \end{enumerate}
\end{definition}

\noindent
To show that $\sim$ is a lean congruence it suffices to restrict attention to closed substitutions $\rho, \nu \in V \rightarrow \closed$, because the general property will then follow by composition of substitutions. A full congruence is a lean congruence, and a lean congruence is a congruence for all operators in the language, but both implications are strict, as shown in \cite{vG17b}.

To show that $\bisimrtbrc$ and $\bisimrb$ are full congruences, it is first necessary to prove that $\bisimtbrc$ and $\bisimb$ are congruences for some of the operators of $\ccsp$.

\begin{proposition} \label{prop:stability}
    $\bisimtbrc$ and $\bisimb$ are congruences for action prefixing, parallel composition, abstraction, renaming and the environment operator $\theta_L^U$, for all $L \subseteq U \subseteq A$.
\end{proposition}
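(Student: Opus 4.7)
The plan is to treat each operator separately, exhibiting in each case a candidate bisimulation whose underlying reasoning mirrors the usual congruence proofs for strong and branching bisimilarity, but adapted to the reactive setting. For $\bisimtbrc$ I would use the characterisation via \tb time-out bisimulations (Proposition~\ref{prop:time-out bisim}.1), which collapses the two kinds of pairs and triples in Definition~\ref{def:intuitive} to plain pairs and so streamlines the verification. For each operator $f$ in the list, the candidate relation is essentially the $f$-closure of $\bisimtbrc$, extended as needed by pairs of the form $(\theta_X(\cdot),\theta_X(\cdot))$ required by clause~2 of Definition~\ref{def:time-out bisim}.

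The unary operators are essentially routine. Action prefixing has a unique outgoing transition $\alpha.P \step\alpha P$, matched directly by $\alpha.Q\step\alpha Q$. For $\tau_I$ and $\rename$, every transition of the compound expression is induced by a transition of the argument (modulo the relabellings $a\mapsto\tau$ and $a\mapsto b$), so the branching path guaranteed by $P \bisimtbrc Q$ lifts pointwise through the operator, and $P\nsteptau$ under these operators is controlled by data already tracked by $\bisimtbrc$, via Lemma~\ref{lem:obvious}.2. For $\theta_L^U$, there are three transition-generating rules in Figure~\ref{fig:ccsp semantics}; the non-trivial observation is that the negative premise $\deadend{P}{L}$ in the third rule is respected on the matching side thanks to Lemma~\ref{lem:obvious}.4, which provides a $\tau$-path to a stable partner with the same initial actions as $P$.

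Parallel composition is, I expect, the main obstacle. Starting from ${\tbisim} := \{(P_1 \parallel_S P_2,\, Q_1 \parallel_S Q_2) \mid P_1\bisimtbrc Q_1,\; P_2 \bisimtbrc Q_2\}$ (together with the $\theta_X$-wrapped pairs needed for clause~2 of Definition~\ref{def:time-out bisim}), clause~1 is handled by decomposing transitions of $P_1\parallel_S P_2$ into independent moves ($\alpha\notin S$, matched using the branching path from a single component) and synchronised moves ($a\in S$, requiring the careful alignment of two branching paths; the semi-branching format of Definition~\ref{def:intuitive} is what saves us here, exactly as in the classical congruence proof for branching bisimilarity). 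The hard part is clause~2: whenever $\deadend{P_1\parallel_S P_2}{X}$ and $P_1 \step\rt P_1'$, I need to identify the effective environment
\[
X_1 \;:=\; (X\setminus S) \,\cup\, (X\cap S\cap\init{P_2})
\]
under which $P_1$ itself idles, then invoke $P_1\bisimtbrc[X_1] Q_1$ to secure a matching time-out path in $Q_1$, and finally lift this to a matching path $Q_1\parallel_S Q_2 \pathtau Q_1'\parallel_S Q_2 \step\rt Q_1''\parallel_S Q_2$ while verifying that the $\theta_X$-closure of the resulting pair of post-$\rt$ states still lies in $\tbisim$. Because $\init{P_2}$ can shift along $\tau$-steps of $P_2$ or $Q_2$, the candidate relation must be defined uniformly enough to be closed under all such derived environments, which is the delicate bookkeeping step.

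For $\bisimb$ the argument is a simplified version of the above: the candidate stability respecting branching bisimulation for each operator is the natural $f$-closure of $\bisimb$, and transition matching under $f$ reduces directly to transition matching under each argument via Definition~\ref{def:non-reactive}, without any environment triples to track. The stability respecting clause is preserved because $f(\vec{P})\nsteptau$ forces each $P_i$ to contribute no $\tau$-transition usable by $f$, and stable matching witnesses on the right combine to a stable witness for $f(\vec{Q})$.
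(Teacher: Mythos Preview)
Your overall strategy---work via \tb time-out bisimulations and, for $\parallel_S$, isolate the effective environment $X_1=(X\setminus S)\cup(X\cap S\cap\init{P_2})$ seen by the moving component---matches the paper's. But there is a genuine gap in how you intend to close the candidate relation under clause~2 of Definition~\ref{def:time-out bisim}, and it affects not only $\parallel_S$ but also $\tau_I$ and $\rename$, which you describe as routine.

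Take $\parallel_S$. From $\deadend{P_1\parallel_S P_2}{X}$ and $P_1\step\rt P_1'$ you obtain, via $P_1\bisimtbrc Q_1$, a matching $Q_1\pathtau Q_1^\circ\step\rt Q_1''$ with $\theta_{X_1}(P_1')\bisimtbrc\theta_{X_1}(Q_1'')$. What clause~2 requires, however, is that the pair $(\theta_X(P_1'\parallel_S P_2),\,\theta_X(Q_1''\parallel_S Q_2))$ lie in your relation. Your candidate $\tbisim$ as written---parallel compositions of $\bisimtbrc$-related components, plus their $\theta_X$-wrappings---does \emph{not} contain this pair, because you only know $\theta_{X_1}(P_1')\bisimtbrc\theta_{X_1}(Q_1'')$, not $P_1'\bisimtbrc Q_1''$. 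The same mismatch occurs for $\tau_I$ (you get $\theta_{X\cup I}(P')\bisimtbrc\theta_{X\cup I}(Q')$ but need $\theta_X(\tau_I(P'))$ related to $\theta_X(\tau_I(Q'))$) and for $\rename$ (with $\rename^{-1}(X)$ in place of $X\cup I$).

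The paper resolves this not by enlarging the candidate relation with ad hoc nested $\theta$'s, but by two devices you do not mention. First, it defines a \emph{single} relation $\tbisim$ simultaneously closed under all five operators, in particular under $\theta_L^U$; this lets one form $\theta_X(P^\dag\parallel_S\theta_{X_1}(P'^\ddag))\tbisim\theta_X(Q^\dag_0\parallel_S\theta_{X_1}(Q^\ddag_2))$ from the inductive data. Second, it works with a time-out bisimulation \emph{up to $\bisim$} and invokes the strong-bisimilarity identities of Lemma~\ref{lem:strong identities}, e.g.\ $\theta_X(P\parallel_S Q)\bisim\theta_X(P\parallel_S\theta_{X\setminus(S\setminus\init{P})}(Q))$ when $P\nsteptau$ and $\init{P}\cap X\subseteq S$, to bridge between the target pair and the one actually produced. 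Without these two ingredients (or an explicit construction of the ``uniformly enough'' relation you allude to, which amounts to the same closure), the argument does not go through.
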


\begin{proof}
    Let $\tbisim$ be the smallest relation such that, for all $P,Q \in \closed$,
    \begin{itemize}
        \item if $P \bisimtbrc Q$ then $P \tbisim Q$;
        \item if $P \tbisim Q$ then, for all $\alpha \in Act$, $I\subseteq A$, $\rename \in A\times A$ and $L \subseteq U \subseteq A$, $\alpha.P \tbisim \alpha.Q$, $\tau_I(P) \tbisim \tau_I(Q)$, $\rename(P) \tbisim \rename(Q)$ and $\theta_L^U(P) \tbisim \theta_L^U(Q)$; 
        \item if $P_1 \tbisim Q_1$, $P_2 \tbisim Q_2$ and $S \subseteq A$ then $P_1 \parallel_S P_2 \tbisim Q_1 \parallel_S Q_2$.
    \end{itemize}
    It suffices to show that $\tbisim$ is a \tb time-out bisimulation up to $\bisim$\,, which implies ${\tbisim}\subseteq{\bisimtbrc}\,$. A bisimulation ``up to'' is a notion introduced by Milner in \cite{Mi90ccs}; it is commonly used when proving congruence properties. The proof uses some lemmas which were obtained in \cite{strongreactivebisimilarity}. Details can be found in Appendix \ref{app:stability}. A similar proof yields the result for $\bisimb$\,. \end{proof}

\begin{theorem} \label{thm:congruence}
    $\bisimrtbrc$ and $\bisimrb$ are full congruences.
\end{theorem}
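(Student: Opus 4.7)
The plan is to leverage Proposition~\ref{prop:time-out bisim} so that I can reason throughout with rooted \tb time-out bisimulations---avoiding the triple relation entirely---together with Proposition~\ref{prop:stability}, which already delivers congruence of the non-rooted $\bisimtbrc$ (and $\bisimb$) for action prefixing, parallel composition, abstraction, renaming and $\theta_L^U$. Congruence of $\bisimrtbrc$ for $+$ and for action prefixing then comes immediately from Definition~\ref{def:rooted intuitive}: the initial transitions of $\alpha.P$ and of $P+R$ are matched strongly, and the residuals land in $\bisimtbrc$ via the inclusion $\bisimrtbrc\subseteq\bisimtbrc$, which itself follows by direct inspection from Definitions~\ref{def:intuitive} and~\ref{def:rooted intuitive}.

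For each remaining operator $f\in\{\parallel_S,\tau_I,\rename,\theta_L^U,\psi_X\}$ I would take as candidate the relation $\{(f(\vec P),f(\vec Q))\mid P_i\bisimrtbrc Q_i\}\cup{\bisimrtbrc}$, closed under $\theta_X$-wrapping as dictated by Definition~\ref{def:rooted time-out bisim}, and verify that it is a rooted \tb time-out bisimulation. An initial transition of $f(\vec P)$ unfolds, by Figure~\ref{fig:ccsp semantics}, to a transition of some $P_i$; rootedness supplies a matching single-step transition in the corresponding $Q_i$ whose target is $\bisimtbrc$-related to $P_i'$, and Proposition~\ref{prop:stability} then re-envelops the residuals while preserving $\bisimtbrc$. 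For $\psi_X$, which is not covered by Proposition~\ref{prop:stability}, the operational rule for $\rt$ produces a $\theta_X$-wrapping that \emph{is} handled by Proposition~\ref{prop:stability}, and the premise $\deadend{P}{X}$ transfers to $Q$ because rootedness forces $\init{P}=\init{Q}$ (by matching both $\tau$- and visible transitions in both directions).

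The genuine obstacle is clause~2 of Definition~\ref{def:congruence} (the recursion clause), from which lean congruence also follows. Given $\equa,\equa'$ with $V_\equa=V_{\equa'}$ and $\equa_y\bisimrtbrc\equa'_y$ for every $y\in V_\equa$, I would define $\tbisim$ to contain each pair $(E[\rho_\equa],E[\rho_{\equa'}])$ for $E\in\expr$, where $\rho_\equa(z):=\langle z|\equa\rangle$ and $\rho_{\equa'}(z):=\langle z|\equa'\rangle$, again closed under $\theta_X$-wrapping, and show that it is a rooted \tb time-out bisimulation up to $\bisimtbrc$. A transition of $E[\rho_\equa]$ is analysed by induction on its operational derivation: derivations that do not touch any $\rho_\equa(z)$ reduce to the operator-congruence arguments already given, whereas an unfolding step $\langle z|\equa\rangle=\equa_z[\rho_\equa]\step{\alpha}P'$ is matched in $\equa'_z[\rho_\equa]$ via the closed-substitution instance $\equa_z[\rho_\equa]\bisimrtbrc\equa'_z[\rho_\equa]$ of the hypothesis, and then transported to a transition of $\equa'_z[\rho_{\equa'}]=\langle z|\equa'\rangle$ by structural tracing that exhibits the $\tbisim$-relatedness of the residuals---the apparent circularity being broken by the up-to technique. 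The main technical delicacy will be that substituting $\langle z|\equa\rangle$ for a bound variable has to respect the validity conditions on $\theta_L^U$ and $\psi_X$, but the very definition of $\expr$ was engineered precisely to rule out the pathological cases (cf.~\cite{strongreactivebisimilarity}). The argument for $\bisimrb$ is entirely analogous, with stability respecting branching bisimulation in place of \tb reactive bisimulation throughout.
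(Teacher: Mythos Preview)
Your overall strategy---working with rooted \tb time-out bisimulations via Proposition~\ref{prop:time-out bisim} and leaning on Proposition~\ref{prop:stability} for the residuals---is the paper's strategy. Two steps, however, do not go through as written.

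\textbf{The $\rt$-case for $\parallel_S$, $\tau_I$ and $\rename$.} When $P_1\parallel_S P_2 \step{\rt} P_1\parallel_S P'_2$ with $\init{(P_1\parallel_S P_2)}\cap(X\cup\{\tau\})=\emptyset$, rootedness of $P_2\bisimrtbrc Q_2$ does \emph{not} deliver $P'_2 \bisimtbrc Q'_2$; it only gives $\theta_{X'}(P'_2) \bisimtbrc \theta_{X'}(Q'_2)$ for the shifted set $X' = X\setminus(S\setminus\init{P_1})$. Proposition~\ref{prop:stability} then yields $\theta_X(P_1\parallel_S \theta_{X'}(P'_2)) \bisimtbrc \theta_X(Q_1\parallel_S \theta_{X'}(Q'_2))$, which is not the required $\theta_X(P_1\parallel_S P'_2) \bisimtbrc \theta_X(Q_1\parallel_S Q'_2)$. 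The bridge is supplied by the strong identities of Lemma~\ref{lem:strong identities} (e.g.\ $\theta_X(P_1\parallel_S P'_2) \bisim \theta_X(P_1\parallel_S \theta_{X'}(P'_2))$), and by their analogues for $\tau_I$ and $\rename$. This is also why the paper's up-to relation is $\bisim\,\tbisim\,{\bisimtbrc}$ with $\bisim$ on the left, not merely $\tbisim\,{\bisimtbrc}$.

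\textbf{The recursion clause.} Your order of operations is backwards and the up-to technique does not rescue it. After unfolding, $\equa_z[\rho_\equa]\step{\alpha}P'$ is a strict sub-derivation; you first invoke the hypothesis to reach a transition $\equa'_z[\rho_\equa]\step{\alpha}R'$, and then propose to ``transport'' this to $\equa'_z[\rho_{\equa'}]$. But the derivation of $\equa'_z[\rho_\equa]\step{\alpha}R'$ is obtained via $\bisimrtbrc$-matching and is not a sub-derivation of the original, so the induction hypothesis does not apply to it; and ``up to $\bisimtbrc$'' lets the \emph{targets} be related via ${\bisimtbrc}\,\tbisim\,{\bisimtbrc}$, it does not license using the not-yet-established bisimulation property of $\tbisim$ midway through a clause. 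The paper swaps the two steps: first apply induction on the sub-derivation to the pair $(\equa_z[\rho_\equa],\equa_z[\rho_{\equa'}])\in\tbisim$ (take $E=\equa_z$), obtaining $\equa_z[\rho_{\equa'}]\step{\alpha}R'$ with $P'$ related to $R'$; then apply the hypothesis $\equa_z\bisimrtbrc\equa'_z$ instantiated at the closed substitution $\rho_{\equa'}$ to get $\equa'_z[\rho_{\equa'}]\step{\alpha}Q'$ with $R'\bisimtbrc Q'$, whence $\langle z|\equa'\rangle\step{\alpha}Q'$.
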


\begin{proof}
    Let ${\tbisim} \subseteq \closed\times\closed$ be the smallest relation such that 
    \begin{itemize}
        \item if $P \bisimrtbrc Q$ then $P \tbisim Q$;
        \item if $P_1 \tbisim Q_1$ and $P_2 \tbisim Q_2$ then $P_1 + P_2 \tbisim Q_1 + Q_2$ and $\forall S \subseteq A,\; P_1 \parallel_S P_2 \tbisim Q_1 \parallel_S Q_2$;
        \item if $P \tbisim Q$ then $\forall \alpha \mathbin\in Act,\; \alpha.P \tbisim \alpha.Q$, $\forall I \subseteq A,\; \tau_I(P) \tbisim \tau_I(Q)$, $\forall \rename \subseteq A\times A,\; \rename(P) \tbisim \rename(Q)$, $\forall L \mathbin\subseteq U \mathbin\subseteq A,\;\theta_L^U(P) \tbisim \theta_L^U(Q)$ and $\forall X \subseteq A,\; \psi_X(P) \tbisim \psi_X(Q)$;
        \item if $\equa$ is a recursive specification with $z \in V_\equa$ and $\rho, \nu \in V\setminus V_\equa \rightarrow \closed$ are substitutions such that $\forall x \in V\setminus V_\equa,\; \rho(x) \tbisim \nu(x)$, then $\langle z|\equa\rangle[\rho] \tbisim \langle z|\equa\rangle[\nu]$;
        \item if $\equa$ and $\equa'$ are recursive specifications and $x \in V_\equa=V_{\equa'}$ with $\langle x|\equa\rangle, \langle x|\equa'\rangle \in \closed$ such that $\forall y \in V_\equa,\; \equa_y \bisimrtbrc \equa'_y$, then $\langle x|\equa\rangle \tbisim \langle x|\equa'\rangle$.
    \end{itemize}
    Since ${\bisimrtbrc} \subseteq {\tbisim}$, it suffices to prove that $\tbisim$ is a rooted \tb time-out bisimulation up to $\bisimtbrc$\,, as done in Appendix \ref{app:congruence}. This implies ${\tbisim} = {\bisimrtbrc}$\, and the definition will then give us that $\bisimrtbrc$ is a lean congruence. Moreover, the last condition of $\tbisim$ adds that it is a full congruence. A similar proof yields the result for $\bisimrb$\,.
\end{proof}

\section{Axiomatisation} \label{sec:axiom}

We will provide complete axiomatisations for $\bisimrtbrc$ and $\bisimrb$ on various fragments of $\ccsp$.

\subsection{Recursive Principles} \label{subsec:recursive}

The expression $\langle x |\equa\rangle$ is intuitively defined as the $x$-component of the solution of $\equa$. However, $\equa$ could perfectly well have multiple solutions that are not bisimilar to each other. For instance, take $\equa = \{x \mathbin= x\}$; any expression is an $x$-component of a solution of $\equa$. For our complete\linebreak[3] axiomatisation, we need to restrict attention to recursive specifications which have a unique solution with respect to our notion of bisimilarity. This property can be decomposed into two principles \cite{BW90,Fok00}: the \emph{recursive definition principle} (RDP) states that a system of recursive equations has at least one solution and the \emph{recursive specification principle} (RSP) that it has at most one solution. The latter holds under a condition traditionally called \emph{guardedness}.

\begin{definition}\rm \label{def:solution}
    Let $\equa$ be a recursive specification and ${\sim} \subseteq \closed\times\closed$, a \emph{solution up to $\sim$} of $\equa$ is a substitution $\rho \in \expr^{V_\equa}$ such that $\rho \sim \equa[\rho]$. Here $\rho$ and ${\equa} \in\expr^{V_\equa}$ are seen as $V_\equa$-tuples.
\end{definition}

\noindent
In \cite{BW90,Fok00} RDP was proven for the classical notion of strong bisimilarity $\bisim$. Since $\bisimrtbrc$ and $\bisimrb$ are included in $\bisim$, it holds for both of these relations as well.

\begin{proposition}[RDP] \label{prop:rdp}
    Let $\equa$ be a recursive specification. The substitution $\rho: x \mapsto \langle x|\equa\rangle$ for all $x \mathbin\in V_\equa$ is a solution of $\equa$ up to $\bisim$. It is called the \emph{default solution} of $\equa$.
\end{proposition}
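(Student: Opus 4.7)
The plan is to exhibit an explicit strong bisimulation witnessing $\langle x | \equa\rangle \bisim \langle \equa_x | \equa\rangle$ for every $x \in V_\equa$. Unfolding the notation, this is exactly the claim $\rho \bisim \equa[\rho]$ of the proposition: $\rho(x) = \langle x | \equa\rangle$, while $\equa[\rho](x) = \equa_x[\rho] = \langle \equa_x | \equa\rangle$, because the substitution $\rho : y \mapsto \langle y | \equa\rangle$ applied componentwise to $\equa_x$ is precisely what $\langle \equa_x | \equa\rangle$ abbreviates.

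The crucial observation is that the only rule in Figure~\ref{fig:ccsp semantics} with a source of the form $\langle x | \equa\rangle$ is
$$\frac{\langle \equa_x | \equa\rangle \step{\alpha} x'}{\langle x | \equa\rangle \step{\alpha} x'}$$
so $\langle x | \equa\rangle \step{\alpha} P'$ holds if and only if $\langle \equa_x | \equa\rangle \step{\alpha} P'$. The two sides therefore have literally the same outgoing transitions (not just bisimilar ones).

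With this in hand, I would take the symmetric relation
$$\R \;=\; \{(P,P) \mid P \in \closed\} \;\cup\; \{(\langle x | \equa\rangle, \langle \equa_x | \equa\rangle),\, (\langle \equa_x | \equa\rangle, \langle x | \equa\rangle) \mid x \in V_\equa\}$$
and verify directly that $\R$ is a strong bisimulation. On the identity part this is trivial, and for any pair of the second kind, a transition from either side is matched by the identical transition on the other side, with the target related to itself by the identity part of $\R$.

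The proof has essentially no combinatorial content; the one subtle point I would spell out carefully is the notational identity $\equa_x[\rho] = \langle \equa_x | \equa\rangle$, including the case in which $\equa_x$ already contains free occurrences of variables in $V_\equa$ or nested recursion sub-expressions $\langle z | \equa'\rangle$, so that the substitution $\rho$ is applied only to the free occurrences of $V_\equa$-variables in $\equa_x$. Once this bookkeeping is settled, the operational rule for recursion makes the bisimulation argument immediate, and no further obstacle arises.
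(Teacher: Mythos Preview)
Your proposal is correct and is the standard argument: the paper itself does not prove RDP but simply cites \cite{BW90,Fok00}, and your proof via the observation that $\langle x | \equa\rangle$ and $\langle \equa_x | \equa\rangle$ have literally the same outgoing transitions (so that the identity enlarged with these pairs is a strong bisimulation) is exactly how one establishes it. The notational unfolding $\equa_x[\rho] = \langle \equa_x | \equa\rangle$ that you flag is indeed just the paper's definition of the $\langle E | \equa\rangle$ notation, so no further work is needed there.
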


\noindent
An occurrence of a variable $x$ in an expression $E$ is \emph{well-guarded} if $x$ occurs in a subexpression $a.F$ of $E$, with $a\mathbin\in A\cup\{\rt\}$. Here we do not allow $\tau$ as a guard, but unlike in \cite{RG24} $\rt$ can be a guard. An expression $E$ is \emph{well-guarded} if no operator $\tau_I$ occurs in $E$ and all free occurrences of variables in $E$ are well-guarded.\linebreak[3] A recursive specification $\equa$ is \emph{manifestly well-guarded} if no operator $\tau_I$ occurs in $\equa$ and for all $x,y\in V_\equa$ all occurrences of $x$ in the expression $\equa_y$ are well-guarded; it is \emph{well-guarded} if it can be made manifestly well-guarded by repeated substitution of $\equa_y$ for $y$ within terms $\equa_x$.\linebreak[3] A $\ccsp$ process $P\in\closed$ is \emph{guarded} if each recursive specification occurring in $E$ is well-guarded. It is \emph{strongly guarded} if moreover there is no infinite path of $\tau$-transitions starting in a state $P_0$ reachable from $P\!$.

\begin{proposition}[RSP] \label{prop:rsp}
    Let $\equa$ be a well-guarded recursive specification and $\rho, \nu \mathbin\in \expr^{V_\equa}\!\!$. If $\rho$ and $\nu$ are solutions of $\equa$ up to $\bisimrtbrc$ (or $\bisimrb$) then $\rho \bisimrtbrc \nu$ (resp.\ $\rho \bisimrb \nu$).
\end{proposition}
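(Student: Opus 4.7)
The plan is to reduce to the case where $\equa$ is manifestly well-guarded and then exhibit $\rho(y) \bisimrtbrc \nu(y)$ for every $y \in V_\equa$ via a suitable bisimulation. By the definition of well-guardedness, finitely many substitutions of the $\equa_y$'s into the $\equa_x$'s yield a manifestly well-guarded specification $\equa^*$, and since $\bisimrtbrc$ is a full congruence (Theorem~\ref{thm:congruence}), one verifies by induction on these substitutions that every solution of $\equa$ up to $\bisimrtbrc$ is also a solution of $\equa^*$ up to $\bisimrtbrc$. Hence we may assume $\equa$ is already manifestly well-guarded.

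By Proposition~\ref{prop:time-out bisim} it suffices to build a rooted \tb time-out bisimulation (possibly up to $\bisimrtbrc$) relating $\rho(y)$ to $\nu(y)$. The candidate is the symmetric closure of
\[
{\tbisim}\;:=\;\bigl\{\bigl(E[\rho],\,E[\nu]\bigr) \;\big|\; E\in\expr \text{ with all free variables in } V_\equa\bigr\}.
\]
Taking $E=y$ gives $\rho(y)\tbisim\nu(y)$, so the conclusion will follow once $\tbisim$ is shown to satisfy Definition~\ref{def:rooted time-out bisim}. For a pair $(E[\rho],E[\nu])$ where $E$ is not a bare free variable, every transition $E[\rho]\step{\alpha}P'$ can be derived from the rules of Figure~\ref{fig:ccsp semantics} without ever unfolding any $\rho(x)$ (the operational semantics has no rule for variables), so it factorises as $P'=F[\rho]$ for an expression $F$ with free variables in $V_\equa$, and the same derivation produces $E[\nu]\step{\alpha}F[\nu]$ with $(P',F[\nu])\in{\tbisim}$. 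When $E=y$, the solution hypothesis $\rho(y)\bisimrtbrc\equa_y[\rho]$ together with rooted Clause~1.a yields $\equa_y[\rho]\step{\alpha}P''$ with $P'\bisimtbrc P''$; manifest well-guardedness of $\equa_y$ forces this derivation to originate at a non-variable position of $\equa_y$, so $P''=G[\rho]$ for some $G$ with free variables in $V_\equa$; the identical derivation gives $\equa_y[\nu]\step{\alpha}G[\nu]$, and $\nu(y)\bisimrtbrc\equa_y[\nu]$ supplies the matching step $\nu(y)\step{\alpha}Q'$ with $Q'\bisimtbrc G[\nu]$. Then $P'\bisimtbrc G[\rho]\mathrel{\tbisim}G[\nu]\bisimtbrc Q'$, which is admissible since rooted bisimulations compare residuals modulo $\bisimtbrc$.

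The main obstacle is precisely the $E=y$ case above: without manifest well-guardedness a transition of $\equa_y[\rho]$ could be derived by first unfolding some $\rho(x)$, so its target would carry no structure in common with the corresponding transition from $\equa_y[\nu]$, and no canonical $G$ would be available on the $\nu$-side; guardedness is exactly the property that blocks this pathology, and it is preserved by the preliminary reduction to $\equa^*$. The stability-respecting clause and the $\rt$-matching clause of Definition~\ref{def:time-out bisim} are handled by the same factorisation, since an $\rt$-transition of $\equa_y[\rho]$ from a stable state must originate at an explicit $\rt$-guard inside $\equa_y$ and hence factors as $G[\rho]$ for the required $G$. The proof for $\bisimrb$ is entirely analogous, using Definition~\ref{def:rooted non-reactive}; it is in fact simpler, as the reactive set-indexed triples and the time-out clauses do not appear.
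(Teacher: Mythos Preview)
Your reduction to a manifestly well-guarded $\equa$ matches the paper, but the relation you propose is too large and your factorisation claim for it is false. You assert that when $E$ is ``not a bare free variable'', every transition of $E[\rho]$ can be derived ``without ever unfolding any $\rho(x)$''. Take $E = y + a.0$ with $y \in V_\equa$: this is not a bare variable, yet $E[\rho] = \rho(y) + a.0$, and any transition of the closed process $\rho(y)$ is a transition of $E[\rho]$ that \emph{does} come from unfolding $\rho(y)$; there is no $F$ with $P' = F[\rho]$ in general. The dichotomy bare-variable/compound is the wrong one; what is needed is that $E$ be \emph{well-guarded}, and arbitrary expressions with free variables in $V_\equa$ are not. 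The paper repairs this by taking instead $\tbisim = \{(H[\equa[\rho]],\,H[\equa[\nu]])\}$ over $H$ without $\tau_I$ and with free variables in $V_\equa$: since $H[\equa[\rho]] = (H[\equa])[\rho]$ and $\equa$ is manifestly well-guarded, the context $H[\equa]$ is always well-guarded, and a dedicated lemma (Lemma~\ref{lem:guarded}) then supplies the factorisation. Taking $H=y$ gives $\equa_y[\rho] \tbisim \equa_y[\nu]$, and the solution hypothesis bridges to $\rho(y)$.

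There is a second gap in your bare-variable case: you conclude $P' \bisimtbrc G[\rho] \mathrel{\tbisim} G[\nu] \bisimtbrc Q'$ and call this ``admissible since rooted bisimulations compare residuals modulo $\bisimtbrc$''. But that step silently uses $G[\rho] \bisimtbrc G[\nu]$, i.e.\ ${\tbisim} \subseteq {\bisimtbrc}$, which is what you are proving. The paper resolves the circularity by first establishing that (the correctly defined) $\tbisim$ is a \emph{non-rooted} \tb time-out bisimulation up to $\bisimtbrc$ via a dedicated up-to principle (Definition~\ref{def:up to b}, Proposition~\ref{prop:up to b}), which yields ${\tbisim}\subseteq{\bisimtbrc}$; only then is the rooted property checked.
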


\begin{proof}
    Modifying $\equa$ by substituting $\equa_y$ for $y$ within terms $\equa_x$ with $x,y\in V_\equa$ does not affect the set of its solutions. Hence we can restrict attention to manifestly well-guarded $\equa$.

    Thanks to the composition of substitutions, it suffices to prove the proposition when $\rho, \nu \in \closed^{V_\equa}$ and only variables of $V_\equa$ can occur in $\equa_x$ for $x \in V_\equa$. It suffices to show that the symmetric closure of \({\tbisim} := \{(H[\equa[\rho]],H[\equa[\nu]]) \mid H \in \expr\) is without $\tau_I$ operators and with free variables from $V_\equa \mbox{ only}\}$ is a rooted \tb time-out bisimulation up to $\bisimtbrc$\,. Here $\equa[\rho]\in\closed^{V_\equa}$ is seen as a substitution. Details can be found in Appendix \ref{app:RSP}. An almost identical strategy can be applied to get RSP for $\bisimrb$\,.
\end{proof}

\subsection{Axioms and Soundness} \label{subsec:soundness}

The set of axioms provided is composed of the axiomatisation of $\bisim_r$ \cite{strongreactivebisimilarity}, together with the \emph{branching axiom} which is well-known since it is used in the axiomatisation of rooted branching bisimilarity \cite{branching}.

\renewcommand{\arraystretch}{1.2}
\begin{table}[ht]
    \centering
    \begin{tabular}{|l l l|}
        \hline
        $x+(y+z) = (x+y)+z$ & $\tau_I(x+y) = \tau_I(x) + \tau_I(y)$ & $\rename(x+y) = \rename(x) + \rename(y)$ \\
        $x+y = y+x$ & $\tau_I(\alpha.x) = \alpha.\tau_I(x)$ if $\alpha \not\in I$ & $\rename(\tau.x) = \tau.\rename(x)$\\
        $x+x = 0$ & $\tau_I(\alpha.x) = \tau.\tau_I(x)$ if $\alpha \in I$ & $\rename(\rt.x) = \rt.\rename(x)$ \\
        $x+0 = x$ & & $\rename(a.x) = \sum_{\{b \mid \rename(a,b)\}}b.\rename(x)$ \\
        \multicolumn{3}{|l|}{\textbf{Expansion Theorem:} if $P = \sum\limits_{i \in I}\alpha_i.P_i$ and $Q = \sum\limits_{j \in J} \beta_j.Q_j$ then} \\
        \multicolumn{3}{|c|}{$P \parallel_S Q = \sum\limits_{i \in I,\alpha_i \not\in S}(\alpha_i.P_i \parallel_S Q) + \sum\limits_{j \in J, \beta_j \not\in S}(P \parallel_S \beta_j.Q_j) + \sum\limits_{i \in I, j \in J, \alpha_i = \beta_j \in S}\alpha_i.(P_i \parallel_S Q_j)$} \\
        \multicolumn{3}{|c|}{$\alpha.(\tau.(x+y)+x) = \alpha.(x+y)$ \quad\textbf{(Branching Axiom)}} \\
        $\langle x |\equa\rangle = \langle \equa_x |\equa\rangle$ \quad\textbf{(RDP)} & \multicolumn{2}{c|}{$\equa \Rightarrow x = \langle x | \equa\rangle$ \quad with $\equa$ well-guarded\quad\textbf{(RSP)}} \\
        \hline
        \hline
        \multicolumn{2}{|l}{$\theta_L^U(\sum_{i \in I}\alpha_i.x_i) = \sum_{i\in I} \alpha_i.x_i$} & $(\forall i \in I, \alpha_i \not\in L\cup\{\tau\})$ \\
        \multicolumn{2}{|l}{$\theta_L^U(x + \alpha.y + \beta.z) = \theta_L^U(x + \alpha.y)$} & $(\alpha \in L\cup\{\tau\} \wedge \beta \not\in U\cup\{\tau\})$ \\
        \multicolumn{2}{|l}{$\theta_L^U(x + \alpha.y + \beta.z) = \theta^U_L(x + \alpha.y) + \theta_L^U(\beta.z)$} & $(\alpha \in L\cup\{\tau\} \wedge \beta\in U\cup\{\tau\})$ \\
        \multicolumn{2}{|l}{$\theta_L^U(\alpha.x) = \alpha.x$} & $(\alpha \neq \tau)$ \\
        $\theta_L^U(\tau.x) = \tau.\theta_L^U(x)$ & & \\
        \multicolumn{2}{|l}{$\psi_X(x + \alpha.y) = \psi_X(x) + \alpha.y$} & $(\alpha \not\in X\cup\{\tau,t\})$ \\
        \multicolumn{2}{|l}{$\psi_X(x + \alpha.y + \rt.z) = \psi_X(x + \alpha.y)$} & $(\alpha \in X\cup\{\tau\})$ \\
        \multicolumn{2}{|l}{$\psi_X(x + \alpha.y + \beta.z) = \psi_X(x + \alpha.y) + \beta.z$} & $(\alpha, \beta \in X\cup\{\tau\})$ \\
        \multicolumn{2}{|l}{$\psi_X(\alpha.x) = \alpha.x$} & $(\alpha \neq \rt)$ \\
        $\psi_X(\sum_{i \in I}\rt.y_i) = \sum_{i \in I}\rt.\theta_X(y_i)$ & & \\
        \hline
        \hline
        \multicolumn{3}{|c|}{$(\forall X \subseteq A,\; \psi_X(x) = \psi_X(y)) \Rightarrow x= y$ \quad\textbf{(Reactive Approximation Axiom)}} \\
        \hline
    \end{tabular}
\vspace{2ex}
    \caption{Axiomatisation of $\bisimrtbrc$ and $\bisimrb$}
    \label{tab:axioms}
\end{table}

Let $Ax^\infty$ be the set of all axioms in the first two rectangles in Table \ref{tab:axioms} and $Ax := Ax^\infty \setminus \{\mbox{RDP},\mbox{RSP}\}$. Let $Ax^\infty_r$ be the set of all axioms in Table \ref{tab:axioms} and $Ax_r := Ax^\infty_r \setminus \{\mbox{RDP},\mbox{RSP}\}$. The law \hypertarget{Lt}{$\mbox{\bf L}\tau$: $\tau.x + \rt.y = \tau.x$} can be derived from the reactive approximation axiom \cite{strongreactivebisimilarity}.

\begin{proposition} \label{prop:soundness}
    Let $P,Q$ be two $\ccsp$ processes.
    \begin{itemize}
        \item If $Ax^\infty \vdash P = Q$ then $P \bisimrb Q$.
        \item If $Ax^\infty_r \vdash P = Q$ then $P \bisimrtbrc Q$.
    \end{itemize}
\end{proposition}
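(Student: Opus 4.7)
Both claims are proved by induction on the length of the equational derivation of $P=Q$. Reflexivity, symmetry and transitivity are handled by the fact that $\bisimrb$ and $\bisimrtbrc$ are equivalence relations (Proposition~\ref{prop:rooted equivalence} and its easy analogue for $\bisimrb$), and the contextual closure rules of equational logic, together with substitution into recursive specifications, are covered by Theorem~\ref{thm:congruence}, which says both relations are full congruences for $\ccsp$. Hence it suffices to verify each axiom in $Ax^\infty$ (resp.\ $Ax^\infty_r$) individually.

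I would then dispatch the axioms in four groups. First, the purely structural laws---associativity, commutativity, idempotency and unit of $+$; distribution and action-prefix laws for $\tau_I$ and $\rename$; and the Expansion Theorem for $\parallel_S$---are valid already for strong bisimilarity by a routine inspection of the operational rules in Figure~\ref{fig:ccsp semantics}, and strong bisimilarity is trivially contained in both $\bisimrb$ and $\bisimrtbrc$. Second, the axioms for the environment operators $\theta_L^U$ and $\psi_X$ were shown valid for strong reactive bisimilarity in~\cite{strongreactivebisimilarity}; each of them equates two processes with the same outgoing transitions modulo a renaming of $\theta$-contexts on their continuations, so the same symmetric relations used there (extended by the identity on all reducts) yield rooted \tb time-out bisimulations in the sense of Proposition~\ref{prop:time-out bisim}. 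Third, RDP and RSP are precisely Propositions~\ref{prop:rdp} and~\ref{prop:rsp}.

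The one genuinely branching axiom, $\alpha.(\tau.(x+y)+x) = \alpha.(x+y)$, requires a dedicated bisimulation argument. By the rooted clause on the leading $\alpha$, it suffices to prove $\tau.(P+R)+P \bisimtbrc P+R$ for all closed $P,R$. Using Proposition~\ref{prop:time-out bisim}, I would exhibit a \tb time-out bisimulation that pairs these two processes and is the identity everywhere else; the $\tau$-step $\tau.(P+R)+P\step\tau P+R$ is matched on the right by the empty path landing in the identity-related $P+R$, the initial transitions of $P$ on the left are matched directly, and those of $R$ available only on the right are matched from the left by first performing the $\tau$ and then the $R$-transition. The stability and time-out clauses of Definition~\ref{def:time-out bisim} are handled by the same identification.

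The main obstacle, in my view, is the conditional Reactive Approximation Axiom $(\forall X,\ \psi_X(x) = \psi_X(y)) \Rightarrow x = y$, present only in $Ax^\infty_r$. Assume $\psi_X(P)\bisimrtbrc \psi_X(Q)$ for every $X\subseteq A$; I must show $P\bisimrtbrc Q$. From the $\psi_X$ rule that preserves every non-$\rt$ initial transition, each initial $\alpha$-step of $P$ with $\alpha\in A_\tau$ is matched by $Q$ in the rooted sense, with continuations related by $\bisimtbrc$. From the $\psi_X$ rule that absorbs $\rt$-transitions under $\deadend{P}{X}$ into a $\theta_X$-context, each such $\rt$-step is matched with continuations $\theta_X(P')\bisimtbrc\theta_X(Q')$, which by Proposition~\ref{prop:time-out bisim}.\ref{corr} amounts to $P'\bisimtbrc[X] Q'$. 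These ingredients assemble into a rooted \tb reactive bisimulation (Definition~\ref{def:rooted intuitive}) witnessing $P\bisimrtbrc Q$, closing the argument.
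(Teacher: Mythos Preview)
Your proposal is correct and follows essentially the same route as the paper: reduce to individual axioms via the equivalence and full-congruence results (Proposition~\ref{prop:rooted equivalence}, Theorem~\ref{thm:congruence}), dispatch the structural and environment-operator laws through strong bisimilarity (the paper cites~\cite{strongreactivebisimilarity} for all of these in one stroke rather than your finer subdivision), invoke Propositions~\ref{prop:rdp} and~\ref{prop:rsp} for RDP/RSP, call the branching axiom trivial, and build a rooted \tb\ time-out bisimulation for the Reactive Approximation Axiom.

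One small technical point you elided in the last step: when you match $\psi_X(P)\step\rt\theta_X(P')$ against $\psi_X(Q)\step\rt\theta_X(Q')$ using Clause~2 of the rooted time-out bisimulation (Definition~\ref{def:rooted time-out bisim}), what you actually obtain is $\theta_X(\theta_X(P'))\bisimtbrc\theta_X(\theta_X(Q'))$, not $\theta_X(P')\bisimtbrc\theta_X(Q')$ directly. The paper closes this with the easy lemma $\theta_X(\theta_X(R))\bisim\theta_X(R)$; you should record it explicitly.
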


\begin{proof}
    Since $\bisimrb$ and $\bisimrtbrc$ are congruences, it suffices to prove that each axiom is sound, meaning that replacing, in each axiom, $=$ by the desired bisimilarity and each variable by a process produces a true statement. Most of these axioms were proven to be sound for the classical notion $\bisim$ of strong bisimilarity \cite{Mi90ccs} in \cite{strongreactivebisimilarity}. Thus, since both $\bisimrb$ and $\bisimrtbrc$ are included in $\bisim$, most of them are sound for $\bisimrb$ and $\bisimrtbrc$\,. 
    
    Only the branching axioms, RSP and the reactive approximation axiom remain to be proven sound. The soundness of the branching axioms is trivial and the soundness of RSP is exactly Proposition \ref{prop:rsp}. For the reactive approximation axiom, it suffices to show that ${\tbisim} := \; \bisimrtbrc \cup \{(P,Q),(Q,P) \mid \forall X \subseteq A, \psi_X(P) \bisimrtbrc \psi_X(Q)\}$ is a rooted \tb time-out bisimulation, as done in Appendix~\ref{app:RA}.
\end{proof}

\subsection{Completeness}

A well-known feature of most process algebras is that the standard collection of axioms allows one to bring any {guarded} process expression in the following normal form \cite{BW90,Fok00}.

\begin{definition}\rm \label{def:head-normal form}
    Let $P$ be a {guarded} $\ccsp$ process. The \emph{head-normal form} of $P$ is $\hat{P} := \sum_{\{(\alpha,Q) \mid P\step{\alpha}Q\}}\alpha.Q$.
\end{definition}

\noindent
In \cite{strongreactivebisimilarity}, it is proven that the axiomatisation of $\bisim_r$ enables one to equate any {guarded} process with its head-normal form (using a definition of guardedness that is more liberal than the one employed here, with $\tau$ allowed as a guard). Since the axiomatisation of $\bisim_r$ is included in $Ax^\infty$ and $Ax^\infty_r$, this yields the property for them as well.

\begin{lemma} \label{lem:head-normal form}
    Let $P$ be a guarded $\ccsp$ process. Then $Ax^\infty \vdash P = \hat{P}$ and $Ax^\infty_r \vdash P = \hat{P}$. Moreover, $Ax$ or $Ax_r$ are sufficient if $P$ is recursion-free.
    \qed
\end{lemma}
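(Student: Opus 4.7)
The plan is to reduce to the corresponding result already established in \cite{strongreactivebisimilarity} for the axiomatisation of strong reactive bisimilarity. That paper shows, by induction on process structure, that any guarded $\ccsp$ process can be rewritten into its head-normal form using those axioms. Since every such axiom also appears in both $Ax^\infty$ and $Ax^\infty_r$ (which extend it only with the branching axiom and, for the rooted case, the reactive approximation axiom), the derivations transfer verbatim.

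The induction itself follows the standard pattern, and it is worth sketching which axioms are used in each case. For $0$ and prefixed terms $\alpha.Q$, the expression is already in head-normal form. For $P_1 + P_2$, we apply the induction hypothesis to each summand and then appeal to the associativity, commutativity and idempotence of $+$, using the fact that $P_1 + P_2 \step{\alpha} R$ iff $P_1 \step{\alpha} R$ or $P_2 \step{\alpha} R$. For $P_1 \parallel_S P_2$ we apply the induction hypothesis to the components and then the Expansion Theorem, followed by another round of induction on the resulting smaller expressions. For $\tau_I(E)$, $\rename(E)$, $\theta_L^U(E)$ and $\psi_X(E)$ the dedicated axioms in Table~\ref{tab:axioms} distribute these operators over $+$ and absorb them into action prefixes, so we push the operator inward and apply the induction hypothesis. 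Finally, for $\langle x | \equa\rangle$, RDP yields $\langle x | \equa\rangle = \langle \equa_x | \equa\rangle$; well-guardedness guarantees that after finitely many such unfoldings every free occurrence of a recursion variable becomes guarded by a visible or time-out prefix, so the result of the unfolding is an expression on which the structural induction applies.

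A small subtlety is that the guardedness notion used in \cite{strongreactivebisimilarity} is strictly more permissive than ours, in that it additionally allows $\tau$ as a guard; hence every process guarded in the present sense is \emph{a fortiori} guarded there, and the cited result applies without modification. For the final clause, observe that the derivations above invoke RDP only to unfold recursion specifications and RSP not at all, so on a recursion-free process the entire argument goes through using $Ax$ (respectively $Ax_r$) alone.

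The main obstacle, if one chose to redo the proof from scratch rather than cite, would be the recursive case: one must argue that the process of repeatedly applying RDP and performing rewrites inside recursion bodies terminates with every free variable occurrence suitably guarded. This is precisely what well-guardedness is engineered to deliver, and is handled in \cite{strongreactivebisimilarity} by first passing to a manifestly well-guarded specification (as in the proof of Proposition~\ref{prop:rsp}) and then performing a bounded-depth unfolding.
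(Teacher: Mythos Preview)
Your proposal is correct and follows exactly the paper's approach: the paper simply observes (in the text preceding the lemma) that the result was proven in \cite{strongreactivebisimilarity} for the axiomatisation of $\bisim_r$, that those axioms are contained in $Ax^\infty$ and $Ax^\infty_r$, and that the guardedness notion there is more liberal, then closes the lemma with a bare \qed. Your additional sketch of the underlying structural induction and the remark about RDP/RSP in the recursion-free case are helpful elaborations but not required by the paper.
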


\noindent
This lemma is used extensively in the proof of the following completeness results.

\begin{proposition} \label{prop:collapse}
    Let $P,Q$ be two recursion-free $\ccsp$ processes. If $P \bisimtbrc Q$ (resp.\ $P \bisimb Q$) then, for all $\alpha \in Act$, $Ax_r \vdash \alpha.\hat{P} = \alpha.\hat{Q}$ (resp.\ $Ax \vdash \alpha.\hat{P} = \alpha.\hat{Q}$).
\end{proposition}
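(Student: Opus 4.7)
The plan is to proceed by strong induction on $d(P) + d(Q)$, where $d(R)$ denotes the length of a longest transition-path from $R$ (finite since $P, Q$ are recursion-free $\ccsp$ processes). By Lemma~\ref{lem:head-normal form} I may assume $P = \hat{P} = \sum_i \alpha_i.P_i$ and $Q = \hat{Q} = \sum_j \beta_j.Q_j$, with both sums finite. By symmetry of $\bisimtbrc$ and the idempotence axiom $x+x=x$, it suffices to show, for each summand $\alpha_i.P_i$ of $\hat{P}$, that $Ax_r \vdash \alpha.(\hat{Q} + \alpha_i.P_i) = \alpha.\hat{Q}$ (and by symmetry the analogous statement with $\hat{P}$ and $\hat{Q}$ swapped).

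For $\alpha_i \in A_\tau$ I follow the classical completeness template for rooted branching bisimilarity. Clause~1.a of Definition~\ref{def:intuitive} supplies a matching path $Q \steptau Q^{(1)} \steptau \cdots \steptau Q^{(n)} \step{\opt{\alpha_i}} Q^{(n+1)}$ with $P_i \bisimtbrc Q^{(n+1)}$ and, by the Stuttering Lemma (Lemma~\ref{lem:stuttering}), $P \bisimtbrc Q^{(k)}$ for every $0 \le k \le n$. Applying the primary IH to the pairs $(P, Q^{(k)})$ for $k \ge 1$ and to $(P_i, Q^{(n+1)})$ (all of strictly smaller combined depth) yields $\alpha.\widehat{Q^{(k)}} = \alpha.\hat{P}$ and $\alpha.\widehat{Q^{(n+1)}} = \alpha.\hat{P_i}$ for every $\alpha \in Act$, in particular identifying each intermediate $\tau$-summand $\tau.\widehat{Q^{(k)}}$ with $\tau.\hat{P}$. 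An auxiliary induction on the path length $n$ then absorbs $\alpha_i.\hat{P_i}$ into $\alpha.\hat{Q}$ using the branching axiom $\alpha.(\tau.(x+y) + x) = \alpha.(x+y)$; the stuttering sub-case ($\alpha_i = \tau$ and $Q^{(n)} = Q^{(n+1)}$) collapses by idempotence once $\tau.\hat{P_i}$ is identified with one of the $\tau$-summands of $\hat{Q}$ via the IH.

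For $\alpha_i = \rt$: if $\tau \in \init{P}$ then the derived law $\mathbf{L}\tau$ absorbs the $\rt$-summand into a $\tau$-summand, reducing to the previous case. Otherwise $\tau \notin \init{P}$, and by Lemma~\ref{lem:obvious}.2 also $\tau \notin \init{Q}$; setting $X := A \setminus \init{P}$ gives $\deadend{P}{X}$. The generalised \tb reactive bisimulation of Appendix~\ref{app:gbrb} supplies a uniform match: a single $Q'$ with $Q \step{\rt} Q'$ (no preceding $\tau$-steps, as $\tau \notin \init{Q}$) and $P_\rt \bisimtbrc[Y] Q'$ for every $Y \subseteq X$. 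Proposition~\ref{prop:time-out bisim}.\ref{corr} converts each such $Y$-bisimilarity into $\theta_Y(P_\rt) \bisimtbrc \theta_Y(Q')$; the primary IH then yields $\rt.\widehat{\theta_Y(P_\rt)} = \rt.\widehat{\theta_Y(Q')}$, since $d(\theta_Y(\cdot)) \le d(\cdot)$. A case split on $Y$ via the $\psi_Y$-axioms---specifically $\psi_Y(\sum_k \rt.y_k) = \sum_k \rt.\theta_Y(y_k)$ for $Y \subseteq X$ and $\psi_Y(x + a.y + \rt.z) = \psi_Y(x + a.y)$ for $Y$ containing some $a \in \init{P} = \init{Q}$---together with Lemma~\ref{lem:head-normal form} and idempotence, then establishes $\psi_Y(\hat{Q} + \rt.P_\rt) = \psi_Y(\hat{Q})$ for every $Y$. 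The reactive approximation axiom concludes $\hat{Q} + \rt.P_\rt = \hat{Q}$.

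The main obstacle is the time-out sub-case: bridging the parameterised bisimilarity $P_\rt \bisimtbrc[X] Q'$---which by Proposition~\ref{prop:time-out bisim} only yields a $\theta_X$-wrapped equality---to an unconditional equation on $\hat{Q}$. This requires both the uniform matching of Appendix~\ref{app:gbrb} (so a single $Q'$ serves for every admissible $Y$) and the full strength of the reactive approximation axiom mediated through the $\psi_Y$-axioms. The nested induction in the $A_\tau$ case is also delicate but follows the standard axiomatic template for branching bisimilarity. The $\bisimb$ case has no time-out-specific clauses: $\rt$ is treated as an ordinary action under Clause~1.a of Definition~\ref{def:non-reactive}, so the $A_\tau$ argument above (with $Act$ in place of $A_\tau$) suffices and is carried out entirely within the smaller axiomatisation $Ax$.
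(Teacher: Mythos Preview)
Your per-summand absorption strategy has a genuine gap in the $\rt$ case. The claim $Ax_r \vdash \alpha.(\hat{Q} + \rt.P_\rt) = \alpha.\hat{Q}$ is not sound when $\tau \in \init{P}$ but $\tau \notin \init{Q}$. Take $P = \tau.a.0 + \rt.b.0$ and $Q = a.0$; these satisfy $P \bisimtbrc Q$, yet $a.0 + \rt.b.0 \not\bisimtbrc a.0$ (in the empty environment the $\rt$-transition to $b.0$ has no match), so by soundness $Ax_r \not\vdash \alpha.(a.0 + \rt.b.0) = \alpha.a.0$. Your appeal to $\mathbf{L}\tau$ does not rescue this: $\mathbf{L}\tau$ needs a $\tau$-summand in the \emph{same} sum, but the $\tau$-summand lives in $\hat{P}$, not in $\hat{Q} + \rt.P_\rt$. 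A secondary issue: the uniformity claim that a single $Q'$ witnesses $P_\rt \bisimtbrc[Y] Q'$ for every $Y \subseteq X$ is not supplied by the generalised bisimulation of Appendix~\ref{app:gbrb}; the matching target may depend on $Y$. (This one is not fatal, since the $\psi_Y$-argument can be run per $Y$, but as written it is unjustified.)

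The paper's proof circumvents the main problem by first reducing each side to a $\tau$-stable representative: pick $P \pathtau P_0$ with $P \bisimtbrc P_0$ and $P_0$ $\tau$-stable, and prove $\alpha.\hat{P} = \alpha.\hat{P}_0$ using $\mathbf{L}\tau$ and the branching axiom \emph{on the $\hat{P}$ side}, where the required $\tau$-summand is available; likewise obtain $Q_0$. A short lemma (Lemma~\ref{lem:brb-stable}) then upgrades $P_0 \bisimtbrc Q_0$ to $P_0 \bisimrtbrc Q_0$, forcing $\init{P_0} = \init{Q_0}$. Only at this point does the $\psi_X$/reactive-approximation argument go through, and it yields the stronger $\hat{P}_0 = \hat{Q}_0$ rather than merely equality under a prefix.
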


\begin{proof}
    The \emph{depth} $d(p)$ of a process $P$ is the length of the longest path starting from $P$. Note that it is properly defined for recursion-free processes only. The proof proceeds by induction on $\max(d(P),d(Q))$. The technique is fairly standard and the details can be found in Appendix \ref{app:completeness finite}.
\end{proof}

\begin{theorem} \label{thm:completeness finite}
    Let $P,Q$ be two recursion-free $\ccsp$ processes. If $P \bisimrtbrc Q$ (resp.\ $P \bisimrb Q$) then $Ax_r \vdash P = Q$ (resp.\ $Ax \vdash P = Q$).
\end{theorem}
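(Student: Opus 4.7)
The plan is to reduce $P$ and $Q$ to head-normal form via Lemma~\ref{lem:head-normal form} and then equate these normal forms summand-by-summand with the help of Proposition~\ref{prop:collapse}. For $\bisimrb$ this is fairly direct; for $\bisimrtbrc$ the $\rt$-summands are delicate because Clause~2.d of Definition~\ref{def:rooted intuitive} only supplies matches \emph{relative to} an environment $X$, and the matching partner may vary with $X$. This environment-dependence is the main obstacle, which I propose to circumvent using the operator $\psi_X$ together with the reactive approximation axiom.

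Write $\hat{P}=\sum_{i\in I}\alpha_i.P_i$ and $\hat{Q}=\sum_{j\in J}\beta_j.Q_j$. By Lemma~\ref{lem:head-normal form}, it suffices to prove $\hat{P}=\hat{Q}$. In the $\bisimrb$ case, Definition~\ref{def:rooted non-reactive} yields, for each summand $\alpha_i.P_i$, a transition $Q\step{\alpha_i} Q_j$ with $P_i\bisimb Q_j$; Proposition~\ref{prop:collapse}, combined with Lemma~\ref{lem:head-normal form} to strip the hats from $P_i$ and $Q_j$, gives $Ax\vdash \alpha_i.P_i = \alpha_i.Q_j$. Pairing this with the symmetric matches and using commutativity, associativity and idempotence of $+$ yields $Ax\vdash\hat{P}=\hat{Q}$.

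For $\bisimrtbrc$, I will instead prove $Ax_r\vdash\psi_X(\hat{P})=\psi_X(\hat{Q})$ for every $X\subseteq A$ and then invoke the reactive approximation axiom. Since Clause~1.a of Definition~\ref{def:rooted intuitive} matches all $A_\tau$-transitions at the root, $\init{P}\cap(X\cup\{\tau\})=\init{Q}\cap(X\cup\{\tau\})$, so $\deadend{P}{X}$ holds for $\hat{P}$ iff it does for $\hat{Q}$, and the following case split is consistent. Fix $X$. If $\deadend{P}{X}$ fails, iterating the axioms $\psi_X(x+\alpha.y)=\psi_X(x)+\alpha.y$ for $\alpha\notin X\cup\{\tau,\rt\}$, $\psi_X(x+\alpha.y+\rt.z)=\psi_X(x+\alpha.y)$ for $\alpha\in X\cup\{\tau\}$, and $\psi_X(\alpha.x)=\alpha.x$ for $\alpha\neq\rt$ collapses $\psi_X(\hat{P})$ to the sum of its non-$\rt$ summands (and similarly for $\hat{Q}$), which are matched exactly as in the $\bisimrb$ case. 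If $\deadend{P}{X}$, then $\hat{P}$ carries no summand with label in $X\cup\{\tau\}$, and the axioms (in particular $\psi_X(\sum\rt.y_i)=\sum\rt.\theta_X(y_i)$) rewrite $\psi_X(\hat{P})$ to $\sum_{\alpha_i\in A\setminus X}\alpha_i.P_i + \sum_{\alpha_i=\rt}\rt.\theta_X(P_i)$; the non-$\rt$ summands are again matched via Clause~1.a, and for each $\rt$-summand, Clauses~1.b and~2.d supply a matching $\rt.Q_j$ of $\hat{Q}$ with $P_i\bisimtbrc[X] Q_j$. Proposition~\ref{prop:time-out bisim}.\ref{corr} rephrases the latter as $\theta_X(P_i)\bisimtbrc\theta_X(Q_j)$, and Proposition~\ref{prop:collapse} then yields $Ax_r\vdash\rt.\theta_X(P_i)=\rt.\theta_X(Q_j)$.

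Combining matched summands, $Ax_r\vdash\psi_X(\hat{P})=\psi_X(\hat{Q})$ holds for every $X$, and the reactive approximation axiom delivers $Ax_r\vdash\hat{P}=\hat{Q}$, hence $Ax_r\vdash P=Q$. The main technical hurdle is precisely the $X$-dependent matching of $\rt$-transitions; the $\psi_X$-and-reactive-approximation detour handles it by fixing the environment before comparing, reducing everything to equalities already afforded by Proposition~\ref{prop:collapse}, and then reassembling the results across all $X$.
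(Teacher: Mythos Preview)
Your proposal is correct and follows essentially the same route as the paper. In the paper, the $\bisimrtbrc$ case is handled by pointing to the argument equating $P_0$ and $Q_0$ inside the proof of Proposition~\ref{prop:collapse}, which is precisely your $\psi_X$-per-environment computation followed by the reactive approximation axiom; the $\bisimrb$ case is the straightforward summand matching you describe.
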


\begin{proof}
    It suffices to express both processes in their head-normal form and then to equate each pair of matching branches using Proposition \ref{prop:collapse}. Details are in Appendix \ref{app:completeness finite}.
\end{proof}

\noindent
The following theorem lifts this result for $\bisimrb$ from finite (recursion-free) processes to arbitrary (infinite) ones, subject to the restriction of strong guardedness.

\begin{theorem} \label{thm:completeness}
    Let $P,Q$ be strongly guarded $\ccsp$ processes.\\ If $P \bisimrb Q$ then $Ax^\infty \vdash {P} = {Q}$.
\end{theorem}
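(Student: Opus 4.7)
The plan is to apply RSP via the equation-merging technique. I will construct a well-guarded recursive specification $\equa$ with a designated variable $y\in V_\equa$ and substitutions $\rho_P,\rho_Q:V_\equa\to\closed$ that are provably solutions of $\equa$ in $Ax^\infty$, with $\rho_P(y) = P^\circ$ and $\rho_Q(y) = Q^\circ$ for chosen stable representatives $P^\circ$ of $[P]$ and $Q^\circ$ of $[Q]$. RSP (Proposition~\ref{prop:rsp}) then gives $P^\circ = \langle y|\equa\rangle = Q^\circ$. A final application of the sub-lemma below then delivers $P = P^\circ = Q^\circ = Q$.

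I index the variables of $\equa$ by the $\bisimb$-equivalence classes $[R]$ of states reachable from $P$ or $Q$; the root variable is $y := z_{[P]}$, equal to $z_{[Q]}$ because $P \bisimrb Q$ implies $P \bisimb Q$, hence $[P]=[Q]$. Strong guardedness guarantees that every reachable $R$ reaches, in finitely many $\tau$-steps, a stable state $R^\circ$ with $R^\circ\nsteptau$ and $R \bisimb R^\circ$ (by the stability-respecting clause and the stuttering lemma along the $\tau$-path). Fix one such $R^\circ$ per class, set $\rho_P(z_{[R]}) := R^\circ$ (and likewise $\rho_Q$), and define
\[
\equa_{z_{[R]}} \;:=\; \textstyle\sum_{R^\circ \step{\alpha} R'}\alpha.z_{[R']}\,.
\]
Stability of $R^\circ$ forces every $\alpha$ here to lie in $A\cup\{\rt\}$, so $\equa$ is well-guarded.

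The key sub-lemma is $Ax^\infty\vdash R = R^\circ$ for every reachable $R$; from it, $Ax^\infty\vdash \rho_P(z_{[R]}) = \equa_{z_{[R]}}[\rho_P]$ follows by combining Lemma~\ref{lem:head-normal form} (expanding $R^\circ$ into $\sum_{R^\circ\step{\alpha}R'}\alpha.R'$) with the congruence of $\bisimrb$ under action prefixing (Theorem~\ref{thm:congruence}) to replace each $\alpha.R'$ by $\alpha.(R')^\circ$. The sub-lemma itself is proved by induction on the length of the chosen $\tau$-path from $R$ to $R^\circ$: at each step $R_i\step{\tau}R_{i+1}$ within the class $[R]$, the \textbf{Branching Axiom} $\alpha.(\tau.(x+y)+x) = \alpha.(x+y)$ absorbs the $\tau$-step, after expanding $R_i$ and $R_{i+1}$ into head-normal form and matching their summands via $\bisimb$. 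The same argument works for $\rho_Q$; rootedness of $\bisimrb$ is exactly what guarantees that the top-level transition bijection between $P$ and $Q$ induces one and the same root equation.

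The main obstacle is the sub-lemma $R = R^\circ$ in $Ax^\infty$. The \textbf{Branching Axiom} is the natural tool for $\tau$-absorption, but syntactically it applies under an outer action prefix; extracting the identity between $R_i$ and $R_{i+1}$ at the top level requires first exhibiting their head-normal forms, then arguing that the $\tau$-summand witnessing $R_i\step{\tau}R_{i+1}$ can be absorbed because $R_{i+1}$'s remaining summands are already present in $R_i$'s head-normal form up to $\bisimb$, and hence up to $Ax^\infty$-provable equality by a nested invocation of the sub-lemma at the summand level. Strong guardedness keeps both the $\tau$-path and the branching structure finite, so this double induction is well-founded. This is the classical delicate core of equation-merging completeness for branching bisimilarity, mirroring the standard proofs in ACP and $\mu$CRL.
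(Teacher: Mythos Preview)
Your proposal has a genuine gap. The sub-lemma $Ax^\infty \vdash R = R^\circ$ is simply \emph{unsound} at top level: take $R = \tau.a.0 + a.0$ with $R^\circ = a.0$; then $R \bisimb R^\circ$ but $R \not\bisimrb R^\circ$ (the $\tau$-step of $R$ cannot be matched), so by soundness no such derivation exists. Your final step ``$P = P^\circ$'' therefore fails whenever $P$ itself is unstable. Even retreating to the prefixed form $\alpha.R = \alpha.R^\circ$ (which \emph{is} sound), your ``double induction'' does not terminate: to absorb a step $R_i \steptau R_{i+1}$ via the branching axiom you must first prove $\beta.S = \beta.S'$ for each matched pair of outgoing summands with $S \bisimb S'$, but $S,S'$ need not be smaller than $R_i$ in any well-founded sense---strong guardedness bounds only the length of $\tau$-paths, not the depth of the reachable state space. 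Concretely, for $R = \langle x \mid x = \tau.a.x + a.\tau.a.x\rangle$ with $R^\circ = a.R$, verifying that $\rho_P$ solves your equation $z_{[R]} = a.z_{[R]}$ requires $a.R = a.a.R$, which via your method reduces back to $a.R = a.R^\circ$---the very statement you started from.

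The paper's equation-merging proof avoids this by indexing variables by \emph{pairs} $x_{EF}$ with $E \bisimb F$ (not by equivalence classes) and by letting the equations themselves carry $\tau$-summands $\tau.x_{E'F}$ and $\tau.x_{EF'}$ for unilateral elidable $\tau$-steps. The candidate solution at $x_{EF}$ is then either $E$ or $H_{EF}+\tau.E$, and a \emph{single} application of the branching axiom yields $\alpha.G_{EF}=\alpha.E$; no recursive collapse to stable representatives is needed. Rootedness is used only at the distinguished variable $x_0$, where the two extra $\tau$-sums are omitted and the solution is literally $E_0$ (resp.\ $F_0$). Your quotient-by-classes idea is closer in spirit to the canonical-representative technique used later for $\bisimrtbrc$ (Section~\ref{subsec:canonical}), but there the circularity is broken by a \emph{second} specification $\equa'$ indexed by individual states, with RSP applied to $\equa'$ rather than to the class-indexed $\equa$.
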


\begin{proof}
A well-known technique called \emph{equation merging} can be applied. Details can be found in Appendix \ref{app:completeness}.
\end{proof}

\subsection{Canonical Representative} \label{subsec:canonical}

Unfortunately, equation merging does not work on reactive bisimulations \cite{strongreactivebisimilarity}. Thus, another technique is used \cite{GF20,LY20}, called \emph{canonical representatives}. The idea is to build the simplest process for each equivalence class of $\bisimrtbrc$ and use them as intermediary to equate processes.

Let us denote with $\closed^g$ the strongly guarded fragment of $\closed$. For all $P \in \closed^g$, $[P] := \{Q \in \closed^g \mid P \bisimtbrc Q\}$ is the $\bisimtbrc$-equivalence class of $P$. $[\closed^g]$ denotes the set of all $\bisimtbrc$-equivalence classes. Using the axiom of choice, a choice function $\chi: [\closed^g] \rightarrow \closed^g$ can be defined such that $\forall R \in [\closed^g], \chi(R) \in R$. A transition relation can be defined between $\bisimtbrc$-equivalence classes:
\begin{align*}
    \forall \alpha \in A_\tau, (R \step{\alpha} R' \Leftrightarrow\; & \chi(R) \pathtau P_1 \step{\alpha} P_2 \wedge P_1\in R \wedge P_2\in R' \wedge (\alpha \in A \vee R \ne R')) \\
    R \step{\rt} R' \Leftrightarrow\; & \chi(R) \pathtau P_1 \step{\rt} P_2 \wedge P_1 \in R \wedge {P_1\!\nsteptau} \wedge P_2 \in R' \\
\end{align*}

\noindent
All bisimulations can be extended to $\bisimtbrc$-equivalence classes. It suffices to consider the set of states $\closed^g \uplus [\closed^g] \uplus \{\theta_X([P]) \mid X \subseteq A \wedge P\in \closed^g\}$.

\begin{proposition} \label{prop:class}
    Let $P \in \closed^g$, $P \bisimtbrc[] [P]$.
\end{proposition}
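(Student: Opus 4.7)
The plan is to exhibit a \tb time-out bisimulation (Definition~\ref{def:time-out bisim}) on the extended state space containing the pair $(P, [P])$; by Proposition~\ref{prop:time-out bisim}.1 this yields $P \bisimtbrc [P]$. The natural candidate is the symmetric closure of
\[
{\tbisim}_0 \;:=\; \{(Q, R) \mid Q \in R\} \;\cup\; \{(\theta_X(Q), \theta_X(R)) \mid Q \bisimtbrc[X] \chi(R)\},
\]
over the state space $\closed^g \uplus [\closed^g] \uplus \{\theta_X([P]) \mid X \subseteq A,\, P \in \closed^g\}$. Since $P \in [P]$ by reflexivity of $\bisimtbrc$, the pair $(P, [P])$ lies in ${\tbisim}_0$, so the proposition reduces to verifying that ${\tbisim} := {\tbisim}_0 \cup {\tbisim}_0^{-1}$ satisfies the clauses of Definition~\ref{def:time-out bisim}.

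The key leverage is that every pair $(Q, R) \in {\tbisim}_0$ forces $Q \bisimtbrc \chi(R)$. For Clause~1, apply Clause~1.a of the underlying $\bisimtbrc$-witness between $Q$ and $\chi(R)$ to any move $Q \step\alpha Q'$ to obtain $\chi(R) \pathtau P_1 \step{\opt\alpha} P_2$ with $Q \bisimtbrc P_1$ (so $P_1 \in R$) and $Q' \bisimtbrc P_2$ (so $P_2 \in [Q']$); the class-level transition relation is designed precisely so that $P_1 \step\alpha P_2$ lifts to $R \step\alpha [Q']$, except for the degenerate case $\alpha = \tau$ with $[Q'] = R$ which is absorbed by the $\opt{\tau}$ identity. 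For Clause~3, iterated application of Clause~1.a to $Q \bisimtbrc \chi(R)$ together with $Q \nsteptau$ shows that every $\tau$-descendant of $\chi(R)$ is $\bisimtbrc$-related to $Q$ and hence lies in $R$; consequently $R$ has no class-level $\tau$-transitions, so $R_0 := R$ works. The backward direction is symmetric: a class-level move $R \step\alpha R'$ unfolds to $\chi(R) \pathtau P_1 \step\alpha P_2$ with $P_1 \in R$, hence $P_1 \bisimtbrc Q$ (the Stuttering Lemma~\ref{lem:stuttering} being invoked to propagate the bisimilarity along the $\tau$-chain), and the bisimilarity between $Q$ and $P_1$ furnishes the required $Q$-side match.

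The main obstacle is Clause~2 (matching $\rt$-transitions), because Clause~2.d of the underlying bisimulation only supplies an $X$-bisimilarity match after an $\rt$-move, not a pair match. Given $\deadend{Q}{X}$ and $Q \step\rt Q'$, one first applies Lemma~\ref{lem:obvious}.4 to $Q \bisimtbrc[X] \chi(R)$ (available from $Q \bisimtbrc \chi(R)$ via Clause~1.b) to extract a stable descendant $P_0 \in R$ with $\deadend{P_0}{X}$; a further application of Clause~2.d then yields $P_0 \step\rt P_2$ with $Q' \bisimtbrc[X] P_2$. The witness $\chi(R) \pathtau P_0 \step\rt P_2$ lifts to $R \step\rt [P_2]$ at the class level. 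The required target match $\theta_X(Q') \mathrel{\tbisim} \theta_X([P_2])$ then holds because $Q' \bisimtbrc[X] P_2 \bisimtbrc[X] \chi([P_2])$---which is exactly why the $\theta_X$-component of ${\tbisim}_0$ is defined in terms of $\bisimtbrc[X]$, via Proposition~\ref{prop:time-out bisim}.\ref{corr}. Verifying the clauses for pairs $(\theta_X(Q), \theta_X(R)) \in {\tbisim}_0$ is then routine once the transitions of the extended-LTS states $\theta_X([P])$ are unfolded according to the $\theta_X$ operational semantics.
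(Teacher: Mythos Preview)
Your approach is correct but genuinely different from the paper's. The paper does not attempt a direct time-out bisimulation; instead it takes the much smaller relation $\tbisim := \{(P,[P]),([P],P)\mid P\in\closed^g\}$ and shows it is a \tb time-out bisimulation \emph{up to $\bisimtbrc$} (Definition~\ref{def:up to b}, Proposition~\ref{prop:up to b}). The $\theta_X$-targets arising from Clause~2 are then discharged via the up-to closure together with an auxiliary result, Lemma~\ref{lem:theta class} ($\theta_X([P])\bisimtbrc[\theta_X(P)]$), which is itself established by a separate bisimulation up to reflexivity and transitivity. Your construction instead inlines the $\theta_X$-reasoning into a single larger relation and aims for a plain time-out bisimulation; this avoids both the up-to machinery and the auxiliary lemma, at the cost of pushing all the work into the verification for the $\theta_X$-pairs.

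A word of caution: that last verification is not quite as ``routine'' as you suggest. For Clause~1 on a pair $(\theta_X(Q),\theta_X(R))$ with $Q\bisimtbrc[X]\chi(R)$, the match $\chi(R)\pathtau P_1$ produced by Clause~2.a of Definition~\ref{def:intuitive} need not stay inside $R$ (you only know $Q\bisimtbrc[X]P_1$, not $Q\bisimtbrc P_1$); you must therefore lift the concrete $\tau$-path to a class-level path $R\pathtau[P_1]$ step by step via Lemma~\ref{lem:transition class}.1, and then use $Q\bisimtbrc[X]P_1\bisimtbrc\chi([P_1])$ to land back in ${\tbisim}_0$. Similarly, for Clause~2 on such a pair (with a fresh environment $Y$), $\deadend{\theta_X(Q)}{Y}$ together with the semantics of $\theta_X$ gives $\deadend{Q}{X\cup Y}$; Lemma~\ref{lem:obvious}.4 then yields a stable descendant $P_0$ of $\chi(R)$ with $Q\bisimtbrc P_0$ and $\deadend{P_0}{X\cup Y}$, and you need Lemma~\ref{lem:transition class}.3 and~\ref{lem:transition class}.5 to realise the class-level path $R\pathtau[P_0]\step{\rt}R''$ with the right $\theta_Y$-match. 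All of this goes through, but it essentially re-proves the content of Lemma~\ref{lem:theta class} inside your bisimulation check. The paper's route is more modular; yours is more self-contained.
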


\begin{proof}
    It suffices to prove that $\tbisim := \{(P,[P]),([P],P) \mid P \in \closed^g\}$ is a \tb time-out bisimulation up to $\bisimtbrc$\,. Details can be found in Appendix \ref{app:canonical rep}.
\end{proof}

\begin{definition}\rm \label{def:canonical representative}
    Let $P, Q \in \closed^g$, the \emph{canonical representative} of $P$ and $Q$ is a recursive specification $\equa$ such that $V_{\equa} := \{x_P,x_Q\} \cup \{x_R \mid R \in \bigcup_{P' \in \reach{P} \cup \reach{Q}}\reach{[P']}\}$, and $\forall R \in \bigcup_{P' \in \reach{P}\cup \reach{Q}}\reach{[P']}$,
    \begin{align*}
        \equa_{x_P} :=  \hspace{-10pt}\sum_{\{(\alpha,P') \mid P \step{\alpha} P'\}} \hspace{-10pt}\alpha.x_{[P']} \mbox{ ; } \equa_{x_Q} := \hspace{-10pt}\sum_{\{(\alpha,Q') \mid Q \step{\alpha} Q'\}} \hspace{-10pt}\alpha.x_{[Q']} \mbox{ and } \equa_{x_R} :=  \hspace{-10pt}\sum_{\{(\alpha,R') \mid R \step{\alpha} R'\}} \hspace{-10pt}\alpha.x_{R'}
    \end{align*}
\end{definition}

\noindent
The canonical representative is well-defined since $P$, $Q$, as well as processes $[P']\in[\closed^g]$ are finitely branching~\cite{strongreactivebisimilarity}. Additionally, $\bigcup_{P' \in \reach{P}\cup\reach{Q}}\reach{[P']}$ is countable. Moreover, $\equa$ is strongly guarded. Furthermore, \hypertarget{recall}{by construction $\langle x_R |\equa\rangle \bisim R$} for all $R \in \bigcup_{P' \in \reach{P}\cup\reach{Q}}\reach{[P']}$. 

\begin{proposition} \label{prop:canonical}
    Let $P,Q \in \closed^g$ and $\equa$ be the canonical representative of $P$ and $Q$. $Ax^\infty_r \vdash P = \langle x_P |\equa\rangle$.
\end{proposition}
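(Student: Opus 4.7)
The plan is to apply RSP (Proposition~\ref{prop:rsp}), which is applicable since $\equa$ is well-guarded (as noted right after Definition~\ref{def:canonical representative}). It therefore suffices to exhibit a substitution $\nu : V_\equa \to \closed$ with $Ax^\infty_r \vdash \nu(x) = \equa_x[\nu]$ for every $x \in V_\equa$: RSP will then deliver $Ax^\infty_r \vdash \nu(x_P) = \langle x_P|\equa\rangle$, and choosing $\nu(x_P) := P$ gives exactly the desired equality.

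I would take $\nu(x_P) := P$, $\nu(x_Q) := Q$, and $\nu(x_R) := \chi(R)$ for every class variable $x_R$. Verification splits into two families of obligations. \emph{Class obligations:} for each $x_R$ one must derive $Ax^\infty_r \vdash \chi(R) = \sum_{R\step\alpha R'}\alpha.\chi(R')$. Lemma~\ref{lem:head-normal form} supplies the \emph{immediate} head-normal form $\chi(R) = \sum_{\chi(R)\step\beta S}\beta.S$; recovering the target sum, which ranges over \emph{class} transitions and thus packages $\pathtau$-prefixes from $\chi(R)$ whose intermediate states lie within $R$ (by the stuttering lemma, Lemma~\ref{lem:stuttering}), requires repeated use of the branching axiom $\alpha.(\tau.(x+y)+x) = \alpha.(x+y)$ under a suitable enclosing prefix to absorb those internal $\tau$-steps; strong guardedness of $\chi(R)$ bounds the number of absorptions needed. \emph{Root obligations:} for $x_P$ (symmetrically $x_Q$) one must derive $Ax^\infty_r \vdash P = \sum_{P\step\alpha P'}\alpha.\chi([P'])$; by Lemma~\ref{lem:head-normal form} this reduces to $Ax^\infty_r \vdash \alpha.P' = \alpha.\chi([P'])$ for each transition $P\step\alpha P'$, i.e.\ a prefix equation between two $\bisimtbrc$-equivalent strongly guarded processes lying in the same class $[P']$.

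The main obstacle is this last prefix identity: it is a ``rooted'' completeness statement for a specific pair of processes, and must be established without appealing to the general completeness theorem that Proposition~\ref{prop:canonical} is itself being used to prove. My plan is to discharge it together with the class obligations: both $\alpha.P'$ and $\alpha.\chi([P'])$ can be rewritten, via the branching axiom applied under the enclosing prefix $\alpha$, into the common form $\alpha.\sum_{[P']\step\beta R'}\beta.\chi(R')$, the prefix allowing all internal $\tau$-steps connecting $P'$ and $\chi([P'])$ inside $[P']$ to be absorbed. Once this reduction is in place, both families of obligations collapse to the class-equation derivation, and a single application of RSP delivers $Ax^\infty_r \vdash P = \langle x_P|\equa\rangle$.
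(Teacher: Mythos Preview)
Your substitution $\nu(x_R) := \chi(R)$ does not, in general, satisfy the class equation $\equa_{x_R}$ --- not even semantically. Recall that the class transition relation suppresses $\tau$-loops: $R \step\tau R'$ requires $R \neq R'$. So if $\chi(R)$ happens to have an initial $\tau$-step to another member of $R$ (e.g.\ $\chi(R) = \tau.a.0$ with $a.0 \in R$), then $R$ has no outgoing $\tau$-transition at all, and $\equa_{x_R}[\nu]$ is a sum with no $\tau$-summand. In that situation $\chi(R)$ and $\equa_{x_R}[\nu]$ fail to be \emph{rooted} \tb reactive bisimilar, hence $Ax^\infty_r \vdash \chi(R) = \equa_{x_R}[\nu]$ is impossible by soundness. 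You correctly observe that the branching axiom needs an enclosing prefix, but the class obligation you wrote down has none; there is no ``suitable enclosing prefix'' to borrow. Consequently the hypothesis of RSP is simply not met for your $\nu$, and the argument collapses. (A secondary omission: matching $\rt$-summands between $\chi(R)$ and $\equa_{x_R}[\nu]$ would also require the reactive approximation axiom, which you do not invoke.)

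The paper sidesteps exactly this rootedness obstacle by \emph{not} trying to exhibit a solution of $\equa$. Instead it introduces an auxiliary well-guarded specification $\equa'$ over variables $y_{P'}$ indexed by the \emph{processes} $P' \in \reach{P}$, with $\equa'_{y_{P'}} := \sum_{P'\step\alpha P''}\alpha.y_{P''}$. Since these equations copy the actual transitions of $P'$, the assignment $\rho(y_{P'}) := P'$ is a solution immediately by Lemma~\ref{lem:head-normal form}, with no rooted gap. A second solution $\nu(y_{P'}) := \sum_{P'\step\alpha P''}\alpha.\langle x_{[P'']}\mid\equa\rangle$ is then verified by proving $Ax^\infty_r \vdash \alpha.\nu(y_{P'}) = \alpha.\langle x_{[P']}\mid\equa\rangle$; here the prefix $\alpha$ is genuinely present, so the branching axiom (and, for the $\rt$-case, the reactive approximation axiom together with the completeness of $Ax^\infty$ for $\bisimrb$) can legitimately absorb the $\tau$-steps inside the class. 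RSP on $\equa'$ then gives $P = \rho(y_P) = \nu(y_P)$, and one application of RDP identifies $\nu(y_P)$ with $\langle x_P\mid\equa\rangle$.
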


\begin{proof}
  It suffices to show that $P$ and $\langle x_P|\equa\rangle$ are $y_P$-components of solutions of
  $\{y_{P^\dag} = \sum_{\{(\alpha,P^\ddag) \mid P^\dag\step{\alpha} P^\ddag\}}\alpha.y_{P^\ddag} \mid P^\dag \in \reach{P}\}$.
  Details can be found in Appendix \ref{app:canonical}.
\end{proof}

\begin{theorem} \label{thm:canonical}
    Let $P,Q \in \closed^g$. If $P \bisimrtbrc Q$ then $Ax^\infty_r \vdash P = Q$.
\end{theorem}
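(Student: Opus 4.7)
The plan is to use the canonical representative $\equa$ of $P$ and $Q$ as a syntactic bridge. By Proposition~\ref{prop:canonical} applied to $(P,Q)$ and, symmetrically (with the roles of $x_P$ and $x_Q$ swapped), to $(Q,P)$, we obtain $Ax^\infty_r \vdash P = \langle x_P|\equa\rangle$ and $Ax^\infty_r \vdash Q = \langle x_Q|\equa\rangle$; hence it suffices to derive $Ax^\infty_r \vdash \langle x_P|\equa\rangle = \langle x_Q|\equa\rangle$. For this I invoke RSP. Since $\equa$ is strongly guarded, it is well-guarded. Define $\rho\in\closed^{V_\equa}$ by $\rho(x_P) := \rho(x_Q) := \langle x_P|\equa\rangle$ and $\rho(x_R) := \langle x_R|\equa\rangle$ on every equivalence-class variable $x_R$. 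Showing $\rho$ is derivably a solution of $\equa$ then yields, via RSP, $\rho(x_Q) = \langle x_Q|\equa\rangle$, whence $\langle x_P|\equa\rangle = \langle x_Q|\equa\rangle$ as required. For every equation of $\equa$ other than the one for $x_Q$, the right-hand side involves only equivalence-class variables, on which $\rho$ and the default solution $\rho_0$ coincide; the solution condition therefore reduces to RDP.

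The one non-trivial case is the equation for $x_Q$, which after unfolding via RDP on the right becomes
\[
\textstyle \sum_{Q\step{\alpha}Q'}\alpha.\langle x_{[Q']}|\equa\rangle \;=\; \sum_{P\step{\alpha}P'}\alpha.\langle x_{[P']}|\equa\rangle .
\]
Using $P\bisimrtbrc Q$, I split on $\alpha$. For each $\alpha\in A_\tau$, rooted Clause~1.a of Definition~\ref{def:rooted intuitive} pairs every $P$-transition with a $Q$-transition whose target lies in the same $\bisimtbrc$-class (and vice versa), so the two families of $A_\tau$-summands coincide as multisets of $\ccsp$-terms and are identified using associativity, commutativity and idempotence of $+$. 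For $\rt$-summands, if $\tau\in\init{P}$ then by the rooted matching $\tau\in\init{Q}$ as well; both sides then contain a $\tau$-prefix summand and repeated application of the derived law \hyperlink{Lt}{$L\tau$} absorbs every $\rt$-summand, reducing the identity to its $A_\tau$-part. Otherwise $P\nsteptau$ and $Q\nsteptau$, so $\init{P}=\init{Q}$ by Lemma~\ref{lem:obvious}.2, and rooted Clause~2.d matches the $\rt$-transitions up to $\bisimtbrc[X]$ for every $X$ with $\deadend{P}{X}$.

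The main obstacle is the last sub-case: Clause~2.d only delivers matching of $\rt$-targets up to the $X$-indexed relation $\bisimtbrc[X]$, whereas the canonical representative identifies $\rt$-successors through their unconditional $\bisimtbrc$-class. I expect to close this gap by fixing the maximal environment $X = A\setminus\init{P}$, then using Proposition~\ref{prop:time-out bisim}.\ref{corr} to transport $\bisimtbrc[X]$ into $\bisimtbrc$ after applying $\theta_X$, together with Lemma~\ref{lem:obvious}.3---mirroring the canonical-representatives manoeuvre used for strong reactive bisimilarity in~\cite{strongreactivebisimilarity}.
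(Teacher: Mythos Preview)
Your overall shape---reduce to showing $\sum_{P\step\alpha P'}\alpha.\langle x_{[P']}|\equa\rangle = \sum_{Q\step\alpha Q'}\alpha.\langle x_{[Q']}|\equa\rangle$ and handle the $A_\tau$-summands by class equality---coincides with the paper's. The gap is precisely where you flag it, and your proposed patch does not close it.

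The equality you need for the $\rt$-summands would require that for every $P\step\rt P'$ there is $Q\step\rt Q'$ with $[P']=[Q']$, i.e.\ $P'\bisimtbrc Q'$. But rooted Clause~2.d only yields $P'\bisimtbrc[X] Q'$ for each admissible $X$, with the witness $Q'$ possibly depending on $X$, and $P'\bisimtbrc[X] Q'$ does \emph{not} imply $P'\bisimtbrc Q'$. A concrete obstruction: $P=a.0+\rt.(a.0+\tau.b.0)$ and $Q=a.0+\rt.\tau.b.0$ satisfy $P\bisimrtbrc Q$, yet the unique $\rt$-targets $P'=a.0+\tau.b.0$ and $Q'=\tau.b.0$ lie in different $\bisimtbrc$-classes. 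So the two sums you must equate have genuinely different $\rt$-summands at the level of classes. Your appeal to Lemma~\ref{lem:obvious}.3 cannot help: that lemma concerns the \emph{sources} of a $\rt$-step (it needs $\deadend{P'}{X}$ and $Q'\nsteptau$, neither of which holds here), not the targets, and picking the single maximal $X=A\setminus\init{P}$ does nothing to collapse $\bisimtbrc[X]$ to $\bisimtbrc$ on the successors.

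What the paper does instead is avoid asking for $[P']=[Q']$ altogether. It invokes the reactive approximation axiom and proves $\psi_X(\langle x_P|\equa\rangle)=\psi_X(\langle x_Q|\equa\rangle)$ for every $X$. When $\init{P}\cap(X\cup\{\tau\})\neq\emptyset$ the $\rt$-summands disappear under $\psi_X$; when $\deadend{P}{X}$ they become $\rt.\theta_X(\langle x_{[P']}|\equa\rangle)$, and now $\theta_X(P')\bisimtbrc\theta_X(Q')$ suffices. The crucial extra ingredients you are missing are Lemma~\ref{lem:simplication}, which upgrades $\theta_X([P'])\bisimtbrc\theta_X([Q'])$ to the non-reactive $\theta_X([P'])\bisimb\theta_X([Q'])$, and then Theorem~\ref{thm:completeness}, the already-established completeness of $Ax^\infty$ for $\bisimrb$, which turns this into the required derivation $Ax^\infty_r\vdash \rt.\theta_X(\langle x_{[P']}|\equa\rangle)=\rt.\theta_X(\langle x_{[Q']}|\equa\rangle)$.
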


\begin{proof}
    It suffices to equate $\langle x_P|\equa\rangle$ and $\langle x_Q|\equa\rangle$ using RDP and the reactive approximation axiom. Details can be found in Appendix \ref{app:canonical}.
\end{proof}

\section*{Conclusion}

This paper defined a form of branching bisimilarity for processes with time-out transitions, and provided a modal characterisation, congruence results, and a complete axiomatisation for strongly {guarded} processes. Whereas the bisimilarity presented here treats the time-out action $\rt$ more or less as a visible action, in \cite{RG24} we propose a variant that elides time-outs, by satisfying laws like $a.\rt.b.0 = a.\rt.\rt.b.0$, just like branching bisimilarity elides $\tau$-transitions. We obtained the present paper from  \cite{RG24} by systemically suppressing this eliding feature, thereby obtaining simpler definitions and proofs.

A topic for future work is to combine this work with the ideas behind \emph{justness} \cite{GH19}, a weaker form of fairness that allows the formulation and derivation of useful liveness properties. In a setting with time-outs, justness would demand that once a parallel component reaches a state in which a time-out transition is enabled, it cannot stay in that state forever after.

\bibliography{biblio}

\newpage
\appendix

\section{Examples} \label{app:examples}

\paragraph*{Scope of the First Clause of Definition \ref{def:intuitive}}

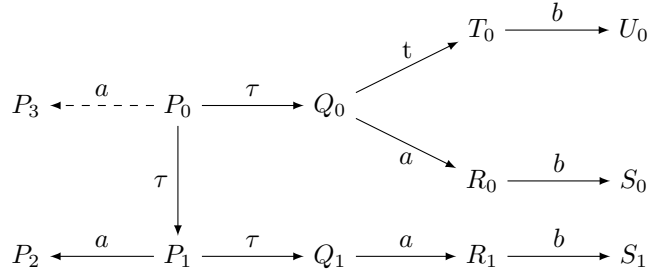
\begin{figure}[ht]
    \centering
    \begin{tikzpicture}
        \node (a) at (0,0) {$P_0$};
        \node (b) at (0,-2) {$P_1$};
        \node (c) at (-2,-2) {$P_2$};
        \node (c1) at (-2,0) {$P_3$};
        \node (d) at (2,0) {$Q_0$};
        \node (e) at (4,1) {$T_0$};
        \node (e1) at (6,1) {$U_0$};
        \node (f) at (4,-1) {$R_0$};
        \node (g) at (6,-1) {$S_0$};
        \node (d1) at (2,-2) {$Q_1$};
        \node (f1) at (4,-2) {$R_1$};
        \node (g1) at (6,-2) {$S_1$};

        \draw[->, >=latex] (a) -- node[midway,left]{$\tau$} (b);
        \draw[->, >=latex, dashed] (a) -- node[midway,above]{$a$} (c1);
        \draw[->, >=latex] (b) -- node[midway,above]{$a$} (c);
        \draw[->, >=latex] (a) -- node[midway,above]{$\tau$} (d);
        \draw[->, >=latex] (d) -- node[midway,above]{$\rt$} (e);
        \draw[->, >=latex] (e) -- node[midway,above]{$b$} (e1);
        \draw[->, >=latex] (d) -- node[midway,below]{$a$} (f);
        \draw[->, >=latex] (f) -- node[midway,above]{$b$} (g);
        \draw[->, >=latex] (b) -- node[midway,above]{$\tau$} (d1);
        \draw[->, >=latex] (d1) -- node[midway,above] {$a$} (f1);
        \draw[->, >=latex] (f1) -- node[midway,above]{$b$} (g1);
    \end{tikzpicture}
    \caption{Counter-Example to a Naive Clause 1.a.}
    \label{fig:counter-example 1a}
\end{figure}

\noindent
In Figure \ref{fig:counter-example 1a}, the process $a.0 + \tau.(\rt.b.0 + a.b.0) + \tau.(\tau.a.b.0 + a.0)$ is represented as an LTS\@. Let $A := \{a,b\}$. Removing the dashed $a$-transition generates the process $\tau.(\rt.b.0 + a.b.0) + \tau.(\tau.a.b.0 + a.0)$.

First, we are going to show that these two processes are not \tb reactive bisimilar. Let's try to build a \tb reactive bisimulation between them. The only way to match the dashed $a$-transition of $a.0 + \tau.(\rt.0 + a.b.0) + \tau.(\tau.a.b.0 + a.0)$ is by the $a$-transition between $P_1$ and $P_2$, because all other $a$-transitions are followed by a $b$-transition. This requires to elide the $\tau$-transition between $P_0$ and $P_1$, who must be \tb reactive bisimilar. Since $P_0 \bisimtbrc P_1$, when considering the $\tau$-transition between $P_0$ and $Q_0$, $Q_0$ has to be \tb reactive bisimilar to $P_1$ or $Q_1$. Now, the $a$-transition between $Q_0$ and $R_0$ has to be matched by the $a$-transition between $Q_1$ and $R_1$ because of the following $b$-transition. This implies $Q_0 \bisimtbrc Q_1$, thus, $Q_0 \bisimtbrc[\emptyset] Q_1$. One has $\deadend{Q_0}{\emptyset}$ and $Q_0 \step{\rt} T_0$, i.e., when the environment temporary allows no visible actions, $Q_0$ can time-out into a state in which $b$ is possible. This behaviour cannot be matched by $Q_1$---a contradiction.

Now, consider the alternative to Definition \ref{def:intuitive} in which the first clause has been changed to 
\begin{enumerate}
    \item \begin{enumerate}
        \item if $P \steptau P'$ then there is a path $Q \pathtau Q_1 \step{\opt{\tau}} Q_2$ with $\R(P,Q_1)$ and $\R(P',Q_2)$.
    \end{enumerate}
\end{enumerate}
In other words, the scope of the first clause is restricted to $\tau$-transitions. This modification enables building a bisimulation between the two processes. Indeed, the dashed $a$-transition is only considered when the environment allows $a$. Thus, it is sufficient to get $P_0 \bisimtbrc[A] P_1$ and $P_0 \bisimtbrc[\{a\}] P_1$ and not $P_0 \bisimtbrc P_1$ anymore. Therefore, it is sufficient to match $Q_0$ and $Q_1$ in environments allowing $a$. As a result, the outgoing time-out transition of $Q_0$ is never considered when matching $Q_0$ with $Q_1$, solving our previous issue. Once this observation is made, building the bisimulation is trivial.

Finally, place both processes in the context $\_\!\_ \parallel_{\{a\}} (\tau.0 + a.0)$. It behaves like a one-way switch enabling to block all $a$-transitions forever as soon as the $\tau$-transition is performed. Let's try to build a \tb reactive bisimulation between the two processes. Following the same reasoning as before, it is necessary to get $P_0 \parallel_{\{a\}} (\tau.0 + a.0) \bisimtbrc[A] P_1 \parallel_{\{a\}} (\tau.0 + a.0)$ because of the dashed $a$-transition, and then $Q_0 \parallel_{\{a\}} (\tau.0 + a.0) \bisimtbrc[A] Q_1 \parallel_{\{a\}} (\tau.0 + a.0)$ because of the $a$-transition between $Q_0$ and $R_0$. Note that $Q_0 \parallel_{\{a\}} (\tau.0 + a.0) \steptau Q_0 \parallel_{\{a\}} 0\linebreak[2] \step{\rt} T_0 \parallel_{\{a\}} 0 \step{b} U_0 \parallel_{\{a\}} 0$ and $\deadend{Q_0 \parallel_{\{a\}} 0}{A}$. As before, $Q_0 \parallel_{\{a\}} (\tau.0 + a.0)$ can time-out into a state in which $b$ is executable, whereas this behaviour is impossible in $Q_1 \parallel_{\{a\}} (\tau.0 + a.0)$. As a result, restricting the scope of the first clause of Definition \ref{def:intuitive} to $\tau$-transitions prevents $\bisimtbrc$ from being a congruence for parallel composition.

\paragraph*{Necessity of the Stability Respecting Clause}

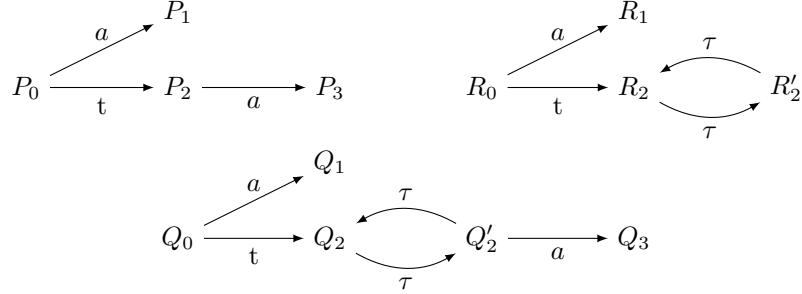
\begin{figure}[ht]
    \centering
    \begin{tikzpicture}
        \node (a) at (0,0) {$P_0$};
        \node (b) at (2,0) {$P_2$};
        \node (c) at (2,1) {$P_1$};
        \node (d) at (4,0) {$P_3$};
        \draw[->, >=latex] (a) -- node[midway,below]{$\rt$} (b);
        \draw[->, >=latex] (a) -- node[midway,above]{$a$} (c);
        \draw[->, >=latex] (b) -- node[midway,below]{$a$} (d);
        \node (a) at (2,-2) {$Q_0$};
        \node (b) at (4,-2) {$Q_2$};
        \node (b') at (6,-2) {$Q'_2$};
        \node (c) at (4,-1) {$Q_1$};
        \node (d) at (8,-2) {$Q_3$};
        \draw[->, >=latex] (a) -- node[midway,below]{$\rt$} (b);
        \draw[->, >=latex] (a) -- node[midway,above]{$a$} (c);
        \draw[->, >=latex] (b) to[bend right] node[midway,below]{$\tau$} (b');
        \draw[->, >=latex] (b') to[bend right] node[midway,above]{$\tau$} (b);
        \draw[->, >=latex] (b') -- node[midway,below]{$a$} (d);
        \node (a) at (6,0) {$R_0$};
        \node (b) at (8,0) {$R_2$};
        \node (b') at (10,0) {$R'_2$};
        \node (c) at (8,1) {$R_1$};
        \draw[->, >=latex] (a) -- node[midway,below]{$\rt$} (b);
        \draw[->, >=latex] (a) -- node[midway,above]{$a$} (c);
        \draw[->, >=latex] (b) to[bend right] node[midway,below]{$\tau$} (b');
        \draw[->, >=latex] (b') to[bend right] node[midway,above]{$\tau$} (b);
    \end{tikzpicture}
    \caption{Counter-Example to the Absence of a Stability Respecting Clause}
    \label{fig:example stability}
\end{figure}

\noindent
In Figure \ref{fig:example stability}, three processes are represented as LTSs. Take $A := \{a\}$. According to Definition~\ref{def:intuitive}, $\neg(P_0 \bisimtbrc Q_0)$ and $Q_0 \bisimtbrc R_0$. 

Let's try to build a \tb reactive bisimulation between the top-left and bottom processes. Matching the time-out between $Q_0$ and $Q_2$ implies that $Q_2 \bisimtbrc[\emptyset] P_0$ or $Q_2 \bisimtbrc[\emptyset] P_2$. However, $P_0 \nsteptau$ and $P_2 \nsteptau$, thus, there should be a path $Q_2 \pathtau Q_2' \nsteptau$, but this is not the case.

The symmetric closure of
\begin{align*}
    \R := \{(Q_0,R_0), (Q_1,R_1), (Q_2,\emptyset,R_2), (Q_2',\emptyset,R_2')\} \cup \{(Q_0,X,R_0),(Q_1,X,R_1) \mid X \subseteq A\}
\end{align*}
is a \tb reactive bisimulation. The $a$-transition between $Q_2'$ and $Q_3$ does not have to be matched since $Q_2'$ is considered only when the environment disallows $a$.

Now, suppose that the stability respecting condition is removed from Definition \ref{def:intuitive}. As a result, a \tb reactive bisimulation can be built between the top-left and bottom processes. The symmetric closure of
\begin{align*}
    \R' := &  \{(P_0,Q_0),(P_1,Q_1),(P_2,Q_2),(P_2,Q_2'),(P_3,Q_3)\} \\
    & \cup \{(P_0,X,Q_0),(P_1,X,Q_1),(P_2,X,Q_2),(P_2,X,Q_2'),(P_3,X,Q_3) \mid X \subseteq A\}
\end{align*}
would be a \tb reactive bisimulation. Moreover, $\R$ would still be a \tb reactive bisimulation, since Definition \ref{def:intuitive} has merely been weakened. Therefore, according to the modified Definition \ref{def:intuitive}, $P_0 \bisimtbrc Q_0$ and $Q_0 \bisimtbrc R_0$. However, when trying to construct a \tb reactive bisimulation between $P_0$ and $R_0$, because of the time-out transition, $R_2$ has to be matched to $P_0$ or $P_2$ and no $a$-transition is reachable from $R_2$; therefore, $\neg(P_0 \bisimtbrc R_0)$. As a result, removing the stability respecting clause from Definition~\ref{def:intuitive} prevents $\bisimtbrc$ from being an equivalence relation.

\section{Generalised concrete branching reactive bisimulation} \label{app:gbrb}

The second clause of Definition~\ref{def:intuitive} is quite tedious to check; thus, an equivalent definition of the bisimilarity would be useful. Actually, it is possible to define the exact same notion in a more general way at the cost of some clear motivations.

\begin{definition}\rm \label{def:generalised}
    A \emph{generalised \tb reactive bisimulation} is a symmetric relation $\R \subseteq (\closed\times\closed)\cup(\closed\times\mathcal{P}(A)\times\closed)$ such that, for all $P,Q \in \closed$ and $X \subseteq A$,
    \begin{enumerate}
        \item if $\R(P,Q)$
        \begin{enumerate}
            \item if $P \step{\alpha} P'$ with $\alpha \in A_\tau$ then there is a path $Q \pathtau Q_1 \step{\opt{\alpha}} Q_2$ with $\R(P,Q_1)$ and $\R(P',Q_2)$,
            \item if $\deadend{P}{X}$ and $P \step{\rt} P'$ then there is a path $Q \pathtau Q_1 \step{\rt} Q_2$ with $\R(P',X,Q_2)$,
            \item if $P \nsteptau$ then there exists a path $Q \pathtau Q_0 \nsteptau$;
        \end{enumerate}
        \item if $\R(P,X,Q)$
        \begin{enumerate}
            \item if $P \steptau P'$ then there is a path $Q \pathtau Q_1 \step{\opt{\tau}} Q_2$ with $\R(P,X,Q_1)$ and $\R(P',X,Q_2)$,
            \item if $P \step{a} P'$ with $a \in X \vee \deadend{P}{X}$ then there is a path $Q \pathtau Q_1 \step{a} Q_2$ with $\R(P,X,Q_1)$ and $\R(P',Q_2)$,
            \item if $\deadend{P}{(X\cup Y)}$ and $P \step{\rt} P'$ then there is a path $Q \pathtau Q_1 \step{\rt} Q_2$ with $\R(P',Y,Q_2)$,
            \item if $P \nsteptau$ then there is a path $Q \pathtau Q_0 \nsteptau$.
        \end{enumerate}
    \end{enumerate}
\end{definition}

\noindent
The strong point of the generalised definitions is the restriction on the use of triplets, making use of them only after performing a time-out. A generalised version of rooted \tb reactive bisimulation can be defined in a similar fashion.

\begin{definition}\rm \label{def:generalised rooted}
    A \emph{generalised rooted \tb reactive bisimulation} is a symmetric relation $\R \subseteq (\closed\times\closed)\cup(\closed\times\mathcal{P}(A)\times\closed)$ such that, for all $P,Q \in \closed$ and $X \subseteq A$,
    \begin{enumerate}
        \item if $\R(P,Q)$
        \begin{enumerate}
            \item if $P \step{\alpha} P'$ with $\alpha \in A_\tau$ then there is a transition $Q \step{\alpha} Q'$ such that $P' \bisimtbrc Q'$,
            \item if $\deadend{P}{X}$ and $P \step{\rt} P'$ then there is a transition $Q \step{\rt} Q'$ with $P' \bisimtbrc[X] Q'$,
        \end{enumerate}
        \item if $\R(P,X,Q)$
        \begin{enumerate}
            \item if $P \steptau P'$ then there is a transition $Q \steptau Q'$ such that $P' \bisimtbrc[X] Q'$,
            \item if $P \step{a} P'$ with $a \in X \vee \deadend{P}{X}$ then there is a transition $Q \step{a} Q'$ such that $P' \bisimtbrc Q'$,
            \item if $\deadend{P}{(X\cup Y)}$ and $P \step{\rt} P'$ then there is a transition $Q \step{\rt} Q'$ such that $P' \bisimtbrc[Y] Q'$.
        \end{enumerate}
    \end{enumerate}
\end{definition}

\noindent
Note that if a system has no time-out, then a generalised [rooted] \tb reactive bisimulation is a stability respecting [rooted] branching bisimulation, thus proving that [rooted] \tb reactive bisimilarity is indeed an extension of stability respecting [rooted] branching bisimilarity to reactive systems with time-outs.

\begin{proposition} \label{prop:generalised}
    Let $P, Q \in \closed$ and $X \subseteq A$,
    \begin{itemize}
        \item $P \bisimtbrc Q$ (resp.\ $P \bisimtbrc[X] Q$) iff there exists a generalised \tb reactive bisimulation $\R$ with $\R(P,Q)$ (resp.\ $\R(P,X,Q)$),
        \item $P \bisimrtbrc Q$ (resp.\ $P \bisimrtbrc[X] Q$) iff there exists a rooted generalised \tb reactive bisimulation $\R$ with $\R(P,Q)$ (resp.\ $\R(P,X,Q)$).
    \end{itemize}
\end{proposition}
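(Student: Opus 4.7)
The plan is to prove both implications of Proposition~\ref{prop:generalised} for the unrooted case; the rooted version follows by the same pattern, substituting $\bisimtbrc$ and $\bisimtbrc[X]$ at the successor positions in the generalised clauses and invoking the unrooted equivalence already established.

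For the forward direction I would show that any \tb reactive bisimulation $\R$ already satisfies the clauses of Definition~\ref{def:generalised}. Each new clause is obtained by chaining two or three clauses of Definition~\ref{def:intuitive}: Generalised~1.b uses Original~1.b to produce a triple $\R(P,X,Q)$ and then Original~2.d to match the time-out; Generalised~1.c similarly uses~1.b (with $Y=\emptyset$) followed by~2.e; Generalised~2.b in the case $a\notin X$, hence $\deadend{P}{X}$, uses Original~2.c to reach $Q_0$ with $\R(P,Q_0)$, then Original~1.a on this pair to match the visible step $P\step{a}P'$, and finally Original~1.b to lift $\R(P,Q_1)$ back to the triple $\R(P,X,Q_1)$; Generalised~2.c chains Original~2.c, then~1.b with parameter $Y$, then~2.d, using $\deadend{P}{(X\cup Y)}\Rightarrow\deadend{P}{Y}$.

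For the backward direction, given a generalised bisim $\R$ I would build a \tb reactive bisimulation $\R'\supseteq\R$ in two stages. First, enlarge with pairs: add every $(P,Q_0)$ for which there exist $X$ and $Q$ with $\R(P,X,Q)$, $\deadend{P}{X}$ and $Q\pathtau Q_0\nsteptau$; the existence of such $Q_0$ is guaranteed by Generalised~2.d applied to $\R(P,X,Q)$, noting $P\nsteptau$. Second, add every triple $(P,Y,Q)$ with $Y\subseteq A$ whenever $(P,Q)$ lies in the enlarged pair relation, and symmetrise. The key auxiliary lemma, proven by iterating the symmetric counterpart of Generalised~1.a or~2.a, is that if $\R(P,Q)$ or $\R(P,X,Q)$ holds, $P\nsteptau$ and $Q\pathtau Q_0$, then the same relation holds between $P$ and $Q_0$.

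Verifying $\R'$ is a \tb reactive bisimulation is then a case analysis: the old pairs and triples inherit Clauses~1.a and~2.a--2.e directly from the generalised ones, Clause~1.b holds by construction, and Clause~2.c for old triples is satisfied by design since the first enlargement added exactly the required pair; the new pairs $(P,Q_0)$ satisfy Clause~1.a via the auxiliary lemma (transporting $\R(P,X,Q)$ to $\R(P,X,Q_0)$) followed by Generalised~2.b (invisible moves being vacuous as $P\nsteptau$), and Clause~1.b by the second stage. The main obstacle I anticipate is closure: ensuring that verifying the new pairs and triples does not require still further additions. The requirement $Q_0\nsteptau$ built into the first stage is precisely what blocks further iteration, since every $\tau$-path from $Q_0$ is trivial, so the remaining clauses for the new triples reduce to the same chainings as in the forward direction.
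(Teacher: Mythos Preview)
Your plan is correct and follows essentially the same route as the paper. In the forward direction the paper also shows that any \tb reactive bisimulation $\R$ is already generalised, deriving each new clause by chaining two or three original ones exactly as you describe (the paper routes Generalised~2.b and~2.c through 2.e and Lemma~\ref{lem:obvious}.1 first to reach a stable state, whereas you go via 2.c directly---both work). In the backward direction the paper's enlargement $\R' := \R \cup \{(P,X,Q) \mid \R(P,Q)\} \cup \{(P,Y,Q),(P,Q) \mid \exists X.\ \R(P,X,Q) \wedge (\init{P}\cup\init{Q})\cap(X\cup\{\tau\})=\emptyset\}$ coincides with yours: your new pair $(P,Q_0)$ satisfies $\R(P,X,Q_0)$ by your auxiliary lemma, and then $\deadend{Q_0}{X}$ follows from $\deadend{P}{X}$, $Q_0\nsteptau$ and symmetric Generalised~2.b, so the two sets of added pairs are identical. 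The stability requirement you single out as ``precisely what blocks further iteration'' is the same mechanism the paper exploits.
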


\begin{proof}
    Let $\R$ be a \tb reactive bisimulation. Let's check that it is a generalised \tb reactive bisimulation. Let $P,Q \in \closed$ and $X \subseteq A$.
    \begin{enumerate}
        \item If $\R(P,Q)$
        \begin{enumerate}
            \item this condition is shared by both definitions
            \item if $\deadend{P}{X}$ and $P \step{\rt} P'$ then, since $\R(P,Q)$, $\R(P,X,Q)$. Since $\deadend{P}{X}$ and $P \step{\rt} P'$, there exists a path $Q \pathtau Q_1 \step{\rt} Q_2$ with $\R(P',X,Q_2)$.
            \item if $P \nsteptau$ then, since $\R(P,Q)$, $\R(P,\emptyset,Q)$, so there exists a path $Q \pathtau Q_0 \nsteptau$.
        \end{enumerate}
        \item If $\R(P,X,Q)$
        \begin{enumerate}
            \item this condition is shared by both definitions
            \item if $P \step{a} P'$ with $a \in X \vee \deadend{P}{X}$ then if $a \in X$ then there exists a path $Q \pathtau Q_1 \step{a} Q_2$ such that $\R(P,X,Q_1)$ and $\R(P',Q_2)$. Otherwise, $\deadend{P}{X}$ and so $P \nsteptau$, thus there exists a path $Q \pathtau Q_1 \nsteptau$. Since $\R(P,X,Q)$, $P \nsteptau$ and $Q \pathtau Q_1$, $\R(P,X,Q_1)$. As $\deadend{P}{X}$ and $Q_1 \nsteptau$, $\R(P,Q_1)$. Because $P \step{a} P_1$ and $Q_1 \nsteptau$, there exists a transition $Q_1 \step{a} Q_2$ such that $\R(P',Q_2)$. As a result, there exists a path $Q \pathtau Q_1 \step{a} Q_2$ such that $\R(P,X,Q_1)$ and $\R(P',Q_2)$.
            \item if $\deadend{P}{(X\cup Y)}$ and $P \step{\rt} P'$ then, since $P \nsteptau$, there exists a path $Q \pathtau Q_1 \nsteptau$. Furthermore, using Lemma \ref{lem:obvious}.1, $\R(P,X,Q_1)$. Moreover, $\deadend{P}{X}$ and $Q_1 \nsteptau$, thus, $\R(P,Q_1)$ and so $\R(P,Y,Q_1)$. Since $\deadend{P}{Y}$ and $P \step{\rt} P'$, there exists a path $Q_1 \pathtau Q_1' \step{\rt} Q_2$ with $\R(P',Y,Q_2)$. Since $Q_1 \nsteptau$, $Q_1 = Q_1'$. As a result, there exists a path $Q \pathtau Q_1 \step{\rt} Q_2$ with $\R(P',Y,Q_2)$.
            \item this condition is shared by both definitions.
        \end{enumerate}
    \end{enumerate}
    Let $\R$ be a generalised \tb reactive bisimulation and define 
    \begin{align*}
        \R' := \R & \cup \{(P,X,Q) \mid \R(P,Q) \wedge X \subseteq A\} \cup \{(P,Y,Q),(P,Q) \mid \exists X \subseteq A,\; \R(P,X,Q) \\
        & \wedge (\init{P}\cup\init{Q})\cap(X\cup\{\tau\}) = \emptyset \wedge Y \subseteq A\}
    \end{align*}
    $\R'$ is symmetric by definition. Let's check that $\R'$ is a \tb reactive bisimulation. Let $P,Q \in \closed$ and $X \subseteq A$.
    \begin{enumerate}
        \item If $\R'(P,Q)$ then $\R(P,Q)$ or there exists a set $Y \subseteq A$ such that $\R(P,Y,Q)$ and $(\init{P}\cup\init{Q})\cap(Y\cup\{\tau\}) = \emptyset$.
        \begin{enumerate}
            \item If $P \step{\alpha} P'$ then
            \begin{itemize}
                \item if $\R(P,Q)$ then there exists a path $Q \pathtau Q_1 \step{\opt{\alpha}} Q_2$ such that $\R(P,Q_1)$ and $\R(P,Q_2)$ and, since $\R \subseteq \R'$, $\R'(P,Q_1)$ and $\R'(P',Q_2)$
                \item if there exists $Y \subseteq A$ such that $\R(P,Y,Q)$ and $(\init{P}\cup\init{Q})\cap(Y\cup\{\tau\}) = \emptyset$ then, since $\R(P,Y,Q)$ and $\deadend{P}{Y}$, $\alpha \ne \tau$, so there exists a path $Q \pathtau Q_1 \step{\alpha} Q_2$ such that $\R(P,Y,Q_1)$ and $\R(P,Q_2)$. Since $\deadend{Q}{Y}$ and $\R\subseteq \R'$, $Q = Q_1$ so there exists a path $Q \step{\alpha} Q_2$ such that $\R'(P,Q)$ and $\R'(P',Q_2)$.
            \end{itemize}
            \item For all $Z \subseteq A$,
            \begin{itemize}
                \item if $\R(P,Q)$ then, by definition of $\R'$, $\R'(P,Z,Q)$
                \item if there exists $Y \subseteq A$ such that $\R(P,Y,Q)$ and $(\init{P}\cup\init{Q})\cap(Y\cup\{\tau\}) = \emptyset$ then, by definition of $\R'$, $\R'(P,Z,Q)$.
            \end{itemize}
        \end{enumerate}
        \item If $\R'(P,X,Q)$ then $\R(P,X,Q)$, or $\R(P,Q)$, or there exists $Y \subseteq A$ such that $\R(P,Y,Q)$ and $(\init{P}\cup\init{Q})\cap(Y\cup\{\tau\}) = \emptyset$.
        \begin{enumerate}
            \item If $P \steptau P'$ then
            \begin{itemize}
                \item if $\R(P,X,Q)$ then there exists a path $Q \pathtau Q_1 \step{\opt{\tau}} Q_2$ such that $\R(P,X,Q_1)$ and $\R(P,X,Q_2)$ and, since $\R \subseteq \R'$, $\R'(P,X,Q_1)$ and $\R'(P',X,Q_2)$
                \item if $\R(P,Q)$ then there exists a path $Q \pathtau Q_1 \step{\opt{\tau}} Q_2$ such that $\R(P,Q_1)$ and $\R(P,Q_2)$ and, by definition of $\R'$, $\R'(P,X,Q_1)$ and $\R'(P',X,Q_2)$
                \item if there exists $Y \subseteq A$ such that $\R(P,Y,Q)$ and $(\init{P}\cup\init{Q})\cap(Y\cup\{\tau\}) = \emptyset$ then $P \nsteptau$, so this case is impossible.
            \end{itemize}
            \item If $P \step{a} P'$ with $a \in X$ then
            \begin{itemize}
                \item if $\R(P,X,Q)$ then there exists a path $Q \pathtau Q_1 \step{a} Q_2$ such that $\R(P,X,Q_1)$ and $\R(P,Q_2)$ and, since $\R \subseteq \R'$, $\R'(P,X,Q_1)$ and $\R'(P',Q_2)$
                \item if $\R(P,Q)$ then there exists a path $Q \pathtau Q_1 \step{a} Q_2$ such that $\R(P,Q_1)$ and $\R(P,Q_2)$ and, by definition of $\R'$, $\R'(P,X,Q_1)$ and $\R'(P',Q_2)$
                \item if there exists $Y \subseteq A$ such that $\R(P,Y,Q)$ and $(\init{P}\cup\init{Q})\cap(Y\cup\{\tau\}) = \emptyset$ then, since $\deadend{P}{Y}$, there exists a path $Q \pathtau Q_1 \step{a} Q_2$ such that $\R(P,Y,Q_1)$ and $\R(P',Q_2)$. Since $\deadend{Q}{Y}$, $Q = Q_1$ so there exists a path $Q \step{a} Q_2$ such that $\R'(P,X,Q)$ and $\R'(P',Q_2)$.
            \end{itemize}
            \item If $\deadend{P}{X}$ then
            \begin{itemize}
                \item if $\R(P,X,Q)$ then, since $P \nsteptau$, there exists a path $Q \pathtau Q_0 \nsteptau$. By Clause 2.a of Definition~\ref{def:generalised}, $\R(P,X,Q_0)$. Since $\R(P,X,Q_0)$, $\deadend{P}{X}$ and $Q_0 \nsteptau$, $\deadend{Q}{X}$, therefore, by definition, $\R'(P,Q_0)$.
                \item if $\R(P,Q)$ then, since $\R \subseteq \R'$, $\R'(P,Q)$.
                \item if there exists $Y \subseteq A$ such that $\R(P,Y,Q)$ and $(\init{P}\cup\init{Q})\cap(Y\cup\{\tau\}) = \emptyset$ then, by definition of $\R'$, $\R'(P,Q)$.
            \end{itemize}
            \item If $\deadend{P}{X}$ and $P \step{\rt} P'$ then
            \begin{itemize}
                \item if $\R(P,X,Q)$ then, there exists a path $Q \pathtau Q_1 \step{\rt} Q_2$ with $\R(P',X,Q_2)$. Hence also $\R'(P',X,Q_2)$.
                \item if $\R(P,Q)$ then there exists a path $Q \pathtau Q_1 \step{\rt} Q_2$ with $\R(P',X,Q_2)$. Hence also $\R'(P',X,Q_2)$.
                \item if there exists $Y \subseteq A$ such that $\R(P,Y,Q)$ and $(\init{P}\cup\init{Q})\cap(Y\cup\{\tau\}) = \emptyset$ then $\deadend{P}{(Y\cup X)}$ so there exists a path $Q \pathtau Q_1 \step{\rt} Q_2$ with $\R(P',X,Q_2)$. Hence also $\R'(P',X,Q_2)$.
            \end{itemize}
            \item If $P \nsteptau$ then 
            \begin{itemize}
                \item if $\R(P,X,Q)$ then there exists a path $Q \pathtau Q_0 \nsteptau$.
                \item if $\R(P,Q)$ then there exists a path $Q \pathtau Q_0 \nsteptau$.
                \item if there exists $Y \subseteq A$ such that $\R(P,Y,Q)$ and $(\init{P}\cup\init{Q})\cap(Y\cup\{\tau\}) = \emptyset$ then $Q \nsteptau$.
            \end{itemize}
        \end{enumerate}
    \end{enumerate}
    Let $\R$ be a rooted \tb reactive bisimulation. Let's check that it is a generalised rooted \tb reactive bisimulation. Let $P,Q \in \closed$ and $X \subseteq A$.
    \begin{enumerate}
        \item If $\R(P,Q)$
        \begin{enumerate}
            \item this condition is shared by both definitions
            \item if $\deadend{P}{X}$ and $P \step{\rt} P'$ then, since $\R(P,Q)$, $\R(P,X,Q)$. Since $\deadend{P}{X}$ and $P \step{\rt} P'$, there exists a transition $Q \step{\rt} Q'$ such that $P' \bisimtbrc[X] Q'$.
        \end{enumerate}
        \item If $\R(P,X,Q)$
        \begin{enumerate}
            \item this condition is shared by both definitions
            \item if $a\in X$, this condition is shared by both definitions; otherwise, apply Clauses 2.c and 1.a of Definition~\ref{def:rooted intuitive}
            \item if $\deadend{P}{(X\cup Y)}$ and $P \step{\rt} P'$ then, since $\deadend{P}{X}$, $\R(P,Q)$ and so $\R(P,Y,Q)$. Since $\deadend{P}{Y}$ and $P \step{\rt} P'$, there exists a transition $Q \step{\rt} Q'$ such that $P' \bisimtbrc[Y] Q'$.
        \end{enumerate}
    \end{enumerate}
    Let $\R$ be a generalised rooted \tb reactive bisimulation and define 
    \begin{align*}
        \R' := \R & \cup \{(P,X,Q) \mid \R(P,Q) \wedge X \subseteq A\} \cup \{(P,Y,Q),(P,Q) \mid \exists X \subseteq A, \R(P,X,Q) \\
        & \wedge (\init{P}\cup\init{Q})\cap(X\cup\{\tau\}) = \emptyset \wedge Y \subseteq A\}
    \end{align*}
    $\R'$ is symmetric by definition. Let's check that $\R'$ is a rooted \tb reactive bisimulation. Let $P,Q \in \closed$ and $X \subseteq A$.
    \begin{enumerate}
        \item If $\R'(P,Q)$ then $\R(P,Q)$ or there exists $Y \subseteq A$ such that $\R(P,Y,Q)$ and $(\init{P}\cup\init{Q})\cap(Y\cup\{\tau\}) = \emptyset$.
        \begin{enumerate}
            \item If $P \step{\alpha} P'$ then
            \begin{itemize}
                \item if $\R(P,Q)$ then there exists a transition $Q \step{\alpha} Q'$ such that $P' \bisimtbrc Q'$.
                \item if there exists $Y \subseteq A$ such that $\R(P,Y,Q)$ and $(\init{P}\cup\init{Q})\cap(Y\cup\{\tau\}) = \emptyset$ then, since $\R(P,Y,Q)$ and $\deadend{P}{Y}$, $\alpha \ne \tau$ so there exists a transition $Q \step{\alpha} Q'$ such that $P' \bisimtbrc Q'$.
            \end{itemize}
            \item For all $Z \subseteq A$,
            \begin{itemize}
                \item if $\R(P,Q)$ then, by definition of $\R'$, $\R'(P,Z,Q)$
                \item if there exists $Y \subseteq A$ such that $\R(P,Y,Q)$ and $(\init{P}\cup\init{Q})\cap(Y\cup\{\tau\}) = \emptyset$ then, by definition of $\R'$, $\R(P,Z,Q)$.
            \end{itemize}
        \end{enumerate}
        \item If $\R'(P,X,Q)$ then $\R(P,X,Q)$, or $\R(P,Q)$, or there exists $Y \subseteq A$ such that $\R(P,Y,Q)$ and $(\init{P}\cup\init{Q})\cap(Y\cup\{\tau\}) = \emptyset$.
        \begin{enumerate}
            \item If $P \steptau P'$ then
            \begin{itemize}
                \item if $\R(P,X,Q)$ then there exists a transition $Q \step{\tau} Q'$ such that $P' \bisimtbrc[X] Q'$,
                \item if $\R(P,Q)$ then there exists a step $Q \step{\tau} Q'$ such that $P' \bisimtbrc Q'$ and so $P' \bisimtbrc[X] Q'$
                \item if there exists $Y \subseteq A$ such that $\R(P,Y,Q)$ and $(\init{P}\cup\init{Q})\cap(Y\cup\{\tau\}) = \emptyset$ then $P \nsteptau$, so this case is impossible.
            \end{itemize}
            \item If $P \step{a} P'$ with $a \in X$ then
            \begin{itemize}
                \item if $\R(P,X,Q)$ then there exists a transition $Q \step{a} Q'$ such that $P' \bisimtbrc Q'$
                \item if $\R(P,Q)$ then there exists a transition $Q \step{a} Q'$ such that $P' \bisimtbrc Q'$
                \item if there exists $Y \subseteq A$ such that $\R(P,Y,Q)$ and $(\init{P}\cup\init{Q})\cap(Y\cup\{\tau\}) = \emptyset$ then, since $\deadend{P}{Y}$, there exists a transition $Q \step{a} Q'$ such that $P' \bisimtbrc Q'$.
            \end{itemize}
            \item If $\deadend{P}{X}$ then 
            \begin{itemize}
                \item if $\R(P,X,Q)$ then, since $\deadend{P}{X}$, $\deadend{Q}{X}$, therefore, by definition, $\R'(P,Q)$,
                \item if $\R(P,Q)$ then, by definition of $\R'$, $\R'(P,Q)$,
                \item if there exists $Y \subseteq A$ such that $\R(P,Y,Q)$ and $(\init{P}\cup\init{Q})\cap(Y\cup\{\tau\}) = \emptyset$ then, by definition of $\R'$, $\R'(P,Q)$,
            \end{itemize}
            \item If $\deadend{P}{X}$ and $P \step{\rt} P'$ then
            \begin{itemize}
                \item if $\R(P,X,Q)$ then there exists a transition $Q \step{\rt} Q'$ such that $P' \bisimtbrc[X] Q'$.
                \item if $\R(P,Q)$ then there exists a transition $Q  \step{\rt} Q'$ such that $P' \bisimtbrc[X] Q'$.
                \item if there exists $Y \subseteq A$ such that $\R(P,Y,Q)$ and $(\init{P}\cup\init{Q})\cap(Y\cup\{\tau\}) = \emptyset$ then $\deadend{P}{(Y\cup X)}$ so there exists a step $Q \step{\rt} Q'$ such that $P' \bisimtbrc[X] Q'$.
            \popQED
            \end{itemize}
        \end{enumerate}
    \end{enumerate}
\end{proof}

\section{Pohlmann Encoding} \label{app:Pohlmann}

Reactive bisimulations are sometimes complicated to check because of the large number of potential sets of allowed actions. In \cite{Pohlmann}, Pohlmann introduces an encoding which reduces strong reactive bisimilarity to strong bisimilarity. To this end he introduces unary operators $\vartheta$ and $\vartheta_X$ for $X\subseteq A$
that model placing their argument process in an environment that is triggered to change, or allows
exactly the actions in $X$, respectively. The actions $\rt_\varepsilon\notin A$ and $\varepsilon_X\notin A$ for $X \subseteq A$ are generated by the new operators, but may not be used by processes substituted for their arguments $P$.
They model a time-out action taken by the environment, and the stabilisation of an environment into
one that allows exactly the set of actions $X$, respectively. After a slight modification of the encoding, a similar result can be obtained for \tb reactive bisimilarity and its rooted version.

\begin{table}[ht]
    \[
    \begin{array}{l l l}
        \vartheta(P) \step{\alpha} \vartheta(P') & \Leftrightarrow & P \step{\alpha} P' \wedge \alpha \in A_\tau \\
        \vartheta(P) \step{\varepsilon_X} \vartheta_X(P) & & \\
        \vartheta_X(P) \steptau \vartheta_X(P') & \Leftrightarrow & P \steptau P' \\
        \vartheta_X(P) \step{a} \vartheta(P') & \Leftrightarrow & P \step{a} P' \wedge \alpha \in X \\
        \vartheta_X(P) \step{\rt_\varepsilon} \vartheta(P) & \Leftrightarrow & \deadend{P}{X} \\
        \vartheta_X(P) \step{\rt} \vartheta_X(P') & \Leftrightarrow & \deadend{P}{X} \wedge P \step{\rt} P'
    \end{array}
    \]
    \caption{Operational semantics of $\vartheta$ and $(\vartheta_X)_{X \subseteq A}$}
    \label{tab:Pohlmann operator}
\vspace{-1.5ex}
\end{table}

In \cite{Pohlmann}, the first rule only applies to $\tau$-transitions; this echoes the previous remark about applying the first clause of Definition \ref{def:intuitive} only to invisible actions. Note that the encoding rules mirror the clauses of Definition~\ref{def:intuitive}. The encoding transforms $\bisimtbrc$ into $\bisimb$, and $\bisimrtbrc$ in $\bisimrb$.

\begin{proposition} \label{prop:reduction}
  Let $P, Q \in \closed$.\\[1ex]
    \begin{minipage}{2.5in}
    \begin{itemize}
        \item $P \bisimtbrc Q \Leftrightarrow \vartheta(P) \bisimb \vartheta(Q)$
        \item $P \bisimrtbrc Q \Leftrightarrow \vartheta(P) \bisimrb \vartheta(Q)$
    \end{itemize}
    \end{minipage}\hfill
    \begin{minipage}{2.5in}
    \begin{itemize}
        \item $P \bisimtbrc[X] Q \Leftrightarrow \vartheta_X(P) \bisimb \vartheta_X(Q)$
        \item $P \bisimrtbrc[X] Q \Leftrightarrow \vartheta_X(P) \bisimrb \vartheta_X(Q)$
    \end{itemize}
    \end{minipage}
\end{proposition}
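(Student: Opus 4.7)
All four statements can be proved by transferring bisimulations back and forth between the original and the encoded LTS. The natural plan is to prove the two non-rooted equivalences first and then use them as lemmas for the rooted ones. Throughout, a key structural observation is that, in the encoded LTS, $\tau$-transitions preserve the state type: $\vartheta(Q) \pathtau W$ forces $W = \vartheta(Q_0)$ for some $Q \pathtau Q_0$, and likewise $\vartheta_X(Q) \pathtau W$ forces $W = \vartheta_X(Q_0)$. Non-$\tau$ transitions either stay within one type (for $\rt$ at $\vartheta_X$-states) or move deterministically between types (from $\vartheta_X$ to $\vartheta$ via a visible action or $\rt_\varepsilon$, and from $\vartheta$ to $\vartheta_X$ via $\varepsilon_X$).

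\textbf{Forward direction $(\Rightarrow)$.} Given a \tb reactive bisimulation $\R$, I would define $\R'$ on the encoded system as the symmetric closure of
\[
\{(\vartheta(P),\vartheta(Q)) \mid \R(P,Q)\} \,\cup\, \{(\vartheta_X(P),\vartheta_X(Q)) \mid \R(P,X,Q)\}.
\]
To check that $\R'$ satisfies Definition~\ref{def:non-reactive}, I would proceed by case analysis on the label of the outgoing transition, each case mirroring exactly one clause of Definition~\ref{def:intuitive}: the $\step{\alpha}$ rule from $\vartheta$-states uses Clause 1(a); the $\step{\varepsilon_X}$ rule uses Clause 1(b); the $\steptau$, $\step{a}$ (with $a \in X$), $\step{\rt_\varepsilon}$ and $\step{\rt}$ rules from $\vartheta_X$-states use Clauses 2(a)--2(d) respectively. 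For the stability respecting clause of Definition~\ref{def:non-reactive}, I would use Clause 2(e) of Definition~\ref{def:intuitive} at $\vartheta_X$-states (noting that $\vartheta_X(P) \nsteptau \Leftrightarrow P \nsteptau$) and combine Clauses 1(b) and 2(e) at $\vartheta$-states.

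\textbf{Backward direction $(\Leftarrow)$.} Given a stability respecting branching bisimulation $\R'$, I would define $\R$ as the symmetric closure of
\[
\{(P,Q) \mid \R'(\vartheta(P),\vartheta(Q))\} \,\cup\, \{(P,X,Q) \mid \R'(\vartheta(P),\vartheta(Q)) \vee \R'(\vartheta_X(P),\vartheta_X(Q))\}.
\]
Including the $\vartheta$-disjunct in the triple case makes Clause 1(b) of Definition~\ref{def:intuitive} immediate. Each remaining clause would be verified by identifying the appropriate encoded transition and applying the bisimulation property of $\R'$. For Clause 2(d) arising from the $\vartheta$-disjunct one first uses the $\varepsilon_X$-transition from $\vartheta(P)$ to reach $\vartheta_X(P)$, and then matches the $\rt$-transition; the structural observation above guarantees that the resulting $\tau$-paths lift faithfully to paths in the original system.

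\textbf{Rooted versions.} For $\bisimrtbrc \Leftrightarrow \bisimrb$, I would repeat the two constructions above but against Definitions~\ref{def:rooted intuitive} and~\ref{def:rooted non-reactive}. Here the already-established non-rooted proposition provides exactly the witnesses $P' \bisimtbrc Q' \Leftrightarrow \vartheta(P') \bisimb \vartheta(Q')$ and $P' \bisimtbrc[X] Q' \Leftrightarrow \vartheta_X(P') \bisimb \vartheta_X(Q')$ demanded by the rooted definitions at each matched successor.

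\textbf{Main obstacle.} The bulk of the argument is systematic case analysis rather than a single hard idea. The most delicate point is the interaction between the two state types: a $\step{a}$ transition from $\vartheta_X(P)$ (with $a \in X$) lands in a $\vartheta$-state, so a triple-related premise must produce a pair-related conclusion, and conversely Clause 2(d) in the $(\Leftarrow)$ direction requires concatenating an $\varepsilon_X$ step with a $\rt$ step, so a pair-related premise yields a triple-related conclusion. Handling these type-transitions with consistent bookkeeping—rather than silently identifying $\vartheta(P)$ with $P$—is where the care lies; once done, each clause is a short derivation.
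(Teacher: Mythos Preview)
Your proposal is correct and essentially mirrors the paper's proof: the paper uses exactly the same transferred relations in both directions (your $\R'$ in the forward direction, and your $\R$ with the $\vartheta$-disjunct folded into the triple case in the backward direction), verifies the clauses by the same case analysis, and likewise derives the rooted equivalences from the non-rooted ones. The only organisational difference is that the paper handles the $\vartheta$-disjunct for triples once at the start of Clause~2 (using the $\varepsilon_X$-step to reduce to $\R'(\vartheta_X(P),\vartheta_X(Q_0))$ for some $Q\pathtau Q_0$) rather than separately inside Clause~2(d), but this is purely presentational.
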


\begin{proof}
    It suffices to prove that: if $\R$ is a \tb reactive bisimulation then $\R' := \{(\vartheta(P),\vartheta(Q)) \mid \R(P,Q)\} \cup \{(\vartheta_X(P),\vartheta_X(Q)) \mid \R(P,X,Q)\}$ is a stability respecting branching bisimulation; and if $\R$ is a stability respecting branching bisimulation then $\R' := \{(P,Q), (P,X,Q) \mid \R(\vartheta(P),\vartheta(Q)) \wedge X\subseteq A\} \cup \{(P,X,Q) \mid \R(\vartheta_X(P),\vartheta_X(Q))\}$ is a \tb reactive bisimulation. The rooted case is very similar.
    
    Let $\R$ be a \tb reactive bisimulation and define 
    \begin{align*}
        \R' := \{(\vartheta(P),\vartheta(Q)) \mid \R(P,Q)\} \cup \{(\vartheta_X(P),\vartheta_X(Q)) \mid \R(P,X,Q)\}
    \end{align*}
    We are going to check that $\R'$ is a stability respecting branching bisimulation. Let $P,Q \in \closed$ such that $\R'(P,Q)$.
    \begin{itemize}
        \item If $P = \vartheta(P^\dag)$ and $Q = \vartheta(Q^\dag)$ then, by definition of $\R'$, $\R(P^\dag,Q^\dag)$.
        \begin{enumerate}
            \item If $P \step{\alpha} P'$ with $\alpha \in A_\tau \cup\{\rt_\epsilon,\epsilon_X \mid X \subseteq A\}$ then
            \begin{itemize}
                \item if $\alpha \in A_\tau$ then, by the semantics of $\vartheta$, $P' = \vartheta(P^\ddag)$ and $P^\dag \step{\alpha} P^\ddag$. Since $\R(P^\dag,Q^\dag)$, there exists a path $Q^\dag \pathtau Q^\star \step{\opt{\alpha}} Q^\ddag$ such that $\R(P^\dag,Q^\star)$ and $\R(P^\ddag,Q^\ddag)$. By the semantics, there exists a path $Q \pathtau \vartheta(Q^\star) \step{\opt{\alpha}} \vartheta(Q^\ddag)$ such that, by definition of $\R'$, $\R'(P, \vartheta(Q^\star))$ and $\R'(P', \vartheta(Q^\ddag))$.
                \item if $\alpha = \rt_\epsilon$ then this case is not possible according to the semantics of $\vartheta$.
                \item if $\alpha = \epsilon_X$ with $X \subseteq A$ then, by the semantics of $\vartheta$, $P' = \vartheta_X(P^\dag)$. Since $\R(P^\dag,Q^\dag)$, $\R(P^\dag,X,Q^\dag)$. By the semantics, $Q \step{\epsilon_X} \vartheta_X(Q^\dag)$ such that, by the definition of $\R'$, $\R'(P',\vartheta_X(Q^\dag))$.
            \end{itemize}
            \item If $P \step{\rt} P'$ then, by the semantics, this is impossible.
            \item If $P \nsteptau$ then, by the semantics of $\vartheta$, $P^\dag \nsteptau$. Since $\R(P^\dag,Q^\dag)$, $\R(P^\dag,\emptyset,Q^\dag)$, so there exists a path $Q^\dag \pathtau Q^\star \nsteptau$. By the semantics, there exists a path $Q \pathtau \vartheta(Q^\star) \nsteptau$.
        \end{enumerate}
        \item If there exists $X \subseteq A$ such that $P = \vartheta_X(P^\dag)$ and $Q = \vartheta_X(Q^\dag)$ then, by definition of $\R'$, $\R(P^\dag,X,Q^\dag)$.
        \begin{enumerate}
            \item If $P \step{\alpha} P'$ with $\alpha \in A_\tau \cup \{\rt_\epsilon,\epsilon_X \mid X\subseteq A\}$ then
            \begin{itemize}
                \item if $P \steptau P'$ then, by the semantics, $P' = \vartheta_X(P^\ddag)$ and $P^\dag \steptau P^\ddag$. Since $\R(P^\dag,X,Q^\dag)$, there exists a path $Q^\dag \pathtau Q^\star \step{\opt{\tau}} Q^\ddag$ such that $\R(P^\dag,X,Q^\star)$ and $\R(P^\ddag,X,Q^\ddag)$. By the semantics, there exists a path $Q \pathtau \vartheta_X(Q^\star) \step{\opt{\tau}} \vartheta_X(Q^\ddag)$ such that, by the definition of $\R'$, $\R'(P,\vartheta_X(Q^\star))$ and $\R'(P',\vartheta_X(Q^\ddag))$.
                \item if $P \step{a} P'$ with $a \in A$ then, by the semantics, $a \in X$, $P' = \vartheta(P^\ddag)$ and $P^\dag \step{a} P^\ddag$. Since $\R(P^\dag,X,Q^\dag)$, there exists a path $Q^\dag \pathtau Q^\star \step{a} Q^\ddag$ such that $\R(P^\dag,X,Q^\star)$ and $\R(P^\ddag,Q^\ddag)$. By the semantics, there exists a path $Q \pathtau \vartheta_X(Q^\star) \step{a} \vartheta(Q^\ddag)$ such that, by the definition of $\R'$, $\R'(P,\vartheta_X(Q^\star))$ and $\R'(P',\vartheta(Q^\ddag))$.
                \item if $P \step{\rt_\epsilon} P'$ then, by the semantics, $P' = \vartheta(P^\dag)$ and $\deadend{P^\dag}{X}$. Since $\R(P^\dag,X,Q^\dag)$ and $P^\dag \nsteptau$, there exists a path $Q^\dag \pathtau Q^\star \nsteptau$. Moreover, $\R(P^\dag,X,Q^\star)$. Since $\deadend{P^\dag}{X}$, $\R(P^\dag,X,Q^\star)$ and $Q^\star \nsteptau$, $\deadend{Q^\star}{X}$ and $\R(P^\dag,Q^\star)$. By the semantics, there exists a path $Q \pathtau \vartheta_X(Q^\star) \step{\rt_\epsilon} \vartheta(Q^\star)$ such that, by the definition of $\R'$, $\R'(P,\vartheta_X(Q^\star))$ and $\R'(P',\vartheta(Q^\star))$.
                \item if $\alpha = \epsilon_X$ with $X \subseteq A$ then this case is impossible according to the semantics of $\vartheta_X$.
            \end{itemize}
            \item if $P \step{\rt} P'$ then, by the semantics, $P' = \vartheta_X(P^\ddag)$, $\deadend{P^\dag}{X}$ and $P^\dag \step{\rt} P^\ddag$. Since $\R(P^\dag,X,Q^\dag)$ and $\deadend{P^\dag}{X}$, according to Lemma \ref{lem:obvious}.4, there exists a path $Q^\dag \pathtau Q^\dag_1 \nsteptau$ with $\init{Q_1^\dag} = \init{P^\dag}$ and $\R(P^\dag,X,Q^\dag_1)$. Since $\deadend{P^\dag}{X}$ and $P^\dag \step{\rt} P^\ddag$ and $Q^\dag_1 \nsteptau$, there exists a transition $Q^\dag_1 \step{\rt} Q^\dag_2$ with $\R(P^\ddag,X,Q^\dag_2)$. According to the semantics, $Q \pathtau \vartheta_X(Q^\dag_1) \step{\rt} \vartheta_X(Q_2^\dag)$ and, by definition of $\R'$, $\R'(P',\vartheta_X(Q_2^\dag))$.
            \item if $P \nsteptau$ then, by the semantics of $\vartheta_X$, $P^\dag \nsteptau$. Since $\R(P^\dag,X,Q^\dag)$, there exists a path $Q^\dag \pathtau Q_0 \nsteptau$. By the semantics, there exists a path $Q \pathtau \vartheta_X(Q_0) \nsteptau$.
        \end{enumerate}
    \end{itemize}
    Let $\R$ be a stability respecting branching bisimulation and define
    \begin{align*}
        \R' := & \{(P,Q), (P,X,Q) \mid \R(\vartheta(P),\vartheta(Q)) \wedge X \subseteq A\} \cup \{(P,X,Q) \mid \R(\vartheta_X(P),\vartheta_X(Q))\}
    \end{align*}
    We are going to show that $\R'$ is a \tb reactive bisimulation. Let $P, Q \in \closed$ and $X \subseteq A$.
    \begin{enumerate}
        \item If $\R'(P,Q)$ then $\R(\vartheta(P),\vartheta(Q))$.
        \begin{enumerate}
            \item If $P \step{\alpha} P'$ with $\alpha \in A_\tau$ then, by the semantics, $\vartheta(P) \step{\alpha} \vartheta(P')$. Since $\R(\vartheta(P),\vartheta(Q))$, there exists a path $\vartheta(Q) \pathtau Q^\star \step{\opt{\alpha}} Q^\ddag$ such that $\R(\vartheta(P),Q^\star)$ and $\R(\vartheta(P'),Q^\ddag)$. By the semantics, $Q^\star = \vartheta(Q_1)$, $Q^\ddag = \vartheta(Q_2)$ and $Q \pathtau Q_1 \step{\opt{\alpha}} Q_2$ such that, by definition of $\R'$, $\R'(P,Q_1)$ and $\R'(P',Q_2)$.
            \item For all $Y \subseteq A$, by definition of $\R'$, $\R'(P,Y,Q)$.
        \end{enumerate}
        \item If $\R'(P,X,Q)$ then $\R(\vartheta(P),\vartheta(Q))$ or $\R(\vartheta_X(P),\vartheta_X(Q))$. If $\R(\vartheta(P),\vartheta(Q))$ then $\vartheta(P) \!\step{\epsilon_X} \vartheta_X(P)$, thus there exists a path $\vartheta(Q) \pathtau Q^\star \step{\epsilon_X} Q^\ddag$ such that $\R(\vartheta(P),Q^\star)$ and $\R(\vartheta_X(P),Q^\ddag)$. By the semantics, $Q^\star = \vartheta(Q_0)$, $Q^\ddag = \vartheta_X(Q_0)$ and $Q \pathtau Q_0$. Therefore, there exists a path $Q \pathtau Q_0$ such that $\R(\vartheta_X(P),\vartheta_X(Q_0))$.
        \begin{enumerate}
            \item If $P \step{\tau} P'$ then, by the semantics, $\vartheta_X(P) \step{\tau} \vartheta_X(P')$. Since $\R(\vartheta_X(P),\vartheta_X(Q_0))$, there exists a path $\vartheta_X(Q_0) \pathtau Q^\star \step{\opt{\tau}} Q^\ddag$ such that $\R(\vartheta_X(P),Q^\star)$ and $\R(\vartheta_X(P'),Q^\ddag)$. By the semantics, $Q^\star = \vartheta_X(Q_1)$, $Q^\ddag = \vartheta_X(Q_2)$ and $Q \pathtau Q_1 \step{\opt{\tau}} Q_2$ such that, by definition of $\R'$, $\R'(P,X,Q_1)$ and $\R'(P',X,Q_2)$.
            \item If $P \step{a} P'$ with $a \in X$ then, by the semantics, $\vartheta_X(P) \step{a} \vartheta(P')$. As $\R(\vartheta_X(P),\vartheta_X(Q_0))$, there exists a path $\vartheta_X(Q_0) \pathtau Q^\star \step{a} Q^\ddag$ such that $\R(\vartheta_X(P),Q^\star)$ and $\R(\vartheta(P'),Q^\ddag)$. By the semantics, $Q^\star = \vartheta_X(Q_1)$, $Q^\ddag = \vartheta(Q_2)$ and $Q \pathtau Q_1 \step{a} Q_2$ such that, by definition of $\R'$, $\R'(P,X,Q_1)$ and $\R'(P',Q_2)$.
            \item If $\deadend{P}{X}$ then, by the semantics, $\vartheta_X(P) \step{\rt_\epsilon} \vartheta(P)$. As $\R(\vartheta_X(P),\vartheta_X(Q_0))$, there exists a path $\vartheta_X(Q_0) \pathtau Q^\star \step{\rt_\epsilon} Q^\ddag$ such that $\R(\vartheta_X(P),Q^\star)$ and $\R(\vartheta(P),Q^\ddag)$. By the semantics, $Q^\star = \vartheta_X(Q_0')$, $Q^\ddag = \vartheta(Q_0')$ and $Q \pathtau Q_0'$ such that, by definition of $\R'$, $\R'(P,Q_0')$.
            \item If $\deadend{P}{X}$ and $P \step{\rt} P'$ then, by the semantics, $\vartheta_X(P) \step{\rt} \vartheta_X(P')$. Since $\R(\vartheta_X(P),\vartheta_X(Q_0))$, there exists a path $\vartheta_X(Q_0) \pathtau Q^\dag_1 \step{\rt} Q^\dag_2$ with $\R(\vartheta_X(P),Q^\dag_1)$ and $\R(\vartheta_X(P'),Q^\dag_2)$. By the semantics, there exists a path $Q_0 \pathtau Q_1 \step{\rt} Q_2$ such that $Q^\dag_1 = \vartheta_X(Q_1)$, $\deadend{Q_1}{X}$ and $Q^\dag_2 = \vartheta_X(Q_2)$. Thus, by definition of $\R'$, $\R'(P',X,Q_{2})$.
            \item If $P \nsteptau$ then, by the semantics, $\vartheta_X(P) \nsteptau$. Since $\R(\vartheta_X(P),\vartheta_X(Q_0))$, there exists a path $\vartheta_X(Q_0) \pathtau Q^\star \nsteptau$. By the semantics, $Q^\star = \vartheta_X(Q_1)$ and $Q \pathtau Q_1 \nsteptau$.
        \end{enumerate}
    \end{enumerate}
    Let $\R$ be a rooted \tb reactive bisimulation and define 
    \begin{align*}
        \R' := \{(\vartheta(P),\vartheta(Q)) \mid \R(P,Q)\} \cup \{(\vartheta_X(P),\vartheta_X(Q)) \mid \R(P,X,Q)\}
    \end{align*}
    We are going to check that $\R'$ is a rooted stability respecting branching bisimulation. Let $P,Q \in \closed$ such that $\R'(P,Q)$.
    \begin{itemize}
        \item If $P = \vartheta(P^\dag)$ and $Q = \vartheta(Q^\dag)$ then, by definition of $\R'$, $\R(P^\dag,Q^\dag)$.
        \begin{enumerate}
            \item Let $P \step{\alpha} P'$ with $\alpha \in Act \cup\{\rt_\epsilon,\epsilon_X \mid X \mathop\subseteq A\}$.
            \begin{itemize}
                \item If $\alpha \in A_\tau$ then, by the semantics of $\vartheta$, $P' = \vartheta(P^\ddag)$ and $P^\dag \step{\alpha} P^\ddag$. Since $\R(P^\dag,Q^\dag)$, there exists a transition $Q^\dag \step{\alpha} Q^\ddag$ such that $P^\ddag \bisimtbrc Q^\ddag$. By the semantics, there exists a transition $Q \step{\alpha} \vartheta(Q^\ddag)$ such that, by the first part of this proof, $\vartheta(P') \bisimb \vartheta(Q^\ddag)$.
                \item The case $\alpha = \rt_\epsilon$ is not possible according to the semantics of $\vartheta$.
                \item If $\alpha = \epsilon_X$ with $X \subseteq A$ then, by the semantics of $\vartheta$, $P' = \vartheta_X(P^\dag)$. Since $\R(P^\dag,Q^\dag)$, $P^\dag \bisimrtbrc Q^\dag$ so $P^\dag \bisimtbrc[X] Q^\dag$. By the semantics, $Q \step{\epsilon_X} \vartheta_X(Q^\dag)$ such that, by the first part of this proof, $P' \bisimb \vartheta_X(Q^\dag)$.
                \item The case $\alpha=\rt$, by the semantics, is not possible.
            \end{itemize}
        \end{enumerate}
        \item If there exists $X \subseteq A$ such that $P = \vartheta_X(P^\dag)$ and $Q = \vartheta_X(Q^\dag)$ then, by definition of $\R'$, $\R(P^\dag,X,Q^\dag)$.
        \begin{enumerate}
            \item Let $P \step{\alpha} P'$ with $\alpha \in Act \cup\{\rt_\epsilon,\epsilon_X \mid X \mathop\subseteq A\}$.
            \begin{itemize}
                \item If $P \steptau P'$ then, by the semantics, $P' = \vartheta_X(P^\ddag)$ and $P^\dag \steptau P^\ddag$. Since $\R(P^\dag,X,Q^\dag)$, there exists a transition $Q^\dag \steptau Q^\ddag$ such that $P^\ddag \bisimtbrc[X] Q^\ddag$. By the semantics, there exists a transition $Q \steptau \vartheta_X(Q^\ddag)$ such that, by the first part, $P' \bisimb \vartheta_X(Q^\ddag)$.
                \item If $P \step{a} P'$ with $a \in A$ then, by the semantics, $a \in X$, $P' = \vartheta(P^\ddag)$ and $P^\dag \step{a} P^\ddag$. Since $\R(P^\dag,X,Q^\dag)$, there exists a transition $Q^\dag \step{a} Q^\ddag$ such that $P^\ddag \bisimtbrc Q^\ddag$. By the semantics, there exists a transition $Q \step{a} \vartheta(Q^\ddag)$ such that, by the first part, $P' \bisimb \vartheta(Q^\ddag)$.
                \item If $P \step{\rt_\epsilon} P'$ then, by the semantics, $P' = \vartheta(P^\dag)$ and $\deadend{P^\dag}{X}$. Since $\R(P^\dag,X,Q^\dag)$ and $\deadend{P^\dag}{X}$, $\deadend{Q^\dag}{X}$ and $\R(P^\dag,Q^\dag)$. By the semantics, there exists a path $Q \step{\rt_\epsilon} \vartheta(Q^\dag)$ such that, by the definition of $\R'$, $\R'(P',\vartheta(Q^\dagger))$. Considering the previous case, this implies that $P' \bisimrb \vartheta(Q^\dag)$ and so $P' \bisimb \vartheta(Q^\dag)$.
                \item The case $\alpha = \epsilon_X$ with $X \subseteq A$ is impossible according to the semantics of $\vartheta_X$.
                \item If $P \step{\rt} P'$ then, by the semantics, $P' = \vartheta_X(P^\ddag)$, $\deadend{P^\dag}{X}$ and $P^\dag \step{\rt} P^\ddag$. Since $\R(P^\dag,X,Q^\dag)$, there exists a path $Q^\dag \step{\rt} Q^\ddag$ such that $P^\ddag \bisimtbrc[X] Q^\ddag$. Moreover, $\deadend{Q^\dag}{X}$. By the semantics, there exists a path $Q \step{\rt} \vartheta_X(Q^\ddag)$ such that, by the first part, $P' \bisimb \vartheta_X(Q^\ddag)$.
            \end{itemize}
        \end{enumerate}
    \end{itemize}
    Let $\R$ be a rooted stability respecting branching bisimulation and define
\[
        \R' := {\bisimrtbrc}\, \cup \{(P,Q), (P,X,Q) \mid \R(\vartheta(P),\vartheta(Q)) \wedge X \subseteq A\} \cup \{(P,X,Q) \mid \R(\vartheta_X(P),\vartheta_X(Q))\}
\]
    We are going to show that $\R'$ is a rooted \tb reactive bisimulation. Let $P, Q \in \closed$ and $X \subseteq A$.
    \begin{enumerate}
        \item If $\R'(P,Q)$ then $\R(\vartheta(P),\vartheta(Q))$.
        \begin{enumerate}
            \item If $P \step{\alpha} P'$ with $\alpha \in A_\tau$ then, by the semantics, $\vartheta(P) \step{\alpha} \vartheta(P')$. Since $\R(\vartheta(P),\vartheta(Q))$, there exists a transition $\vartheta(Q) \step{\alpha} Q^\ddag$ such that $\vartheta(P') \bisimb Q^\ddag$. By the semantics, $Q^\ddag = \vartheta(Q')$ and $Q \step{\alpha} Q'$ such that, by the second part, $P' \bisimtbrc Q'$.
            \item For all $Y \subseteq A$, by definition of $\R'$, $\R'(P,Y,Q)$.
        \end{enumerate}
        \item If $\R'(P,X,Q)$ then $\R(\vartheta(P),\vartheta(Q))$ or $\R(\vartheta_X(P),\vartheta_X(Q))$.
        \begin{enumerate}
            \item If $P \steptau P'$ then, by the semantics, $\vartheta(P) \steptau \vartheta(P')$ and $\vartheta_X(P) \steptau \vartheta_X(P')$.
            \begin{itemize}
                \item If $\R(\vartheta(P),\vartheta(Q))$, there exists a path $\vartheta(Q) \steptau Q^\ddag$ such that $\vartheta(P') \bisimb Q^\ddag$. By the semantics, $Q^\ddag = \vartheta(Q')$ and $Q \steptau Q'$ so that, by the second part, $P' \bisimtbrc Q'$ and thus $P' \bisimtbrc[X] Q'$.
                \item If $\R(\vartheta_X(P),\vartheta_X(Q))$, there exists a path $\vartheta_X(Q) \steptau Q^\ddag$ such that $\vartheta_X(P') \bisimb Q^\ddag$. By the semantics, $Q^\ddag = \vartheta_X(Q')$ and $Q \steptau Q'$ so that, by the second part, $P' \bisimtbrc[X] Q'$.
            \end{itemize}
            \item If $P \step{a} P'$ with $a \in X$ then, by the semantics, $\vartheta(P) \step{a} \vartheta(P')$, $\vartheta_X(P) \step{a} \vartheta(P')$. 
            \begin{itemize}
                \item If $\R(\vartheta(P),\vartheta(Q))$, there exists a step $\vartheta(Q) \step{a} Q^\ddag$ such that $\vartheta(P') \bisimb Q^\ddag$. By the semantics, $Q^\ddag = \vartheta(Q')$ and $Q \step{a} Q'$ such that, by the second part, $P' \bisimtbrc Q'$.
                \item If $\R(\vartheta_X(P),\vartheta_X(Q))$, there exists a path $\vartheta_X(Q) \step{a} Q^\ddag$ such that $\vartheta(P') \bisimb Q^\ddag$. By the semantics, $Q^\ddag = \vartheta(Q')$ and $Q \step{a} Q'$ such that, by the second part, $P' \bisimtbrc Q'$.
            \end{itemize}
            \item If $\deadend{P}{X}$ then
            \begin{itemize}
                \item if $\R(\vartheta(P),\vartheta(Q))$ then, by definition of $\R'$, $\R'(P,Q)$.
                \item if $\R(\vartheta_X(P),\vartheta_X(Q))$ then, by the semantics, $\vartheta_X(P) \step{\rt_\epsilon} \vartheta(P)$. As $\R(\vartheta_X(P),\vartheta_X(Q))$, $\vartheta_X(Q) \step{t_\epsilon} Q'$ with $\vartheta(P) \bisimb Q'$ and, by the semantics, $Q' = \vartheta(Q)$, thus, by the second part, $P \bisimtbrc Q$. Since $\deadend{P}{X}$ and $\deadend{Q}{X}$, $P \bisimrtbrc Q$ by Lemma~\ref{lem:obvious}.3, and so $\R'(P,Q)$. 
            \end{itemize}
            \item If $\deadend{P}{X}$ and $P \step{\rt} P'$ then, by the semantics, $\vartheta_X(P) \step{\rt} \vartheta_X(P')$.
            \begin{itemize}
                \item If $\R(\vartheta(P),\vartheta(Q))$ then, since $\vartheta(P) \step{\epsilon_X} \vartheta_X(P)$, there exists a transition $\vartheta(Q) \step{\epsilon_X} Q^\ddag$ with $\vartheta_X(P) \bisimb Q^\dag$. By the semantics, $Q^\dag = \vartheta_X(Q)$. Since $P\!\nsteptau$, also $\vartheta(P)\!\nsteptau$, so $\vartheta(Q)\!\nsteptau$ and $Q\!\nsteptau$. Moreover, since $\vartheta_X(P) \step{\rt} \vartheta_X(P')$ and $Q\nsteptau$, there exists a transition $\vartheta_X(Q) \step{\rt} Q^\ddag$ such that $\vartheta_X(P') \bisimb Q^\ddag$. By the semantics, $Q \step{\rt} Q'$ and $Q^\ddag = \vartheta_X(Q')$, thus, by the second part, $P' \bisimtbrc[X] Q'$.
                \item if $\R(\vartheta_X(P),\vartheta_X(Q))$ then there exists a transition $\vartheta_X(Q) \step{\rt} Q^\ddag$ with $\vartheta_X(P') \bisimb Q^\ddag$. By the semantics, $Q^\ddag = \vartheta_X(Q')$ and $Q \step{\rt} Q'$. Moreover, by the second part, $P' \bisimtbrc[X] Q'$.
            \popQED
            \end{itemize}
        \end{enumerate}
    \end{enumerate}
\end{proof}

\section{Proofs of Stuttering Property and Transitivity} \label{app:intro}

\begin{proof}[Proof of Lemma \ref{lem:stuttering}]
    Let $\R$ be a \tb reactive bisimulation. Let's define
    \begin{align*}
        \R' := & \{(P^\dag,Q),(Q,P^\dag) \mid \exists P,P^\ddag \in \closed, P \pathtau P^\dag \pathtau P^\ddag \wedge \R(P,Q) \wedge \R(P^\ddag,Q)\} \cup {}\\
        & \{(P^\dag\!\!,X,Q),(Q,X,P^\dag) \mid \exists P,P^\ddag \mathbin\in \closed, P \mathbin{\pathtau} P^\dag \mathbin{\pathtau} P^\ddag \wedge \R(P,X,Q) \wedge \R(P^\ddag\!\!,X,Q)\}
    \end{align*}
    $\R'$ is symmetric by definition and we are going to prove that $\R'$ is a \tb reactive bisimulation. Note that $\R \subseteq \R'$ (by taking $P^\ddag=P^\dag$). Let $P,Q \in \closed$ and $X \subseteq A$.
    \begin{enumerate}
        \item Let $\R'(P,Q)$.
        \begin{enumerate}
            \item Suppose $P \step{\alpha} P'$ with $\alpha \in A_\tau$.
            \begin{itemize}
                \item Let there exist $P^\dag, P^\ddag \in \closed$ such that $P^\dag \pathtau P \pathtau P^\ddag$, $\R(P^\dag,Q)$ and $\R(P^\ddag,Q)$. Since $P^\dag \pathtau P$ and $\R(P^\dag,Q)$, there exists a path $Q \pathtau Q_0$ such that $\R(P,Q_0)$. Since $P \step{\alpha} P'$, there exists a path $Q_0 \pathtau Q_1 \step{\opt{\alpha}} Q_2$ such that $\R(P,Q_1)$ and $\R(P',Q_2)$. Thus, there exists a path $Q \pathtau Q_1 \step{\opt{\alpha}} Q_2$ such that, since $\R \subseteq \R'$, $\R'(P,Q_1)$ and $\R'(P',Q_2)$.
                \item Let there exist $Q^\dag, Q^\ddag \in \closed$ such that $Q^\dag \pathtau Q \pathtau Q^\ddag$, $\R(P,Q^\dag)$ and $\R(P,Q^\ddag)$. Since $\R(P,Q^\ddag)$, there exists a path $Q^\ddag \pathtau Q_1 \step{\opt{\alpha}} Q_2$ such that $\R(P,Q_1)$ and $\R(P',Q_2)$. Since $Q \pathtau Q^\ddag$, there exists a path $Q \pathtau Q_1 \step{\opt{\alpha}} Q_2$ such that, since $\R \subseteq \R'$, $\R'(P,Q_1)$ and $\R'(P',Q_2)$.
            \end{itemize}
            \item For all $Y \subseteq A$, $\R'(P,Y,Q)$ by definition of $\R'$.
        \end{enumerate}
        \item Let $\R'(P,X,Q)$.
        \begin{enumerate}
            \item Suppose $P \steptau P'$.
            \begin{itemize}
                \item Let there exist $P^\dag, P^\ddag \mathbin\in \closed$ such that $P^\dag \mathbin{\pathtau} P \mathbin{\pathtau} P^\ddag$, $\R(P^\dag\!,X,Q)$ and $\R(P^\ddag\!,X,Q)$. Since $P^\dag \pathtau P$ and $\R(P^\dag,X,Q)$, there exists a path $Q \pathtau Q_0$ such that $\R(P,X,Q_0)$. Since $P \steptau P'$, there exists a path $Q_0 \pathtau Q_1 \step{\opt{\tau}} Q_2$ such that $\R(P,X,Q_1)$ and $\R(P',X,Q_2)$. Thus, there exists a path $Q \pathtau Q_1 \step{\opt{\tau}} Q_2$ such that, since $\R \subseteq \R'$, $\R'(P,X,Q_1)$ and $\R'(P',X,Q_2)$.
                \item Let there exist $Q^\dag, Q^\ddag \mathbin\in \closed$ such that $Q^\dag \mathbin{\pathtau} Q \mathbin{\pathtau} Q^\ddag$, $\R(P,X,Q^\dag)$ and $\R(P,X,Q^\ddag)$. Since $\R(P,X,Q^\ddag)$, there exists a path $Q^\ddag \pathtau Q_1 \step{\opt{\tau}} Q_2$ such that $\R(P,X,Q_1)$ and $\R(P',X,Q_2)$. Since $Q \pathtau Q^\ddag$, there exists a path $Q \pathtau Q_1 \step{\opt{\tau}} Q_2$ such that, since $\R \subseteq \R'$, $\R'(P,X,Q_1)$ and $\R'(P',X,Q_2)$.
            \end{itemize}
            \item Suppose $P \step{a} P'$ with $a \in X$.
            \begin{itemize}
                \item Let there exist $P^\dag, P^\ddag \mathbin\in \closed$ such that $P^\dag \mathbin{\pathtau} P \mathbin{\pathtau} P^\ddag$, $\R(P^\dag\!,X,Q)$ and $\R(P^\ddag\!,X,Q)$. Since $P^\dag \pathtau P$ and $\R(P^\dag,X,Q)$, there exists a path $Q \pathtau Q_0$ such that $\R(P,X,Q_0)$. Since $P \step{a} P'$, there exists a path $Q_0 \pathtau Q_1 \step{a} Q_2$ such that $\R(P,X,Q_1)$ and $\R(P',Q_2)$. Thus, there exists a path $Q \pathtau Q_1 \step{a} Q_2$ such that, since $\R \subseteq \R'$, $\R'(P,X,Q_1)$ and $\R'(P',Q_2)$.
                \item Let there exist $Q^\dag, Q^\ddag \mathbin\in \closed$ such that $Q^\dag \mathbin{\pathtau} Q \mathbin{\pathtau} Q^\ddag$, $\R(P,X,Q^\dag)$ and $\R(P,X,Q^\ddag)$. Since $\R(P,X,Q^\ddag)$, there exists a path $Q^\ddag \pathtau Q_1 \step{a} Q_2$ such that $\R(P,X,Q_1)$ and $\R(P',Q_2)$. Since $Q \pathtau Q^\ddag$, there exists a path $Q \pathtau Q_1 \step{a} Q_2$ such that, since $\R \subseteq \R'$, $\R'(P,X,Q_1)$ and $\R'(P',Q_2)$.
            \end{itemize}
            \item Suppose $\deadend{P}{X}$.
            \begin{itemize}
                \item Let there exist $P^\dag, P^\ddag \mathbin\in \closed$ such that $P^\dag \mathbin{\pathtau} P \mathbin{\pathtau} P^\ddag$, $\R(P^\dag\!,X,Q)$ and $\R(P^\ddag\!,X,Q)$. Since $P^\dag \pathtau P$ and $\R(P^\dag,X,Q)$, there exists a path $Q \pathtau Q_0$ such that $\R(P,X,Q_0)$. Since $\deadend{P}{X}$, there exists a path $Q_0 \pathtau Q_0'$ such that $\R(P,Q_0')$. Thus, there exists a path $Q \pathtau Q_0'$ such that, since $\R \subseteq \R'$, $\R'(P,Q'_0)$.
                \item Let there exist $Q^\dag, Q^\ddag \mathbin\in \closed$ such that $Q^\dag \mathbin{\pathtau} Q \mathbin{\pathtau} Q^\ddag$, $\R(P,X,Q^\dag)$ and $\R(P,X,Q^\ddag)$. Since $\R(P,X,Q^\ddag)$, there exists a path $Q^\ddag \pathtau Q_0$ such that $\R(P,Q_0)$. Since $Q \pathtau Q^\ddag$, there exists a path $Q \pathtau Q_0$ such that, since $\R \subseteq \R'$, $\R'(P,Q_0)$.
            \end{itemize}
            \item Suppose $\deadend{P}{X}$ and $P \step{\rt} P'$.
            \begin{itemize}
                \item Let there exist $P^\dag, P^\ddag \mathbin\in \closed$ such that $P^\dag \mathbin{\pathtau} P \mathbin{\pathtau} P^\ddag$, $\R(P^\dag\!,X,Q)$ and $\R(P^\ddag\!,X,Q)$. Since $P^\dag \pathtau P$ and $\R(P^\dag,X,Q)$, there exists a path $Q \pathtau Q_0$ such that $\R(P,X,Q_0)$. Since $\deadend{P}{X}$ and $P \step{\rt} P'$, there exists a path $Q_0 \pathtau Q_1 \step{\rt} Q_2$ with $\R(P',X,Q_2)$. Thus, there exists a path $Q \pathtau Q_1 \step{\rt} Q_2$ such that, since $\R\subseteq\R'$, $\R'(P',X,Q_2)$.
                \item Let there exist $Q^\dag, Q^\ddag \mathbin\in \closed$ such that $Q^\dag \mathbin{\pathtau} Q \mathbin{\pathtau} Q^\ddag$, $\R(P,X,Q^\dag)$ and $\R(P,X,Q^\ddag)$. Since $\R(P,X,Q^\ddag)$, there exists a path $Q^\ddag \pathtau Q_1 \step{\rt} Q_2$ with $\R(P',X,Q_2)$. Since $Q \pathtau Q^\ddag$, there exists a path $Q \pathtau Q_1 \step{\rt} Q_2$ such that, since $\R\subseteq\R'$, $\R'(P',X,Q_2)$.
            \end{itemize}
            \item Suppose $P \nsteptau$.
            \begin{itemize}
                \item Let there exist $P^\dag, P^\ddag \mathbin\in \closed$ such that $P^\dag \mathbin{\pathtau} P \mathbin{\pathtau} P^\ddag$, $\R(P^\dag\!,X,Q)$ and $\R(P^\ddag\!,X,Q)$. Since $P^\dag \pathtau P$ and $\R(P^\dag,X,Q)$, there exists a path $Q \pathtau Q_0$. Since $P \nsteptau$, there exists a path $Q_0 \pathtau Q_1 \nsteptau$. Thus, there exists a path $Q \pathtau Q_1 \nsteptau$.
                \item Let there exist $Q^\dag, Q^\ddag \mathbin\in \closed$ such that $Q^\dag \mathbin{\pathtau} Q \mathbin{\pathtau} Q^\ddag$, $\R(P,X,Q^\dag)$ and $\R(P,X,Q^\ddag)$. Since $\R(P,X,Q^\ddag)$, there exists a path $Q^\ddag \pathtau Q_1 \nsteptau$. Since $Q \pathtau Q^\ddag$, there exists a path $Q \pathtau Q_1 \nsteptau$.
           \popQED
            \end{itemize}
        \end{enumerate}
    \end{enumerate}
\end{proof}

\begin{proof}[Proof of Proposition \ref{prop:equivalence}]
    Let $\R_1$ and $\R_2$ be two \tb reactive bisimulations and define
    \begin{align*}
        \R := (\R_1 \circ \R_2)\cup(\R_2\circ \R_1)
    \end{align*}
    $\R$ is clearly symmetric by definition. Let's check that $\R$ is a \tb reactive bisimulation. Let $P,Q \in \closed$ and $X \subseteq A$.
    \begin{enumerate}
        \item If $\R(P,Q)$ then there exists $R \in \closed$ such that $\R_1(P,R)$ and $\R_2(R,Q)$, or $\R_2(P,R)$ and $\R_1(R,Q)$. The two possibilities are similar; thus, suppose without loss of generality that $\R_1(P,R)$ and $\R_2(R,Q)$.
        \begin{enumerate}
            \item If $P \step{\alpha} P'$ with $\alpha \in A_\tau$ then, since $\R_1(P,R)$, there exists a path $R \pathtau R_1 \step{\opt{\alpha}} R_2$ such that $\R_1(P,R_1)$ and $\R_1(P',R_2)$. Since $\R_2(R,Q)$ and $R \pathtau R_1$, there exists a path $Q \pathtau Q_0$ such that $\R_2(R_1,Q_0)$. Since $R_1 \step{\opt{\alpha}} R_2$, there exists a path $Q_0 \pathtau Q_1 \step{\opt{\alpha}} Q_2$ such that $\R_2(R_1,Q_1)$ and $\R_2(R_2,Q_2)$. By definition of $\R$, there exists a path $Q \pathtau Q_1 \step{\opt{\alpha}} Q_2$ such that $\R(P,Q_1)$ and $\R(P',Q_2)$.
            \item For all $Y \subseteq A$, since $\R_1(P,R)$ and $\R_2(R,Q)$, $\R_1(P,Y,R)$ and $\R_2(R,Y,Q)$, thus, $\R(P,Y,Q)$.
        \end{enumerate}
        \item If $\R(P,X,Q)$ then there exists $R \in \closed$ such that $\R_1(P,X,R)$ and $\R_2(R,X,Q)$, or $\R_2(P,X,R)$ and $\R_1(R,X,Q)$. The two possibilities are similar; thus, suppose without loss of generality that $\R_1(P,X,R)$ and $\R_2(R,X,Q)$.
        \begin{enumerate}
            \item If $P \steptau P'$ then, since $\R_1(P,X,R)$, there exists a path $R \pathtau R_1 \step{\opt{\tau}} R_2$ such that $\R_1(P,X,R_1)$ and $\R_1(P',X,R_2)$. Since $\R_2(R,X,Q)$ and $R \pathtau R_1$, there exists a path $Q \pathtau Q_0$ such that $\R_2(R_1,X,Q_0)$. Since $R_1 \step{\opt{\tau}} R_2$, there exists a path $Q_0 \pathtau Q_1 \step{\opt{\tau}} Q_2$ such that $\R_2(R_1,X,Q_1)$ and $\R_2(R_2,X,Q_2)$. By definition of $\R$, there exists a path $Q \pathtau Q_1 \step{\opt{\tau}} Q_2$ such that $\R(P,X,Q_1)$ and $\R(P',X,Q_2)$.
            \item If $P \step{a} P'$ with $a \in X$ then, since $\R_1(P,X,R)$, there exists a path $R \pathtau R_1 \step{a} R_2$ such that $\R_1(P,X,R_1)$ and $\R_1(P',R_2)$. Since $\R_2(R,X,Q)$ and $R \pathtau R_1$, there exists a path $Q \pathtau Q_0$ such that $\R_2(R_1,X,Q_0)$. Since $R_1 \step{a} R_2$, there exists a path $Q_0 \pathtau Q_1 \step{a} Q_2$ such that $\R_2(R_1,X,Q_1)$ and $\R_2(R_2,Q_2)$. By definition of $\R$, there exists a path $Q \pathtau Q_1 \step{a} Q_2$ such that $\R(P,X,Q_1)$ and $\R(P',Q_2)$.
            \item If $\deadend{P}{X}$ then, since $P \nsteptau$, there exists a path $R \pathtau R_0 \nsteptau$. Moreover, using Clause 2.a, $\R_1(P,X,R_0)$. Moreover, there exists a path $R_0 \pathtau R'_0$ such that $\R_1(P,R'_0)$, but, since $R_0 \nsteptau$, $R_0 = R'_0$. By Clause 1.a, $\init{P} =\init{R_0}$, so $\deadend{R_0}{X}$. Since $\R_2(R,X,Q)$ and $R \pathtau R_0$, there exists a path $Q \pathtau Q_0$ such that $\R_2(R_0,X,Q_0)$. Moreover, since $\deadend{R_0}{X}$, there exists a path $Q_0 \pathtau Q_0'$ such that $\R_2(R_0,Q'_0)$. Thus, there exists a path $Q \pathtau Q_0'$ such that, by definition of $\R$, $\R(P,Q'_0)$.
            \item If $\deadend{P}{X}$ and $P \step{\rt} P'$ then, since $\R_1(P,X,R)$, according to Lemma \ref{lem:obvious}.4, there exists a path $R \pathtau R_1 \nsteptau$ with $\deadend{R_1}{X}$ and $\R_1(P,X,R_1)$. Moreover, there exists a transition $R_1 \mathbin{\step{\rt}} R_2$ such that $\R_1(P'\!,X,R_2)$. Since $\R_2(R,X,Q)$ and $R \pathtau R_1$, there exists a path $Q \pathtau Q_0$ such that $\R_2(R_1,X,Q_0)$. Since $\deadend{R_1}{X}$ and $R_1 \step{\rt} R_2$, there exists a path $Q_0 \pathtau Q_1 \step{\rt} Q_2$ with $\R_2(R_2,X,Q_2)$. As a result, there exists a path $Q \pathtau Q_1 \step{\rt} Q_2$ such that $\R(P',X,Q_2)$.
            \item If $P \nsteptau$ then, since $\R_1(P,X,R)$, there exists a path $R \pathtau R_0 \nsteptau$. Since $\R_2(R,X,Q)$ and $R \pathtau R_0$, there exists a path $Q \pathtau Q_0$ such that $\R_2(R_0,X,Q_0)$. Since $R_0 \nsteptau$, there exists a path $Q_0 \pathtau Q_0' \nsteptau$. Hence there exists a path $Q \pathtau Q_0' \nsteptau$.
           \popQED
        \end{enumerate}
    \end{enumerate}
\end{proof}

\section{Proof of Modal Characterisation} \label{app:modal}

\begin{proof}[Proof of Theorem \ref{thm:modal characterisation}]
    $(\Rightarrow)$ We are going to prove by structural induction on $\logic_b$ and $\logic_b^r$ that, for all $P,Q \in \closed$, $X \subseteq A$, $\varphi \in \logic_b$ and $\psi \in \logic_b^r$,
    \begin{itemize}
        \item if $P \bisimtbrc Q$ and $P \models \varphi$ then $Q \models \varphi$
        \item if $P \bisimtbrc[X] Q$ and $P \models_X \varphi$ then $Q \models_X \varphi$
        \item if $P \bisimrtbrc Q$ and $P \models \psi$ then $Q \models \psi$
        \item if $P \bisimrtbrc[X] Q$ and $P \models_X \psi$ then $Q \models_X \psi$
    \end{itemize}
    Note that, in the four cases, we dispose of the contraposition. Let $P, Q \in \closed$, $X \subseteq A$, $\varphi \in \logic_b$ and $\psi \in \logic_b^r$.
    \begin{itemize}
        \item If $P \bisimtbrc Q$ and $P \models \varphi$ then
        \begin{itemize}
            \item if $\varphi = \top$ then $Q \models \top$.
            \item if $\varphi = \bigwedge_{i\in I}\varphi_i$ with $(\varphi_i)_{i \in I} \in (\logic_b)^I$ then, for all $i \in I$, $P \models \varphi_i$. Thus, by induction, for all $i \in I$, $Q \models \varphi_i$. Therefore, $Q \models \bigwedge_{i\in I}\varphi_i$.
            \item if $\varphi = \neg\varphi'$ then $P \not\models \varphi'$. Thus, by induction, $Q \not\models \varphi'$. Therefore, $Q \models \neg\varphi'$.
            \item if $\varphi = \langle\epsilon\rangle (\varphi_1\langle\hat{\alpha}\rangle\varphi_2)$ then there exists a path $P \pathtau P_1 \step{\opt{\alpha}} P_2$ such that $P_1 \models \varphi_1$ and $P_2 \models \varphi_2$. Since $P \bisimtbrc Q$, there exists a path $Q \pathtau Q_1 \step{\opt{\alpha}} Q_2$ such that $P_1 \bisimtbrc Q_1$ and $P_2 \bisimtbrc Q_2$. By induction, $Q_1 \models \varphi_1$ and $Q_2 \models \varphi_2$. Therefore, $Q \models \varphi$.
            \item if $\varphi = \langle\epsilon\rangle\langle X\rangle\varphi$ then there is a path $P \pathtau P_1 \step{\rt} P_2$ with $\deadend{P_1}{X}$ and $P_2 \models_X \varphi_2$. Since $P \bisimtbrc Q$, there exists a path $Q \pathtau Q_1$ such that $P_1 \bisimtbrc Q_1$ and $\deadend{Q_1}{X}$ (cf.\ Lemma~\ref{lem:obvious}.4). Thus, there exists a path $Q \pathtau Q_1 \step{\rt} Q_2$ such that $P_2 \bisimtbrc[X] Q_2$. By induction, $Q_2 \models_X \varphi$ so $Q \models \langle\epsilon\rangle\langle X\rangle\varphi$.
            \item if $\varphi = \langle\epsilon\rangle \neg\langle\tau\rangle\top$ then there exists a path $P \pathtau P_0 \nsteptau$. Since $P \bisimtbrc Q$, there exists a path $Q \pathtau Q_0 \nsteptau$. Therefore, $Q \models \varphi$.
        \end{itemize}
        \item If $P \bisimtbrc[X] Q$ and $P \models_X \varphi$ then
        \begin{itemize}
            \item if $\varphi = \top$ then $Q \models_X \top$.
            \item if $\varphi = \bigwedge_{i\in I}\varphi_i$ with $(\varphi_i)_{i \in I} \in (\logic_b)^I$ then, for all $i \in I$, $P \models_X \varphi_i$. Thus, by induction, for all $i \in I$, $Q \models_X \varphi_i$. Therefore, $Q \models_X \bigwedge_{i\in I}\varphi_i$.
            \item if $\varphi = \neg\varphi'$ then $P \not\models_X \varphi'$. Thus, by induction, $Q \not\models_X \varphi'$. Therefore, $Q \models_X \neg\varphi'$.
            \item if $\varphi = \langle\epsilon\rangle (\varphi_1\langle\hat{\alpha}\rangle\varphi_2)$ then
            \begin{itemize}
                \item if $\alpha = \tau$ then there exists a path $P \pathtau P_1 \step{\opt{\tau}} P_2$ such that $P_1 \models_X \varphi_1$ and $P_2 \models_X \varphi_2$. Since $P \bisimtbrc[X] Q$, there exists a path $Q \pathtau Q_1 \step{\opt{\tau}} Q_2$ such that $P_1 \bisimtbrc[X] Q_1$ and $P_2 \bisimtbrc[X] Q_2$. By induction, $Q_1 \models_X \varphi_1$ and $Q_2 \models_X \varphi_2$. Therefore, $Q \models_X \varphi$.
                \item if $\alpha \in A$ then $a \in X$ or $\deadend{P}{X}$ and there exists a path $P \pathtau P_1 \step{a} P_2$ such that $P_1 \models_X \varphi_1$ and $P_2 \models \varphi_2$. Since $P \bisimtbrc[X] Q$, there exists a path $Q \pathtau Q_1 \step{a} Q_2$ such that $P_1 \bisimtbrc[X] Q_1$ and $P_2 \bisimtbrc Q_2$. Moreover, with Lemma~\ref{lem:obvious}.4 we can get that $\deadend{P}{X} \Leftrightarrow \deadend{Q_1}{X}$. By induction, $Q_1 \models_X \varphi_1$ and $Q_2 \models \varphi_2$. Therefore, $Q \models_X \varphi$.
            \end{itemize}
            \item if $\varphi = \langle\epsilon\rangle\langle Y\rangle\varphi$ then there is a path $P \pathtau P_1 \step{\rt} P_2$ with $\deadend{P_1}{X\cup Y}$ and $P_2 \models_Y \varphi$. Since $P \bisimtbrc[X] Q$, there exists a path $Q \pathtau Q_0$ such that $P_1 \bisimtbrc[X] Q_0$, and by Lemma~\ref{lem:obvious}.4 there exists a path $Q_0 \pathtau Q_1$ such that $\deadend{Q_1}{X\cup Y}$ and $P_1 \bisimtbrc Q_1$, and hence $P_1 \bisimtbrc[Y] Q_1$. Thus, there exists a transition $Q_1 \step{\rt} Q_2$ such that $P_2 \bisimtbrc[Y] Q_2$. By induction, $Q_2 \models_Y \varphi$. Therefore, $Q \models_X \langle\epsilon\rangle\langle Y\rangle\varphi$.
            \item if $\varphi = \langle\epsilon\rangle \neg\langle\tau\rangle\top$ then there exists a path $P \pathtau P_0 \nsteptau$. Since $P \bisimtbrc Q$, there exists a path $Q \pathtau Q_0 \nsteptau$. Therefore, $Q \models \varphi$.
        \end{itemize}
        \item If $P \bisimrtbrc Q$ and $P \models \psi$ then
        \begin{itemize}
            \item if $\psi = \top$ then $Q \models \top$.
            \item if $\psi = \bigwedge_{i\in I}\psi_i$ with $(\psi_i)_{i \in I} \in (\logic_b^r)^I$ then, for all $i \in I$, $P \models \psi_i$. Thus, by induction, for all $i \in I$, $Q \models \psi_i$. Therefore, $Q \models \bigwedge_{i\in I}\psi_i$.
            \item if $\psi = \neg\psi'$ then $P \not\models \psi'$. Thus, by induction, $Q \not\models \psi'$. Therefore, $Q \models \neg\psi'$.
            \item if $\psi = \langle\alpha\rangle\varphi$ then there is a transition $P \step{\alpha} P'$ such that $P' \models \varphi$. Since $P \bisimrtbrc Q$, there exists a path $Q \step{\alpha} Q'$ such that $P' \bisimtbrc Q'$. By induction, $Q' \models \varphi$. Therefore, $Q \models \psi$.
            \item if $\psi = \langle X\rangle\varphi$ then $\deadend{P}{X}$ and there exists a transition $P \step{\rt} P'$ such that $P' \models_X \varphi$. Since $P \bisimrtbrc Q$, $\deadend{Q}{X}$ and there exists a path $Q \step{\rt} Q'$ such that $P' \bisimtbrc[X] Q'$. By induction, $Q' \models_X \varphi$. Therefore, $Q \models \psi$.
        \end{itemize}
        \item If $P \bisimrtbrc[X] Q$ and $P \models_X \psi$ then
        \begin{itemize}
            \item if $\psi = \top$ then $Q \models_X \top$.
            \item if $\psi = \bigwedge_{i\in I}\psi_i$ with $(\psi_i)_{i \in I} \in (\logic_b^r)^I$ then, for all $i \in I$, $P \models_X \psi_i$. Thus, by induction, for all $i \in I$, $Q \models_X \psi_i$. Therefore, $Q \models_X \bigwedge_{i\in I}\psi_i$.
            \item if $\psi = \neg\psi'$ then $P \not\models_X \psi'$. Thus, by induction, $Q \not\models_X \psi'$. Therefore, $Q \models_X \neg\psi'$.
            \item if $\psi = \langle\alpha\rangle\varphi$
            \begin{itemize}
                \item if $\alpha = \tau$ then there exists a transition $P \step{\tau} P'$ such that $P' \models_X \varphi$. Since $P \bisimrtbrc[X] Q$, there exists a transition $Q \step{\tau} Q'$ such that $P' \bisimtbrc[X] Q'$. By induction, $Q' \models_X \varphi$. Therefore, $Q \models_X \psi$.
                \item if $\alpha \in A$ then $a \in X$ or $\deadend{P}{X}$ and there exists a transition $P \step{a} P'$ such that $P' \models \varphi$. Since $P \bisimrtbrc[X] Q$, $\deadend{P}{X} \Leftrightarrow \deadend{Q}{X}$ and there exists a transition $Q \step{a} Q'$ such that $P' \bisimtbrc Q'$. By induction, $Q' \models \varphi$. Therefore, $Q \models_X \psi$.
            \end{itemize}
            \item if $\psi = \langle Y\rangle\varphi$ then $\deadend{P}{(X\cup Y)}$ and there exists a transition $P \step{\rt} P'$ such that $P_1 \models_Y \varphi$. Since $P \bisimrtbrc[X] Q$, $\deadend{Q}{(X\cup Y)}$ and there exists a transition $Q \step{\rt} Q'$ such that $P' \bisimtbrc[Y] Q'$. By induction, $Q' \models_Y \varphi$. Therefore, $Q \models_X \psi$.
        \end{itemize}
    \end{itemize}

    $(\Leftarrow)$ Let $\equiv \; := \{(P,Q) \mid \forall \varphi \in \logic^c_b, P \models \varphi \Leftrightarrow Q \models \varphi\} \cup \{(P,X,Q) \mid \forall \varphi \in \logic^c_b, P \models_X \varphi \Leftrightarrow Q \models_X \varphi\}$, and $\equiv^r \; := \{(P,Q) \mid \forall \psi \in \logic_b^{cr}, P \models \psi \Leftrightarrow Q \models \psi\} \cup \{(P,X,Q) \mid \forall \psi \in \logic_b^{cr}, P \models_X \psi \Leftrightarrow Q \models_X \psi\}$. $(P,X,Q) \in {\equiv}$ will be denoted $P \equiv_X Q$ for clarity. Note that ${\equiv^r} \subseteq {\equiv}$. We are going to check that $\equiv$ is a generalised \tb bisimulation and $\equiv^r$ a generalised rooted \tb reactive bisimulation. Let $P,Q \in \closed$ and $X \subseteq A$.
    \begin{enumerate}
        \item If $P \equiv Q$
        \begin{enumerate}
            \item if $P \step{\alpha} P'$ then define $\mathcal{Q}^\dag := \{Q^\dag \mid Q \pathtau Q^\dag \wedge P \not\equiv Q^\dag\}$ and $\mathcal{Q}^\ddag := \{Q^\ddag \mid Q \pathtau Q^\dag \step{\opt{\alpha}} Q^\ddag \wedge P' \not\equiv Q^\ddag\}$. Since $\logic_b$ is closed under negation and conjunction, there exist two formulas $\varphi^\dag, \varphi^\ddag \in \logic_b$ such that $P \models \varphi^\dag$, $P' \models \varphi^\ddag$, for all $Q^\dag \in \mathcal{Q}^\dag$, $Q^\dag \not\models \varphi^\dag$ and, for all $Q^\ddag \in \mathcal{Q}^\ddag$, $Q^\ddag \not\models \varphi^\ddag$. Note that $P \models \langle\epsilon\rangle(\varphi^\dag\langle\hat{\alpha}\rangle\varphi^\ddag)$. Thus, $Q \models \langle\epsilon\rangle(\varphi^\dag\langle\hat{\alpha}\rangle\varphi^\ddag)$. Therefore, there exists a path $Q \pathtau Q_1 \step{\opt{\alpha}} Q_2$ such that $Q_1 \models \varphi^\dag$ and $Q_2 \models \varphi^\ddag$. By definition of $\mathcal{Q}^\dag$ and $\mathcal{Q}^\ddag$, $P \equiv Q_1$ and $P' \equiv Q_2$.
            \item if $\deadend{P}{X}$ and $P \step{\rt} P'$ then define $\mathcal{Q} := \{Q_2 \mid Q \pathtau Q_1 \step{\rt} Q_2 \wedge P' \not\equiv_X Q_2\}$. Since $\logic_b^c$ is closed under negation and conjunction, there exists a formula $\varphi \in \logic^c_b$ such that $P' \models_X \varphi\!$ and, for all $Q \mathbin\in \mathcal{Q}$, $Q^\dag \mathbin{\not\models_X} \varphi$. Note that $P \models \langle\epsilon\rangle\langle X\rangle\varphi$. Thus, $Q \models \langle\epsilon\rangle\langle X\rangle\varphi$. Therefore, there exists a path $Q \pathtau Q_1 \step{\rt} Q_2$ with $\deadend{Q_1}{X}$ and $Q_2 \models_X \varphi$. By definition of $\mathcal{Q}$, $P' \equiv_X Q_2$.
            \item if $P \nsteptau$ then $P \models \langle\epsilon\rangle \neg\langle\tau\rangle\top$. Thus $Q \models \langle\epsilon\rangle \neg\langle\tau\rangle\top$. Therefore, $Q \pathtau Q_1 \nsteptau$.
       \end{enumerate}
        \item If $P \equiv_X Q$ then
        \begin{enumerate}
            \item if $P \step{\tau} P'$ then define $\mathcal{Q}^\dag := \{Q^\dag \mid Q \pathtau Q^\dag \wedge P \not\equiv_X Q^\dag\}$ and $\mathcal{Q}^\ddag := \{Q^\ddag \mid Q \pathtau Q^\ddag\linebreak[2] \wedge P' \not\equiv_X Q^\ddag\}$. Since $\logic_b$ is closed under negation and conjunction, there exist two formulas $\varphi^\dag, \varphi^\ddag \in \logic_b$ such that $P \models_X \varphi^\dag$, $P' \models_X \varphi^\ddag$, for all $Q^\dag \in \mathcal{Q}^\dag$, $Q^\dag \not\models_X \varphi^\dag$ and, for all $Q^\ddag \in \mathcal{Q}^\ddag$, $Q^\ddag \not\models_X \varphi^\ddag$. Note that $P \models_X \langle\epsilon\rangle(\varphi^\dag\langle\hat{\tau}\rangle\varphi^\ddag)$. Thus, $Q \models_X \langle\epsilon\rangle(\varphi^\dag\langle\hat{\tau}\rangle\varphi^\ddag)$. Therefore, there exists a path $Q \pathtau Q_1 \step{\opt{\tau}} Q_2$ such that $Q_1 \models_X \varphi^\dag$ and $Q_2 \models_X \varphi^\ddag$. By definition of $\mathcal{Q}^\dag$ and $\mathcal{Q}^\ddag$, $P \equiv_X Q_1$ and $P' \equiv_X Q_2$.
            \item if $P \step{a} P'$ with $a \in X$ or $\deadend{P}{X}$ then define $\mathcal{Q}^\dag := \{Q^\dag \mid Q \pathtau Q^\dag\linebreak[2] \wedge P \not\equiv_X Q^\dag\}$ and $\mathcal{Q}^\ddag := \{Q^\ddag \mid Q \pathtau Q^\dag \step{a} Q^\ddag \wedge P' \not\equiv Q^\ddag\}$. Since $\logic_b$ is closed under negation and conjunction, there exist two formulas $\varphi^\dag, \varphi^\ddag \in \logic_b$ such that $P \models_X \varphi^\dag$, $P' \models \varphi^\ddag$, for all $Q^\dag \in \mathcal{Q}^\dag$, $Q^\dag \not\models_X \varphi^\dag$ and, for all $Q^\ddag \in \mathcal{Q}^\ddag$, $Q^\ddag \not\models \varphi^\ddag$. Note that $P \models_X \langle\epsilon\rangle(\varphi^\dag\langle\hat{\alpha}\rangle\varphi^\ddag)$. Thus, $Q \models_X \langle\epsilon\rangle(\varphi^\dag\langle\hat{\alpha}\rangle\varphi^\ddag)$. Therefore, there exists a path $Q \pathtau Q_1 \step{\opt{\alpha}} Q_2$ such that $a \in X \vee \deadend{Q_1}{X}$, $Q_1 \models_X \varphi^\dag$ and $Q_2 \models \varphi^\ddag$. By definition of $\mathcal{Q}^\dag$ and $\mathcal{Q}^\ddag$, $P \equiv_X Q_1$ and $P' \equiv Q_2$.
            \item if $\deadend{P}{X\cup Y}$ and $P \step{\rt} P'$ then define $\mathcal{Q} := \{Q_2 \mid Q \pathtau Q_1 \step{\rt} Q_2\linebreak[2] \wedge P' \not\equiv_Y Q_2\}$. Since $\logic_b^c$ is closed under negation and conjunction, there exists a formula $\varphi \in \logic^c_b$ such that $P' \models_Y \varphi\!$ and, for all $Q \mathbin\in \mathcal{Q}$, $Q^\dag \mathbin{\not\models_Y} \varphi$. Note that $P \models_X \langle\epsilon\rangle\langle Y\rangle\varphi$. Thus, $Q \models_X \langle\epsilon\rangle\langle Y\rangle\varphi$. Therefore, there exists a path $Q \pathtau Q_1 \step{\rt} Q_2$ with $\deadend{Q_1}{X\cup Y}$ and $Q_2 \models_Y \varphi$. By definition of $\mathcal{Q}$, $P' \equiv_Y Q_2$.
            \item if $P \nsteptau$ then $P \models_X \langle\epsilon\rangle \neg\langle\tau\rangle\top$. Thus $Q \models_X \langle\epsilon\rangle \neg\langle\tau\rangle\top$. Therefore, $Q \pathtau Q_1 \nsteptau$.
        \end{enumerate}
    \end{enumerate}
    \begin{enumerate}
        \item If $P \equiv^r Q$ then
        \begin{enumerate}
            \item if $P \step{\alpha} P'$ with $\alpha \in A_\tau$ then define $\mathcal{Q}^\ddag := \{Q^\ddag \mid Q \step{\alpha} Q^\ddag \wedge P' \not\equiv Q^\ddag\}$. Since $\logic_b^r$ is closed under negation and conjunction, there exist a formula $\varphi^\ddag \in \logic_b^r$ such that $P' \models \varphi^\ddag$ and, for all $Q^\ddag \in \mathcal{Q}^\ddag$, $Q^\ddag \not\models \varphi^\ddag$. Note that $P \models \langle\alpha\rangle\varphi^\ddag$. Thus, $Q \models \langle\alpha\rangle\varphi^\ddag$. Therefore, there\linebreak[3] exists a transition $Q \step{\alpha} Q'$ such that $Q' \models \varphi^\ddag$. By definition of $\mathcal{Q}^\ddag$, $P' \equiv Q'$.
            \item if $\deadend{P}{X}$ and $P \step{\rt} P'$ then define $\mathcal{Q}^\ddag := \{Q^\ddag \mid Q \step{\rt} Q^\ddag \wedge P' \not\equiv_X Q^\ddag\}$. Since $\logic_b^r$ is closed under negation and conjunction, there exist a formula $\varphi^\ddag \in \logic_b^r$ such that $P' \models_X \varphi^\ddag$ and, for all $Q^\ddag \in \mathcal{Q}^\ddag$, $Q^\ddag \not\models_X \varphi^\ddag$. Note that $P \models \langle X\rangle\varphi^\ddag$. Thus, $Q \models \langle X\rangle\varphi^\ddag$. Therefore, there exists a transition $Q \step{\rt} Q'$ such that $Q' \models_X \varphi^\ddag$. By definition of $\mathcal{Q}^\ddag$, $P' \equiv_X Q'$.
        \end{enumerate}\vspace{2ex}
        \item If $P \equiv^r_X Q$ then\vspace{-3ex}
        \begin{enumerate}
            \item if $P \step{\tau} P'$ then define $\mathcal{Q}^\ddag := \{Q^\ddag \mid Q \step{\tau} Q^\ddag \wedge P' \not\equiv_X Q^\ddag\}$. Since $\logic_b^r$ is closed under negation and conjunction, there exist a formula $\varphi^\ddag \in \logic_b^r$ such that $P' \models_X \varphi^\ddag$ and, for all $Q^\ddag \in \mathcal{Q}^\ddag$, $Q^\ddag \not\models_X \varphi^\ddag$. Note that $P \models_X \langle\tau\rangle\varphi^\ddag$. Thus, $Q \models_X \langle\tau\rangle\varphi^\ddag$. Therefore, there exists a transition $Q \step{\tau} Q'$ such that $Q' \models_X \varphi^\ddag$. By definition of $\mathcal{Q}^\ddag$, $P' \equiv_X Q'$.
            \item if $P \step{a} P'$ with $a \in X \vee \deadend{P}{X}$ then define $\mathcal{Q}^\ddag := \{Q^\ddag \mid Q \step{a} Q^\ddag \wedge P' \not\equiv Q^\ddag\}$. Since $\logic_b^r$ is closed under negation and conjunction, there exist a formula $\varphi^\ddag \in \logic_b^r$ such that $P' \models \varphi^\ddag$ and, for all $Q^\ddag \in \mathcal{Q}^\ddag$, $Q^\ddag \not\models \varphi^\ddag$. Note that $P \models_X \langle a\rangle\varphi^\ddag$. Thus, $Q \models_X \langle a\rangle\varphi^\ddag$. Therefore, there exists a path $Q \step{\alpha} Q'$ such that $Q' \models \varphi^\ddag$. By definition of $\mathcal{Q}^\ddag$, $P' \equiv Q'$.
            \item if $\deadend{P}{(X\cup Y)}$ and $P \step{\rt} P'$ then define $\mathcal{Q}^\ddag := \{Q^\ddag \mid Q \step{\rt} Q^\ddag \wedge P' \not\equiv_Y Q^\ddag\}$. Since $\logic_b^r$ is closed under negation and conjunction, there exist a formula $\varphi^\ddag \in \logic_b^r$ such that $P' \models_Y \varphi^\ddag$ and, for all $Q^\ddag \in \mathcal{Q}^\ddag$, $Q^\ddag \not\models_Y \varphi^\ddag$. Note that $P \models_X \langle Y\rangle\varphi^\ddag$. Thus, $Q \models_X \langle Y\rangle\varphi^\ddag$. Therefore, there exists a path $Q \step{\rt} Q'$ such that $Q' \models_Y \varphi^\ddag$. By definition of $\mathcal{Q}^\ddag$, $P' \equiv_Y Q'$.
        \popQED
        \end{enumerate}
    \end{enumerate}
\end{proof}

\section{Correctness of Time-out Bisimulation} \label{app:time-out}

\begin{proof}[Proof of Proposition \ref{prop:time-out bisim}]
    Let $\R$ be a \tb reactive bisimulation, let's define
    \begin{align*}
        \tbisim := \{(P,Q) \mid \R(P,Q)\} \cup \{(\theta_X(P),\theta_X(Q)) \mid \R(P,X,Q)\}
    \end{align*}
    We are going to show that $\tbisim$ is a \tb time-out bisimulation. Let $P,Q \in \closed$ such that $P \tbisim Q$. By definition of $\tbisim$, $\R(P,Q)$ or $P = \theta_X(P^\dag)$, $Q = \theta_X(Q^\dag)$ and $\R(^\dag,X,Q^\dag)$.
    \begin{enumerate}
        \item If $P \step{\alpha} P'$ with $\alpha \in A_\tau$ then 
        \begin{itemize}
            \item if $\R(P,Q)$ then there exists a path $Q \pathtau Q_1 \step{\opt{\alpha}} Q_2$ such that $\R(P,Q_1)$ and $\R(P',Q_2)$. Thus, by definition of $\tbisim$, $P \tbisim Q_1$ and $P \tbisim Q_2$.
            \item if $P = \theta_X(P^\dag)$, $Q = \theta_X(Q^\dag)$ and $\R(P^\dag,X,Q^\dag)$ then
            \begin{itemize}
                \item if $\alpha = \tau$ then, by the semantics, $P' = \theta_X(P^\ddag)$ and $P^\dag \steptau P^\ddag$. Since $\R(P^\dag,X,Q^\dag)$, there exists a path $Q^\dag \pathtau Q^\dag_1 \step{\opt{\tau}} Q^\dag_2$ such that $\R(P^\dag,X,Q^\dag_1)$ and $\R(P^\ddag,X,Q^\dag_2)$. By the semantics, there exists a path $Q \pathtau \theta_X(Q^\dag_1) \step{\opt{\tau}} \theta_X(Q^\dag_2)$ such that, by the definition of $\tbisim$, $P \tbisim \theta_X(Q^\dag_1)$ and $P' \tbisim \theta_X(Q^\dag_2)$.
                \item if $\alpha = a \in A$ then, by the semantics, $P^\dag \step{a} P'$ and $a \in X \vee \deadend{P^\dag}{X}$.
                \begin{itemize}
                   \item if $a \in X$ then, since $\R(P^\dag,X,Q^\dag)$, there exists a path $Q^\dag \pathtau Q^\dag_1 \step{a} Q_2$ such that $\R(P^\dag,X,Q^\dag_1)$ and $\R(P',Q_2)$.  By the semantics, there exists a path $Q \pathtau \theta_X(Q^\dag_1) \step{a} Q_2$ such that, by the definition of $\tbisim$, $P \tbisim \theta_X(Q^\dag_1)$ and $P' \tbisim Q_2$.
                   \item if $\deadend{P^\dag}{X}$, then there is a path $Q^\dag \pathtau Q^\dag_0\nsteptau$ with $\R(P^\dag,Q^\dag_0)$. Now there exists a path $Q^\dag_0 \pathtau Q^\dag_1 \step{a} Q_2$ such that $\R(P^\dag,Q^\dag_{1})$ and $\R(P',Q_2)$. Moreover, we find that $\deadend{P^\dag}{X} \Leftrightarrow \deadend{Q^\dag_1}{X}$. By the semantics, there exists a path $Q \pathtau \theta_X(Q^\dag_1) \step{a} Q_2$ such that, by the definition of $\tbisim$, $P \tbisim \theta_X(Q_1)$ and $P' \tbisim Q_2$.
                \end{itemize}
            \end{itemize}
        \end{itemize}
        \item If $\deadend{P}{X}$ and $P \step{\rt} P'$ then
        \begin{itemize}
            \item if $\R(P,Q)$ then $\R(P,X,Q)$, so there exists a path $Q \pathtau Q_1 \step{\rt} Q_2$ with $\R(P',X,Q_2)$. By definition of $\tbisim$, $\theta_X(P') \tbisim \theta_X(Q_2)$.
            \item if $P = \theta_Y(P^\dag)$, $Q = \theta_Y(Q^\dag)$ and $\R(P^\dag,Y,Q^\dag)$ then, by the semantics of $\theta_Y$, $\deadend{P^\dag}{Y}$ and $P^\dag \step{\rt} P'$. Therefore, using Lemma~\ref{lem:obvious}.4, there exists a path $Q^\dag \pathtau Q_1 \step{\rt} Q_2$ with $\deadend{Q_1}{X \cup Y}$ and $\R(P',X,Q_2)$. By the semantics,\linebreak there exists a path $Q \pathtau \theta_Y(Q_1) \step{\rt} Q_2$ with, by the definition of $\tbisim$, $\theta_X(P') \tbisim \theta_X(Q_2)$.
        \end{itemize}
        \item If $P \nsteptau$ then
        \begin{itemize}
            \item if $\R(P,Q)$ then $\R(P,\emptyset,Q)$, so there exists a path $Q \pathtau Q_0 \nsteptau$.
            \item if $P = \theta_X(P^\dag)$, $Q = \theta_X(Q^\dag)$ and $\R(P^\dag,X,Q^\dag)$ then, by the semantics, $P^\dag \nsteptau$. Since $\R(P^\dag,X,Q^\dag)$, there exists a path $Q^\dag \pathtau Q_0 \nsteptau$. By the semantics, there exists a path $Q \pathtau \theta_X(Q_0) \nsteptau$.
        \end{itemize}
    \end{enumerate}
    Let $\tbisim$ be a \tb time-out bisimulation, let's define
    \begin{align*}
        \R = \{(P,Q) \mid P \tbisim Q\} \cup \{(P,X,Q) \mid \theta_X(P) \tbisim \theta_X(Q)\}
    \end{align*}
    We are going to show that $\R$ is a generalised \tb reactive bisimulation. Let $P,Q \in \closed$ and $X \subseteq A$.
    \begin{enumerate}
        \item If $\R(P,Q)$ then $P \tbisim Q$.
        \begin{enumerate}
            \item If $P \step{\alpha} P'$ then there exists a path $Q \pathtau Q_1 \step{\opt{\alpha}} Q_2$ such that $P \tbisim Q_1$ and $P' \tbisim Q_2$, thus, by definition of $\R$, $\R(P,Q_1)$ and $\R(P',Q_2)$.
            \item If $\deadend{P}{X}$ and $P \step{\rt} P'$ then there exists a path $Q \pathtau Q_1 \step{\rt} Q_2$ with $\theta_X(P') \tbisim \theta_X(Q_2)$. Thus, by definition of $\R$, $\R(P',X,Q_2)$.
            \item If $P \nsteptau$ then there exists a path $Q \pathtau Q_0 \nsteptau$ such that $P \tbisim Q_0$, thus, by definition of $\R$, $\R(P,Q_0)$.
        \end{enumerate}
        \item If $\R(P,X,Q)$ then $\theta_X(P) \tbisim \theta_X(Q)$.
        \begin{enumerate}
            \item If $P \steptau P'$ then, by the semantics, $\theta_X(P) \steptau \theta_X(P')$. Therefore, there exists a path $\theta_X(Q) \pathtau Q^\dag \step{\opt{\tau}} Q^\ddag$ such that $\theta_X(P) \tbisim Q^\dag$ and $\theta_X(P') \tbisim Q^\ddag$. By the semantics, $Q^\dag = \theta_X(Q_1)$, $Q^\ddag = \theta_X(Q_2)$ and $Q \pathtau Q_1 \step{\opt{\tau}} Q_2$. Moreover, by definition of $\R$, $\R(P,X,Q_1)$ and $\R(P',X,Q_2)$.
            \item If $P \step{a} P'$ with $a \in X \vee \deadend{P}{X}$ then, by the semantics, $\theta_X(P) \step{a} P'$. Therefore, there exists a path $\theta_X(Q) \pathtau Q^\dag \step{a} Q_2$ such that $\theta_X(P) \tbisim Q^\dag$ and $P' \tbisim Q_2$. By the semantics, $Q^\dag = \theta_X(Q_1)$ and $Q \pathtau Q_1 \step{a} Q_2$. Moreover, by definition of $\R$, $\R(P,X,Q_1)$ and $\R(P',Q_2)$.
            \item If $\deadend{P}{(X\cup Y)}$ and $P \step{\rt} P'$ then, by the semantics, $\deadend{\theta_X(P)}{Y}$ and $\theta_X(P) \step{\rt} P'$. Therefore, $\theta_X(Q) \pathtau Q^\dag_1 \step{\rt} Q_2$ with $\theta_Y(P') \tbisim \theta_Y(Q_2)$. By the semantics, $Q^\dag_1 = \theta_X(Q_1)$ and we have $Q \pathtau Q_1 \step{\rt} Q_2$. Moreover, by definition of $\R$, and $\R(P',Y,Q_2)$.
            \item If $P \nsteptau$ then, by the semantics, $\theta_X(P) \nsteptau$. Therefore, there exists a path $\theta_X(Q) \pathtau Q^\dag \nsteptau$ such that $\theta_X(P) \tbisim Q^\dag$. By the semantics, $Q^\dag = \theta_X(Q_0)$ and $Q \pathtau Q_0 \nsteptau$. Moreover, by definition of $\R$, $\R(P,X,Q_0)$.
        \end{enumerate}
    \end{enumerate}
    This ends the proof of Proposition~\ref{prop:time-out bisim}.1, and thereby its corollary~\ref{prop:time-out bisim}.2.\\
    Let $\R$ be a generalised rooted \tb reactive bisimulation, let's define
    \begin{align*}
        \tbisim := \{(P,Q) \mid \R(P,Q)\} \cup \{(\theta_X(P),\theta_X(Q)) \mid \R(P,X,Q)\}
    \end{align*}
    We are going to show that $\tbisim$ is a rooted \tb time-out bisimulation. Let $P,Q \in \closed$ such that $P \tbisim Q$, by definition of $\tbisim$, $\R(P,Q)$ or $P = \theta_X(P^\dag)$, $Q = \theta_X(Q^\dag)$ and $\R(P,X,Q)$.
    \begin{enumerate}
        \item If $P \step{\alpha} P'$ with $\alpha \in A_\tau$ then
        \begin{itemize}
            \item if $\R(P,Q)$ then there exists a transition $Q \step{\alpha} Q'$ such that $P'  \bisimtbrc Q'$.
            \item if $P = \theta_X(P^\dag)$, $Q = \theta_X(Q^\dag)$ and $\R(P^\dag,X,Q^\dag)$ then 
            \begin{itemize}
                \item if $\alpha = \tau$ then, by the semantics, $P' = \theta_X(P^\ddag)$ and $P^\dag \steptau P^\ddag$. Since $\R(P^\dag,X,Q^\dag)$, there exists a transition $Q^\dag \steptau Q^\ddag$ such that $P^\ddag  \bisimtbrc[X] Q^\ddag$. By the semantics, there exists a transition $Q \steptau \theta_X(Q^\ddag)$. Moreover, by Proposition~\ref{prop:time-out bisim}.\ref{corr}, $P'  \bisimtbrc \theta_X(Q^\ddag)$.
                \item if $\alpha = a \in A$ then, by the semantics, $P^\dag \step{a} P'$ and $a \in X \vee \deadend{P^\dag}{X}$. Since $\R(P^\dag,X,Q^\dag)$, there exists a transition $Q^\dag \step{a} Q'$ such that $P'  \bisimtbrc Q'$. Moreover, $\deadend{P^\dag}{X} \Leftrightarrow \deadend{Q^\dag}{X}$. By the semantics, there exists a transition $Q \step{a} Q'$ such that $P'  \bisimtbrc Q'$.
            \end{itemize}
        \end{itemize}
        \item If $\deadend{P}{X}$ and $P \step{\rt} P'$ then
        \begin{itemize}
            \item if $\R(P,Q)$ then there exists a transition $Q \step{\rt} Q'$ such that $P' \bisimtbrc[X] Q'$. Thus, $\theta_X(P')  \bisimtbrc \theta_X(Q')$ by Proposition~\ref{prop:time-out bisim}.\ref{corr}.
            \item if $P = \theta_Y(P^\dag)$, $Q = \theta_Y(Q^\dag)$ and $\R(P^\dag,Y,Q^\dag)$ then, by the semantics, $P^\dag \step{\rt} P'$ and $\deadend{P^\dag}{Y}$. Since $\R(P^\dag,Y,Q^\dag)$, $\deadend{P}{(X\cup Y)}$ and $P^\dag \step{\rt} P'$, there exists a transition $Q^\dag \step{\rt} Q'$ such that $P' \bisimtbrc[X] Q'$. Thus, $\theta_X(P')  \bisimtbrc \theta_X(Q')$. Moreover, $\deadend{Q^\dag}{X}$. By the semantics, there exists a transition $Q \step{\rt} Q'$ such that $\theta_X(P')  \bisimtbrc \theta_X(Q')$.
        \end{itemize}
    \end{enumerate}
    Let $\tbisim$ be a rooted \tb time-out bisimulation, let's define
    \begin{align*}
        B := \{(P,Q) \mid P \tbisim Q\} \cup \{(P,X,Q) \mid \theta_X(P) \tbisim \theta_X(Q)\}
    \end{align*}
    We are going to show that $\R$ is a generalised rooted \tb reactive bisimulation. Let $P,Q \in \closed$ and $X \subseteq A$.
    \begin{enumerate}
        \item If $\R(P,Q)$ then $P \tbisim Q$.
        \begin{enumerate}
            \item If $P \step{\alpha} P'$ with $\alpha \in A_\tau$ then there exists a transition $Q \step{\alpha} Q'$ such that $P'  \bisimtbrc Q'$.
            \item If $\deadend{P}{X}$ and $P \step{\rt} P'$ then there exists a transition $Q \step{\rt} Q'$ such that $\theta_X(P')  \bisimtbrc \theta_X(Q')$. Thus, $P' \bisimtbrc[X] Q'$, by Proposition~\ref{prop:time-out bisim}.\ref{corr}.
        \end{enumerate}
        \item If $\R(P,X,Q)$ then $\theta_X(P) \tbisim \theta_X(Q)$.
        \begin{enumerate}
            \item If $P \steptau P'$ then, by the semantics, $\theta_X(P) \steptau \theta_X(P')$. Since $\theta_X(P) \tbisim \theta_X(Q)$, there exists a transition $\theta_X(Q) \steptau Q^\ddag$ such that $\theta_X(P')  \bisimtbrc Q^\ddag$. By the semantics, $Q^\ddag = \theta_X(Q')$ and there exists a transition $Q \steptau Q'$. By Proposition~\ref{prop:time-out bisim}.\ref{corr}, $P' \bisimtbrc[X] Q'$.
            \item If $P \step{a} P'$ with $a \in X \vee \deadend{P}{X}$ then, by the semantics, $\theta_X(P) \step{a} P'$. Since $\theta_X(P) \tbisim \theta_X(Q)$, there exists a transition $\theta_X(Q) \step{a} Q'$ such that $P'  \bisimtbrc Q'$. By the semantics, there exists a transition $Q \step{a} Q'$ such that $P'  \bisimtbrc Q'$.
            \item If $\deadend{P}{(X\cup Y)}$ and $P \step{\rt} P'$ then, by the semantics, $\theta_X(P) \step{\rt} P'$ and $\deadend{P}{Y}$. Since $\theta_X(P) \tbisim \theta_X(Q)$, there exists a transition $\theta_X(Q) \step{\rt} Q'$ such that $\theta_Y(P')  \bisimtbrc \theta_Y(Q')$. By the semantics, there exists a transition $Q \step{\rt} Q'$. By Proposition~\ref{prop:time-out bisim}.\ref{corr}, $P' \bisimtbrc[Y] Q'$.
        \popQED
        \end{enumerate}
    \end{enumerate}
\end{proof}

\section{Congruence Proofs for \texorpdfstring{$\bisimtbrc$ and $\bisimb$}{Branching Reactive Bisimilarity}}  \label{app:stability}

To prove congruence properties, the notion of bisimulation \emph{up to}, introduced by Milner in \cite{Mi90ccs}, is going to be helpful. Let $\bisim\,$ denote the classical notion of strong bisimilarity~\cite{Mi90ccs}:\linebreak[3] A \emph{(strong) bisimulation} is a symmetric relation ${\R} \subseteq \closed\times\closed$ such that, for all $P,Q \in \closed$ with $P \mathrel\R Q$, if $P \step{\alpha} P'$ with $\alpha \in Act$ then there is a transition $Q \step{\alpha} Q'$ such that $P' \mathrel\R Q'$; write $P \bisim Q$ if $P \mathrel\R Q$ for some strong bisimulation $\R$.
\begin{definition}\rm \label{def:up to}
    A \emph{\tb time-out bisimulation up to $\bisim$\,} is a symmetric relation ${\tbisim} \subseteq \closed\times\closed$ such that, for all $P,Q \in \closed$ with $P \tbisim Q$,
    \begin{enumerate}
        \item if $P \step{\alpha} P'$ with $\alpha \in A_\tau$ then there exists a path $Q \pathtau Q_1 \step{\opt{\alpha}} Q_2$ such that $P \upto[\bisim] Q_1$ and $P' \upto[\bisim] Q_2$
        \item if $\deadend{P}{X}$ and $P \step{\rt} P'$ then there exists a path $Q \pathtau Q_1 \step{\rt} Q_2$ with $\theta_X(P') \upto[\bisim] \theta_X(Q_2)$
        \item if $P \nsteptau$ then there exists a path $Q \pathtau Q_0 \nsteptau$,
    \end{enumerate}
    where $\upto[\bisim]$ stands for the relational composition $\bisim \circ \tbisim \circ \bisim$\,.
\end{definition}

\begin{proposition} \label{prop:up to}
    Let $P,Q \in \closed$.  Then $P \bisimtbrc Q$ iff there exists a \tb time-out bisimulation $\mathcal{B}$ up to $\bisim$ such that $P \tbisim Q$.
\end{proposition}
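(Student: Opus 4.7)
The forward direction is immediate: since $\bisim$ is reflexive, every \tb time-out bisimulation $\tbisim$ satisfies ${\tbisim} \subseteq {\bisim} \circ {\tbisim} \circ {\bisim}$ and is therefore \emph{a fortiori} a \tb time-out bisimulation up to $\bisim$. For the converse, given a \tb time-out bisimulation $\tbisim$ up to $\bisim$, my plan is to define $\tbisim' := {\bisim} \circ {\tbisim} \circ {\bisim}$ and to show that $\tbisim'$ is a genuine \tb time-out bisimulation; then $P \tbisim Q$ implies $P \tbisim' Q$ by reflexivity of $\bisim$, and Proposition~\ref{prop:time-out bisim}.1 delivers $P \bisimtbrc Q$.

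The verification leans on four standard properties of strong bisimilarity, all established in \cite{Mi90ccs,strongreactivebisimilarity}: (i) $\bisim$ is transitive, so any consecutive $\bisim$-steps on either side of $\tbisim'$ collapse; (ii) $\bisim$ is a congruence for the environment operator $\theta_X$ for every $X \subseteq A$; (iii) $\bisim$-related processes have the same set of initial actions, so if $P \bisim Q$ then $\deadend{P}{X} \Leftrightarrow \deadend{Q}{X}$ and $P \nsteptau \Leftrightarrow Q \nsteptau$; and (iv) by a straightforward induction on path length, if $P \bisim P^\star$ then $P^\star$ can match any path $P \pathtau P_1 \step{\opt{\alpha}} P_2$ (or $P \pathtau P_1 \step{\rt} P_2$) by a path of the same shape whose intermediate and final states are $\bisim$-related to those of the original.

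To check the clauses of Definition~\ref{def:time-out bisim} for $\tbisim'$, suppose $P \tbisim' Q$ via $P \bisim R \tbisim S \bisim Q$. For Clause~1, from $P \step{\alpha} P'$ I obtain $R \step{\alpha} R'$ with $P' \bisim R'$; the up-to clause for $\tbisim$ then yields a path $S \pathtau S_1 \step{\opt{\alpha}} S_2$ with $R \upto[\bisim] S_1$ and $R' \upto[\bisim] S_2$; property (iv) lifts this to $Q \pathtau Q_1 \step{\opt{\alpha}} Q_2$ with $S_1 \bisim Q_1$ and $S_2 \bisim Q_2$; and (i) collapses the resulting $\bisim$-chains to give $P \tbisim' Q_1$ and $P' \tbisim' Q_2$. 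Clause~3 is immediate from (iii) combined with path matching for $\bisim$. The main subtlety is Clause~2: from $\deadend{P}{X}$ and $P \step{\rt} P'$ the same machinery produces first $R \step{\rt} R'$ with $P' \bisim R'$, then a path $S \pathtau S_1 \step{\rt} S_2$ with $\theta_X(R') \upto[\bisim] \theta_X(S_2)$, and finally a path $Q \pathtau Q_1 \step{\rt} Q_2$ with $S_2 \bisim Q_2$; to conclude $\theta_X(P') \tbisim' \theta_X(Q_2)$, property (ii) is indispensable, as it promotes $P' \bisim R'$ and $S_2 \bisim Q_2$ into $\theta_X(P') \bisim \theta_X(R')$ and $\theta_X(S_2) \bisim \theta_X(Q_2)$, which then compose with the middle $\upto[\bisim]$-chain to deliver the desired $\tbisim'$-relation. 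I expect this $\theta_X$-congruence step to be the key technical point, though it is a fact already used in the proof of Proposition~\ref{prop:stability}.
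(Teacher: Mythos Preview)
Your proposal is correct and follows essentially the same approach as the paper: both directions are handled identically, and for the converse you show that $\bisim \circ {\tbisim} \circ \bisim$ is itself a \tb time-out bisimulation, using transitivity of $\bisim$, preservation of initial actions, path-lifting along $\bisim$, and---for Clause~2---the congruence of $\bisim$ with respect to $\theta_X$. The paper's proof is structured the same way and singles out the same $\theta_X$-congruence step as the crucial ingredient.
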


\begin{proof}
    First of all, a \tb time-out bisimulation is a \tb time-out bisimulation up to $\bisim$ by reflexivity of $\bisim$\,. Conversely, let $\tbisim$ be a \tb bisimulation up to $\bisim$. We are going to show that $\upto[\bisim]$ is a \tb time-out bisimulation. By the reflexivity of $\bisimtbrc$ this will suffice. Let $P,Q \in \closed$ such that $P \upto[\bisim] Q$. Then there exists $P^\dag, Q^\dag \in \closed$ such that $P \bisim P^\dag \tbisim Q^\dag \bisim Q$.
    \begin{enumerate}
        \item If $P \step{\alpha} P'$ with $\alpha \in A_\tau$ then, since $P \bisim P^\dag$, there exists a transition $P^\dag \step{\alpha} P^\ddag$ such that $P' \bisim P^\ddag$. Since $P^\dag \tbisim Q^\dag$, there exists a path $Q^\dag \pathtau Q^\star \step{\opt{\alpha}} Q^\ddag$ such that $P^\dag \upto[\bisim] Q^\star$ and $P^\ddag \upto[\bisim] Q^\ddag$. Since $Q^\dag \bisim Q$, there exists a path $Q \pathtau Q_1 \step{\opt{\alpha}} Q_2$ such that $Q^\star \bisim Q_1$ and $Q^\ddag \bisim Q_2$. Since $\bisim$ is transitive, $P \upto[\bisim] Q_1$ and $P' \upto[\bisim] Q_2$.
        \item If $\deadend{P}{X}$ and $P \step{\rt} P'$ then, since $P \bisim P^\dag$, $\deadend{P^\dag}{X}$ and there exists a transition $P^\dag \step{\rt} P^\ddag$ such that $P' \bisim P^\ddag$. Since $P^\dag \tbisim Q^\dag$, there exists a path $Q^\dag \pathtau Q^\dag_1 \step{\rt} Q^\dag_2$ with $\theta_X(P^\ddag) \upto[\bisim] \theta_X(Q^\dag_2)$. Since $Q^\dag \bisim Q$, there exists a path $Q \pathtau Q_1 \step{\rt} Q_2$ such that $Q^\dag_2 \bisim Q_2$. Since $\bisim$ is transitive and a congruence for $\theta_X$ \cite{strongreactivebisimilarity}, $\theta_X(P') \upto[\bisim] \theta_X(Q_2)$.
        \item If $P \nsteptau$ then, since $P \bisim P^\dag$, $P^\dag \nsteptau$. Since $P^\dag \tbisim Q^\dag$, there exists a path $Q^\dag \pathtau Q^\star \nsteptau$. Since $Q^\dag \bisim Q$, there exists a path $Q \pathtau Q_0 \nsteptau$ such that $Q^\star \bisim Q_0$.
    \popQED
    \end{enumerate}
\end{proof}

\noindent
The following lemma was proven in \cite[Appendix B]{strongreactivebisimilarity}. It will be useful in the proof of Proposition \ref{prop:stability}.

\begin{lemma}\label{lem:strong identities}
    Let $P,Q \in \closed$, $X,S,I \subseteq A$, $\rename \subseteq A\times A$.
    \begin{itemize}
        \item If $P \nsteptau$ and $\init{P}\cap X \subseteq S$ then $\theta_X(P \parallel_S Q) \bisim \theta_X(P \parallel_S \theta_{X \setminus (S \setminus \init{P})}(Q))$.
        \item $\theta_X(\tau_I(P)) \bisim \theta_X(\tau_I(\theta_{X \cup I}(P)))$.
        \item $\theta_X(\rename(P)) \bisim \theta_X(\rename(\theta_{\rename^{-1}(X)}(P)))$.
    \end{itemize}
\end{lemma}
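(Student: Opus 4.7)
My plan is to prove each of the three strong-bisimilarity identities separately, in each case by exhibiting a symmetric binary relation on $\closed$ that contains the two sides and verifying that it is a strong bisimulation by case analysis on the applicable operational rules.

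For identity~(2), $\theta_X(\tau_I(P)) \bisim \theta_X(\tau_I(\theta_{X\cup I}(P)))$, I would take the symmetric closure of
\[
\R_2 := \{(\theta_Y(\tau_J(R)),\theta_Y(\tau_J(\theta_{Y\cup J}(R)))) \mid R\in\closed,\ Y,J\subseteq A\}\ \cup\ \{(R,R)\mid R\in\closed\}.
\]
The verification splits on the three semantic rules for $\theta_L^U$. The $\tau$-rule preserves both $\theta$-wrappers, so the target pair remains in the first summand of $\R_2$. The visible-action rule and the idling rule both strip the outer $\theta_Y$-wrapper, and in each case both sides step to the same term $\tau_J(R')$, lying in the identity component of $\R_2$. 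The pivotal combinatorial observation is that $\deadend{\tau_J(R)}{Y}$ is equivalent to $\deadend{R}{Y\cup J}$; this is precisely the stability condition under which the inner $\theta_{Y\cup J}$ is transparent, so the idling rule fires simultaneously on both sides whenever it fires at all. Identity~(3) follows by an entirely analogous plan: the same three-rule case analysis for $\theta_L^U$, with the renaming operator $\rename$ playing the role of $\tau_J$, and with the identity $\deadend{\rename(R)}{Y}\Leftrightarrow\deadend{R}{\rename^{-1}(Y)}$ (where $\rename^{-1}(Y):=\{a\mid \exists b\in Y,\ \rename(a,b)\}$) playing the role of the corresponding identity for $\tau_J$.

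Identity~(1) is more involved. I would take the symmetric closure of
\[
\R_1 := \{(\theta_Y(P'\parallel_S Q'),\theta_Y(P'\parallel_S \theta_{Y'}(Q'))) \mid P'\nsteptau,\ \init{P'}\cap Y \subseteq S,\ Y' = Y\setminus(S\setminus\init{P'})\} \cup \{(R,R)\}.
\]
For a pair in the first summand, a $\tau$-transition must originate in $Q'$ (since $P'$ is stable) and is matched by the corresponding $\tau$-move of $\theta_{Y'}(Q')$; the hypotheses on $P'$ survive, so the target again lies in $\R_1$. A $Y$-enabled visible action is either unsynchronised from $Q'$ (requiring $a\in Y\setminus S\subseteq Y'$) or synchronised with $P'$ (requiring $a\in\init{P'}\cap S$, hence $a\notin S\setminus\init{P'}$, so again $a\in Y'$); in either case the outer $\theta_Y$ vanishes on both sides and the targets coincide as $P''\parallel_S Q''$. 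The remaining case, an idling-triggered transition, reduces to checking that $\deadend{P'\parallel_S Q'}{Y}$ is equivalent to $\deadend{P'\parallel_S \theta_{Y'}(Q')}{Y}$, which follows from $\init{P'}\cap Y\subseteq S$ and the usual decomposition $\init{P'\parallel_S Q'} = (\init{P'}\setminus S)\cup(\init{Q'}\setminus S)\cup(\init{P'}\cap\init{Q'}\cap S)$, together with the fact that the extra $\theta_{Y'}$ on the RHS only removes initial actions of $Q'$ outside of $Y'$, none of which contribute to a $Y\cup\{\tau\}$-initial action of the parallel composition.

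The main obstacle is the set-theoretic bookkeeping in identity~(1): the choice $Y'=Y\setminus(S\setminus\init{P'})$ is tuned precisely so that $\theta_{Y'}(Q')$ allows exactly those actions of $Q'$ that can actually fire through $\theta_Y(P'\parallel_S\_)$, whether unsynchronised, synchronised, or under the idling precondition, and each of these three modes must be separately checked. Once the three $\deadend$-equivalences above are in hand, the two other identities reduce to routine operational reasoning on the $\theta_L^U$ rules.
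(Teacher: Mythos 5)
A preliminary remark on comparison: the paper itself contains no proof of this lemma---it is imported verbatim from the strong reactive bisimilarity paper (cited as proven in its Appendix~B)---so your attempt can only be judged on its own merits. Your identities~(2) and~(3) are handled correctly: the deadlock equivalences $\deadend{\tau_J(R)}{Y}\Leftrightarrow\deadend{R}{Y\cup J}$ and $\deadend{\rename(R)}{Y}\Leftrightarrow\deadend{R}{\rename^{-1}(Y)}$ hold, and the relations $\R_2$, $\R_3$ are indeed transition-closed, essentially because the inner wrapper sits directly on the process that moves, so it is stripped at every step of that process and the two targets either coincide or retain both wrappers. (One small imprecision: a $\tau$-step of $\tau_J(R)$ coming from $R\step{a}R'$ with $a\in J$ strips the inner wrapper on the right-hand side, so the target pair lands in the identity component, not in the first summand as you claim; your relation covers it anyway.)

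The genuine gap is in identity~(1), in exactly the case you dismiss as ``remaining'': idling-triggered transitions. Establishing $\deadend{P'\parallel_S Q'}{Y}\Leftrightarrow\deadend{P'\parallel_S\theta_{Y'}(Q')}{Y}$ is necessary but not sufficient, because the idling rule strips the \emph{outer} $\theta_Y$ from both sides, and you must still show the resulting targets are related by your relation. For a transition in which $P'$ moves alone---labelled $\rt$ or by a visible action outside $Y\cup S$, both perfectly possible under $\deadend{P'\parallel_S Q'}{Y}$ (take $P'=\rt.0$)---the left target is $P''\parallel_S Q'$ whereas the right target is $P''\parallel_S\theta_{Y'}(Q')$: the inner wrapper survives because $Q'$ did not move. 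This pair is neither an identity pair nor of the $\theta_Y$-wrapped shape admitted by $\R_1$, so $\R_1$ is not a strong bisimulation as defined. The repair is not hard but must be made explicit: $\deadend{P'\parallel_S Q'}{Y}$ implies $\deadend{Q'}{Y'}$ (your decomposition of $\init{P'\parallel_S Q'}$ gives $\init{Q'}\cap(Y\setminus S)=\emptyset$ and $\init{P'}\cap\init{Q'}\cap S\cap Y=\emptyset$, hence $\init{Q'}\cap Y'=\emptyset$, and $Q'\nsteptau$), and whenever $\deadend{Q}{Z}$ the process $\theta_Z(Q)$ has \emph{exactly} the transitions of $Q$, since only the idling rule applies and it removes the wrapper. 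So you should enlarge $\R_1$ with the symmetric closure of $\{(R\parallel_S Q,\;R\parallel_S\theta_Z(Q))\mid \deadend{Q}{Z},\ R\in\closed\}$, which is easily checked to be transition-closed (moves of $Q$ collapse the two sides to equal terms; moves of $R$ stay inside the component), or alternatively run the whole argument as a strong bisimulation up to $\bisim$, using the auxiliary fact $\deadend{Q}{Z}\Rightarrow\theta_Z(Q)\bisim Q$. With that addition your case analysis for identity~(1) goes through.
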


\begin{proof}[Proof of Proposition \ref{prop:stability}]
    Let $\tbisim$ be the smallest relation satisfying, for all $P,Q \in \closed$,
    \begin{itemize}
        \item if $P \bisimtbrc Q$ then $P \tbisim Q$
        \item if $P \tbisim Q$ and $\alpha \in Act$ then $\alpha.P \tbisim \alpha.Q$
        \item if $P_1 \tbisim Q_1$, $P_2 \tbisim Q_2$ and $S \subseteq A$ then $P_1 \parallel_S P_2 \tbisim Q_1 \parallel_S Q_2$
        \item if $P \tbisim Q$ and $I \subseteq A$ then $\tau_I(P) \tbisim \tau_I(Q)$
        \item if $P \tbisim Q$ and $\rename \subseteq A\times A$ then $\rename(P) \tbisim \rename(Q)$
        \item if $P \tbisim Q$ and $L \subseteq U \subseteq A$ then $\theta_L^U(P) \tbisim \theta_L^U(Q)$.
    \end{itemize}
    We are going to show that $\tbisim$ is a \tb time-out bisimulation up to $\bisim$\,. This implies that ${\B}={\bisimtbrc}$\,, using Proposition~\ref{prop:up to}, and as $\B$ is a congruence for the operators of  Proposition \ref{prop:stability}, so is $\bisimtbrc$\,. Before we do so, we show, by induction on the construction of $\B$, that
    \begin{equation}\label{stability}
    \mbox{if $P \B Q$ and $P\nsteptau$ then $Q\pathtau Q'$ for some $Q'$ with $P \B Q'$ and $\init{Q'}=\init{P}$.}
    \end{equation}
    Let $P \B Q$ and $P\nsteptau$.
    \begin{itemize}
      \item If $P \bisimtbrc Q$ then, by Clause 3 of Definition~\ref{def:time-out bisim}, $Q\pathtau Q'$ for some $Q'$ with $Q'\nsteptau$. By (the symmetric counterpart of) Clause 1, one obtains $P \B Q'$. Clause 1 gives $\init{Q'}=\init{P}$.
      \item If $P = \alpha.P^\dag$ and $Q = \alpha.Q^\dag$ with $\alpha \in Act$ then note that $\alpha \ne \tau$ and take $Q':=Q$. One has $\init{Q'}=\init{P}$.
      \item If $P = P_1 \parallel_S P_1$ and $Q = Q_1 \parallel_S Q_2$ with $S \subseteq A$ and $P_i \tbisim Q_i$ for $i=1,2$, then, for $i=1,2$, $P_i\nsteptau$, so by induction $Q_i\pathtau Q'_i$ for some $Q'_i$ with $P_i \B Q'_i$ and $\init{Q'_i}=\init{P_i}$.
      Now $Q \pathtau Q'_1\|_S Q'_2$, $P \B  Q'_1\|_S Q'_2$ and $\init{Q'_1\|_S Q'_2}=\init{P}$.
      \item If $P = \tau_I(P_1)$ and $Q = \tau_I(Q_1)$ with $I \subseteq A$ and $P_1 \tbisim Q_1$, then $P_1\nsteptau$, so by induction $Q_1\pathtau Q'_1$ for some $Q'_1$ with $P_1 \B Q'_1$ and $\init{Q'_1}=\init{P_1}$. Now $Q \pathtau \tau_I(Q'_1)$, $P \B \tau_I(Q'_1)$ and $\init{\tau_I(Q'_1)}=\init{P}$.
      \item If $P = \rename(P_1)$ and $Q = \rename(Q_1)$ with $\rename \subseteq A\times A$ and $P_1 \tbisim Q_1$, then $P_1\nsteptau$, so by induction $Q_1\pathtau Q'_1$ for some $Q'_1$ with $P_1 \B Q'_1$ and $\init{Q'_1}=\init{P_1}$. Now $Q \pathtau \rename(Q'_1)$, $P \B \rename(Q'_1)$ and $\init{\rename(Q'_1)}=\init{P}$.
      \item  If $P = \theta_L^U(P_1)$ and $Q = \theta_L^U(Q_1)$ with  $L \subseteq U \subseteq A$ and $P_1 \tbisim Q_1$, then $P_1\nsteptau$, so by induction $Q_1\pathtau Q'_1$ for some $Q'_1$ with $P_1 \B Q'_1$ and $\init{Q'_1}=\init{P_1}$. Now $Q \pathtau \theta_L^U(Q'_1)$, $P \B \theta_L^U(Q'_1)$ and $\init{\theta_L^U(Q'_1)}=\init{P}$.
    \end{itemize}
    We now check that $\tbisim$ is a \tb time-out bisimulation up to $\bisim$\,. Note that $\tbisim$ is symmetric because $\bisimtbrc$ is. $\bisim$ was proven to be a congruence for $\ccsp$ in \cite{strongreactivebisimilarity}. Let $P,Q \in \closed$ such that $P \tbisim Q$.
    \begin{enumerate}
        \item If $P \step{\alpha} P'$ with $\alpha \in A_\tau$ then we have to find a path $Q \pathtau Q_1 \step{\opt{\alpha}} Q_2$ such that $P \upto[\bisim] Q_1$ and $P' \upto[\bisim] Q_2$. Remember that ${\tbisim} \subseteq {\upto[\bisim]}$. We are going to proceed by structural induction on $P$ and by case distinction on the derivation of $P \tbisim Q$.
        \begin{itemize}
            \item If $P \bisimtbrc Q$ then, by definition of $\bisimtbrc$\,, there exists a path $Q \pathtau Q_1 \step{\opt{\alpha}} Q_2$ such that $P \bisimtbrc Q_1$ and $P' \bisimtbrc Q_2$, thus, by definition of $\tbisim$, $P \tbisim Q_1$ and $P' \tbisim Q_2$.
            \item If $P = \beta.P^\dag$ and $Q = \beta.Q^\dag$ with $\beta \in Act$ and $P^\dag \tbisim Q^\dag$ then, by the semantics, $P' = P^\dag$, $\beta = \alpha$, and thus there exists a path $Q \step{\alpha} Q^\dag$ such that $P \tbisim Q$ and $P' \tbisim Q^\dag$.
            \item If $P = P^\dag \parallel_S P^\ddag$ and $Q = Q^\dag \parallel_S Q^\ddag$ with $S \subseteq A$, $P^\dag \tbisim Q^\dag$ and $P^\ddag \tbisim Q^\ddag$ then
            \begin{itemize}
                \item if $\alpha \in S$ then, by the semantics, $P' = P'^\dag \parallel_S P'^\ddag$, $P^\dag \step{\alpha} P'^\dag$ and $P^\ddag \step{\alpha} P'^\ddag$. Note that $\alpha \ne \tau$ because $\alpha \in A$. Since $P^\dag \tbisim Q^\dag$ and $P^\ddag \tbisim Q^\ddag$, by induction, there exist two paths $Q^\dag \pathtau Q_1^\dag \step{\alpha} Q_2^\dag$ and $Q^\ddag \pathtau Q_1^\ddag \step{\alpha} Q_2^\ddag$ such that $P^\dag \upto[\bisim] Q^\dag_1$,\linebreak[4] $P'^\dag \upto[\bisim] Q^\dag_2$, $P^\ddag \upto[\bisim] Q_1^\ddag$ and $P'^\ddag \upto[\bisim] Q^\ddag_2$. By the semantics, $Q \pathtau Q^\dag_1 \parallel_S Q^\ddag_1\linebreak[3] \step{\alpha} Q^\dag_2 \parallel_S Q^\ddag_2$. Moreover, by definition of $\tbisim$ and the congruence property of $\bisim$, $P \upto[\bisim] Q^\dag_1 \parallel_S Q^\ddag_1$ and $P' \upto[\bisim] Q^\dag_2 \parallel_S Q^\ddag_2$.
                \item if $\alpha \not\in S$ then, by the semantics, two cases are possible. Suppose that $P' = P'^\dag \parallel_S P^\ddag$ and $P^\dag \step{\alpha} P'^\dag$; the other case is symmetrical. Since $P^\dag \tbisim Q^\dag$, by induction, there exists a path $Q^\dag \pathtau Q^\dag_1 \step{\opt{\alpha}} Q^\dag_2$ such that $P^\dag \upto[\bisim] Q^\dag_1$ and $P'^\dag \upto[\bisim] Q^\dag_2$. By the semantics, there exists a path $Q \pathtau Q^\dag_1 \parallel_S Q^\ddag \step{\opt{\alpha}} Q^\dag_2 \parallel_S Q^\ddag$. Moreover, by definition of $\tbisim$ and the congruence property of $\bisim$, $P \upto[\bisim] Q^\dag_1 \parallel_S Q^\ddag$ and $P' \upto[\bisim] Q^\dag_2 \parallel_S Q^\ddag$.
            \end{itemize}
            \item If $P = \tau_I(P^\dag)$ and $Q = \tau_I(Q^\dag)$ with $I \subseteq A$ and $P^\dag \tbisim Q^\dag$ then, by the semantics, $P' = \tau_I(P'^\dag)$, $P^\dag \step{\beta} P'^\dag$ and $(\beta \in I \wedge \alpha = \tau) \vee \beta = \alpha$. Since $P^\dag \tbisim Q^\dag$, by induction, there exists a path $Q^\dag \pathtau Q^\dag_1 \step{\opt{\beta}} Q^\dag_2$ such that $P^\dag \upto[\bisim] Q^\dag_1$ and $P'^\dag \upto[\bisim] Q^\dag_2$. By the semantics, $Q \pathtau \tau_I(Q_1^\dag) \step{\opt{\alpha}} \tau_I(Q^\dag_2)$ such that, by definition of $\tbisim$ and the congruence property of $\bisim$, $P \upto[\bisim] \tau_I(Q_1^\dag)$ and $P' \upto[\bisim] \tau_I(Q_2^\dag)$.
            \item If $P = \rename(P^\dag)$ and $Q = \rename(Q^\dag)$ with $\rename \subseteq A\times A$ and $P^\dag \tbisim Q^\dag$ then, by the semantics, $P' = \rename(P'^\dag)$, $P^\dag \step{\beta} P'^\dag$ and $(\beta,\alpha) \in \rename \vee \alpha = \beta = \tau$. Since $P^\dag \tbisim Q^\dag$, by induction, there exists a path $Q^\dag \pathtau Q^\dag_1 \step{\opt{\beta}} Q^\dag_2$ such that $P^\dag \upto[\bisim] Q^\dag_1$ and $P'^\dag \upto[\bisim] Q^\dag_2$. By the semantics, $Q \pathtau \rename(Q_1^\dag) \step{\opt{\alpha}} \rename(Q^\dag_2)$ such that, by definition of $\tbisim$ and the congruence property of $\bisim$, $P \upto[\bisim] \rename(Q_1^\dag)$ and $P' \upto[\bisim] \rename(Q_2^\dag)$.
            \item If $P = \theta_L^U(P^\dag)$ and $Q = \theta_L^U(Q^\dag)$ with $L \subseteq U \subseteq A$ and $P^\dag \tbisim Q^\dag$ then
            \begin{itemize}
                \item if $\alpha = \tau$ then, by the semantics, $P' = \theta_X(P'^\dag)$ and $P^\dag \steptau P'^\dag$. Since $P^\dag \tbisim Q^\dag$, by induction, there exists a path $Q^\dag \pathtau Q^\dag_1 \step{\opt{\tau}} Q^\dag_2$ such that $P^\dag \upto[\bisim] Q^\dag_1$ and $P'^\dag \upto[\bisim] Q^\dag_2$. By the semantics, there exists a path $Q \pathtau \theta_L^U(Q^\dag_1) \step{\opt{\tau}} \theta_L^U(Q^\dag_2)$ such that, by definition of $\tbisim$ and the congruence property of $\bisim$, $P \upto[\bisim] \theta_L^U(Q^\dag_1)$ and $P' \upto[\bisim] \theta_L^U(Q^\dag_2)$.
                \item if $\alpha = a \in A$ then, by the semantics, $a \in U \vee \deadend{P^\dag}{L}$ and $P^\dag \step{a} P'$. Since $P^\dag \tbisim Q^\dag$, by induction there exists a path $Q^\dag \pathtau Q_1^\dag \step{a} Q_2^\dag$ such that $P^\dag \upto[\bisim] Q^\dag_1$ and $P' \upto[\bisim] Q^\dag_2$. Moreover, in case $a \notin U$ we have $P^\dag\nsteptau$ so (\ref{stability}) ensures that $Q^\dag\pathtau Q'$ for some $Q'$ with $P^\dag \B Q'$ and $\init{Q'}=\init{P}$. This implies that we may choose $Q_1^\dag$ such that $Q' \pathtau Q_1^\dag$, and thus $Q'=Q_1^\dag$. This gives us $\deadend{P^\dag}{L} \Leftrightarrow \deadend{Q^\dag_1}{L}$. By the semantics, there exists a path $Q \pathtau \theta_L^U(Q^\dag_1) \step{a} Q^\dag_2$ such that, by definition of $\tbisim$ and the congruence property of $\bisim$, $P \upto[\bisim] \theta_L^U(Q^\dag_1)$ and $P' \upto[\bisim] Q^\dag_2$.
            \end{itemize}
        \end{itemize}
        \item If $\deadend{P}{X}$ and $P \step{\rt} P'$ then we have to find a path $Q \pathtau
        Q_1 \step{\rt} Q_2$
        with
        $\theta_X(P') \upto[\bisim] \theta_X(Q_2)$. Remember that ${\tbisim} \subseteq
        {\upto[\bisim]}$. We are going to proceed by structural induction on $P$ and by case distinction on the derivation of $P \tbisim Q$.
        \begin{itemize}
            \item If $P \bisimtbrc Q$ then, by Definition~\ref{def:time-out bisim}, there exists a path $Q \pathtau Q_1 \step{\rt} Q_2$ with $\theta_X(P') \tbisim \theta_X(Q_2)$.
            \item If $P = \beta.P^\dag$ and $Q = \beta.Q^\dag$ with $\beta \in Act$ and
            $P^\dag \tbisim Q^\dag$ then, by the semantics, $P' = P^\dag$ and $\beta = \rt$. Thus, by the semantics, there exists a path $Q \step{\rt} Q^\dag$ such that, by definition of $\tbisim$, $\theta_X(P') \tbisim \theta_X(Q^\dag)$.
            \item If $P = P^\dag \parallel_S P^\ddag$ and $Q = Q^\dag \parallel_S Q^\ddag$ with $S \subseteq A$, $P^\dag \tbisim Q^\dag$ and $Q^\dag \tbisim Q^\ddag$ then, since $\rt \not\in S$, by the semantics, two cases are possible. Suppose that $P' = P^\dag \parallel_S P'^\ddag$ and $P^\ddag \step{\rt} P'^\ddag$; the other case is symmetrical. Since $\deadend{P}{X}$, $P^\dag \nsteptau$ and $\init{P^\dag} \cap X \subseteq S$. Moreover, $\deadend{P^\ddag}{(X\setminus S) \cup (X \cap S \cap \init{P^\dag})}$. Note that $(X\setminus S) \cup (X \cap S \cap \init{P^\dag}) = X \setminus (S \setminus \init{P^\dag})$. Since $P^\ddag \tbisim Q^\ddag$, $P^\ddag \step{\rt} P'^\ddag$ and $\deadend{P^\ddag}{X \setminus (S\setminus\init{P^\dag})}$, by induction, there exists a path $Q^\ddag \pathtau Q^\ddag_1 \step{\rt} Q^\ddag_2$ with $\theta_{X \setminus (S\setminus\init{P^\dag})}(P'^\ddag) \upto[\bisim] \theta_{X \setminus (S\setminus\init{P^\dag})}(Q^\ddag_{2r})$. Moreover, by (\ref{stability}), since $P^\dag \nsteptau$, there exists a path $Q^\dag \pathtau Q^\dag_0 \nsteptau$ such that $P^\dag \tbisim Q^\dag_0$ and $\init{Q^\dag_0} = \init{P^\dag}$. By the semantics, there exists a path $Q \pathtau Q^\dag_0 \parallel_S Q^\ddag_1 \step{\rt} Q^\dag_0 \parallel_S Q^\ddag_2$. By Lemma~\ref{lem:strong identities}, the definition of $\tbisim$ and the congruence property of $\bisim$, $\theta_X(P') \bisim \mbox{}$ $$\theta_X(P^\dag \parallel_S \theta_{X \setminus (S\setminus\init{P^\dag})}(P'^\ddag)) \upto[\bisim] \theta_X(Q^\dag_0 \parallel_S \theta_{X \setminus (S\setminus\init{Q_0^\dag})}(Q^\ddag_2)) \bisim \theta_X(Q^\dag_0 \parallel_S Q^\ddag_2).$$
            \item If $P = \tau_I(P^\dag)$ and $Q = \tau_I(Q^\dag)$ with $I \subseteq A$ and $P^\dag \tbisim Q^\dag$ then, by the semantics, $P' = \tau_I(P'^\dag)$, $P^\dag \step{\rt} P'^\dag$ and $\deadend{P^\dag}{(X\cup I)}$. Since $P^\dag \tbisim Q^\dag$, there exists a path $Q^\dag \pathtau Q^\dag_1 \step{\rt} Q^\dag_2$ with $\theta_{X\cup I}(P'^\dag) \upto[\bisim] \theta_{X\cup I}(Q^\dag_2)$. Lemma~\ref{lem:strong identities}, the definition of $\tbisim$ and the congruence property of $\bisim$, $$\theta_X(P') \bisim \theta_X(\tau_I(\theta_{X\cup I}(P'^\dag))) \upto[\bisim] \theta_X(\tau_I(\theta_{X\cup I}(Q^\dag_2))) \bisim \theta_X(\tau_I(Q_2^\dag)).$$
            \item If $P = \rename(P^\dag)$ and $Q = \rename(Q^\dag)$ with $\rename \subseteq A\times A$ and $P^\dag \tbisim Q^\dag$ then, by the $\deadend{P^\dag}{\rename^{-1}(X)}$. Since $P^\dag \tbisim Q^\dag$, by induction, there exists a path $Q^\dag \pathtau Q^\dag_1 \step{\rt} Q^\dag_2$ with $\theta_{\rename^{-1}(X)}(P'^\dag) \upto[\bisim] \theta_{\rename^{-1}(X)}(Q^\dag_2)$. By the semantics, $Q \pathtau \rename(Q_1^\dag) \step{\rt} \rename(Q^\dag_2)$. By Lemma~\ref{lem:strong identities}, the definition of $\tbisim$ and the congruence property of $\bisim$, $$\theta_X(P') \bisim \theta_X(\rename(\theta_{\rename^{-1}(X)}(P'^\dag))) \upto[\bisim] \theta_X(\rename(\theta_{\rename^{-1}(X)}(Q^\dag_{2r}))) \bisim \theta_X(\rename(Q_{2r}^\dag)).$$
            \item If $P = \theta_L^U(P^\dag)$ and $Q = \theta_L^U(Q^\dag)$ with $L \subseteq U \subseteq A$ and $P^\dag \tbisim Q^\dag$ then, by the semantics, $\deadend{P^\dag}{L\cup X}$ and $P^\dag \step{\rt} P'$. Since $P^\dag \tbisim Q^\dag$ and $P^\ddag\nsteptau$, (\ref{stability}) ensures that $Q^\dag \pathtau Q^\dag_0$ for some $Q^\dag_0$ with $P^\dag \B Q^\dag_0$ and $\init{Q^\dag_0}=\init{P^\dag}$. Since $P^\dag \B Q^\dag_0$, by induction, there exists a path $Q^\dag_0 \pathtau Q_1^\dag \step{\rt} Q_2$ with $\theta_X(P') \upto[\bisim] \theta_X(Q_2)$. As $Q^\dag_0\nsteptau$ we have $Q^\dag_0=Q^\dag_1$. As $\init{Q^\dag_1}=\init{P^\dag}$ one has $\deadend{Q^\dag_1}{L}$. By the semantics, there exists a path $Q \pathtau \theta_L^U(Q^\dag_1) \step{\rt} Q_2$.
        \end{itemize}
        \item The last condition of Definition~\ref{def:up to} is implied by (\ref{stability}).
    \popQED
    \end{enumerate}
\end{proof}

\section{Full Congruence Proofs for \texorpdfstring{$\bisimrtbrc$ and $\bisimrb$}{Rooted Branching Reactive Bisimilarity}} \label{app:congruence}

\begin{definition}\rm \label{def:rooted time-out bisim up to}
    Here, a \emph{rooted \tb time-out bisimulation up to $\bisimtbrc$} is a symmetric relation ${\tbisim} \subseteq \closed\times\closed$ such that, for all $P,Q \in \closed$ with $P \tbisim Q$,
    \begin{enumerate}
        \item if $P \step{\alpha} P'$ with $\alpha \in A_\tau$ then there is a transition $Q \step{\alpha} Q'$ such that $P' \bisim\,\B\,\bisimtbrc Q'$
        \item if $\deadend{P}{X}$ and $P \step{\rt} P'$ then there is a transition $Q \step{\rt} Q'$ such that $\theta_X(P') \bisim\,\B\,\bisimtbrc \theta_X(Q')$.
    \end{enumerate}
\end{definition}

\begin{proposition}
    Let $P,Q \in \closed$. Then $P \bisimrtbrc Q$ iff there exists a rooted \tb time-out bisimulation $\B$ up to $\bisimtbrc$ such that $P \B Q$.
\end{proposition}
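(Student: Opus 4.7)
The ``only if'' direction is immediate: every rooted \tb time-out bisimulation is, by reflexivity of $\bisim$ and $\bisimtbrc$, already a rooted \tb time-out bisimulation up to $\bisimtbrc$, and Proposition~\ref{prop:time-out bisim}.3 then applies.

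For the ``if'' direction, let $\B$ be a rooted \tb time-out bisimulation up to $\bisimtbrc$ with $P \B Q$. My plan is to show that $\B$ is itself a rooted \tb time-out bisimulation, after which Proposition~\ref{prop:time-out bisim}.3 delivers $P \bisimrtbrc Q$. Since $\bisim \subseteq \bisimtbrc$ and $\bisimtbrc$ is transitive (Proposition~\ref{prop:equivalence}), the up-to conclusions $P' \bisim \B \bisimtbrc Q'$ and $\theta_X(P') \bisim \B \bisimtbrc \theta_X(Q')$ simplify to $P' \bisimtbrc Q'$ and $\theta_X(P') \bisimtbrc \theta_X(Q')$ as soon as we know $\B \subseteq \bisimtbrc$. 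The whole task therefore reduces to establishing this inclusion. For this, I would introduce the symmetric auxiliary relation $\tbisim := \bisimtbrc \cup (\bisimtbrc \,\B\, \bisimtbrc)$, which contains $\B$, and verify that $\tbisim$ is a (non-rooted) \tb time-out bisimulation in the sense of Definition~\ref{def:time-out bisim}; Proposition~\ref{prop:time-out bisim}.1 will then give $\tbisim \subseteq \bisimtbrc$, hence $\B \subseteq \bisimtbrc$.

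The main obstacle is the verification of Definition~\ref{def:time-out bisim} for pairs $(P,Q)$ of the form $P \bisimtbrc P^* \B Q^* \bisimtbrc Q$. A single transition $P \step{\alpha} P'$ yields, via the leftmost $\bisimtbrc$, a weak matching path $P^* \pathtau P^{*a} \step{\opt{\alpha}} P^{*b}$, whose leading $\tau$-segment must be pushed through $\B$ by iterating the rooted up-to condition; the intermediate pairs so produced remain of the form $\bisimtbrc \,\B\, \bisimtbrc$, so the same step applies, and the trailing $\opt{\alpha}$-step is matched analogously. Finally the resulting path in $Q^*$ is transferred to $Q$ along the rightmost $\bisimtbrc$. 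For the $\rt$-clause the crucial extra observation is that the rooted up-to condition matches each $\alpha \in A_\tau$ by a single transition, forcing $\init{P^*}=\init{Q^*}$ whenever $P^* \B Q^*$; consequently $\deadend{P^*}{X}$ transfers to $Q^*$, which enables the weak matching of the $\rt$-transition through the rightmost $\bisimtbrc$. The stability-respecting clause follows likewise from this preservation of initials and Clause~3 of $\bisimtbrc$ applied to $Q^* \bisimtbrc Q$.
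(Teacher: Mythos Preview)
Your proposal is correct and follows the same overall scheme as the paper: prove $\B \subseteq {\bisimtbrc}$, after which the up-to conclusions collapse to those of a genuine rooted time-out bisimulation. The difference lies in the auxiliary relation. The paper verifies directly that $\bisim\,\B\,\bisimtbrc$---with \emph{strong} bisimilarity on the left, exactly as in Definition~\ref{def:rooted time-out bisim up to}---is a (non-rooted) \tb time-out bisimulation. The payoff is that no iteration is needed: a step $P\step{\alpha}P'$ transfers through the leftmost $\bisim$ as a single step, then through the rooted clause of $\B$ as a single step, and only the rightmost $\bisimtbrc$ produces a weak path. Your choice of $\bisimtbrc$ on the left forces an initial weak unfolding $P^*\pathtau P^{*a}\step{\opt{\alpha}}P^{*b}$ and then the iteration you describe. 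That iteration does work, but only because the intermediate pairs actually land in $\bisim\,\B\,\bisimtbrc$ (the rooted up-to clause outputs $\bisim$ on its left, not $\bisimtbrc$); if one literally treated them as $\bisimtbrc\,\B\,\bisimtbrc$ and reapplied ``the same step'', each round would spawn a fresh weak unfolding on the left and the process would not be well-founded. So your argument is sound once this point is made explicit, but the paper's formulation avoids the detour entirely.
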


\begin{proof}
    First of all, a rooted \tb time-out bisimulation is a rooted \tb time-out bisimulation up to $\bisimtbrc$ by reflexivity of $\bisim$ and $\bisimtbrc$\,. Conversely, we are going to show that $\bisim\,\B\,\bisimtbrc$ is a \tb time-out bisimulation. This implies that ${\bisim\,\B\,\bisimtbrc} \subseteq {\bisimtbrc}$\,, so that each rooted \tb time-out bisimulation up to $\bisimtbrc$ is in fact a rooted \tb time-out bisimulation. Let $P,Q \in \closed$ such that $P \bisim\,\B\,\bisimtbrc Q$. There exists $P^\dag,Q^\dag \in \closed$ such that $P \bisim P^\dag \B Q^\dag \bisimtbrc Q$.
    \begin{enumerate}
        \item If $P \step{\alpha} P'$ then, since $P \bisim P^\dag$, there is a transition $P^\dag \step{\alpha} P^\ddag$ such that $P' \bisim P^\ddag$. Thus, by Clause 1 of Definition~\ref{def:rooted time-out bisim up to}, there is a transition $Q^\dag \step{\alpha} Q^\ddag$ such that $P^\ddag \bisim\,\B\,\bisimtbrc Q^\ddag$. By Clause~1 of Definition~\ref{def:time-out bisim} there is a path $Q \pathtau Q_1 \step{\opt{\alpha}} Q_2$ with $Q^\dag \bisimtbrc Q_1$ and $Q^\ddag \bisimtbrc Q_2$. By the transitivity of $\bisim$ and $\bisimtbrc$ we obtain $P \bisim\,\B\,\bisimtbrc Q_1$ and $P' \bisim\,\B\,\bisimtbrc Q_2$.
        \item If $\deadend{P}{X}$ and $P \step{\rt} P'$ then, since $P \bisim P^\dag$, $\deadend{P^\dag}{X}$ and $P^\dag \step{\rt} P^\ddag$ for some $P^\ddag$ with $P' \bisim P^\ddag$. Thus, by Clauses 1 and 2 of Definition~\ref{def:rooted time-out bisim up to}\linebreak[4] $\deadend{Q^\dag}{X}$ and there is a transition $Q^\dag \step{\rt} Q^\ddag$ with $\theta_X(P^\ddag) \bisim\,\B\,\bisimtbrc \theta_X(Q^\ddag)$.  By Clause~2 of Definition~\ref{def:time-out bisim} there is a path $Q \pathtau Q_1 \step{\rt} Q_2$ with $\theta_X(Q^\ddag) \bisimtbrc \theta_X(Q_2)$. Since $\bisim$ is a congruence for $\theta_X$ \cite[Theorem 20]{strongreactivebisimilarity}, we have $\theta_X(P')\bisim\theta_X(P^\ddag)$. By the transitivity of $\bisim$ and $\bisimtbrc$ we obtain $\theta_X(P') \bisim\,\B\,\bisimtbrc \theta_X(Q_2)$.
        \item If $P\nsteptau$ then, since $P \bisim P^\dag$, $P^\dag\nsteptau$, so by Clause 1 of Definition~\ref{def:rooted time-out bisim up to} $Q^\dag\nsteptau$, and by Clause~3 of Definition~\ref{def:time-out bisim} there is a path $Q \pathtau Q_0 \nsteptau$.
        \popQED
    \end{enumerate}
\end{proof}

\begin{proof}[Proof of Theorem \ref{thm:congruence}]
    Let ${\tbisim} \subseteq \closed\times\closed$ be the smallest relation such that 
    \begin{itemize}
        \item if $P \bisimrtbrc Q$ then $P \tbisim Q$
        \item if $P \tbisim Q$ and $\alpha \in Act$ then $\alpha.P \tbisim \alpha.Q$
        \item if $P_1 \tbisim Q_1$ and $P_2 \tbisim Q_2$ then $P_1 + P_2 \tbisim Q_1 + Q_2$
        \item if $P_1 \tbisim Q_1$, $P_2 \tbisim Q_2$ and $S \subseteq A$ then $P_1 \parallel_S P_2 \tbisim Q_1 \parallel_S Q_2$
        \item if $P \tbisim Q$ and $I \subseteq A$ then $\tau_I(P) \tbisim \tau_I(Q)$
        \item if $P \tbisim Q$ and $\rename \subseteq A\times A$ then $\rename(P) \tbisim \rename(Q)$
        \item if $P \tbisim Q$ and $L \subseteq U \subseteq A$ then $\theta_L^U(P) \tbisim \theta_L^U(Q)$
        \item if $P \tbisim Q$ and $X \subseteq A$ then $\psi_X(P) \tbisim \psi_X(Q)$
        \item if $\equa$ is a recursive specification with $z \in V_\equa$ and $\rho, \nu \in V\setminus V_\equa \rightarrow \closed$ are substitutions such that $\forall x \in V\setminus V_\equa,\; \rho(x) \tbisim \nu(x)$, then $\langle z|\equa\rangle[\rho] \tbisim \langle z|\equa\rangle[\nu]$.
        \item if $\equa$ and $\equa'$ are recursive specifications and $x \in V_\equa=V_{\equa'}$ with $\langle x|\equa\rangle, \langle x|\equa'\rangle \in \closed$ such that $\forall y \in V_\equa,\; \equa_y \bisimrtbrc \equa'_y$, then $\langle x|\equa\rangle \tbisim \langle x|\equa'\rangle$.
    \end{itemize}
    Note that since $\bisim$, $\tbisim$ and $\bisimtbrc$ are congruences for the operators listed in Proposition~\ref{prop:stability}, so are the composed relations $\tbisim\,\bisimtbrc$ and $\bisim\,\tbisim\,\bisimtbrc$\;.\hfill (\textsterling)

   Let ${=_\mathcal{I}} := \{(P,Q) \mid \init{P} = \init{Q}\}$. A trivial induction on the derivation of $P \tbisim Q$, using the fact that $=_\mathcal{I}$ is a full congruence for $\ccsp$ \cite{strongreactivebisimilarity}, shows that 
    \begin{align*}
        P \tbisim Q \Rightarrow \init{P} = \init{Q} \tag{$@$}
    \end{align*}
    (For the second last case, the assumption that $\rho(x) \B \nu(x)$ for all $x\in V\setminus V_\equa$ implies $\rho =_{\mathcal{I}} \nu$ by induction. Since $=_{\mathcal{I}}$ is a lean congruence, this implies $\langle z|\equa\rangle[\rho] =_{\mathcal{I}}\langle{z|\equa}\rangle[\nu]$.)\\
    A trivial induction on $\expr$ shows that
    \begin{align*}
        \forall E \in \expr, \rho,\nu \in V \rightarrow \closed,\; (\forall x \mathbin\in V, \rho(x) \tbisim \nu(x)) \Rightarrow E[\rho] \tbisim E[\nu] \tag{$\star$}
    \end{align*}
    A useful corollary is
    \begin{equation}
        \begin{aligned}
            \forall E\in\expr, \equa \mbox{ a recursive specification}, \rho, \nu \in V\setminus V_\equa \rightarrow \closed,\\ (\forall x \in V\setminus V_\equa,\; \rho(x) \tbisim \nu(x)) \Rightarrow \langle E|\equa\rangle[\rho] \tbisim \langle E|\equa\rangle[\nu]
        \end{aligned} \tag{$\$$}
    \end{equation}
    Applied in the context of the last condition of $\tbisim$, it implies
    \begin{align}
        \forall E \in \expr, \mbox{the variables of \textit{E} are in }V_\equa \Rightarrow \langle E|\equa\rangle \tbisim \langle E|\equa'\rangle \tag{$\#$}
    \end{align}
    Since ${\bisimrtbrc} \subseteq {\tbisim}$, it suffices to prove that $\tbisim$ is a rooted \tb time-out bisimulation up to $\bisimtbrc$ (so that ${\tbisim}={\bisimrtbrc}$). Note that $\tbisim$ is symmetric, since $\bisimrtbrc$ is. Let $P, Q \in \closed$ such that $P \tbisim Q$.
    \begin{enumerate}
        \item If $P \step{\alpha} P'$ with $\alpha \in A_\tau$ then we need to find a transition $Q \step{\alpha} Q'$ such that $P' \tbisim\,\bisimtbrc Q'$. This is sufficient as ${\tbisim\bisimtbrc} \subseteq {\bisim\tbisim\bisimtbrc}$. We are going to proceed by induction on the proof of $P \step{\alpha} P'$ and by case distinction on the derivation of $P \tbisim Q$.
        \begin{itemize}
            \item If $P \bisimrtbrc Q$ then there exists a transition $Q \step{\alpha} Q'$ such that $P' \bisimtbrc Q'$, and so $P' \tbisim\bisimtbrc Q'$.
            \item If $P = \beta.P^\dag$ and $Q = \beta.Q^\dag$ such that $\beta \in Act$ and $P^\dag \tbisim Q^\dag$ then $\alpha = \beta$ and $P' = P^\dag$. Thus there exists a transition $Q \step{\alpha} Q^\dag$ such that $P^\dag \tbisim Q^\dag$, and so $P^\dag \tbisim \bisimtbrc Q^\dag$.
            \item If $P = P^\dag + P^\ddag$ and $Q = Q^\dag + Q^\ddag$ such that $P^\dag \tbisim Q^\dag$ and $P^\ddag \tbisim Q^\ddag$ then, by the semantics, $P^\dag \step{\alpha} P'$ or $P^\ddag \step{\alpha} P'$. Suppose that $P^\dag \step{\alpha} P'$ (the other case proceeds symmetrically). Since $P^\dag \tbisim Q^\dag$, there exists a transition $Q^\dag \step{\alpha} Q'$ such that $P' \tbisim\,\bisimtbrc Q'$. By the semantics, there exists a transition $Q \step{\alpha} Q'$ such that $P' \tbisim\,\bisimtbrc Q'$.
            \item If $P = P^\dag \parallel_S P^\ddag$ and $Q = Q^\dag \parallel_S Q^\ddag$ such that $S\subseteq A$, $P^\dag \tbisim Q^\dag$ and $P^\ddag \tbisim Q^\ddag$ then
            \begin{itemize}
                \item if $\alpha \not\in S$ then, by the semantics, $P' = P'^\dag \parallel_S P^\ddag$ and $P^\dag \step{\alpha} P'^\dag$ or $P' = P^\dag \parallel_S P'^\ddag$ and $P^\ddag \step{\alpha} P'^\ddag$. Suppose that $P^\dag \step{\alpha} P'^\dag$ (the other case proceeds symmetrically). Since $P^\dag \tbisim Q^\dag$, there exists a transition $Q^\dag \step{\alpha} Q'^\dag$ such that $P'^\dag \tbisim\,\bisimtbrc Q'^\dag$. By the semantics, there exists a transition $Q \step{\alpha} Q'^\dag \parallel_S Q^\ddag$ such that, by (\textsterling), $P' \tbisim\,\bisimtbrc Q'^\dag \parallel_S Q^\ddag$.
                \item if $\alpha \in S$ then, by the semantics, $P' = P'^\dag \parallel_S P'^\ddag$, $P^\dag \step{\alpha} P'^\dag$ and $P^\ddag \step{\alpha} P'^\ddag$. Since $P^\dag \tbisim Q^\dag$ and $P^\ddag \tbisim Q^\ddag$, there exists two transitions $Q^\dag \step{\alpha} Q'^\dag$ and $Q^\ddag \step{\alpha} Q'^\ddag$ such that $P'^\dag \tbisim\,\bisimtbrc Q'^\dag$ and $P'^\ddag \tbisim\,\bisimtbrc Q'^\ddag$. By the semantics, there exists a transition $Q \step{\alpha} Q'^\dag \parallel_S Q'^\ddag$ such that, by (\textsterling), $P' \tbisim\,\bisimtbrc Q'^\dag \parallel_S Q'^\ddag$.
            \end{itemize}
            \item If $P = \tau_I(P^\dag)$ and $Q = \tau_I(Q^\dag)$ with $I \subseteq A$ and $P^\dag \tbisim Q^\dag$ then, by the semantics, $P' = \tau_I(P'^\dag)$, $P^\dag \step{\beta} P'^\dag$ and $(\beta \in I\cup\{\tau\} \wedge \alpha = \tau) \vee \beta = \alpha \not\in I$. Since $P^\dag \tbisim Q^\dag$, there exists a transition $Q^\dag \step{\beta} Q'^\dag$ such that $P'^\dag \tbisim\,\bisimtbrc Q'^\dag$. By the semantics, there exists a transition $Q \step{\alpha} \tau_I(Q'^\dag)$ such that, by (\textsterling), $P'\tbisim\,\bisimtbrc \tau_I(Q'^\dag)$.
            \item If $P = \rename(P^\dag)$ and $Q = \rename(Q^\dag)$ with $\rename \subseteq A\times A$ and $P^\dag \tbisim Q^\dag$ then, by the semantics, $P' = \rename(P'^\dag)$, $P^\dag \step{\beta} P'^\dag$ and $(\beta,\alpha) \in \rename \vee \beta = \alpha = \tau$. Since $P^\dag \tbisim Q^\dag$, there exists a transition $Q^\dag \step{\beta} Q'^\dag$ such that $P'^\dag \tbisim\,\bisimtbrc Q'^\dag$. By the semantics, there exists a transition $Q \step{\alpha} \rename(Q'^\dag)$ such that, by (\textsterling), $P'\tbisim\,\bisimtbrc \rename(Q'^\dag)$.
            \item If $P = \theta_L^U(P^\dag)$ and $Q = \theta_L^U(Q^\dag)$ with $L \subseteq U \subseteq A$ and $P^\dag \tbisim Q^\dag$ then
            \begin{itemize}
                \item if $\alpha = \tau$ then, by the semantics, $P' = \theta_X(P'^\dag)$ and $P^\dag \step{\tau} P'^\dag$. Since $P^\dag \tbisim Q^\dag$, there exists a transition $Q^\dag \step{\tau} Q'^\dag$ such that $P'^\dag \tbisim\,\bisimtbrc Q'^\dag$. By the semantics, there exists a transition $Q \step{\tau} \theta_L^U(Q'^\dag)$ such that, by (\textsterling), $P'\tbisim\,\bisimtbrc \theta_L^U(Q'^\dag)$.
                \item if $\alpha = a \in A$ then, by the semantics, $P^\dag \step{a} P'$ and $a \in U \vee \deadend{P^\dag}{L}$. Since $P^\dag \tbisim Q^\dag$, there exists a transition $Q^\dag \step{a} Q'$ such that $P' \tbisim\,\bisimtbrc Q'$. According to $(@)$, $\deadend{P^\dag}{L} \Leftrightarrow \deadend{Q^\dag}{L} $, thus, by the semantics, there exists a transition $Q \step{a} Q'$ such that $P'\tbisim\,\bisimtbrc Q'$.
            \end{itemize}
            \item If $P = \psi_X(P^\dag)$ and $Q = \psi_X(Q^\dag)$ with $X \subseteq A$ and $P^\dag \tbisim Q^\dag$ then, by the semantics, $P^\dag \step{\alpha} P'$. Since $P^\dag \tbisim Q^\dag$, there exists a transition $Q^\dag \step{\alpha} Q'$ such that $P' \tbisim\,\bisimtbrc Q'$. By the semantics, there exists a transition $Q \step{\alpha} Q'$ such that $P'\tbisim\,\bisimtbrc Q'$.
            \item Let $P = \langle z|\equa\rangle[\rho]$ and $Q = \langle z|\equa\rangle[\nu]$ with $\equa$ a recursive specification, $z \in V_\equa$ and $\rho, \nu \in V\setminus V_\equa \rightarrow\closed$ such that $\forall x \in V\setminus V_\equa,\; \rho(x) \tbisim \nu(x)$. By the semantics, $\langle\equa_z|\equa\rangle[\rho] \step{\alpha} P'$ is provable by a strict sub-proof of $P \step{\alpha} P'$. Moreover, according to $(\$)$, $\langle\equa_z|\equa\rangle[\rho] \tbisim \langle\equa_z|\equa\rangle[\nu]$. By induction, there exists a transition $\langle\equa_z|\equa\rangle[\nu] \step{\alpha} Q'$ such that $P'\tbisim\,\bisimtbrc Q'\!$. By the semantics, there exists a transition $\langle z|\equa\rangle[\nu] \step{\alpha} Q'$ such that $P'\tbisim\,\bisimtbrc Q'$.
            \item Let $P = \langle x|\equa\rangle$ and $Q = \langle x|\equa'\rangle$ with $\equa$ and $\equa'$ two recursive specifications such that $\forall y \in V_\equa = V_{\equa'},\; \equa_y \bisimrtbrc \equa'_y$ and $x \in V_\equa$. By the semantics, $\langle\equa_x|\equa\rangle \step{\alpha} P'$ is provable by a strict sub-proof of $P \step{\alpha} P'$. Moreover, according $(\#)$, $\langle \equa_x|\equa\rangle \tbisim \langle\equa_x|\equa'\rangle$. By induction, there exists a transition $\langle\equa_x|\equa'\rangle \step{\alpha} R'$ such that $P' \tbisim\,\bisimtbrc R'$. Since $\langle \_ |\equa'\rangle \in V_{\equa'} \rightarrow \closed$ and $\equa_x \bisimrtbrc \equa'_x$, $\langle\equa_x|\equa'\rangle \bisimrtbrc \langle\equa_x'|\equa'\rangle$. Therefore, there exists a transition $\langle\equa'_x|\equa'\rangle \step{\alpha} Q'$ such that $R' \bisimtbrc Q'$. By the semantics, there exists a transition $Q \step{\alpha} Q'$ such that, by transitivity of $\bisimtbrc$\,, $P' \tbisim\,\bisimtbrc Q'$.
        \end{itemize}
        \item If $\deadend{P}{X}$ and $P \step{\rt} P'$ then we need to find a transition $Q \step{\rt} Q'$ such that $\theta_X(P') \bisim\,\tbisim\,\bisimtbrc \theta_X(Q')$. We are going to proceed by induction on the proof of $P \step{\rt} P'$ and by case distinction on the derivation of $P \tbisim Q$.
        \begin{itemize}
            \item If $P \bisimrtbrc Q$ then there exists a transition $Q \step{\rt} Q'$ such that $\theta_X(P') \bisimtbrc \theta_X(Q')$ and so $\theta_X(P') \bisim\,\tbisim\,\bisimtbrc \theta_X(Q')$.
            \item If $P = \beta.P^\dag$ and $Q = \beta.Q^\dag$ such that $\beta \in Act$ and $P^\dag \tbisim Q^\dag$ then $\beta = \rt$ and $P' = P^\dag$. Thus there is a transition $Q \step{t} Q^\dag$ such that, by definition of $\tbisim$, $\theta_X(P^\dag) \bisim\,\tbisim\,\bisimtbrc \theta_X(Q^\dag)$.
            \item If $P = P^\dag + P^\ddag$ and $Q = Q^\dag + Q^\ddag$ such that $P^\dag \tbisim Q^\dag$ and $P^\ddag \tbisim Q^\ddag$ then, by the semantics, $\deadend{P^\dag}{X}$, $\deadend{P^\ddag}{X}$ and $P^\dag \step{\rt} P'$ or $P^\ddag \step{\rt} P'$. Suppose that $P^\dag \step{\rt} P'$ (the other case is symmetrical). Since $P^\dag \tbisim Q^\dag$, there exists a transition $Q^\dag \step{\rt} Q'$ such that $\theta_X(P') \bisim\,\tbisim\,\bisimtbrc \theta_X(Q')$. By the semantics, there exists a transition $Q \step{\rt} Q'$ such that $\theta_X(P') \bisim\,\tbisim\, \bisimtbrc \theta_X(Q')$.
            \item If $P = P^\dag \parallel_S P^\ddag$ and $Q = Q^\dag \parallel_S Q^\ddag$ such that $S\subseteq A$, $P^\dag \tbisim Q^\dag$ and $P^\ddag \tbisim Q^\ddag$ then, by the semantics, $P' = P'^\dag \parallel_S P^\ddag$ and $P^\dag \step{\rt} P'^\dag$ or $P' = P^\dag \parallel_S P'^\ddag$ and $P^\ddag \step{\rt} P'^\ddag$. Suppose that $P^\ddag \step{\rt} P'^\ddag$ (the other case is symmetrical). Since $\deadend{P}{X}$, $P^\dag \nsteptau$ and $\init{P^\dag} \cap X \subseteq S$. Moreover, $\deadend{P^\ddag}{X \setminus S \cup (X\cap S\cap\init{P^\dag})}$. Note that $X \setminus S \cup (X\cap S\cap\init{P^\dag}) = X \setminus(S\setminus\init{P^\dag})$. Since $P^\ddag \tbisim Q^\ddag$, there exists a transition $Q^\ddag \step{\rt} Q'^\ddag$ such that $\theta_{X \setminus(S\setminus\init{P^\dag})}(P'^\ddag) \bisim\,\tbisim\,\bisimtbrc \theta_{X \setminus(S\setminus\init{P^\dag})}(Q'^\ddag)$. Since $P^\dag \tbisim Q^\dag$ and $P^\ddag \tbisim Q^\ddag$, $\init{P^\dag} = \init{Q^\dag}$ and $\init{P^\ddag} = \init{Q^\ddag}$. By the semantics, there exists a transition $Q \step{\rt} Q^\dag \parallel_S Q'^\ddag$. By (\textsterling) and Lemma~\ref{lem:strong identities}, \(\theta_X(P') = \theta_X(P^\dag \parallel_S P'^\ddag) \bisim \theta_X(P^\dag \parallel_S\theta_{X \setminus(S\setminus\init{P^\dag})}(P'^\ddag)) \bisim\,\tbisim\,\bisimtbrc \theta_X(Q^\dag \parallel_S \theta_{X \setminus(S\setminus\init{Q^\dag})}(Q'^\ddag)) \bisim \theta_X(Q^\dag \parallel_S Q'^\ddag)\). In the last step we use that $Q^\dag \nsteptau$, since $P^\dag\nsteptau$ and $\init{P}=\init{Q}$, using ($@$). Now apply that ${\bisim}\subseteq{\bisimtbrc}$ and the transitivity of $\bisim$ and $\bisimtbrc$\,.
            \item If $P = \tau_I(P^\dag)$ and $Q = \tau_I(Q^\dag)$ with $I \subseteq A$ and $P^\dag \tbisim Q^\dag$ then, by the semantics, $P' = \tau_I(P'^\dag)$ and $P^\dag \step{\rt} P'^\dag$. Since $\deadend{P}{X}$, $\deadend{P^\dag}{X\cup I}$. Since $P^\dag \tbisim Q^\dag$, there exists a transition $Q^\dag \step{\rt} Q'^\dag$ such that $\theta_{X\cup I}(P'^\dag) \bisim\,\tbisim\,\bisimtbrc \theta_{X\cup I}(Q'^\dag)$. By the semantics, there exists a transition $Q \step{\rt} \tau_I(Q'^\dag)$ such that, by (\textsterling) and Lemma~\ref{lem:strong identities}, $\theta_X(P') \bisim \theta_X(\tau_I(\theta_{X\cup I}(P'^\dag))) \bisim\,\tbisim\,\bisimtbrc \theta_X(\tau_I(\theta_{X\cup I}(Q'^\dag))) \bisim \tau_I(Q'^\dag)$.
            \item If $P = \rename(P^\dag)$ and $Q = \rename(Q^\dag)$ with $\rename \subseteq A\times A$ and $P^\dag \tbisim Q^\dag$ then, by the semantics, $P' = \rename(P'^\dag)$ and $P^\dag \step{\rt} P'^\dag$. Since $\deadend{P}{X}$, $\deadend{P^\dag}{\rename^{-1}(X)}$. Since $P^\dag \tbisim Q^\dag$, there exists a transition $Q^\dag \step{\rt} Q'^\dag$ such that $\theta_{\rename^{-1}(X)}(P'^\dag) \bisim\,\tbisim\,\bisimtbrc \theta_{\rename^{-1}(X)}(Q'^\dag)$. By the semantics, there exists a transition $Q \step{\rt} \rename(Q'^\dag)$ such that, by (\textsterling) and Lemma~\ref{lem:strong identities}, $\theta_X(P') \bisim \theta_X(\rename(\theta_{\rename^{-1}(X)}(P'^\dag))) \bisim\,\tbisim\,\bisimtbrc \theta_X(\rename(\theta_{\rename^{-1}(X)}(Q'^\dag))) \bisim \rename(Q'^\dag)$.
            \item If $P = \theta_L^U(P^\dag)$ and $Q = \theta_L^U(Q^\dag)$ with $L \subseteq U \subseteq A$ and $P^\dag \tbisim Q^\dag$ then, by the semantics, $P^\dag \step{\rt} P'$ and $\deadend{P^\dag}{L}$. Since $P^\dag \tbisim Q^\dag$, there exists a transition $Q^\dag \step{\rt} Q'$ such that $\theta_X(P') \bisim\,\tbisim\,\bisimtbrc \theta_X(Q')$. According to $(@)$, $\deadend{P^\dag}{L} \Leftrightarrow \deadend{Q^\dag}{L} $, thus, by the semantics, there exists a transition $Q \step{\rt} Q'$ such that $\theta_X(P')\bisim\,\tbisim\,\bisimtbrc \theta_X(Q')$.
            \item If $P = \psi_Y(P^\dag)$ and $Q = \psi_Y(Q^\dag)$ with $Y \subseteq A$ and $P^\dag \tbisim Q^\dag$ then, by the semantics, $P' = \theta_Y(P'^\dag)$, $P^\dag \step{\rt} P'^\dag$ and $\deadend{P^\dag}{Y}$. Since $P^\dag \tbisim Q^\dag$, there exists a transition $Q^\dag \step{\rt} Q'^\dag$ such that $\theta_Y(P'^\dag) \bisim\,\tbisim\,\bisimtbrc \theta_Y(Q'^\dag)$. Using ($@$), $\deadend{Q^\dag}{Y}$, so by the semantics, there exists a transition $Q \step{\rt} \theta_Y(Q'^\dag)$. By (\textsterling), $\theta_X(P') \bisim\,\tbisim\,\bisimtbrc \theta_X(\theta_Y(Q'^\dag))$.
            \item Let $P = \langle z|\equa\rangle[\rho]$ and $Q = \langle z|\equa\rangle[\nu]$ with $\equa$ a recursive specification, $z \in V_\equa$ and $\rho, \nu \in V\setminus V_\equa \rightarrow\closed$ such that $\forall x \in V\setminus V_\equa, \rho(x) \tbisim \nu(x)$. By the semantics, $\langle\equa_z|\equa\rangle[\rho] \step{\rt} P'$ is provable by a strict sub-proof of $P \step{\rt} P'$ and $\init{P} = \init{\langle\equa_z|\equa\rangle[\rho]}$. Moreover, according to $(\$)$, $\langle\equa_z|\equa\rangle[\rho] \tbisim \langle\equa_z|\equa\rangle[\nu]$. By induction, there exists a transition $\langle\equa_z|\equa\rangle[\nu] \step{\rt} Q'$ such that $\theta_X(P') \bisim\,\tbisim\,\bisimtbrc \theta_X(Q')$. By the semantics, there exists a transition $\langle z|\equa\rangle[\nu] \step{\rt} Q'$ such that $\theta_X(P') \bisim\,\tbisim\,\bisimtbrc \theta_X(Q')$.
            \item Let $P = \langle x|\equa\rangle$ and $Q = \langle x|\equa'\rangle$ with $\equa$ and $\equa'$ two recursive specifications such that $\forall y \in V_\equa = V_{\equa'}, \equa_y \bisimrtbrc \equa'_y$ and $x \in V_\equa$. By the semantics, $\langle\equa_x|\equa\rangle \step{\rt} P'$ is provable be a strict sub-proof of $P \step{\alpha} P'$ and $\init{P} = \init{\langle\equa_x|\equa\rangle}$. Moreover, according $(\#)$, $\langle \equa_x|\equa\rangle \tbisim \langle\equa_x|\equa'\rangle$. By induction, there exists a transition $\langle\equa_x|\equa'\rangle \step{\rt} R'$ such that $\theta_X(P') \bisim\,\tbisim\,\bisimtbrc \theta_X(R')$. Since $\langle \_ |\equa'\rangle \in V_{\equa'} \rightarrow \closed$ and $\equa_x \bisimrtbrc \equa'_x$, $\langle\equa_x|\equa'\rangle \bisimrtbrc \langle\equa_x'|\equa'\rangle$. Moreover, according to $(@)$, $\deadend{\langle\equa_x|\equa'\rangle}{X}$. Therefore, there exists a transition $\langle\equa'_x|\equa'\rangle \step{\alpha} Q'$ such that $\theta_X(R') \bisimtbrc \theta_X(Q')$. By the semantics, there exists a transition $Q \step{\alpha} Q'$ such that, by transitivity of $\bisimtbrc$, $\theta_X(P') \bisim\,\tbisim\,\bisimtbrc \theta_X(Q')$.
        \end{itemize}
    \end{enumerate}
    As a result, $\mathcal{B}$ is a rooted \tb time-out bisimulation up to $\bisimtbrc$\,, and $(\star)$ gives us that $\bisimrtbrc$ is a lean congruence and the last condition of $\tbisim$ adds that it is a full congruence.
\end{proof}

\section{Proof of RSP}\label{app:RSP}

To prove RSP, another version of $\bisimtbrc$ is needed, this time up to itself.

\begin{definition}\rm \label{def:up to b}
    A \textit{\tb time-out bisimulation up to $\bisimtbrc$} is a symmetric relation ${\tbisim} \subseteq \closed\times\closed$ such that, for all $P,Q \in \closed$ such that $P \tbisim Q$, and for all $X\subseteq A$,
    \begin{enumerate}
        \item if $P \pathtau P' \step{\alpha} P''$ with $\alpha \in A_\tau$ and $P \bisimtbrc P'$ then there exists a path $Q \pathtau Q_1 \step{\opt{\alpha}} Q_2$ such that $P' \upto[\bisimtbrc] Q_1$ and $P'' \upto[\bisimtbrc] Q_2$        \item if $P \pathtau P_1 \step{\rt} P_2$ with $P \bisimtbrc P_1$ and $\deadend{P_1}{X}$ then there exists a path $Q \pathtau Q_1 \step{\rt} Q_2$ with $\theta_X(P_2) \upto[\bisimtbrc] \theta_X(Q_2)$
        \item if $P \pathtau P_0 \nsteptau$ with $P \bisimtbrc P_0$ then there exists a path $Q \pathtau Q_0 \nsteptau$.
    \end{enumerate}
\end{definition}

\begin{proposition} \label{prop:up to b}
    Let $P,Q \in \closed$. Then $P \bisimtbrc Q$ iff there exists a \tb time-out bisimulation $\mathcal{B}$ up to $\bisimtbrc$ such that $P \tbisim Q$.
\end{proposition}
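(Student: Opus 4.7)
\textbf{Plan for Proposition~\ref{prop:up to b}.} The plan is to prove both implications, with the backward direction doing the real work by showing that the composition $\bisimtbrc \circ \tbisim \circ \bisimtbrc$ is itself a branching time-out bisimulation.

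\emph{Forward direction.} Since ${\bisimtbrc} \subseteq {\bisimtbrc \circ \bisimtbrc \circ \bisimtbrc} = {\upto[\bisimtbrc]}$ by reflexivity of $\bisimtbrc$\,, it suffices to verify that $\bisimtbrc$ itself satisfies the three clauses of Definition~\ref{def:up to b}. For Clause~1, given $P \bisimtbrc Q$ and $P \pathtau P' \step{\alpha} P''$ with $P \bisimtbrc P'$, transitivity yields $P' \bisimtbrc Q$; then Clause~1 of Definition~\ref{def:time-out bisim}, available for $\bisimtbrc$ via Proposition~\ref{prop:time-out bisim}.1, applied to the single step $P' \step{\alpha} P''$ produces the required path. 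Clauses~2 and~3 follow analogously.

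\emph{Backward direction.} The plan is to show that $\R := \upto[\bisimtbrc]$ is a branching time-out bisimulation; since ${\tbisim} \subseteq {\R}$ (by reflexivity) and then ${\R} \subseteq {\bisimtbrc}$ (by this claim), this yields ${\tbisim} \subseteq {\bisimtbrc}$\,. Let $P \R Q$, witnessed by $P \bisimtbrc P^\dag \tbisim Q^\dag \bisimtbrc Q$. I will repeatedly invoke the following iteration lemma, provable by induction on the length of the $\tau$-path using Clause~1 of Definition~\ref{def:time-out bisim}: whenever $R \bisimtbrc S$ and $R \pathtau R'$, there is a path $S \pathtau S'$ with $R' \bisimtbrc S'$.

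For Clause~1, suppose $P \step{\alpha} P'$ with $\alpha \in A_\tau$. Applying Definition~\ref{def:time-out bisim}.1 to $P \bisimtbrc P^\dag$ yields $P^\dag \pathtau P^\dag_1 \step{\opt{\alpha}} P^\dag_2$ with $P \bisimtbrc P^\dag_1$ and $P' \bisimtbrc P^\dag_2$; the degenerate case (empty $\opt{\alpha}$-step) is matched by the trivial path $Q$ itself, with both pairs in $\R$ by absorption of $\bisimtbrc$ into $\R$. Otherwise $P^\dag \pathtau P^\dag_1 \step{\alpha} P^\dag_2$ and $P^\dag \bisimtbrc P^\dag_1$, so Definition~\ref{def:up to b}.1 applied to $P^\dag \tbisim Q^\dag$ furnishes $Q^\dag \pathtau Q^\dag_1 \step{\opt{\alpha}} Q^\dag_2$ with $P^\dag_1 \R Q^\dag_1$ and $P^\dag_2 \R Q^\dag_2$. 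The iteration lemma applied to $Q^\dag \bisimtbrc Q$ lifts $Q^\dag \pathtau Q^\dag_1$ to $Q \pathtau Q_a$ with $Q^\dag_1 \bisimtbrc Q_a$, and one more use of Clause~1 matches $Q^\dag_1 \step{\opt{\alpha}} Q^\dag_2$ by $Q_a \pathtau Q_1 \step{\opt{\alpha}} Q_2$. Chaining gives $P \R Q_1$ and $P' \R Q_2$.

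For Clause~2, suppose $\deadend{P}{X}$ and $P \step{\rt} P'$. Definition~\ref{def:time-out bisim}.2 on $P \bisimtbrc P^\dag$ gives $P^\dag \pathtau P^\dag_1 \step{\rt} P^\dag_2$ with $\theta_X(P') \bisimtbrc \theta_X(P^\dag_2)$, and the remark after that definition additionally gives $P \bisimtbrc P^\dag_1$, whence $P^\dag \bisimtbrc P^\dag_1$ and $\deadend{P^\dag_1}{X}$. Definition~\ref{def:up to b}.2 then produces $Q^\dag \pathtau Q^\dag_1 \step{\rt} Q^\dag_2$ with $\theta_X(P^\dag_2) \R \theta_X(Q^\dag_2)$. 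Lift to $Q$ by viewing $Q^\dag \pathtau Q^\dag_1 \step{\rt} Q^\dag_2$ as a composite $\rt$-move and applying Definition~\ref{def:time-out bisim}.2 through $Q^\dag \bisimtbrc Q$ (iterating Clause~1 on the $\tau$-prefix and then matching the $\rt$-step), yielding $Q \pathtau Q_1 \step{\rt} Q_2$ with $\theta_X(Q^\dag_2) \bisimtbrc \theta_X(Q_2)$. Composition gives $\theta_X(P') \bisimtbrc \theta_X(P^\dag_2) \R \theta_X(Q^\dag_2) \bisimtbrc \theta_X(Q_2)$, which absorbs to $\theta_X(P') \R \theta_X(Q_2)$. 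Clause~3 follows similarly, using Definition~\ref{def:up to b}.3 in the middle leg and the iteration lemma plus Clause~3 of Definition~\ref{def:time-out bisim} in the outer legs.

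\emph{Main obstacle.} The delicate step is Clause~2: lifting $Q^\dag \pathtau Q^\dag_1 \step{\rt} Q^\dag_2$ through $Q^\dag \bisimtbrc Q$ while preserving the $\theta_X$-match. The intermediate state $Q^\dag_1$ need not be $X$-stable, so we cannot reapply Clause~2 piecewise; instead we must treat the $\tau$-prefix and the $\rt$-step as one composite move and rely on $\bisimtbrc$ being a branching time-out bisimulation whose $\rt$-matching is already phrased modulo $\theta_X$, together with the congruence of $\bisimtbrc$ for $\theta_X$ from Proposition~\ref{prop:stability}, to make the final composition of $\theta_X$-relations absorb correctly into $\R$.
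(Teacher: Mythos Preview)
Your approach is the same as the paper's: show that $\upto[\bisimtbrc]={\bisimtbrc}\circ{\tbisim}\circ{\bisimtbrc}$ is a \tb time-out bisimulation, then conclude ${\tbisim}\subseteq{\bisimtbrc}$. Your forward direction and your treatment of Clauses~1 and~3 are fine and mirror the paper's argument.

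The genuine issue is exactly the one you flag as the ``main obstacle'' in Clause~2, and your proposed fix does not close it. After the middle step you hold a path $Q^\dag \pathtau Q^\dag_1 \step{\rt} Q^\dag_2$ with only $\theta_X(P^\dag_2)\upto[\bisimtbrc]\theta_X(Q^\dag_2)$ guaranteed; Definition~\ref{def:up to b}.2 gives \emph{no} information about $\init{Q^\dag_1}$. Lifting through $Q^\dag\bisimtbrc Q$ by ``treating the $\tau$-prefix and the $\rt$-step as one composite move'' still bottoms out in an application of Clause~2 of Definition~\ref{def:time-out bisim} at some state reached from $Q$, and that clause is only available from an $X$-stable state. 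Without $\deadend{Q^\dag_1}{X}$ (or at least $Q^\dag_1\nsteptau$) you cannot invoke it, and neither the congruence of $\bisimtbrc$ for $\theta_X$ nor the phrasing ``modulo $\theta_X$'' supplies the missing stability.

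For comparison: the paper's own proof simply writes ``there exists a path $Q^\dag \pathtau Q^\dag_1 \step{\rt} Q^\dag_2$ with $\deadend{Q^\dag_1}{X}$ and \dots'', i.e.\ it asserts the very condition you are worried about. That condition is not part of Definition~\ref{def:up to b}.2 as stated, so this is really an omission in the definition rather than something you failed to see; note that in the only place the proposition is used (the RSP proof, Appendix~\ref{app:RSP}), the constructed relation does produce matching paths with $\deadend{Q_1}{X}$ via Lemma~\ref{lem:guarded}. The clean fix is to strengthen Clause~2 of Definition~\ref{def:up to b} to additionally yield $\deadend{Q_1}{X}$; with that, the lift through $Q^\dag\bisimtbrc Q$ is a routine iteration of Clause~1 followed by one application of Clause~2, and your outline goes through verbatim.
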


\begin{proof}
    Let $\tbisim$ be a \tb time-out bisimulation up to $\bisimtbrc$\,. We are going to show that $\upto[\bisimtbrc]$ is a \tb time-out bisimulation. Let $P,Q \in \closed$ such that $P \upto[\bisimtbrc] Q$. Then there exists $P^\dag,Q^\dag \in \closed$ such that $P \bisimtbrc P^\dag \tbisim Q^\dag \bisimtbrc Q$.
    \begin{enumerate}
        \item If $P \step{\alpha} P'$ with $\alpha \in A_\tau$ then, since $P \bisimtbrc P^\dag$, there exists a path $P^\dag \pathtau P^\star \step{\opt{\alpha}} P^\ddag$ such that $P \bisimtbrc P^\star$ and $P' \bisimtbrc P^\ddag$. Since $P^\dag \pathtau P^\star \step{\opt{\alpha}} P^\ddag$ and $P^\dag \tbisim Q^\dag$, there exists a path $Q^\dag \pathtau Q^\star \step{\opt{\alpha}} Q^\ddag$ such that $P^\star \upto[\bisimtbrc] Q^\star$ and $P^\ddag \upto[\bisimtbrc] Q^\ddag$. Since $Q^\dag \pathtau Q^\star$ and $Q^\dag \bisimtbrc Q$, there exists a path $Q \pathtau Q_0$ such that $Q^\star \bisimtbrc Q_0$; moreover, since $Q^\star \step{\opt{\alpha}} Q^\ddag$, there exists a path $Q_0 \pathtau Q_1 \step{\opt{\alpha}} Q_2$ such that $Q^\star \bisimtbrc Q_1$ and $Q^\ddag \bisimtbrc Q_2$. As a result, there exists a path $Q \pathtau Q_1 \step{\opt{\alpha}} Q_2$ such that, by transitivity of $\bisimtbrc$\,, $P \upto[\bisimtbrc] Q_1$ and $P' \upto[\bisimtbrc] Q_2$.
        \item If $\deadend{P}{X}$ and $P \step{\rt} P'$ then, since $P \bisimtbrc P^\dag$, there exists a path $P^\dag \pathtau P^\dag_1 \step{\rt} P^\dag_2$ with $P^\dag \bisimtbrc P^\dag_1$, $\deadend{P_1^\dag}{X}$ and $\theta_X(P') \bisimtbrc \theta_X(P^\dag_2)$. Since $P^\dag \tbisim Q^\dag$, there exists a path $Q^\dag \pathtau Q^\dag_1 \step{\rt} Q^\dag_2$ with $\deadend{Q^\dag_1}{X}$ and $\theta_X(P^\dag_2) \upto[\bisimtbrc] \theta_X(Q^\dag_2)$. Since $Q^\dag \bisimtbrc Q$, there exists a path $Q \pathtau Q_1 \step{\rt} Q_2$ with $\theta_X(Q^\dag_2) \bisimtbrc \theta_X(Q_2)$. As a result, there exists a path $Q \pathtau Q_1 \step{\rt} Q_2$ with $\theta_X(P') \upto[\bisimtbrc] \theta_X(Q_2)$.
        \item If $P \nsteptau$ then, since $P \bisimtbrc P^\dag$, there exists a path $P^\dag \pathtau P^\dag_0 \nsteptau$, and $P^\dag \bisimtbrc P^\dag_0$. Since $P^\dag \tbisim Q^\dag$, there exists a path $Q^\dag \pathtau Q^\dag_0 \nsteptau$. Since $Q^\dag \bisimtbrc Q$, there exists a path $Q \pathtau Q_0 \nsteptau$.
\popQED
    \end{enumerate}
\end{proof}

\noindent
The following lemma will be useful to deal with the matching of paths.

\begin{lemma} \label{lem:guarded}
    Let $H \mathbin\in \expr$ be well-guarded and have free variables from $W \mathbin\subseteq V$ only, and let $\rho,\nu \mathbin\in \closed^W\!\!$. 
    \begin{enumerate}
        \item $\init{H[\rho]} = \init{H[\nu]}$. \label{init}
        \item If $H[\rho] \step{\alpha} R$ with $\alpha \in Act$ then there exists $H' \in \expr$ with free variables in $W$ only such that $R = H'[\rho]$ and $H[\nu] \step{\alpha} H'[\nu]$.
        Moreover, in case $\alpha=\tau$, also $H'$ is well-guarded.\label{2}
    \end{enumerate}
\end{lemma}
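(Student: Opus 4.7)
The plan is to prove Part~2 by induction on the derivation of $H[\rho] \step{\alpha} R$, with case analysis on the outermost operator of $H$ (equivalently, on the last SOS rule applied); Part~1 follows as an immediate corollary by applying Part~2 and its symmetric version (swapping $\rho$ and $\nu$) to every transition from $H[\rho]$ or $H[\nu]$. Along the way I will need Part~1 for strictly structurally smaller well-guarded sub-expressions of $H$; this is justified by a parallel structural induction on $H$ for Part~1 alone, where the recursive case is absorbed into the same derivation-depth analysis.

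The uninteresting cases are dispatched quickly: $H = 0$ has no transitions; $H = \tau_I(F)$ is excluded by well-guardedness; $H = x$ with $x \in W$ is excluded because the free occurrence of $x$ would have to sit under a guard $a.G$. For $H = \alpha'.F$ the axiom gives $H' := F$, and the only subtlety is that, when $\alpha' = \tau$, the required well-guardedness of $F$ must be re-verified---this holds precisely because $\tau$ is not a valid guard, so every free variable in $\tau.F$ is already guarded inside $F$. The operators $F + G$, $F \parallel_S G$, $\rename(F)$ and the non-deadend rules for $\theta_L^U(F)$ and $\psi_X(F)$ are all handled uniformly: the premise involves a strictly smaller derivation on a structurally smaller well-guarded sub-expression, IH supplies the appropriate $H'$ there, and one reapplies the same outer operator; well-guardedness of $H'$ (when $\alpha = \tau$) is preserved because none of these operators introduces $\tau_I$ or destroys guards.

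Two cases require more care. For $H = \theta_L^U(F)$ via the deadend rule, IH on the premise $F[\rho] \step{\alpha} F'[\rho]$ yields $H'$ with $F[\nu] \step{\alpha} H'[\nu]$; to then fire the deadend rule for $H[\nu]$ I additionally need $\deadend{F[\nu]}{L}$, which follows from $\deadend{F[\rho]}{L}$ by Part~1 applied to the structurally smaller well-guarded $F$, and the analogous remark covers the time-out rule for $\psi_X$. For $H = \langle y|\equa\rangle$ the only applicable rule has premise $\langle \equa_y|\equa\rangle[\rho] \step{\alpha} R$ with strictly smaller derivation. The key verification is that $\langle \equa_y|\equa\rangle$ is still well-guarded with free variables in $W$: substituting $\langle z|\equa\rangle$ for each $z \in V_\equa$ neither introduces nor removes free variables from $W$, and every free occurrence of a variable $x \in W$ in $\langle \equa_y|\equa\rangle$ lies inside the same guarding sub-expression $a.G$ that already guarded it in the well-guarded $H$. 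IH then produces $H'$ with $R = H'[\rho]$ and $\langle \equa_y|\equa\rangle[\nu] \step{\alpha} H'[\nu]$, and one application of the recursion rule yields $H[\nu] \step{\alpha} H'[\nu]$, inheriting well-guardedness of $H'$ from IH.

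The main obstacle is the recursion case: since $\langle \equa_y|\equa\rangle$ is not structurally smaller than $\langle y|\equa\rangle$, the induction must proceed on derivation depth rather than on $H$, and one must carefully confirm that well-guardedness is preserved by the one-step unfolding $\langle y|\equa\rangle \mapsto \langle \equa_y|\equa\rangle$. A secondary subtlety is the interleaving of the auxiliary structural induction for Part~1 (invoked in the deadend cases) with the main derivation-depth induction for Part~2; both are well-founded because Part~1 is only ever invoked on sub-expressions strictly smaller than $H$.
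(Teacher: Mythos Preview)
Your case analysis for Part~2 matches the paper's proof: induction on the derivation of $H[\rho]\step{\alpha}R$, case split on the head operator of $H$, recursion handled by unfolding to $\langle\equa_y|\equa\rangle$ and observing that this is again well-guarded with free variables in $W$. That part is fine.

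The divergence is Part~1. The paper does not reprove it; it simply cites \cite{strongreactivebisimilarity} and then uses Part~1 freely inside the derivation induction for Part~2. You instead try to obtain Part~1 as a by-product of Part~2 via an interleaved structural/derivation induction. The scheme you sketch is not obviously well-founded, and your one-line justification (``Part~1 is only ever invoked on sub-expressions strictly smaller than $H$'') begs the question. Concretely: during the derivation induction for Part~2, recursion unfolding replaces the current expression $\langle y|\equa\rangle$ by $\langle\equa_y|\equa\rangle$, which is \emph{not} a sub-expression and may be strictly larger. When you subsequently reach some $\theta_L^U(F)$ inside this unfolded term and invoke Part~1 on $F$, the $F$ is a sub-expression of the \emph{current} (enlarged) expression, not of anything fixed---so nothing you have said bounds its size, and the mutual recursion between your ``structural induction for Part~1'' and the derivation induction for Part~2 could, on the face of it, cycle.

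What rescues the scheme is the \emph{validity} constraint on $\expr$ (Definition~\ref{def:ccsp syntax} and the paragraph after Figure~\ref{fig:ccsp semantics}): no argument $F$ of $\theta_L^U$ or $\psi_X$ may contain a free occurrence of a variable bound by an enclosing recursion. Hence the substitution $z\mapsto\langle z|\equa\rangle$ leaves every such $F$ untouched, and the set of arguments of $\theta_L^U/\psi_X$ reachable anywhere in the derivation induction (through arbitrarily many unfoldings) is exactly the finite set of such arguments appearing syntactically in the \emph{original} $H$---each a proper sub-expression of it. With this observation, an outer strong induction on the size of $H$ (for Part~1 and Part~2 jointly), with an inner derivation induction for Part~2, goes through. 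You never invoke validity, so as written your well-foundedness argument has a genuine gap.
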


\begin{proof}
    \newcommand{\spar}[1]{\mathbin{\|^{}_{#1}}}        
    \ref{init}.\ has been proven in \cite{strongreactivebisimilarity}. We obtain \ref{2}.\ by induction on the derivation of $H[\rho] \step\alpha R$, making a case distinction on the shape of $H$.

    Let $H=\alpha.G$, so that $H[\rho] = \alpha.G[\rho]$. Then $R = G[\rho]$ and $H[\nu] \step\alpha G[\nu]$. In case $\alpha=\tau$, also $G$ is well-guarded.

    The case $H=0$ cannot occur. Nor can the case $H=x\in V$, as $H$ is well-guarded.

    Let $H = H_1 \spar{S} H_2$, so that $H[\rho] = H_1[\rho]\spar{S} H_2[\rho]$. Note that $H_1$ and $H_2$ are well-guarded and have free variables in $W$ only. One possibility is that $a\notin S$, $H_1[\rho]\step\alpha R_1$ and $R= R_1 \spar{S} H_2[\rho]$. By induction, $R_1$ has the form $H'_1[\rho]$ for some term $H'_1\in\expr$ with free variables in $W$ only, and in case $\alpha=\tau$, also $H'_1$ is well-guarded. Moreover, $H_1[\nu] \step\alpha H'_1[\nu]$. Thus $R = (H'_1 \spar{S} H_2)[\rho]$, and $H':= H'_1 \spar{S} H_2$ has free variables in $W$ only. In case $\alpha=\tau$, $H$ is well-guarded. Moreover, $H[\nu] =  H_1[\nu]\spar{S} H_2[\nu] \step\alpha  H'_1[\nu]\spar{S} H_2[\nu] = H'[\nu]$.

    The other two cases for $\spar{S}$, and the cases for the operators $+$ and $\rename$, are equally trivial.

    Let $H= \theta_L^U(H^\dagger)$, so that $H[\rho] = \theta_L^U(H^\dagger[\rho])$. Note that $H^\dagger$ is well-guarded and has free variables in $W$ only. The case $\alpha = \tau$ is again trivial, so assume $\alpha\neq\tau$. Then {$H^\dagger[\rho] \step\alpha R$} and either $\alpha\in X$ or $\init{H^\dagger[\rho]} \cap (L\cup\{\tau\}) = \emptyset$. By induction, $R$ has the form $H'[\rho]$ for some term $H'\in\expr$ with free variables in $W$ only. Moreover, $H^\dagger[\nu] \step\alpha H'[\nu]$. Since $\init{H^\dagger[\rho]} = \init{(H^\dagger[\nu]}$ by Lemma~\ref{lem:guarded}.\ref{init}, either $\alpha\in X$ or $\init{H^\dagger[\nu]} \cap (L\cup\{\tau\}) = \emptyset$. Consequently, $H[\nu] = \theta_L^U(H^\dagger[\nu])\step\alpha H'[\nu]$.

    Let $H= \psi_X(H^\dagger)$, so that $H[\rho] = \psi_X(H^\dagger[\rho])$. Note that $H^\dagger$ is well-guarded and has free variables in $W$ only. The case $\alpha \in A\cup\{\tau\}$ is trivial, so assume $\alpha=\rt$. Then {$H^\dagger[\rho] \step\rt R^\dagger$} for some $R^\dagger$ such that $R=\theta_X(R^\dagger)$. Moreover, $H^\dagger[\rho] \cap (X\cup\{\tau\}) = \emptyset$. By induction, $R^\dagger$ has the form $H'[\rho]$ for some term $H'\in\expr$ with free variables in $W$ only. Moreover, $H^\dagger[\nu] \step\rt H'[\nu]$. Thus $R=(\theta_X(H'))[\rho]$ and $\theta_X(H')$ has free variables in $W$ only. Since $\init{H^\dagger[\rho]} = \init{(H^\dagger[\nu]}$ by Lemma~\ref{lem:guarded}.\ref{init}, $H^\dagger[\nu] \cap (X\cup\{\tau\}) = \emptyset$. Consequently, $H[\nu] = \psi_X(H^\dagger[\nu])\step\rt \theta_X(H'[\nu]) = (\theta_X(H'))[\nu]$.

    Finally, let $H = \langle x|\equa \rangle$, so that $H[\rho] = \langle x|\equa[\rho^\dagger]\rangle$, where $\rho^\dagger \in \closed^{W {\setminus} V_\equa}$ is the restriction of $\rho$ to $W {\setminus} V_\equa$. The transition $\langle\equa_x[\rho^\dagger]|\equa[\rho^\dagger]\rangle \step\alpha R$ is derivable through a subderivation of the one for $\langle x|\equa[\rho^\dagger]\rangle \step\alpha R$. Moreover, $\langle\equa_x[\rho^\dagger]|\equa[\rho^\dagger]\rangle = \langle\equa_x|\equa\rangle[\rho]$. So by induction, $R$ has the form $H'[\rho]$ for some term $H'\mathbin\in\expr$ with free variables in $W$ only, and $\langle\equa_x|\equa\rangle[\nu] \step\alpha H'[\nu]$. Moreover, in case $\alpha=\tau$, also $H'$ is well-guarded. Since $\langle\equa_x|\equa\rangle[\nu] = \langle\equa_x[\nu^\dagger]|\equa[\nu^\dagger]\rangle$, it follows that $H[\nu] = \langle x|\equa\rangle[\nu]= \langle x|\equa[\nu^\dagger]\rangle\step\alpha H'[\nu]$. 
\end{proof}

\begin{corollary} \label{cor:guarded path}
    Let $H \in \expr$ be well-guarded and have free variables from $W \subseteq V$ only, and let $\rho,\nu \in \closed^W$. If $H[\rho] \pathtau R \step{\alpha} S$ with $\alpha \in Act$ then there exists $H',H'' \in \expr$ with free variables in $W$ only such that $R = H'[\rho]$, $S = H'[\rho]$ and $H[\nu] \pathtau H'[\nu] \step{\alpha} H'[\nu]$.
\end{corollary}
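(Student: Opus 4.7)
The corollary is essentially an iterated version of Lemma~\ref{lem:guarded}.\ref{2}, so I would prove it by induction on the length $n$ of the $\tau$-path $H[\rho] \pathtau R$. The key invariant to maintain through the induction is well-guardedness: at every intermediate state of the $\tau$-path, the process is of the form $H_i[\rho]$ for some well-guarded $H_i\in\expr$ with free variables in $W$ only, and moreover $H[\nu] \pathtau H_i[\nu]$. Only once we reach the final step (the $\step{\alpha}$) will we drop the well-guardedness requirement, since Lemma~\ref{lem:guarded}.\ref{2} guarantees preservation of well-guardedness precisely for $\alpha=\tau$.

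In more detail, the base case $n=0$ is trivial: take $H' := H$, so $R = H[\rho]$; then apply Lemma~\ref{lem:guarded}.\ref{2} directly to the single transition $H[\rho]\step\alpha S$ to extract $H''\in\expr$ with free variables in $W$ only, $S=H''[\rho]$, and $H[\nu]\step\alpha H''[\nu]$. For the inductive step, suppose $H[\rho] \steptau T \pathtau R \step\alpha S$ where the inner $\pathtau$ has length $n{-}1$. Apply Lemma~\ref{lem:guarded}.\ref{2} with $\alpha=\tau$ to the first step: there is a well-guarded $H_1\in\expr$ with free variables in $W$ only such that $T = H_1[\rho]$ and $H[\nu] \steptau H_1[\nu]$. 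Now apply the induction hypothesis to $H_1[\rho] \pathtau R \step\alpha S$ to obtain $H', H''\in\expr$ with free variables in $W$ only such that $R = H'[\rho]$, $S = H''[\rho]$, and $H_1[\nu] \pathtau H'[\nu] \step\alpha H''[\nu]$. Concatenating the first $\tau$-step with this path yields $H[\nu] \pathtau H'[\nu] \step\alpha H''[\nu]$, as required.

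The only subtle point is the need for the ``in case $\alpha=\tau$, also $H'$ is well-guarded'' clause of Lemma~\ref{lem:guarded}.\ref{2}; without it the induction hypothesis would not be applicable at the next step because $H_1$ might fail to be well-guarded. Once this is in place, the proof is a routine induction with no further obstacles.
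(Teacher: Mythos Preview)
Your proposal is correct and is exactly the argument the paper has in mind: the corollary is stated without proof, immediately following Lemma~\ref{lem:guarded}, so the paper treats it as the obvious iteration of Lemma~\ref{lem:guarded}.\ref{2} along the $\tau$-path---which is precisely your induction on the length of $H[\rho]\pathtau R$, using the well-guardedness clause for $\alpha=\tau$ to keep the induction hypothesis alive. You have also silently corrected the paper's typos ($S=H''[\rho]$ rather than $S=H'[\rho]$, and $H''[\nu]$ at the end).
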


\begin{proof}[Proof of Proposition \ref{prop:rsp}]
    It suffices to prove the proposition when $\rho, \nu \in \closed^{V_\equa}$ and only variables of $V_\equa$ can occur in the expressions $\equa_x$ for $x \in V_\equa$. Indeed, the general case requires to prove that, for all $\sigma: V \rightarrow \closed$, $\rho[\sigma] \bisimrtbrc \nu[\sigma]$. Let $\hat{\sigma}: V\setminus V_\equa \rightarrow \closed$ be defined as $\forall x \in V \setminus V_\equa, \hat{\sigma}(x) = \sigma(x)$. Since $\rho \bisimrtbrc \equa[\rho]$, $\rho[\sigma] \bisimrtbrc \equa[\rho][\sigma] = \equa[\hat{\sigma}][\rho[\sigma]]$, therefore, proving the proposition with $\rho[\sigma]$, $\nu[\sigma]$ and $\equa[\hat{\sigma}]$ is sufficient.

    It also suffices to prove the proposition for the case that $\equa$ is manifestly well-guarded. Indeed, if $\equa$ is well-guarded, let $\equa'$ be the manifestly well-guarded specification into which $\equa$ can be converted. Since $\bisimrtbrc$ is a lean congruence, a solution to $\equa$ up to $\bisimrtbrc$ is a solution to $\equa'$ up to $\bisimrtbrc$.
    
    Let $\equa$ be a manifestly well-guarded recursive specification with free variables from $V_\equa$ only, and $\rho, \nu \in \closed^{V_\equa}$ two of its solutions up to $\bisimrtbrc$. We are going to show that the symmetric closure of 
    \begin{align*}
        \tbisim := \{(H[\equa[\rho]],H[\equa[\nu]]) \mid H \in \expr \mbox{ is without $\tau_I$ and with free variables from } V_\equa \mbox{ only}\}
    \end{align*}
    is a \tb time-out bisimulation up to $\bisimrtbrc$. Here $\equa[\rho] := \{x = \equa_x[\rho] \mid x \in V_\equa\}\in\closed^{V_\equa}$ is employed as a substitution. Let $P,Q \in \closed$ such that $P \tbisim Q$. Then there exists $H \in \expr$ with free variables from $V_\equa$ only such that $P = H[\equa[\rho]]$ and $Q = H[\equa[\nu]]$, the other case being symmetrical. Note that $H[\equa[\rho]] = H[\equa][\rho]$. Since $H$ and $\equa$ have free variables from $V_\equa$ only, so does $H[\equa]$. Moreover, since $\equa$ is manifestly well-guarded, $H[\equa]$ is well-guarded.
    \begin{enumerate}
        \item Let $H[\equa[\rho]] \pathtau P_1 \step{\alpha} P_2$. By Corollary \ref{cor:guarded path}, there exists $H_1, H_2 \in \expr$ with free variables from $V_\equa$ only such that $P_1 = H_1[\rho]$, $P_2 = H_2[\rho]$ and $H[\equa][\nu] \pathtau H_1[\nu] \step{\alpha} H_2[\nu]$. Furthermore, since $\bisimrtbrc$ is a congruence and $\rho$ and $\nu$ are solutions of $\equa$ up to $\bisimrtbrc$\,, $H_1[\rho] \bisimrtbrc H_1[\equa[\rho]]$, $H_1[\nu] \bisimrtbrc H_1[\equa[\nu]]$, $H_2[\rho] \bisimrtbrc H_2[\equa[\rho]]$ and $H_2[\nu] \bisimrtbrc H_2[\equa[\nu]]$. Therefore, by definition of $\tbisim$, $H_1[\rho] \upto[\bisimrtbrc] H_1[\nu]$ and $H_2[\rho] \upto[\bisimrtbrc] H_2[\nu]$.
        \item Let $H[\equa[\rho]] \pathtau P_1 \step{\rt} P_2$ with $\deadend{P_1}{X}$. By Corollary \ref{cor:guarded path}, there exists $H_1,H_2 \in \expr$ with free variables from $V_\equa$ only such that $P_1 = H_1[\rho]$, $P_2 = H_2[\rho]$ and $H[\equa][\nu] \pathtau H_1[\nu] \step{\rt} H_2[\nu]$. Since $H_1$ is well-guarded, by Lemma \ref{lem:guarded}, $\deadend{H_1[\nu]}{X}$. Furthermore, since $\bisimrtbrc$ is a congruence and $\rho$ and $\nu$ are solutions of $\equa$ up to $\bisimrtbrc$\,, $H_2[\rho] \bisimrtbrc H_2[\equa[\rho]]$ and $H_2[\nu] \bisimrtbrc H_2[\equa[\nu]]$; therefore, by definition of $\tbisim$, $\theta_X(H_2[\rho]) \upto[\bisimrtbrc] \theta_X(H_2[\nu])$ (notice that $\theta_X(H_2[\equa][\rho]) = \theta_X(H_2)[\equa][\rho]$).
        \item Let $H[\equa[\rho]] \pathtau P_0 \nsteptau$. By Lemma \ref{lem:guarded}, there exists a well-guarded $H_1\in \expr$ with free variables from $V_\equa$ only such that $P_0 = H_0[\rho]$ and $H[\equa][\nu] \pathtau H_0[\nu]$. Since $H_0$ is well-guarded and $P_0 \nsteptau$, according to Lemma \ref{lem:guarded}.1, $H_0[\nu] \nsteptau$.
    \end{enumerate}

\noindent
    Next, we will prove that $\tbisim$ is a rooted \tb time-out bisimulation. Let $P,Q \in \closed$ such that $P \tbisim Q$. Then there exists $H \in \expr$ with free variables from $V_\equa$ only such that $P = H[\equa[\rho]]$ and $Q = H[\equa[\nu]]$, the other case being symmetrical. Note that $H[\equa[\rho]] = H[\equa][\rho]$. Since $H$ and $\equa$ have free variables from $V_\equa$ only, so does $H[\equa]$. Moreover, since $\equa$ is manifestly well-guarded, $H[\equa]$ is well-guarded.
    \begin{enumerate}
        \item Let $P \step{\alpha} P'$. By Lemma \ref{lem:guarded}, there exists $H' \in \expr$ with free variables from $V_\equa$ only such that $P' = H'[\rho]$ and $Q = H[\equa][\nu] \step{\alpha} H'[\nu]$. Furthermore, since $\bisimrtbrc$ is a congruence and $\rho$ and $\nu$ are solutions of $\equa$ up to $\bisimrtbrc\,$, $H'[\rho] \bisimrtbrc H'[\equa[\rho]]$ and $H'[\nu] \bisimrtbrc H'[\equa[\nu]]$. Therefore, by definition of $\tbisim$, $H'[\rho] \upto[\bisimrtbrc] H'[\nu]$. But, $\tbisim$ is a \tb time-out bisimulation up to $\bisimrtbrc\,$, thus, by Proposition~\ref{prop:up to b}, $H'[\rho] \bisimrtbrc H'[\nu]$.
        \item Let $\deadend{P}{X}$ and $P \step{\rt} P'$. By Lemma \ref{lem:guarded}, there exists $H' \in \expr$ with free variables from $V_\equa$ only such that $P' = H'[\rho]$ and $Q=H[\equa][\nu] \step{\rt} H'[\nu]$. Exactly as above, not even using $\deadend{P}{X}$, this implies $H'[\rho] \bisimrtbrc H'[\nu]$. Thus, since $\bisimrtbrc$ is a congruence, $\theta_X(H'[\rho]) \bisimrtbrc \theta_X(H'[\nu])$.
    \end{enumerate}
    By considering $H = x$ with $x \in V_\equa$, this yields $\equa_x[\rho] \bisimrtbrc \equa_x[\nu]$ and so $\rho(x) \bisimrtbrc \equa_x[\rho] \bisimrtbrc \equa_x[\nu] \bisimrtbrc \nu(x)$. Consequently, $\rho \bisimrtbrc \nu$.
\end{proof}

\section{Soundness of the Reactive Approximation Axiom} \label{app:RA}

\begin{lemma}\label{lem:theta twice}
    $\forall P \in \closed, \theta_X(P) \bisim \theta_X(\theta_X(P))$
\end{lemma}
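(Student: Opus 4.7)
The plan is to exhibit an explicit strong bisimulation containing the pair $(\theta_X(P),\theta_X(\theta_X(P)))$. I would take
\[
\R := \{(\theta_X(P),\theta_X(\theta_X(P))),\ (\theta_X(\theta_X(P)),\theta_X(P)) \mid P\in\closed,\ X\subseteq A\} \cup \{(R,R) \mid R\in\closed\}
\]
and check the transfer condition directly, ruling out the extraneous candidate transitions of $\theta_X(\theta_X(P))$ that do not correspond to transitions of $\theta_X(P)$.

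The main preparatory observation is that $\init{\theta_X(P)}\cap(X\cup\{\tau\}) = \init{P}\cap(X\cup\{\tau\})$, from which one deduces the key fact $\deadend{\theta_X(P)}{X} \Leftrightarrow \deadend{P}{X}$. This follows by a small case analysis on the three rules for $\theta^U_L$ applied with $L=U=X$: rule 1 gives $\tau\in\init{\theta_X(P)}$ iff $\tau\in\init{P}$; rule 2 gives $a\in\init{\theta_X(P)}$ for $a\in X$ iff $a\in\init{P}$; and rule 3 only contributes to actions outside $X\cup\{\tau\}$ once $\deadend{P}{X}$ holds.

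With this in hand I would enumerate the transitions available to $\theta_X(P)$ according to the three rules, and for each show that $\theta_X(\theta_X(P))$ can match it using the same rule on the outer $\theta_X$, landing in a state again related by $\R$ (a $\tau$-step yields $\theta_X(\theta_X(P'))$, matched against $\theta_X(P')$; a visible or time-out step lands on a bare $P'$, matched by the identity component of $\R$). Conversely, for each transition of $\theta_X(\theta_X(P))$ I would trace back through both applications of $\theta_X$'s semantics: the outer rule 1 forces an inner rule-1 transition, so the origin is a $\tau$-step of $P$; the outer rule 2 with $a\in X$ forces an inner rule-2 transition, because the alternative inner rule 3 would need $\deadend{P}{X}$, incompatible with $a\in X\cap\init{P}$; and the outer rule 3 requires $\deadend{\theta_X(P)}{X}$, which by the key equivalence above is the same as $\deadend{P}{X}$, hence forces an inner rule-3 transition, so the origin is again a transition of $P$ satisfying $\deadend{P}{X}$. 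In every case the matching transition on $\theta_X(P)$ is apparent, and the target pair lies in $\R$. No step here is subtle; the only mildly delicate point is correctly eliminating the spurious inner rule-3 branch in the middle case, which the $\init$-equivalence settles at once.
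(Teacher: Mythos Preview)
Your proposal is correct and is precisely the natural expansion of the paper's proof, which reads in its entirety ``Trivial when considering the semantics of $\theta_X$.'' You have spelled out the bisimulation relation and the transfer checks that the paper leaves implicit; the key observation $\deadend{\theta_X(P)}{X} \Leftrightarrow \deadend{P}{X}$ is exactly what makes the case analysis go through, and your handling of the rule-3 branches is accurate.
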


\begin{proof}
    Trivial when considering the semantics of $\theta_X$.
\end{proof}

\begin{proof}[Proof of Proposition \ref{prop:soundness}]
    We show that $\tbisim := \{(P,Q),(Q,P) \mid \forall X \subseteq A, \; \psi_X(P) \bisimrtbrc \psi_X(Q)\}$ is a rooted \tb time-out bisimulation. Let $P,Q \in \closed$ such that $P \tbisim Q$. Thus, $\forall X \subseteq A, \psi_X(P) \bisimrtbrc \psi_X(Q)$.
    \begin{enumerate}
        \item If $P \step{\alpha} P'$ with $\alpha \in A_\tau$ then, by the semantics, $\psi_A(P) \step{\alpha} P'$. Since $\psi_A(P) \bisimrtbrc \psi_A(Q)$, there exists a transition $\psi_A(Q) \step{\alpha} Q'$ such that $P' \bisimtbrc Q'$. By the semantics, $Q \step{\alpha} Q'$.
        \item If $\deadend{P}{X}$ and $P \step{\rt} P'$ then, by the semantics, $\psi_X(P) \step{\rt} \theta_X(P')$. Since $\psi_X(P) \bisimrtbrc \psi_X(Q)$, there exists a transition $\psi_X(Q) \step{\rt} Q^\ddag$ with $\theta_X(\theta_X(P')) \bisimtbrc \theta_X(Q^\ddag)$. By the semantics, $Q^\ddag = \theta_X(Q')$ and $Q \step{\rt} Q'$. By Lemma~\ref{lem:theta twice}, $\theta_X(P') \bisim \theta_X(\theta_X(P')) \bisimtbrc \theta_X(\theta_X(Q')) \bisim \theta_X(Q')$.
    \popQED
    \end{enumerate}
\end{proof}

\section{Proofs of Completeness for Finite Processes} \label{app:completeness finite}

\begin{definition}\rm
    Call a process $P$ \emph{$\tau$-stable} if, for all transitions $P\steptau P^\dagger$, $P \not\bisimtbrc P^\dagger$.
\end{definition}

\begin{lemma}\label{lem:brb-stable}
If $P$ and $Q$ are $\tau$-stable and $P \bisimtbrc Q$ then $P \bisimrtbrc Q$.
\end{lemma}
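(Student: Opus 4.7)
The plan is to exhibit a rooted \tb reactive bisimulation containing $(P,Q)$. Specifically, I will take
\[
\R' := \{(P,Q),(Q,P)\} \cup \{(P,X,Q),(Q,X,P) \mid X \subseteq A\}.
\]
By construction this immediately satisfies Clauses 1.b and 2.c of Definition~\ref{def:rooted intuitive}, so the real work lies in verifying the action-matching clauses 1.a, 2.a, 2.b and 2.d.

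The key technical tool will be a \emph{collapsing lemma}: if $P \bisimtbrc Q$ and $Q \pathtau Q_1$ with $P \bisimtbrc Q_1$, then $Q = Q_1$. To establish this I would invoke the generalised form of the Stuttering Lemma, which is what the proof of Lemma~\ref{lem:stuttering} actually delivers (the relation $\R'$ constructed there relates any intermediate state on a $\pathtau$-path whose endpoints are both bisimilar to a fixed process). Applied to the decomposition $Q = R^0 \steptau \cdots \steptau R^n = Q_1$, it forces $R^j \bisimtbrc P$ for every intermediate $R^j$; were $n \geq 1$, the step $Q \steptau R^1$ together with $Q \bisimtbrc P \bisimtbrc R^1$ would violate $\tau$-stability of $Q$.

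Equipped with this tool, Clause 1.a is routine. Given $P \step{\alpha} P'$, the unrooted Clause 1.a yields a path $Q \pathtau Q_1 \step{\opt{\alpha}} Q_2$ with $P \bisimtbrc Q_1$ and $P' \bisimtbrc Q_2$; the collapsing lemma gives $Q = Q_1$. When $\alpha = \tau$ I still need to rule out $Q_1 = Q_2$, which would force $P \bisimtbrc Q_2 \bisimtbrc P'$ and hence $P \bisimtbrc P'$, contradicting $\tau$-stability of $P$. The triple Clauses 2.a and 2.b are then disposed of by the same trick, since the conclusion $P' \bisimtbrc Q_2$ obtained through the \emph{pair} bisimilarity $P \bisimtbrc Q$ is strictly stronger than the $P' \bisimtbrc[X] Q_2$ required by 2.a, and it gives what 2.b demands directly.

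The one step where $\tau$-stability must be exploited indirectly, and thus the main subtlety, is Clause 2.d: here I only know $P$ and $Q$ are $\tau$-stable with respect to $\bisimtbrc$, not with respect to $\bisimtbrc[X]$. Assume $\deadend{P}{X}$ and $P \step{\rt} P'$, and let $\R$ witness $P \bisimtbrc Q$. By Clause 1.b, $\R(P,X,Q)$, and Lemma~\ref{lem:obvious}.4 supplies a path $Q \pathtau Q_0$ with $\R(P,Q_0)$, $Q_0 \nsteptau$, $\init{Q_0} = \init{P}$. The collapsing lemma, applied to the pair-level bisimilarities $\R(P,Q)$ and $\R(P,Q_0)$, now forces $Q = Q_0$; in particular $Q \nsteptau$, so $\init Q = \init P$ and $\deadend{Q}{X}$. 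Unrooted Clause 2.d then produces $Q \pathtau Q_1 \step{\rt} Q_2$ with $\R(P',X,Q_2)$, and $Q \nsteptau$ collapses the leading $\tau$-path, yielding the required rooted transition $Q \step{\rt} Q_2$ with $P' \bisimtbrc[X] Q_2$.
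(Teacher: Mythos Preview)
Your proof is correct and follows essentially the same approach as the paper's: both exploit $\tau$-stability to collapse the leading $\pathtau$-segment produced by the unrooted clauses, and both rule out the degenerate $\opt{\tau}$ case via $\tau$-stability of $P$. The only differences are presentational: the paper verifies the rooted \emph{time-out} bisimulation conditions (Definition~\ref{def:rooted time-out bisim}, invoking Proposition~\ref{prop:time-out bisim}.3), which spares it the triple clauses you handle explicitly, and for the $\rt$-case it reaches $P \bisimtbrc Q_1$ via Lemma~\ref{lem:obvious}.1 and~\ref{lem:obvious}.3 applied along the matching path rather than via Lemma~\ref{lem:obvious}.4 first; conversely, you are more careful than the paper in spelling out (via the Stuttering Lemma) why $Q \pathtau Q_1$ with $Q \bisimtbrc Q_1$ forces $Q = Q_1$, a step the paper asserts directly from $\tau$-stability.
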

\begin{proof}
    Assume that $P$ and $Q$ are $\tau$-stable and $P \bisimtbrc Q$. If $P \step{\alpha} P'$ with $\alpha \in A_\tau$, then there is a path $Q \pathtau Q_1 \step{\opt{\alpha}} Q_2$ with $P \bisimtbrc Q_1$ and $P' \bisimtbrc Q_2$. By symmetry and transitivity of $\bisimtbrc$ we have $Q_1 \bisimtbrc Q$, so by the $\tau$-stability of $Q$ it follows that $Q_1=Q$. Moreover, if $\alpha=\tau$ and $Q_2=Q_1$ then $P'\bisimtbrc P$, contradicting the $\tau$-stability of $P$. Thus $Q \step\alpha Q_2$. This argument also yields that $\init{P} = \init{Q}$.

    Furthermore, if $\init{P}\cap(X\cup\{\tau\}) = \emptyset$ and $P \step{\rt} P'$ then there is a path $Q \pathtau Q_1 \step{\rt} Q_2$ with $\theta_X(P') \bisimtbrc \theta_X(Q_2)$. By Lemma~\ref{lem:obvious}.1, Lemma~\ref{lem:obvious}.3 and the transitivity of $\bisimtbrc$\,, we have $P \bisimtbrc[X] Q_1$, $P \bisimtbrc Q_1$ and $Q \bisimtbrc Q_1$, respectively, so the $\tau$-stability of $Q$ yields $Q_1=Q$. Hence $Q \step{\rt} Q_2$. So indeed $P \bisimrtbrc Q$.
\end{proof}

\begin{proof}[Proof of Proposition \ref{prop:collapse}]
    We define the \textit{length} of a path $P_0 \step{\alpha_1} P_1 \step{\alpha_2} ... \step{\alpha_n} P_n$ to be $n$ and the \textit{depth} of a process $P$, denoted $d(P)$, to be the length of the longest path starting from $P$. It is well defined because $P$ is a recursion-free $\ccsp$ process. Note that $d(\theta_X(P))\leq d(P)$.

    $(\bisimtbrc)$ We will proceed by induction on $\max(d(P), d(Q))$. Let $n \mathbin\in\nat$ and suppose that the property holds for any recursion-free $\ccsp$ processes $P, Q$ such that $\max(d(P), d(Q)) < n$. Let $P,Q$ be two recursion-free $\ccsp$ processes such that $\max(d(P), d(Q)) = n$ and $P \bisimtbrc Q$.
    
    Since $P$ is recursion-free, there exists a path $P \pathtau P_0$ such that $P \bisimtbrc P_0$ and $P_0$ is $\tau$-stable. We are going to show that, for all $\alpha \in Act$, $Ax_r \vdash \alpha.\hat{P} = \alpha.\hat{P}_0$. If $P$ is $\tau$-stable then $P = P_0$ and this is trivial. Thus, suppose that $P$ is not $\tau$-stable, i.e., there exists $P \steptau P'$ with $P \bisimtbrc P'$. Then, as $P_0$ is $\tau$-stable, $P \ne P_0$ and so $d(P_0) < d(P)$.

    Let $I := \{(\alpha,P') \mid P \step{\alpha} P' \wedge \alpha \in A_\tau \wedge (\alpha \ne \tau \vee P \not\bisimtbrc P')\}$, listing the outgoing transitions of $P$ not labelled by $\rt$ and not elidable w.r.t.\ $\bisimtbrc\,$. Let $(\alpha,P') \in I$. Since $P \bisimtbrc P_0$, there exists a path $P_0 \pathtau P_1 \step{\opt{\alpha}} P_2$ such that $P \bisimtbrc P_1$ and $P' \bisimtbrc P_2$. Since $P_0$ is $\tau$-stable and $P_1 \bisimtbrc P \bisimtbrc P_0$, $P_0 = P_1$ and $P_0 \step{\opt{\alpha}} P_2$. If $\alpha = \tau$ then $P_0 \bisimtbrc P \not\bisimtbrc P' \bisimtbrc P_2$ so $P_0 \ne P_2$ and $P_0 \step{\alpha} P_2$. Since $\max(d(P_2),d(P')) < d(P)$, by induction, $Ax_r \vdash \alpha.\hat{P}' = \alpha.\hat{P}_2$, so by Lemma~\ref{lem:head-normal form} $Ax_r \vdash \alpha.P' = \alpha.P_2$. As a result, $Ax_r \vdash \hat{P}_0 = \sum_{(\alpha,P')\in I}\alpha.{P'} + \hat{P}_0$.

    Let $J := \{(\tau,P') \mid P \steptau P' \wedge P \bisimtbrc P'\}$, listing the outgoing $\tau$-transitions of $P$ elidable w.r.t.\ $\bisimtbrc\,$. So $J\neq\emptyset$. Let $(\tau,P') \in J$. Since $P' \bisimtbrc P \bisimtbrc P_0$ and $\max(d(P'),d(P_0)) < d(P)$, by induction, $Ax_r \vdash \tau.\hat{P'} = \tau.\hat{P_0}$, so by Lemma~\ref{lem:head-normal form} $Ax_r \vdash \tau.P' = \tau.\hat{P_0}$.

    Now, using $\hyperlink{Lt}{\mbox{\bf L}\tau}$, the following equality can be derived from $Ax_r$, for all $\beta \in Act$.
    \begin{align*}
        Ax_r \vdash \beta.\hat{P} = &~ \beta.(\sum_{\{(\alpha,P') \mid P\step{\alpha} P' \wedge \alpha \in A_\tau\}}\alpha.{P'}) = \beta.(\sum_{(\tau,P') \in J}\tau.{P}' + \sum_{(\alpha,P') \in I}\alpha.{P}') \\
        = &~ \beta.(\tau.\hat{P}_0 + \sum_{(\alpha,P') \in I}\alpha.{P'}) = \beta.(\tau.(\hat{P}_0 + \sum_{(\alpha,P') \in I}\alpha.{P'}) + \sum_{(\alpha,P') \in I}\alpha.{P'}) \\
        = &~ \beta.(\hat{P}_0 + \sum_{(\alpha,P') \in I}\alpha.{P'}) = \beta.\hat{P}_0
    \end{align*}
    
    Likewise, since $Q$ is recursion-free, a similar $\tau$-stable $Q_0$ can be defined. By the same reasoning, it can be proved that, for all $\alpha \in Act$, $Ax_r \vdash \alpha.\hat{Q} = \alpha.\hat{Q}_0$. Since $P_0$ and $Q_0$ are $\tau$-stable and $P_0 \bisimtbrc P \bisimtbrc Q \bisimtbrc Q_0$, $P_0 \bisimrtbrc Q_0$ according to Lemma \ref{lem:brb-stable}. 
    
    To end the proof, it suffices to show that, for all $\alpha \in Act$, $Ax_r \vdash \alpha.\hat{P}_0 = \alpha.\hat{Q}_0$, but \hypertarget{here}{we are going to prove the stronger statement $Ax_r \vdash \hat{P}_0 = \hat{Q}_0$}. Using the reactive approximation axiom, it suffices to prove that, for all $X \subseteq A$, $Ax_r \vdash \psi_X(P_0) = \psi_X(Q_0)$.

    Let $(\alpha,P_0') \in \{(\alpha,P_0') \mid P_0 \step{\alpha} P_0' \wedge \alpha\ne\rt\}$. Since $P_0 \bisimrtbrc Q_0$, there exists a transition $Q_0 \step{\alpha} Q_0'$ such that $P_0' \bisimtbrc Q_0'$. By induction, $Ax_r \vdash \alpha.\hat{P}_0' = \alpha.\hat{Q}_0'$.
    
    Let $X \subseteq A$ and $(\rt,P_0')$ such that $\deadend{P_0}{X}$ and $P_0 \step{\rt} P_0'$. Since $P_0 \bisimrtbrc Q_0$, there exists a transition $Q_0 \step{\rt} Q_0'$ such that $\theta_X(P_0') \bisimtbrc \theta_X(Q_0')$. By induction, $Ax_r \vdash \rt.\widehat{\theta_X(P_0')} = \rt.\widehat{\theta_X(Q_0')}$.

    Let $X \subseteq A$. If $\init{P_0}\cap(X\cup\{\tau\}) \ne \emptyset$ then $\init{Q_0}\cap(X\cup\{\tau\}) \ne \emptyset$ and, using Lemma~\ref{lem:head-normal form},
    \begin{align*}
        Ax_r \vdash \psi_X(Q_0) = & ~\sum_{\{(\alpha,Q_0') \mid Q_0 \step{\alpha} Q_0' \wedge \alpha \ne \rt\}}\alpha.Q_0' \\
        = & ~\sum_{\{(\alpha,Q_0') \mid Q_0 \step{\alpha} Q_0' \wedge \alpha \ne\rt\}}\alpha.Q_0' + \sum_{(\alpha,P_0') \in \{(\alpha,P_0') \mid P_0 \step{\alpha} P_0' \wedge \alpha\ne\rt\}}\alpha.P_0' \\
        = & ~\psi_X(P_0 + Q_0)
    \end{align*}
    If $\deadend{P_0}{X}$ then $\init{Q_0}\cap(X\cup\{\tau\}) = \emptyset$ and
    \begin{align*}
        Ax_r \vdash \psi_X(Q_0) = &~ \sum_{\{(\alpha,Q_0') \mid Q_0 \step{\alpha} Q_0' \wedge \alpha \ne \rt\}}\alpha.Q_0' + \sum_{\{(t,Q_0') \mid Q_0 \step{\rt} Q_0'\}}\rt.\theta_X(Q_0') \\
        = &~ \sum_{\{(\alpha,Q_0') \mid Q_0 \step{\alpha} Q_0' \wedge \alpha \ne\rt\}}\alpha.Q_0' + \sum_{(\alpha,P_0') \in \{(\alpha,P_0') \mid P_0 \step{\alpha} P_0' \wedge \alpha\ne\rt\}}\alpha.P_0' \\
        &+ \sum_{\{(\rt,Q_0') \mid Q_0 \step{\rt} Q_0'\}}\rt.\theta_X(Q_0') + \sum_{\{(\rt,P_0') \mid P_0 \step{\rt} P_0'\}}\rt.\theta_X(P_0') \\
        = &~ \psi_X(P_0 + Q_0)
    \end{align*}
    As a result, for all $X \subseteq A$, $Ax_r \vdash \psi_X(Q_0) = \psi_X(P_0+Q_0)$, and so,
    $Ax_r \vdash Q_0 = P_0+Q_0$. Symmetrically, $Ax_r \vdash P_0 = P_0+Q_0$. Therefore, $Ax_r \vdash P_0=Q_0$.

\bigskip

    $(\bisimb)$ We will proceed by induction on $\max(d(P), d(Q))$. Let $n \mathbin\in\nat$ and suppose that the property holds for any recursion-free $\ccsp$ processes $P, Q$ such that $\max(d(P), d(Q)) < n$. Let $P,Q$ be two recursion-free $\ccsp$ processes such that $\max(d(P), d(Q)) = n$ and $P \bisimb Q$.

    Since $P$ is recursion-free, there exists a path $P \pathtau P_0$ such that $P \bisimb P_0$ and, for all $P_0 \steptau P'$, $P_0 \not\bisimb P'$. We are going to show that, for all $\alpha \in Act$, $Ax \vdash \alpha.\hat{P} = \alpha.\hat{P_0}$. If for all $P \steptau P'$, $P \not\bisimb P'$ then $P = P_0$ and it is trivial. Thus, suppose there exists $P \steptau P'$ such that $P \bisimb P'$. Then $P_0 \ne P$ and $d(P_0) < d(P)$.

    Let $J := \{(\tau,P') \mid P \steptau P' \wedge P \bisimb P'\}$, listing the outgoing $\tau$-transitions of $P$ that can be elided w.r.t.\ $\bisimb\,$. Let $(\tau,P') \in J$. Since $P' \bisimb P \bisimb P_0$ and $\max(d(P'),d(P_0)) < d(P)$, by induction, $Ax \vdash \tau.\hat{P'} = \tau.\hat{P}_0$.
    
    Let $I := \{(\alpha,P') \mid P \step{\alpha} P'\} \setminus J$, listing the outgoing transitions of $P$ that cannot be elided. Let $(\alpha,P') \in I$. Since $P \bisimb P_0$, there exists a path $P_0 \pathtau P_1 \step{\opt{\alpha}} P_2$ such that $P \bisimb P_1$ and $P' \bisimb P_2$. Thus, $P_1 \bisimb P \bisimb P_0$, but, for all $P_0 \steptau P^\dag$, $P_0 \,\not\!\bisimb P^\dag$, so $P_0 = P_1$ and $P \step{\opt{\alpha}} P_2$. Since $(\alpha,P') \not\in J$, $\alpha \in A$ or $P_0 \bisimb P \not\bisimb P' \bisimb P_2$ so $P_0 \step{\alpha} P_2$ and $\max(d(P_2),d(P')) < d(P)$. Thus, by induction, $Ax \vdash \alpha.\hat{P'} = \alpha.\hat{P}_2$. As a result, $Ax \vdash \hat{P}_0 + \sum_{(\alpha,P') \in I}\alpha.\hat{P'} = \hat{P}_0$.

    Since there exists $P \steptau P'$ with $P \bisimb P'$, $J \ne \emptyset$.
    \begin{align*}
        Ax \vdash \alpha.\hat{P} = &~ \alpha.(\sum_{(\tau,P') \in J}\tau.\hat{P}' + \sum_{(\alpha,P') \in I}\alpha.\hat{P}') = \alpha.(\tau.\hat{P}_0 + \!\sum_{(\alpha,P') \in I}\!\alpha.\hat{P}') \\
        = &~ \alpha.(\tau.(\hat{P}_0 + \sum_{(\alpha,P') \in I}\alpha.\hat{P}') + \sum_{(\alpha,P') \in I}\alpha.\hat{P}') = \alpha.\hat{P}_0 
    \end{align*}
    Similarly, since $Q$ is recursion-free, there exists a recursion-free $\ccsp$ process $Q_0$ such that $Q \pathtau Q_0$, $Q \bisimb Q_0$ and, for all $Q_0 \steptau Q^\dag$, $\neg(Q_0 \bisimb Q^\dag)$. Moreover, for all $\alpha \in Act$, $Ax \vdash \alpha.\hat{Q} = \alpha.\hat{Q}_0$. Notice that $P_0 \bisimb P \bisimb Q \bisimb Q_0$ and, since, for all $Q_0 \steptau Q^\dag$, $\neg(Q_0 \bisimb Q^\dag)$ and, for all $P_0 \steptau P^\dag$, $\neg(P_0 \bisimb P^\dag)$, $P_0 \bisimrb Q_0$.

    Let $(\alpha,P_0')$ such that $P_0 \step{\alpha} P_0'$. Since $P_0 \bisimrb Q_0$, there exists a path $Q_0 \step{\alpha} Q_2$ such that $P_0' \bisimb Q_2$. Since $\max(d(P_0'),d(Q_2)) < n$, by induction, $Ax \vdash \alpha.P_0' = \alpha.Q_2$. As a result, $Ax \vdash \hat{P}_0 + \hat{Q}_0 = \hat{Q}_0$. Symmetrically, $Ax \vdash \hat{P}_0 + \hat{Q}_0 = \hat{P}_0$, and so, $Ax \vdash \hat{P}_0 = \hat{Q}_0$. Finally, for all $\alpha \in Act$, $Ax \vdash \alpha.\hat{P} = \alpha.\hat{Q}$.
\end{proof}

\begin{proof}[Proof of Theorem \ref{thm:completeness finite}]
    Let $P,Q \in \closed$ be two recursion-free $\ccsp$ processes. Let $P \bisimrtbrc Q$. $P$ and $Q$ can be equated in the same manner as \hyperlink{here}{$P_0$ and $Q_0$} in the proof of Proposition \ref{prop:collapse}. 

    Suppose that $P \bisimrb Q$. Let $(\alpha,P')$ such that $P \step{\alpha} P'$ with $\alpha \in Act$. Since $P \bisimrb Q$, there exists a transition $Q \step{\alpha} Q'$ such that $P' \bisimb Q'$. According to the previous proposition, $Ax \vdash \alpha.P' = \alpha.Q'$, thus, 
    \begin{align*}
        Ax \vdash Q = \sum_{\{(\alpha,Q') \mid Q \step{\alpha} Q'\}}\alpha.Q' = \sum_{\{(\alpha,Q') \mid Q \step{\alpha} Q'\}}\alpha.Q' + \sum_{\{(\alpha,P') \mid P \step{\alpha} P'\}}\alpha.P' = Q + P
    \end{align*}
    As a result, $Ax \vdash Q = P+Q$. Symmetrically, $Ax \vdash P = P+Q$. Therefore, $Ax \vdash P=Q$.
\end{proof}

\section{Proof of Completeness by Equation Merging} \label{app:completeness}

\begin{proof}[Proof of Theorem \ref{thm:completeness}]
    Let $E_0$ and $F_0$ two strongly guarded $\ccsp$ processes such that $E_0 \bisimrb F_0$. We are going to build a recursive specification $\equa$ such that $E_0$ and $F_0$ will be components of solutions of $\equa$ in the same variable. Let $\mathcal{E}_{E_0}$ (resp.\ $\mathcal{E}_{F_0}$) be the set of reachable expressions from $E_0$ (resp.\ $F_0$). Let $V_\equa$ be a set of fresh variables $\{x_{EF} \mid (E,F) \in \mathcal{E}_{E_0}\times\mathcal{E}_{F_0} \wedge E \bisimb F\}$. We denote $x_0 = x_{E_0F_0} \in V_\equa$ and we define the following set of equations $\equa$, for all $x_{EF} \in V_\equa$, with $\alpha\in A \cup\{\tau,\rt\}$.
    \begin{align*}
        \equa_{x_{EF}} := & \sum_{E \step{\alpha} E', F \step{\alpha} F', E' \bisimb F'}\hspace{-8pt}\alpha.x_{E'F'}  \\
        &+ \hspace{-8pt}\sum_{E \steptau E', E' \bisimb F, x_{EF} \neq x_0}\hspace{-8pt}\tau.x_{E'F} + \hspace{-8pt}\sum_{F \steptau F', E \bisimb F', x_{EF} \neq x_0}\hspace{-8pt}\tau.x_{EF'}
    \end{align*}
    Note that $\equa$ is well-guarded since $E_0$ and $F_0$ are strongly guarded $\ccsp$ processes. For $x_{EF} \in V_\equa$, we define $H_{EF}, G_{EF} \in \expr$ such that 
    \begin{align*}
        H_{EF} := & \sum_{E \step{\alpha} E', F \step{\alpha} F', E' \bisimb F'}\hspace{-8pt}\alpha.E' + \hspace{-8pt}\sum_{E \steptau E', E' \bisimb F, x_{EF} \neq x_0}\hspace{-8pt}\tau.E' \\
        G_{EF} := &  \begin{cases}  
                        H_{EF} + \tau.E & \mbox{if }x_0 \mathbin{\neq} x_{EF} \mbox{, } \exists F \steptau F', E \bisimb F'  \\
                        E & \mbox{otherwise}
                    \end{cases}
    \end{align*}
    According to Lemma~\ref{lem:head-normal form}, for all $(E,F) \in \mathcal{E}_{E_0}\times\mathcal{E}_{F_0}$, $Ax^\infty \vdash E + H_{EF} = \widehat{(E + H_{EF})} = \hat{E} = E$. Let $(E,F) \in \mathcal{E}_{E_0}\times\mathcal{E}_{F_0}$. If $x_0 \ne x_{EF}$ and $\exists F\steptau F', E \bisimb F'$ then, for all $\alpha \in Act$, $Ax^\infty \vdash \alpha.G_{EF} = \alpha.(H_{EF} + \tau.E) = \alpha.E$ using the branching axiom. In any case, for all $\alpha \in Act$, $Ax^\infty \vdash \alpha.G_{EF} = \alpha.E$. \hfill $(*)$
    
    If we prove that the family $(G_{EF})_{(E,F) \in \mathcal{E}_{E_0}\times\mathcal{E}_{F_0}}$ is a solution of $\equa$ then, by definition of $G_{E_0F_0}$, there would exist a solution whose value for the variable $x_0$ is $E$. According to $(*)$, we need to prove that, for all $x_{EF} \in V_\equa$, 
    \begin{align*}
        Ax^\infty \vdash G_{EF} = & \sum_{E \step{\alpha} E', F \step{\alpha} F', E' \bisimb F'}\hspace{-8pt}\alpha.G_{E'F'} \\ 
        & + \hspace{-8pt}\sum_{E \steptau E', E' \bisimb F, x_{EF} \neq x_0}\hspace{-8pt}\tau.G_{E'F} + \hspace{-8pt}\sum_{F \steptau F', E \bisimb F', x_{EF} \neq x_0}\hspace{-8pt}\tau.G_{EF'} \\
        = & \sum_{E \step{\alpha} E', F \step{\alpha} F', E' \bisimb F'}\hspace{-8pt}\alpha.E' \\ 
        & + \hspace{-8pt}\sum_{E \steptau E', E' \bisimb F, x_{EF} \neq x_0}\hspace{-8pt}\tau.E' + \hspace{-8pt}\sum_{F \steptau F', E \bisimb F', x_{EF} \neq x_0}\hspace{-8pt}\tau.E \\
        = &~ H_{EF} + \hspace{-8pt}\sum_{x_{EF} \neq x_0, F \steptau F', E \bisimb F'}\hspace{-8pt}\tau.E 
    \end{align*}
    \begin{itemize}
        \item If $x_{EF} \neq x_0$ and $\exists F \steptau F', E \bisimb F'$ then this follows from the definition of $G_{EF}$.
        \item If $x_{EF} \neq x_0$ and $\forall F\steptau F', E \,\not\!\bisimb F'$ then, by definition of $G_{EF}$, we have to prove $Ax^\infty \vdash E = H_{EF}$. Let $(\alpha,E')$ such that $E \step{\alpha} E'$ and $\alpha \in Act$. Since $E \bisimb F$, there exists a path $F \pathtau F_1 \step{\opt{\alpha}} F_2$ such that $E \bisimb F_1$ and $E' \bisimb F_2$. Since $\forall F\steptau F', \neg(E \bisimb F')$, $F = F_1$, so there exists a transition $F \step{\opt{\alpha}} F_2$ such that either $F \step{\alpha} F_2$ and $E' \bisimb F_2$, or $\alpha = \tau$ and $E' \bisimb F$. In either case, $H_{EF} \step{\alpha} E'$. \\
        As a result, $Ax^\infty \vdash E = \hat{E} = \widehat{E + H}_{EF} = \hat{H}_{EF} = H_{EF}$.
        \item If $x_{EF} = x_0$ then $E = E_0$, $F = F_0$ and we have to show that $Ax^\infty \vdash E_0 = H_{E_0F_0} = \sum_{E_0 \step{\alpha} E', F_0 \step{\alpha} F', E' \bisimb F'}\alpha.E'$. Let $(\alpha,E')$ such that $E_0 \step{\alpha} E'$. Since $E_0 \bisimrb F_0$, there exists a transition $F_0 \step{\alpha} F'$ such that $E' \bisimb F'$. $Ax^\infty \vdash E = \hat{E} = \hat{H}_{EF} = H_{EF}$.
    \popQED
    \end{itemize}
Note that we could define $H'_{EF}$ and $G'_{EF}$ by reverting the role of $E$ and $F$ and also get a solution whose value for the variable $x_0$ is $F_0$. Consequently, RSP yields $Ax^\infty \vdash E_0 = F_0$.
\end{proof}

\section{The Canonical Representative} \label{app:canonical rep}

We are going to start by proving some lemmas facilitating the handling of classes.

\begin{lemma} \label{lem:transition class}
    Let $P \in \closed^g$.
    \begin{enumerate}
        \item $\forall \alpha \in A_\tau, ([P] \step{\alpha} R' \Leftrightarrow \exists P \pathtau P_1 \step{\alpha} P_2, (P_1,P_2) \in [P]\times R' \wedge (\alpha \in A \vee [P] \ne R'))$.
        \item ${[P] \nsteptau} \Leftrightarrow \exists P_0 \in [P], P \pathtau P_0 \nsteptau$.
        \item Let $X\subseteq A$. Then $\deadend{[P]}{X} \Leftrightarrow \exists P \pathtau P_0, P_0 \in [P] \wedge \deadend{P_0}{X}$.
        \item If $[P] \step{\rt} R'$ and $\deadend{[P]}{X}$ then $\exists P \pathtau P_1 \step{\rt} P_2$, $P_1 \in [P] \wedge \deadend{P_1}{X} \wedge \theta_X(P_2) \in [\theta_X(\chi(R'))]$.
        \item If $\exists X \mathbin\subseteq A,\, \exists P \pathtau P_1 \step{\rt} P_2, P_1 \in [P] \wedge \deadend{P_1}{X}$ then there exists an $R'$ with $[P] \step{\rt} R' \wedge \theta_X(P_2) \in [\theta_X(\chi(R'))]$.
    \end{enumerate}
\end{lemma}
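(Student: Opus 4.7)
The plan is to prove all five parts together, exploiting the fact that $P \bisimtbrc \chi([P])$, so $P$ and $\chi([P])$ can simulate each other's $\tau$-paths and transitions via Clauses 1.a and 2.d of Definition~\ref{def:intuitive}, together with Lemma~\ref{lem:obvious}. The main subtlety throughout is moving between the pair relation $\bisimtbrc$ and the triple relation $\bisimtbrc[X]$ (via Clause 1.b and Lemma~\ref{lem:obvious}.3) while keeping track of which representative of a class one is working with.

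For Part 1, the $(\Leftarrow)$ direction applies Clause 1.a to $P_1 \bisimtbrc \chi([P])$ with the transition $P_1 \step{\alpha} P_2$, yielding $\chi([P]) \pathtau Q_1 \step{\opt{\alpha}} Q_2$ with $Q_1 \in [P]$ and $Q_2 \in R'$; the disjunction $\alpha \in A \vee [P] \ne R'$ rules out the $Q_1 = Q_2$ trap (otherwise $Q_1 \in [P] \cap R'$ forces $[P] = R'$). The $(\Rightarrow)$ direction is symmetric, exchanging the roles of $P$ and $\chi([P])$. For Part 2, $(\Rightarrow)$ uses strong guardedness to obtain $P \pathtau P_0 \nsteptau$ and then an induction on the length of this $\tau$-path, where Part 1 forbids ever leaving $[P]$ along the way; $(\Leftarrow)$ supposes $[P] \step{\tau} R'$ with $[P] \ne R'$, invokes Part 1 to get a witness $P \pathtau P_1 \step{\tau} P_2$, and then applies Clause 1.a to $P_1 \bisimtbrc P_0$: since $P_0 \nsteptau$, the matching path collapses to $P_0$ itself, forcing $P_2 \in [P]$, a contradiction. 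Part 3 combines Parts 1 and 2 with the visible-action clause applied to $P_0 \bisimtbrc P_1$ (recalling that $P_0 \nsteptau$ suppresses the $\tau$-prefix).

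Parts 4 and 5 are the genuinely non-trivial parts. For Part 4, unfolding $[P] \step{\rt} R'$ produces $\chi([P]) \pathtau P_1' \step{\rt} P_2'$ with $P_1' \nsteptau$ and $P_1', \chi([P]) \in [P]$. By Lemma~\ref{lem:obvious}.2 applied to any $\tau$-stable witness of $\deadend{[P]}{X}$ furnished by Part 3, every $\tau$-stable element of $[P]$ shares the same initial set, so $\deadend{P_1'}{X}$. Now move to the triple via Clause 1.b to obtain $P_1' \bisimtbrc[X] P$, use Lemma~\ref{lem:obvious}.4 to travel $P \pathtau Q_0 \nsteptau$ with $P_1' \bisimtbrc Q_0$, promote back to $P_1' \bisimtbrc[X] Q_0$, and apply Clause 2.d to the transition $P_1' \step{\rt} P_2'$, which (because $Q_0 \nsteptau$) collapses to a single step $Q_0 \step{\rt} Q_2$ with $P_2' \bisimtbrc[X] Q_2$; Proposition~\ref{prop:time-out bisim}.\ref{corr} then yields $\theta_X(Q_2) \in [\theta_X(P_2')] = [\theta_X(\chi(R'))]$. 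Part 5 runs exactly the same machinery with $\chi([P])$ in place of $P$: starting from the given path $P \pathtau P_1 \step{\rt} P_2$, I build a witness $\chi([P]) \pathtau Q_0 \step{\rt} Q_2$ with $Q_0 \nsteptau$ and $\theta_X(P_2) \bisimtbrc \theta_X(Q_2)$, so $R' := [Q_2] = [P_2]$ satisfies $[P] \step{\rt} R'$ by definition, and the final equivalence class equation follows since $\chi(R') \bisimtbrc P_2$.

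The main obstacle I anticipate is the bookkeeping in Parts 4 and 5: one must first upgrade $P_1' \bisimtbrc P$ to $P_1' \bisimtbrc[X] P$, then reach a $\tau$-stable state in $[P]$ from $P$ (or $\chi([P])$) without losing the $X$-bisimilarity, and then fire the $\rt$-step there. Lemma~\ref{lem:obvious}.4 is exactly the tool for this, but applying it in the right direction and at the right moment is where a careful proof must be written.
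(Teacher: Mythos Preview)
Your proposal tracks the paper's proof closely for Parts~1--4, and the strategy for Part~5 is the right one, but the final step of Part~5 contains a genuine error. From $\theta_X(P_2) \bisimtbrc \theta_X(Q_2)$ (equivalently $P_2 \bisimtbrc[X] Q_2$) you \emph{cannot} conclude $[Q_2] = [P_2]$: $X$-bisimilarity does not imply $\bisimtbrc$ in general (think of $X=\emptyset$ and two states with different visible initial actions but no $\tau$-moves). So the equation $R' := [Q_2] = [P_2]$ is unjustified, and with it your claim ``$\chi(R') \bisimtbrc P_2$'' collapses.

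The fix is exactly what the paper does: take $R' := [Q_2]$ without claiming it equals $[P_2]$. The definition of $\step{\rt}$ on classes is satisfied because $\chi([P]) \pathtau Q_0 \step{\rt} Q_2$ with $Q_0 \nsteptau$ and $Q_0 \in [P]$. For the last clause you need $\theta_X(P_2) \in [\theta_X(\chi(R'))]$, which you obtain from $\theta_X(P_2) \bisimtbrc \theta_X(Q_2)$ together with $Q_2 \bisimtbrc \chi(R')$ and the congruence of $\bisimtbrc$ for $\theta_X$ (Proposition~\ref{prop:stability}), giving $\theta_X(Q_2) \bisimtbrc \theta_X(\chi(R'))$. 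The same congruence step is implicitly needed in your Part~4 to pass from $[\theta_X(P_2')]$ to $[\theta_X(\chi(R'))]$; you should make that explicit as well.
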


\begin{proof}
    Let $P \in \closed^g$.
    \begin{enumerate}
        \item Let $\alpha \in A_\tau$.
        \begin{itemize}
            \item If $[P] \step{\alpha} R'$ then, by definition of $\rightarrow$, there exists a path $\chi([P]) \pathtau P_1 \step{\alpha} P_2$ such that $P_1 \in [P]$, $P_2 \in R'$ and $\alpha \in A \vee [P] \ne R'$. Since $\chi([P]) \bisimtbrc P$, there exists a path $P \pathtau P_1' \step{\opt{\alpha}} P_2'$ such that $P_1 \bisimtbrc P_1'$ and $P_2 \bisimtbrc P_2'$, thus, $P_1' \in [P]$ and $P_2' \in R'$. If $\alpha = \tau$ then $[P] \ne R'$, so $P_1' \,\not\!\bisimtbrc P_2'$ and so $P_1' \step{\alpha} P_2$, otherwise, $P_1' \step{\alpha} P_2'$. 
            \item If there exists a path $P \pathtau P_1 \step{\alpha} P_2$ such that $P_1 \in [P]$, $P_2 \in R'$ and $\alpha \in A \vee [P] \ne R'$ then, since $P \bisimtbrc \chi([P])$, there exists a path $\chi([P]) \pathtau P_1' \step{\opt{\alpha}} P_2'$ such that $P_1 \bisimtbrc P_1'$ and $P_2 \bisimtbrc P_2'$, thus, $P'_1 \in [P]$ and $P'_2 \in R'$. If $\alpha = \tau$ then $[P] \ne R'$, therefore, $P_1' \,\not\!\bisimtbrc P_2'$ and so $P_1' \step{\alpha} P_2'$, otherwise, $P_1' \step{\alpha} P_2'$. By definition of $\rightarrow$, $[P] \step{\alpha} R'$.
        \end{itemize}
        \item 
        \begin{itemize}
            \item If $[P] \steptau$ then, according to the previous point, there exists a path $P \pathtau P_1 \steptau P_2$ such that $P_1 \in [P]$ and $P_2 \not\in [P]$. Suppose that there is a path $P \pathtau P_0 \nsteptau$ with $P_0 \in [P]$. Then $P_0 \bisimtbrc P$, so there exists a path $P_0 \pathtau P^\dag \step{\opt{\tau}} P^\ddag$ such that $P_1 \bisimtbrc P^\dag$ and $P_2 \bisimtbrc P^\ddag$. Since $P_0 \nsteptau$, $P_0 = P^\ddag \bisimtbrc P_2 \not\in [P]$, but that's impossible.
            \item Suppose that, for all paths $P \pathtau P_0 \nsteptau$, $P_0 \not\in [P]$. Since $P$ is strongly guarded, there exists a path $P \pathtau P_1$ such that $P_1 \in [P]$ and, for all $P_1 \steptau P'$, $P_1 \,\not\!\bisimtbrc P'$. Since $P \pathtau P_1$ and $P_1 \in [P]$, there exists a transition $P_1 \steptau P_2$, and $P_1 \not\!\bisimtbrc P_2$. Thus, there exists a path $P \pathtau P_1 \steptau P_2$ such that $P_1 \in [P]$ and $P_2 \not\in [P]$. According to the previous point, $[P] \steptau [P_2]$.
        \end{itemize}
        \item This a corollary of the two previous points.
        \item Suppose that $[P] \step{\rt} R'$ and $\deadend{[P]}{X}$, then, by definition of $\rightarrow$, there exists a path $\chi([P]) \pathtau P'_1 \step{\rt} P'_2$ such that $P'_1 \in [P]$, $P'_1 \nsteptau$ and $P'_2 \in R'$. Since $\deadend{[P]}{X}$, there exists a path $P \pathtau P_0$ such that $P_0 \in [P]$ and $\deadend{P_0}{X}$, thus, since $P_0 \bisimtbrc P'_1$ and $P'_1 \nsteptau$, $\deadend{P'_1}{X}$. Since $P'_1 \bisimtbrc P$ and $P'_1 \step{\rt} P'_2$, there exists a path $P \pathtau P_1 \step{\rt} P_2$ such that $P_1 \bisimtbrc P_1'$, $\deadend{P_1}{X}$ and $\theta_X(P_2) \bisimtbrc \theta_X(P_2')$. As a result, there exists a path $P \pathtau P_1 \step{\rt} P_2$ such that $P_1 \in [P]$, $\deadend{P_1}{X}$ and $\theta_X(P_2) \in [\theta_X(P_2')] = [\theta_X(\chi(R'))]$.
         \item Suppose there exists $X \subseteq A$ and a path $P \pathtau P_1 \step{\rt} P_2$ such that $P_1 \in [P]$ and $\deadend{P_1}{X}$. Since $P \bisimtbrc \chi([P])$, there exists a path $\chi([P]) \pathtau P_1' \step{\rt} P_2'$ such that $P_1 \bisimtbrc P_1'$, $\deadend{P'_1}{X}$ and $\theta_X(P_2) \bisimtbrc \theta_X(P'_2)$ and so $P_1' \in [P]$. Therefore, by definition of $\rightarrow$, $[P] \step{\rt} [P_2']$ and $\theta_X(P_2) \in [\theta_X(\chi([P'_2]))]$.
\popQED
    \end{enumerate}
\end{proof}

\begin{corollary}  \label{cor:transition class}
    Let $P \in \closed^g$ and $X \subseteq A$.
    \begin{enumerate}
        \item If $[P] \pathtau R'$ and $\init{R'}\cap (X \cup \{\tau\})=\emptyset$ then $\exists P \pathtau P'\mathbin\in R'$ with $\init{P'}\cap (X \cup \{\tau\})=\emptyset$.
        \item If $\theta_X(P) \in [\theta_X(\chi(R))]$, $R \pathtau R'$ and $\init{R'}\cap (X \cup \{\tau\})=\emptyset$ then $\exists P \pathtau P' \in R'$ with $\init{P'}\cap (X \cup \{\tau\})=\emptyset$.
    \end{enumerate}
\end{corollary}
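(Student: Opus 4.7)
The plan is to prove the two parts in sequence, with part 1 serving as a lifting-and-stabilisation argument on quotient $\tau$-paths and part 2 reducing to it after changing representatives via Proposition \ref{prop:time-out bisim}.

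For part 1, I would first unfold the quotient $\tau$-path $[P] = R_0 \steptau R_1 \steptau \cdots \steptau R_n = R'$ into a concrete $\tau$-path in $\closed^g$ by induction on $n$. Starting from $P_0 := P$, Lemma \ref{lem:transition class}.1 applied to the representative $P_i \in R_i$ produces $P_{i+1} \in R_{i+1}$ with $P_i \pathtau P_{i+1}$, and concatenation yields $P \pathtau P_n \in R'$. Since $\tau \notin \init{R'}$ we have $R' \nsteptau$, so Lemma \ref{lem:transition class}.2 applied to $P_n$ produces $P_n \pathtau P^* \in R'$ with $P^* \nsteptau$. It then remains to check that $\init{P^*} \cap X = \emptyset$: if $P^* \step{a} P'$ for some $a \in X$, then using $P^* \bisimtbrc \chi(R')$ and Clause 1.a of Definition \ref{def:intuitive} (together with $a \neq \tau$), I would supply a path $\chi(R') \pathtau Q_1 \step{a} Q_2$ with $Q_1 \in R'$, upon which Lemma \ref{lem:transition class}.1 forces $a \in \init{R'}$, a contradiction.

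For part 2, the first move is to translate $\theta_X(P) \in [\theta_X(\chi(R))]$ into $P \bisimtbrc[X] \chi(R)$ via Proposition \ref{prop:time-out bisim}.\ref{corr}. I would then apply part 1 with $\chi(R)$ in place of $P$ to obtain a path $\chi(R) \pathtau R^*$ with $R^* \in R'$, $R^* \nsteptau$ and $\init{R^*} \cap X = \emptyset$; in particular $\deadend{R^*}{X}$. Next, I would mirror this $\tau$-path on the $P$-side: using the symmetric form of Clause 2.a of Definition \ref{def:intuitive} iteratively, each step $\chi(R) \steptau R_{i+1}$ along the path is matched by a $\tau$-path from the corresponding $P$-side state while preserving the triple relation, giving $P \pathtau P^*$ with $R^* \bisimtbrc[X] P^*$. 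Finally, Lemma \ref{lem:obvious}.4 applied to $R^* \bisimtbrc[X] P^*$ and $\deadend{R^*}{X}$ furnishes $P^* \pathtau P^{**}$ with $R^* \bisimtbrc P^{**}$, $P^{**} \nsteptau$ and $\init{P^{**}} = \init{R^*}$; concatenating gives $P \pathtau P^{**} \in [R^*] = R'$ with $\init{P^{**}} \cap (X \cup \{\tau\}) = \emptyset$, as required.

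I expect the main obstacle to be the closing step of part 1: ensuring that the stable witness $P^*$ delivered by Lemma \ref{lem:transition class}.2 really has the same initial actions as $R'$. The quotient construction hides elidable $\tau$-transitions, so one must argue through the bisimilarity $P^* \bisimtbrc \chi(R')$ and re-quotient a hypothetical $a$-transition via Lemma \ref{lem:transition class}.1 to obtain a contradiction. The analogous subtlety in part 2 is that the lifted $\tau$-path only delivers $R^* \bisimtbrc[X] P^*$, not membership of $P^*$ in $R'$; upgrading the triple relation to the plain $\bisimtbrc$ needed for class membership is precisely what Lemma \ref{lem:obvious}.4 accomplishes under the hypothesis $\deadend{R^*}{X}$.
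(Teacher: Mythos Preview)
Your proposal is correct and follows essentially the same route as the paper. The only cosmetic differences are that in part~1 the paper invokes Lemma~\ref{lem:transition class}.3 directly (which packages your stabilisation via Lemma~\ref{lem:transition class}.2 together with the $X$-action check), and in part~2 the paper stays on the $\theta_X$-wrapped side using Lemmas~\ref{lem:obvious}.2 and~\ref{lem:obvious}.3 rather than unwrapping to triples and citing Lemma~\ref{lem:obvious}.4; the underlying argument is identical.
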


\begin{proof}
    The first statement follows directly from Lemma~\ref{lem:transition class}.1--3. For the second, suppose $\theta_X(P) \in [\theta_X(\chi(R))]$, $R \pathtau R'$ and $\init{R'}\cap (X \cup \{\tau\})=\emptyset$. By the first statement, $\chi(R)\pathtau Q'\in R'$ for some $Q'$ with $\init{Q'}\cap (X \cup \{\tau\})=\emptyset$.  By the semantics of $\theta_X$, there is a path $\theta_X(\chi(R)) \pathtau \theta_X(Q')\nsteptau$. Since $\theta_X(P) \bisimtbrc \theta_X(\chi(R))$, there is a path $\theta_X(P) \pathtau P^\dag \nsteptau$ with $P^\dag \bisimtbrc \theta_X(Q')$. By the semantics, $P^\dag = \theta_X(P')$ for some $P'$ with $P\pathtau P'\nsteptau$. So $P' \bisimtbrc[X] Q'$ by Proposition~\ref{prop:time-out bisim}.2, and Lemma~\ref{lem:obvious}.3 yields $P' \bisimtbrc Q'$. Thus $P'\in R'$ and Lemma~\ref{lem:obvious}.2 gives $\init{P'}\cap (X \cup \{\tau\})=\emptyset$.
\end{proof}

\begin{remark}\label{tortau}
    Let $R\in[\closed^g]$. If $R \step\rt$ then $R\nsteptau$.
\end{remark}

\begin{proof}
    Suppose $R \step\rt R'$. By the definition in Section~\ref{subsec:canonical}, there is a path $\chi(R)\pathtau P_1 \nsteptau$ with $P_1\in R$. So by Lemma~\ref{lem:transition class}.2 $R\nsteptau$.
\end{proof}

\begin{definition}\rm\label{def:uptoRT}
    A \emph{concrete branching time-out bisimulation up to reflexivity and transitivity} is a symmetric relation ${\B}$ on $\closed^g \uplus [\closed^g] \uplus \{\theta_X([P]) \mathbin| X \mathbin\subseteq A \wedge P\mathbin\in \closed^g\}$, such that, for all $P^\dag\!\B Q$,
    \begin{itemize}\itemsep 0pt \parsep 0pt
        \item if $P^\dag \step{\alpha} P^\ddagger$ with $\alpha\mathbin\in A_\tau$, then $\exists$ path $Q\pathtau Q^\dag \step{\opt{\alpha}} Q^\ddagger$ with $P^\dag \B^* Q^\dag$ and $P^\ddagger \B^* Q^\dagger$,
        \item if $P^\dag \step\rt P^\ddagger$ with $\init{P^\dag}\cap (X\cup\{\tau\})=\emptyset$, then there is a path $Q\pathtau Q^\dag \step{\rt} Q^\ddagger$ with $\init{Q^\dag}\cap (X\cup\{\tau\})=\emptyset$ and $\theta_{X}(P^\ddagger) \B^* \theta_X(Q^\ddagger)$,
        \item if $P^\dag \nsteptau$ then there is a path $Q\pathtau Q^\dag \nsteptau$.
    \end{itemize}
    Here $\B^*  := \{(P^\dag,Q^\dag) \mid \exists n\geq 0. ~ \exists P_0,\dots, P_n.~ P^\dag = P_0 \B P_1 \B \dots \B P_n = Q^\dag\}$.
\end{definition}

\begin{proposition}\label{prop:uptoRT}
    If $P \B Q$ for a concrete branching time-out bisimulation $\B$ up to reflexivity and transitivity, then $P \bisimtbrc Q$.
\end{proposition}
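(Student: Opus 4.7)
The plan is to show that $\B^*$, the reflexive–transitive closure of $\B$, is itself a concrete branching time-out bisimulation (in the sense of Definition~\ref{def:time-out bisim}, adapted to the extended state space $\closed^g \uplus [\closed^g] \uplus \{\theta_X([P]) \mid X \subseteq A, P\in\closed^g\}$). Since $\B$ is symmetric, so is $\B^*$; and since $\B \subseteq \B^*$, the hypothesis $P \B Q$ will yield $P \B^* Q$ and hence $P \bisimtbrc Q$. The substance of the argument lies in verifying the three matching clauses for $\B^*$.

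The first step I would carry out is a \emph{path-lifting lemma}: if $P \B^* Q$ and $P \pathtau P'$, then there exists a path $Q \pathtau Q'$ with $P' \B^* Q'$. The proof proceeds by induction on the length of the chain $P = P^{(0)} \B P^{(1)} \B \dots \B P^{(n)} = Q$ witnessing $P \B^* Q$, and, within each chain step, by induction on the length of the $\tau$-path. The single-step case is discharged by Clause~1 of Definition~\ref{def:uptoRT} instantiated at $\alpha = \tau$, which matches each $P^{(i)} \steptau P^{(i)\prime}$ by a $\pathtau \step{\opt\tau}$ segment in $P^{(i+1)}$ whose endpoints lie in $\B^*$.

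With path-lifting available, I would then prove Clauses~1--3 of Definition~\ref{def:time-out bisim} for $\B^*$ by induction on the chain length. The base case $P = Q$ is trivial. For the inductive step $P \B R \B^* Q$, apply the up-to clause at $P \B R$ to obtain a candidate matching path from $R$, lift its $\pathtau$-prefix through $R \B^* Q$ by the lemma, and attach the trailing $\opt{\alpha}$ (or $\rt$, or stability) step by the induction hypothesis. For the time-out clause, the condition $\init{\cdot}\cap(X\cup\{\tau\}) = \emptyset$ is preserved along each lifted $\tau$-step, because Clause~1 applied with a putative $\alpha \in X\cup\{\tau\}$ would otherwise produce a forbidden initial action on the $Q$-side; the required $\theta_X$-relatedness of the residuals is already delivered by the up-to hypothesis.

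The principal obstacle I anticipate is the interleaving of chain-length and path-length inductions: matching a single visible or time-out action triggers a sub-path match whose intermediate states are related only by $\B^*$ rather than $\B$, which threatens circularity. The standard remedy, which I expect to apply here, is to establish the path-lifting lemma first using only the $\tau$-case of Clause~1 (no visible- or time-out- matching required), and then bootstrap the remaining clauses from it. No additional difficulty arises from the extended state space, because Definition~\ref{def:uptoRT} is formulated uniformly over $\closed^g$, $[\closed^g]$, and the $\theta_X$-wrapped classes, and the relevant transition semantics on classes (Lemma~\ref{lem:transition class} and Corollary~\ref{cor:transition class}) have already been aligned with those on $\closed^g$.
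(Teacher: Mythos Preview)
Your plan is essentially the paper's: show that $\B^*$ itself is a concrete branching time-out bisimulation, by strengthening each clause to a statement about arbitrary prefixes $P_0 \pathtau P^\dag_0 \step{\opt\alpha} P^\ddag_0$ (rather than single steps from $P_0$) and inducting on the chain length $n$ witnessing $P_0 \B^* P_n$. The paper carries out exactly this, treating the three clauses in parallel and then chaining the conclusions via transitivity of $\B^*$.

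You correctly flag the delicate point, but your proposed remedy does not actually close it. The circularity already lives in the $\tau$-only path-lifting lemma, not merely in the visible and time-out cases you hope to bootstrap afterwards. Concretely: from $P^{(i)} \B P^{(i+1)}$ and the first step $P^{(i)} \steptau R$, Clause~1 of Definition~\ref{def:uptoRT} yields $P^{(i+1)} \pathtau T$ with $R \B^* T$, but the chain witnessing $R \B^* T$ has no a-priori bound in terms of $n$. So neither the inner induction on $\tau$-path length (which requires $R \B P^{(i+1)}$, not $R \B^* T$) nor the outer induction on chain length (which requires the new chain to be shorter) can be invoked to continue matching the remaining path $R \pathtau P'$. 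Separating out the $\tau$-case does not break this cycle. The paper's proof is equally terse at precisely this step---it writes ``by Definition~\ref{def:uptoRT}'' to pass from a \emph{path} at $P_0$ to a path at $P_1$, whereas the definition only matches a single step---so the sketch you give is no less complete than the paper's, but neither spells out how the growing chain lengths are controlled.
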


\begin{proof}
    It suffices to show that $\B^*$ is a \tb time-out bisimulation. Clearly this relation is symmetric.
    \begin{itemize}
        \item Suppose $P_0 \B P_1 \B \dots \B P_n$ for some $n\geq 0$ and $P \pathtau P^\dag_0 \step{\opt{\alpha}} P^\ddagger_0$ with $\alpha\in A_\tau$. It suffices to find $P^\dag_n,P^\ddagger_n$ such that $P_n \pathtau P^\dag_n \step{\opt{\alpha}} P^\ddag_n$, $P^\dag_0 \B^* P^\dag_n$ and $P^\ddag_0 \B^* P^\ddag_n$. (In fact, we need this only in the special case where $P_0 = P^\dag_0 \neq P^\ddag_0$, but establish the more general claim.) We proceed with induction on $n$. The case $n=0$ is trivial.

        Fixing an $n>0$, by Definition~\ref{def:uptoRT} there are $P^\dag_1,P^\ddag_1$ such that $P_1 \pathtau P^\dag_1 \step{\opt{\alpha}} P^\ddag_1$, $P^\dag_0 \B^* P^\dag_1$ and $P^\ddag_0 \B^* P^\ddag_1$. Now by induction there are $P^\dag_n,P^\ddagger_n$ such that $P_n \pathtau P^\dag_n \step{\opt{\alpha}} P^\ddag_n$, $P^\dag_1 \B^* P^\dag_n$ and $P^\ddag_1 \B^* P^\ddag_n$. Hence $P^\dag_0 \B^* P^\dag_n$ and $P^\ddag_0 \B^* P^\ddag_n$.
        
        \item Suppose $P_0 \B P_1 \B \dots \B P_n$ for some $n\geq 0$ and there is a path $P_0 \pathtau P^\dag_0 \step\rt P^\ddagger_0$ with $\init{P^\dag_0}\cap(X\cup\{\tau\})=\emptyset$. It suffices to find $P^\dag_n,P^\ddagger_n$ such that $P_n \pathtau P^\dag_n \step\rt P^\ddagger_n$, $\init{P^\dag_n}\cap(X\cup\{\tau\})=\emptyset$ and $\theta_X(P^\ddagger_0) \B^* \theta_X(P^\ddagger_n)$. (In fact, we need this only in the special case where $P^\dag_0 = P_0$, but establish the more general claim.) We proceed with induction on $n$. The case $n=0$ is trivial.

        Fixing an $n>0$, by Definition~\ref{def:uptoRT} there exist $P^\dag_1,P^\ddagger_1$ such that $P_1 \pathtau P^\dag_1 \step\rt P^\ddagger_1$, $\init{P^\dag_1}\cap(X\cup\{\tau\})=\emptyset$ and $\theta_X(P^\ddagger_0) \B^* \theta_X(P^\ddagger_1)$. By induction there are $P^\dag_n,P^\ddagger_n$ with $P_n \mathbin{\pathtau} P^\dag_n \mathbin{\step\rt} P^\ddagger_n$, $\init{P^\dag_n}\!\cap\!(X\!\cup\!\{\tau\})\mathbin=\emptyset$ and $\theta_X(P^\ddagger_1) \B^* \theta_X(P^\ddagger_n)$. Hence $\theta_X(P^\ddagger_0) \B^* \theta_X(P^\ddagger_n)$.
    
        \item Suppose $P_0 \B P_1 \B \dots \B P_n$ for some $n\geq 0$ and there is a path $P_0 \pathtau P^\dag_0\nsteptau$. It suffices to find a path $P_n \pathtau P^\dag_n\nsteptau$. (In fact, we need this only in the special case where $P^\dag_0 = P_0$, but establish the more general claim.) We proceed with induction on $n$. The case $n=0$ is trivial.

        Fixing an $n>0$, by Definition~\ref{def:uptoRT} there exists a path $P_1 \pathtau P^\dag_1\nsteptau$. By induction, there exists a path $P_n \pathtau P^\dag_n\nsteptau$.
        \popQED          
    \end{itemize}
\end{proof}

\begin{lemma} \label{lem:class deadlock theta}
    Let $P \in \closed^g$ and $X \subseteq A$. If $\deadend{[P]}{X}$ then $[P] \bisim \theta_X([P])$.
\end{lemma}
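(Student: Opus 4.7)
The plan is to exhibit a strong bisimulation containing the pair $([P],\theta_X([P]))$. Let
\[
\R := \{([P],\theta_X([P])), (\theta_X([P]),[P])\} \cup \{(S,S) \mid S \in \closed^g \uplus [\closed^g] \uplus \{\theta_Y([Q]) \mid Y \subseteq A, Q \in \closed^g\}\}.
\]
The assertion $[P] \bisim \theta_X([P])$ will follow once $\R$ is shown to be a strong bisimulation (the identity part is clearly a strong bisimulation, so only the new pair needs work).

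First I would describe the outgoing transitions of $[P]$. By the assumption $\deadend{[P]}{X}$, i.e.\ $\init{[P]}\cap(X\cup\{\tau\})=\emptyset$, the class $[P]$ admits neither $\tau$-transitions nor $a$-transitions for $a\in X$. Hence every outgoing transition of $[P]$ has label $\alpha\in (A\setminus X)\cup\{\rt\}$.

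Next I would enumerate the rules from Figure~\ref{fig:ccsp semantics} producing outgoing transitions of $\theta_X^X([P])$. The rule $\theta_X(x)\step{\tau}\theta_X(x')$ if $x\step{\tau}x'$ is vacuous because $[P]\nsteptau$, and the rule $\theta_X(x)\step{a} x'$ if $x\step{a} x'\wedge a\in X$ is vacuous because $\deadend{[P]}{X}$ forbids such $a$-transitions. Only the rule $\theta_X(x)\step{\alpha} x'$ if $x\step{\alpha}x'\wedge \deadend{x}{X}$ can fire, and since the side condition $\deadend{[P]}{X}$ holds by hypothesis, each transition $[P]\step{\alpha} R'$ gives a matching $\theta_X([P])\step{\alpha} R'$ with the very same target $R'$, and conversely every transition of $\theta_X([P])$ arises in this way.

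Thus every outgoing $\alpha$-transition of $[P]$ or $\theta_X([P])$ is matched by the other with exactly the same label and exactly the same target $R'$; the resulting pair $(R',R')$ lies in the identity part of $\R$. This verifies the strong bisimulation conditions for the new pair and completes the proof. The only subtlety is correctly identifying which of the three $\theta$-rules can fire under the assumption $\deadend{[P]}{X}$; once one observes that the hypothesis switches off two of them and unconditionally activates the third, the bisimulation becomes immediate.
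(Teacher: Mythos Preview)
Your proof is correct and takes essentially the same approach as the paper: the paper also defines the relation $\{([P],\theta_X([P])),(\theta_X([P]),[P])\}\cup Id$ and asserts it is a strong bisimulation ``thanks to the semantics of $\theta_X$''. You have simply unpacked that remark by going through the three $\theta_L^U$ rules and observing that the hypothesis $\deadend{[P]}{X}$ disables the first two and unconditionally activates the third, so the transitions of $[P]$ and $\theta_X([P])$ coincide with identical targets.
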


\begin{proof}
    It suffices to see that $\tbisim := \{([P],\theta_X([P])),(\theta_X([P]),[P])\} \cup Id$ is a strong bisimulation thanks to the semantics of $\theta_X$.
\end{proof}

\begin{lemma} \label{lem:theta class}
    Let $P \in \closed^g$ and $X \subseteq A$. Then $\theta_X([P]) \bisimtbrc[] [\theta_X(P)]$.
\end{lemma}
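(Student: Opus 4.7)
The plan is to chain Proposition~\ref{prop:class} with the congruence of $\bisimtbrc$ for the environment operator $\theta_L^U$ (Proposition~\ref{prop:stability}, specialising $L = U = X$). Proposition~\ref{prop:class} applied to $P \in \closed^g$ yields $P \bisimtbrc [P]$, and congruence lifts this to $\theta_X(P) \bisimtbrc \theta_X([P])$. Since $\theta_X(P)$ is itself a strongly guarded $\ccsp$ process, a second application of Proposition~\ref{prop:class} yields $\theta_X(P) \bisimtbrc [\theta_X(P)]$. Transitivity of $\bisimtbrc$ (Proposition~\ref{prop:equivalence}) then chains these into $\theta_X([P]) \bisimtbrc \theta_X(P) \bisimtbrc [\theta_X(P)]$, as required.

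The only subtlety, and the expected main obstacle, is that Proposition~\ref{prop:stability} is stated for processes in $\closed$, whereas $[P]$ and $\theta_X([P])$ live in the extended state space $\closed^g \uplus [\closed^g] \uplus \{\theta_Y([Q]) \mid Y\subseteq A,\, Q\in\closed^g\}$. However, the proof of Proposition~\ref{prop:stability} is structural: it defines a base relation, closes it under the operators, and verifies the \tb time-out bisimulation clauses by induction on the derivation of transitions, relying only on the SOS rules of Figure~\ref{fig:ccsp semantics} and on the algebraic identities of Lemma~\ref{lem:strong identities}. Because the outgoing transitions of $\theta_X([P])$ are generated by precisely those SOS rules with $[P]$ in the argument position, and because the identities of Lemma~\ref{lem:strong identities} depend only on the specification of $\theta$, the congruence argument carries over verbatim to the extended state space.

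Should a self-contained direct proof be preferred, the alternative is to exhibit the symmetric closure of $\tbisim := \{(\theta_X([P]),\,[\theta_X(P)]) \mid P\in\closed^g,\, X\subseteq A\}$ as a \tb time-out bisimulation up to $\bisim$ on the extended state space, and invoke Proposition~\ref{prop:up to}. The $A_\tau$-clause is dispatched by unfolding a transition $\theta_X([P]) \step{\alpha} Q'$ through the $\theta_L^U$-rules into a transition of $[P]$, invoking Lemma~\ref{lem:transition class} and Corollary~\ref{cor:transition class} to translate it back into a path of $P$, re-wrapping with $\theta_X$, and then lifting to a path of $[\theta_X(P)]$ via the class-transition definition. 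The hard case is the time-out clause: Lemma~\ref{lem:transition class}.4 converts $\theta_X([P]) \step{\rt} R'$ into a path $P \pathtau P_1 \step{\rt} P_2$ with $P_1 \in [P]$, $\deadend{P_1}{X}$ and $\theta_X(P_2) \in [\theta_X(\chi(R'))]$, from which the idling-rule of $\theta_L^U$ delivers $\theta_X(P) \pathtau \theta_X(P_1) \step{\rt} P_2$ and hence a matching class-transition from $[\theta_X(P)]$; Lemma~\ref{lem:class deadlock theta} is then used to close the remaining gap between the unwrapped target $R'$ and its $\theta_X$-wrapped counterpart up to $\bisim$, which the up-to-$\bisim$ closure of $\tbisim$ absorbs.
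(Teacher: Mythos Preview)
Your main approach is circular: the proof of Proposition~\ref{prop:class} in Appendix~\ref{app:canonical rep} explicitly invokes Lemma~\ref{lem:theta class} (in the second bullet of case~2, to conclude $\theta_X(P_2) \tbisim [\theta_X(P_2)] = [\theta_X(\chi(R_2))] \bisimtbrc \theta_X(R_2)$). In the paper's dependency order, Lemma~\ref{lem:theta class} comes \emph{before} Proposition~\ref{prop:class} and is a prerequisite for it, so you cannot appeal to $P \bisimtbrc [P]$ here. The congruence-plus-transitivity chain you describe is therefore not available at this point.

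Your alternative sketch is on the right track and is close to what the paper does, but the paper uses a different up-to technique: it exhibits ${\bisim} \cup \{(\theta_X([P]),[\theta_X(P)]),([\theta_X(P)],\theta_X([P])) \mid P\in\closed^g,\, X\subseteq A\}$ as a \tb time-out bisimulation \emph{up to reflexivity and transitivity} (Definition~\ref{def:uptoRT}, Proposition~\ref{prop:uptoRT}), not up to~$\bisim$. The reason is that the $\rt$-case in the direction $[\theta_X(P)]\step\rt R'$ produces a target match only via a two-step chain $\theta_Y(R') \tbisim [\theta_Y(\chi(R'))] = [\theta_Y(\chi(R''))] \tbisim \theta_Y(R'')$, which up-to-$\bisim$ does not absorb. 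Also note that the $\rt$-clause quantifies over an \emph{arbitrary} idling set~$Y$, not over the fixed~$X$ baked into $\theta_X$; your sketch conflates the two. The paper's move in the direction $\theta_X([P])\step\rt R'$ is to observe that $\deadend{[P]}{X}$ forces $[P]=[\theta_X(P)]$ and then to invoke Lemma~\ref{lem:class deadlock theta} to get $\theta_X([P]) \bisim [P]$ outright, sidestepping the need to chase the target through Lemma~\ref{lem:transition class}.4.
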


\begin{proof}
    We will show that ${\tbisim} := {{\bisim} \cup \!\{(\theta_X([P]),[\theta_X(\!P)]), ([\theta_X(\!P)],\theta_X([P])) \!\mid\! P \mathbin\in \closed^g\! \wedge X \mathbin\subseteq A\}}$ is a concrete branching time-out bisimulation up to reflexivity and transitivity.
     \begin{itemize}
        \item If $\theta_X([P]) \steptau R'$ then $[P] \steptau R^\dag$ with $R' = \theta_X(R^\dag)$. According to Lemma \ref{lem:transition class}.1, there exists a path $P \pathtau P_1 \steptau P_2$ such that $P_1 \in [P]$ and $P_2 \in R^\dag = [P_2]$ and $[P] \ne R^\dag$. Thus, there exists a path $\theta_X(P) \pathtau \theta_X(P_1) \steptau \theta_X(P_2)$ such that $\theta_X(P_1) \in [\theta_X(P)]$. If $\theta_X(P_1) \bisimtbrc \theta_X(P_2)$ then $[\theta_X(P)] = [\theta_X(P_2)]$. Otherwise, $[\theta_X(P)] \ne [\theta_X(P_2)]$, thus, according to Lemma \ref{lem:transition class}.1, $[\theta_X(P)] \steptau [\theta_X(P_2)]$. In either case, there exists a transition $[\theta_X(P)] \step{\opt{\tau}} [\theta_X(P_2)]$ such that, by definition of $\tbisim$, $[\theta_X(P_2)] \tbisim \theta_X([P_2]) = R'$.
        \item If $[\theta_X(P)] \steptau R'$ then, according to Lemma \ref{lem:transition class}.1, there exists a path $\theta_X(P) \pathtau P_1 \steptau P_2$ such that $P_1 \in [\theta_X(P)]$, $P_2 \in R'$ and $[\theta_X(P)] \ne R'$. Thus, there exists a path $P \pathtau P^\dag \steptau P^\ddag$ such that $P_1 = \theta_X(P^\dag)$ and $P_2 = \theta_X(P^\ddag)$. Notice that, since $[P_1] \ne [P_2]$, $[P^\dag] \ne [P^\ddag]$. According to Lemma \ref{lem:transition class}.1, there exists a path $[P] \pathtau [P^\dag] \steptau [P^\ddag]$. Thus, $\theta_X([P]) \pathtau \theta_X([P^\dag]) \steptau \theta_X([P^\ddag])$. Moreover, by definition of $\tbisim$, $\theta_X([P^\dag]) \tbisim [\theta_X(P^\dag)] = [\theta_X(P)]$ and $\theta_X([P^\ddag]) \tbisim [\theta_X(P^\ddag)] = R'$.
        \item If $\theta_X([P]) \step{a} R'$ with $a \in A$ then $[P] \step{a} R'$ and $a \in X \vee \deadend{[P]}{X}$. If $\deadend{[P]}{X}$, according to Lemma \ref{lem:transition class}.3, there exists a path $P \pathtau P_0$ such that $P_0 \in [P]$ and $\deadend{P_0}{X}$. Otherwise, set $P_0 := P$. Since $[P_0] \step{a} R'$, according to Lemma \ref{lem:transition class}.1, there exists a path $P_0 \pathtau P_1 \step{a} P_2$ such that $P_1 \in [P_0]$ and $P_2 \in R'$. Notice that $\deadend{[P]}{X} \Rightarrow \deadend{P_0}{X} \wedge P_0 = P_1$. Thus, there exists a path $\theta_X(P_0) \pathtau \theta_X(P_1) \step{a} P_2$ such that $\theta_X(P_1) \in [\theta_X(P)]$. According to Lemma \ref{lem:transition class}.1, there exists a transition $[\theta_X(P)] \step{a} R'$.
        \item If $[\theta_X(P)] \step{a} R'$ with $a \in A$ then, according to Lemma \ref{lem:transition class}.1, there exists a path $\theta_X(P) \pathtau P_1 \step{a} P_2$ such that $P_1 \in [\theta_X(P)]$ and $P_2 \in R'$. Thus, there exists a path $P \pathtau P^\dag \step{a} P_2$ such that $P_1 = \theta_X(P^\dag)$ and $a \in X \vee \deadend{P^\dag}{X}$. According to Lemma \ref{lem:transition class}.1, there exists a path $[P] \pathtau [P^\dag] \step{a} [P_2]$. Thus, $\theta_X([P]) \pathtau \theta_X([P^\dag]) \step{a} [P_2]$ since $\deadend{P^\dag}{X} \Rightarrow \deadend{[P^\dag]}{X}$ by Lemma \ref{lem:transition class}.3. Moreover, by definition of $\tbisim$, $\theta_X([P^\dag]) \tbisim [\theta_X(P^\dag)] = [\theta_X(P)]$ and $[P_2] = R' \tbisim^* R'$.
        \item If $\deadend{\theta_X([P])}{Y}$ and $\theta_X([P]) \step{\rt} R'$ then $\deadend{[P]}{X}$; thus, according to Lemma \ref{lem:transition class}.3, there exists a path $P \pathtau P_0$ such that $P_0 \in [P]$ and $\deadend{P_0}{X}$. Since $\deadend{P_0}{X}$, $P \bisimtbrc P_0 \bisim \theta_X(P_0) \bisimtbrc \theta_X(P)$ and so $[P] = [\theta_X(P)]$. Since $\deadend{[P]}{X}$, according to Lemma \ref{lem:class deadlock theta}, $\theta_X([P]) \bisim [P] = [\theta_X(P)]$.
        \item If $\deadend{[\theta_X(P)]}{Y}$ and $[\theta_X(P)] \step{\rt} R'$ then, according to Lemma \ref{lem:transition class}.4, there exists a path $\theta_X(P) \pathtau \theta_X(P_1) \step{\rt} P_2$ such that $\theta_X(P_1) \in [\theta_X(P)]$, $\deadend{\theta_X(P_1)}{Y}$ and $\theta_Y(P_2) \in [\theta_Y(\chi(R'))]$. Since $\theta_X(P_1) \step{\rt} P_2$, $P_1 \step{\rt} P_2$ and $\deadend{P_1}{X}$, thus, $\theta_X(P) \bisimtbrc \theta_X(P_1) \bisim P_1$. Therefore, there exists a path $P \pathtau P_1 \step{\rt} P_2$ such that $\deadend{P_1}{Y}$. According to Lemma \ref{lem:transition class}.1 and~\ref{lem:transition class}.5, there exists a path $[P] \pathtau [P_1] \step{\rt} R''$ for some $R''\in[\closed^g]$ with $\theta_Y(P_2) \in [\theta_Y(\chi(R''))]$. Thus, $\theta_X([P]) \pathtau \theta_X([P_1]) \step{\rt} R''$ since $\deadend{[P_1]}{X}$. Moreover, $\deadend{\theta_X([P_1])}{Y}$ since $\deadend{P_1}{X \cup Y}$ and $\theta_Y(R') \tbisim [\theta_Y(\chi(R'))] = [\theta_Y(P_2)] = [\theta_Y(\chi(R''))] \tbisim \theta_Y(R'')$.
        \item If $\theta_X([P]) \nsteptau$ then $[P] \nsteptau$. According to Lemma \ref{lem:transition class}.2, there exists a path $P \pathtau P_0 \nsteptau$ such that $P_0 \in [P]$. Thus, there exists a path $\theta_X(P) \pathtau \theta_X(P_0) \nsteptau$ such that $\theta_X(P_0) \in [\theta_X(P)]$. According to Lemma \ref{lem:transition class}.2, $[\theta_X(P)] \nsteptau$.
        \item If $[\theta_X(P)] \nsteptau$ then, according to Lemma \ref{lem:transition class}.2, there exists a path $\theta_X(P) \pathtau P_0 \nsteptau$ such that $P_0 \in [\theta_X(P)]$. Thus, there exists a path $P \pathtau P^\dag \nsteptau$ such that $P_0 = \theta_X(P^\dag)$. According to Lemma \ref{lem:transition class}.1--2, there exists a path $[P] \pathtau [P^\dag]\nsteptau$. Thus, $\theta_X([P]) \pathtau \theta_X([P^\dag]) \nsteptau$. 
\popQED          
    \end{itemize}
\end{proof}

\begin{proof}[Proof of Proposition \ref{prop:class}]
    We are going to show that $\tbisim := \{(P,[P]),([P],P) \mid P \in \closed^g\}$ is a \tb time-out bisimulation up to $\bisimtbrc$ (see Definition~\ref{def:up to b}).
    \begin{enumerate}
        \item 
        \begin{itemize}
            \item Let $P \pathtau P' \step{\alpha} P''$ with $\alpha \in A_\tau$ and $P \bisimtbrc P'$. If $\alpha \in A \vee P \,\not\!\bisimtbrc P''$ then, according to Lemma \ref{lem:transition class}.1, $[P] \step{\alpha} [P'']$ and, by definition of $\tbisim$, $P' \tbisim [P'] = [P]$ and $P'' \tbisim [P'']$. Otherwise, $\alpha = \tau \wedge P \bisimtbrc P''$ thus, by definition of $\tbisim$, $P'' \tbisim [P''] = [P]$. In either case, there exists a path $[P] \step{\opt{\alpha}} [P'']$ such that $P' \tbisim [P]$ and $[P''] \tbisim P''$.
            \item If $[P] \pathtau R' \step{\alpha} R''$ with $\alpha \in A_\tau$ and $[P] \bisimtbrc R'$ then, according to Lemma \ref{lem:transition class}.1, $P \pathtau P_1 \step{\alpha} P_2$ such that $P_1 \in R'$ and $P_2 \in R''$. Thus, by definition of $\tbisim$, $P_1 \tbisim [P_1] = R'$ and $P_2 \tbisim [P_2] = R''$. 
        \end{itemize}
        \item
        \begin{itemize}
            \item Let $P \pathtau P_1 \step{\rt} P_2$ with $P \bisimtbrc P_1 \wedge \deadend{P_1}{X}$. By Lemma \ref{lem:transition class}.5 there exists a transition $[P] \mathbin{\step{\rt}} R'$ such that $\theta_X(P_2) \bisimtbrc \theta_X(\chi(R')) \tbisim [\theta_X(\chi(R'))] \bisimtbrc \theta_X(R')$. Moreover, by Lemma~\ref{lem:transition class}.3, $\deadend{[P]}{X}$ since $\deadend{P_1}{X}$ and $P_1 \in [P]$.
            \item Let $[P] \pathtau R_1 \step{\rt} R_2$ with $[P] \bisimtbrc R_1 \wedge \deadend{R_1}{X}$. Then, according to Lemma \ref{lem:transition class}.4 and Corollary~\ref{cor:transition class}, there exists a path $P \pathtau P_1 \step{\rt} P_2$ with $\deadend{P_1}{X}$ and $\theta_X(P_2) \in [\theta_X(\chi(R_2))]$. Thus, applying Lemma~\ref{lem:theta class}, $\theta_X(P_2) \tbisim [\theta_X(P_2)] = [\theta_X(\chi(R_2))] \bisimtbrc \theta_X(R_2)$.
        \end{itemize}
        \item
        \begin{itemize}
            \item If $P \pathtau P_0 \nsteptau$ with $P \bisimtbrc P_0$ then, according to Lemma \ref{lem:transition class}.2, $[P] \nsteptau$.
            \item If $[P] \pathtau R' \nsteptau$ with $[P] \bisimtbrc R'$ then, according to Lemma \ref{lem:transition class}.1--2, there exists a path $P \pathtau P' \pathtau P_0 \nsteptau$ such that $P',P_0 \in R'$.
            \popQED
        \end{itemize}
    \end{enumerate}
\end{proof}

\section{Completeness Proof by Canonical Representatives} \label{app:canonical}

\begin{lemma} \label{lem:simplication}
    Let $P,Q \in \closed^g$.
    \begin{itemize}
        \item $[P] \bisimtbrc[] [Q] \Rightarrow [P] = [Q]$.
        \item $\theta_X([P]) \bisimtbrc \theta_X([Q]) \Rightarrow \theta_X([P]) \bisimb \theta_X([Q])$.
    \end{itemize}
\end{lemma}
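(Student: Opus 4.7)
The first statement is immediate. From Proposition~\ref{prop:class} we have $P \bisimtbrc [P]$ and $Q \bisimtbrc [Q]$, so combining these with the hypothesis $[P] \bisimtbrc [Q]$ via the transitivity of $\bisimtbrc$ (Proposition~\ref{prop:equivalence}) yields $P \bisimtbrc Q$, whence $[P] = [Q]$ by the very definition of $\bisimtbrc$-equivalence class.

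For the second statement my approach is to construct a stability respecting branching bisimulation $\R$ containing the pair $(\theta_X([P]), \theta_X([Q]))$. The natural candidate consists of (i) all pairs $(\theta_Y(S), \theta_Y(T))$ of $\theta_Y$-wrapped canonical classes $S,T$ with $\theta_Y(S) \bisimtbrc \theta_Y(T)$, together with (ii) the diagonal on canonical classes. The $\tau$- and $a$-matching cases are routine: $\tau$-steps preserve the $\theta_Y$-wrapped form, and $a$-transition targets are canonical classes whose matching partners must coincide with them by Part~1 applied to the strong matching given by Clause~1.a of $\bisimtbrc$. The essential structural observation for the $\rt$-case is that every $\rt$-transition out of a $\theta_Y$-wrapped class requires the inner class to satisfy $\deadend{\cdot}{Y}$ (hence $\tau$-stability) and, by Remark~\ref{tortau}, every $\rt$-transition out of a canonical class is also from a $\tau$-stable class; thus Clause~2.d of $\bisimtbrc$, applied to the triple obtained from Clause~1.b, always produces a matching $\rt$-path.

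The main obstacle is the target pair of this $\rt$-match: Clause~2.d only yields canonical classes $S', T^\ddag$ with $S' \bisimtbrc[Y] T^\ddag$, equivalently $\theta_Y(S') \bisimtbrc \theta_Y(T^\ddag)$ by Proposition~\ref{prop:time-out bisim}.\ref{corr}. This places the pair back into (i) only after re-wrapping by $\theta_Y$, so Part~1 cannot be invoked directly to collapse the unwrapped pair into the diagonal (ii). I would handle this by enlarging $\R$ with all class-class pairs $(S', T')$ for which $\theta_{Y'}(S') \bisimtbrc \theta_{Y'}(T')$ for some $Y'$, and then verifying that the $\bisimb$-matching from such an enlarged pair still goes through by choosing $Y'$ sufficiently generous---in particular $Y' \supseteq \{a\}$ when an $a$-action must be matched, so that Clause~1.a of $\bisimtbrc$ applied to the $\theta_{Y'}$-wrapped pair produces a matching path whose target, again a class-class pair, is forced diagonal by Part~1. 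Carrying out this closure step coherently, without circular reasoning and without losing the invariant that each pair added to $\R$ is witnessed by some $\bisimtbrc$-related wrapping, is the delicate technical point of the proof.
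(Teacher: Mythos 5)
Your first item is correct and is exactly the paper's argument: $P \bisimtbrc [P] \bisimtbrc [Q] \bisimtbrc Q$, hence $[P]=[Q]$.

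For the second item you have the right overall shape---exhibit a stability respecting branching bisimulation consisting of $\theta$-wrapped pairs plus a diagonal---and you have correctly located the crux: after $\theta_Y(S) \step{\rt} S'$ the target $S'$ is a bare class, while Clause 2.d only returns a target related to $S'$ under $\bisimtbrc[Y]$ rather than $\bisimtbrc$, so Part 1 cannot force it into the diagonal. But your proposed repair fails. Enlarging $\R$ with all class pairs $(S',T')$ such that $\theta_{Y'}(S') \bisimtbrc \theta_{Y'}(T')$ for \emph{some} $Y'$ admits pairs that are not branching bisimilar: for instance $\theta_{\{a\}}([a.0+b.0]) \bisimtbrc \theta_{\{a\}}([a.0])$ (in an environment allowing only $a$ the $b$-branch is unreachable), yet $[a.0+b.0]$ has a $b$-transition that $[a.0]$ cannot match, so the enlarged relation violates Clause 1 of Definition~\ref{def:non-reactive}. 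Moreover, ``choosing $Y'$ sufficiently generous when an $a$-action must be matched'' is not a legitimate move: the relation must be fixed before its matching obligations are checked, and every pair you add incurs obligations for \emph{all} of its transitions, not only those inside a favourable $Y'$.

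The idea you are missing is that the $\rt$-case never needs Clause 2.d. Since $\theta_X(S) \step{\rt} S'$ forces $\deadend{S}{X}$, Clause 2.c applied to $S \bisimtbrc[X] T$ yields a path $T \pathtau R$ with $S \bisimtbrc R$, whence $R=S$ by Part 1. Therefore $\theta_X(T) \pathtau \theta_X(S) \step{\rt} S'$: the matching transition is \emph{literally the same transition}, its intermediate state is related to $\theta_X(S)$ and its target to $S'$ by the diagonal, and no closure under re-wrapping is needed. This is the paper's proof. The same collapse-to-identity argument is also what handles an $a$-transition enabled only because $\deadend{S}{X}$ (with $a\notin X$), a subcase your ``routine'' claim for the visible actions glosses over.
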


\begin{proof}
    ~
    \begin{itemize}
        \item If $[P] \bisimtbrc[] [Q]$ then, by Proposition~\ref{prop:class}, $P \bisimtbrc[] [P] \bisimtbrc[] [Q] \bisimtbrc Q$. Thus, $[P] = [Q]$.
        \item We are going to show that ${\tbisim} := {\it Id} \cup \{(\theta_X([P]),\theta_X([Q])) \mid \theta_X([P]) \bisimtbrc \theta_X([Q])\}$ is a stability respecting branching bisimulation. Suppose $\theta_X([P]) \bisimtbrc \theta_X([Q])$. If $\theta_X([P]) \step{\alpha} R'$ with $\alpha \in A_\tau$ then the first clause of Definition \ref{def:intuitive} suffices. If $\theta_X([P]) \step\rt R'$ then $\deadend{[P]}{X}$ and $[P]\step\rt R'$. As $[P] \bisimtbrc[X] [Q]$, by Clause 2.c of Definition~\ref{def:intuitive} there is a path $[Q] \pathtau R$ for some $R\in[\closed^g]$ with $[P] \bisimtbrc R$. By the previous statement of this lemma, $R=[P]$. Thus $\theta_X([Q])\pathtau \theta_X([P]) \step\rt R'$, which suffices to satisfy the first clause of  Definition~\ref{def:non-reactive}. If $\theta_X([P]) \nsteptau$ then the stability-respecting clause of Definition \ref{def:intuitive} suffices.
\popQED
    \end{itemize}
\end{proof}

\begin{proof}[Proof of Proposition \ref{prop:canonical}]
    Let $\equa'$ be a recursive specification such that $V_{\equa'} := \{y_{P'} \mid P' \in \reach{P}\}$ and, for all $P' \in \reach{P}$, $\equa_{y_{P'}} := \sum_{\{(\alpha,P'') \mid P' \step{\alpha} P''\}}\alpha.y_{P''}$. Note that $\equa'$ is strongly guarded since $P$ is. We are going to show that $P$ and $\langle x_P|\equa\rangle$ are both $y_P$-components of solutions of $\equa'$, so that the proposition follows by RSP.

    First of all, consider $\rho: V_{\equa'} \rightarrow \closed$ such that $\forall P' \in \reach{P},\; \rho(y_{P'}) := P'$. For all $P' \in \reach{P}$, $Ax^\infty_r \vdash \rho(y_{P'}) = P' = \sum_{\{(\alpha,P'') \mid P' \step{\alpha} P''\}}\alpha.P'' = \sum_{\{(\alpha,P'') \mid P' \step{\alpha} P''\}}\alpha.\rho(y_{P''})$ is a direct application of Lemma \ref{lem:head-normal form}. Thus, for all $P' \in \reach{P}$, $Ax^\infty_r \vdash \rho(y_{P'}) = \equa'_{y_{P'}}[\rho]$, i.e., $\rho$ is a solution of $\equa'$ up to $\bisimtbrc$\,, and $\rho(y_P) = P$.

    Next, consider $\nu: V_{\equa'} \rightarrow \closed$ such that, for all $P' \in \reach{P}$, 
    \begin{align*}
        \nu(y_{P'}) := \sum_{\{(\alpha,P'') \mid P' \step{\alpha} P''\}}\alpha.\langle x_{[P'']}\mid\equa\rangle
    \end{align*}
    We are going to show that, for all $\alpha \in Act$ and all $P' \in \reach{P}$, $Ax^\infty_r \vdash \alpha.\nu(y_{P'}) = \alpha.\langle x_{[P']}\mid\equa\rangle$. Let $P' \in \reach{P}$.
    \begin{itemize}
        \item If $\exists P' \steptau P'',\; P' \bisimtbrc P''$ then $\{(\alpha,[P'']) \mid P' \step{\alpha} P'' \wedge (\alpha \in A \vee (\alpha=\tau \wedge P' \,\not\!\bisimtbrc P''))\} \subseteq \{(\alpha,R) \mid [P'] \step{\alpha} R\}$. Thus,
        \begin{align*}
            Ax^\infty_r \vdash \alpha.\nu(y_{P'}) & = \alpha.(\sum_{\{(\alpha,P'') \mid P' \step{\alpha} P'' \wedge \alpha \ne \rt\}}\alpha.\langle x_{[P'']}\mid\equa\rangle) \qquad\qquad(\hyperlink{Lt}{\mbox{\bf L}\tau})\\
            & = \alpha.(\sum_{\{(\alpha,P'') \mid P' \step{\alpha} P'' \wedge (\alpha \in A \vee (\alpha=\tau \wedge P \,\,\not\!\scriptrbis{c}{\!br}\, P'))\}}\hspace{-50pt}\alpha.\langle x_{[P'']}\mid\equa\rangle + \tau.\langle x_{[P']}\mid\equa\rangle) \\
            & = \alpha.\langle x_{[P']}\mid\equa\rangle \qquad\qquad\qquad\qquad\mbox{(branching axiom and RDP)}
        \end{align*}
        \item If $\forall P \steptau P',\, P \,\not\!\bisimtbrc P'$ then, for all $\alpha \in A_\tau$, $P \step{\alpha} P' \wedge P' \in R' \iff [P] \step{\alpha} R'$ and $\init{P} = \init{[P]}$. Moreover, if $\deadend{P}{X}$ and $[P] \step{\rt} R'$ then there exists a transition $P \step{\rt} P'$ with $\theta_X(P') \in [\theta_X(\chi(R'))]$. Thus $\theta_X([P']) \bisimtbrc \theta_X(R')$ so $\theta_X([P']) \bisimb \theta_X(R')$ by Lemma~\ref{lem:simplication}, and therefore $Ax^\infty_r \vdash \rt.\theta_X([P']) = \rt.\theta_X(R')$. Conversely, if $\deadend{P}{X}$ and $P \step{\rt} P'$ then there exists a transition $[P] \step{\rt} R'$ such that $\theta_X(P') \in [\theta_X(\chi(R'))]$ and thus $Ax^\infty_r \vdash  \rt.\theta_X([P']) =  \rt.\theta_X(R')$. Using the reactive approximation axiom, $Ax^\infty_r \vdash \nu(y_{P'}) = \langle x_{[P']} \mid\equa\rangle$ and so, for all $\alpha \in Act$, $Ax^\infty_r \vdash \alpha.\nu(y_{P'}) = \alpha.\langle x_{[P']} \mid\equa\rangle$.
    \end{itemize}
    As a result, for all $P' \in \reach{P}$, $ Ax^\infty_r \vdash \nu(y_{P'}) = \sum_{\{(\alpha,P'') \mid P' \step{\alpha} P''\}}\alpha.\langle x_{[P'']}|\equa\rangle = \sum_{\{(\alpha,P'') \mid P' \step{\alpha} P''\}}\alpha.\nu(y_{P''}) = \equa_{y_{P'}}[\nu]$, so $\nu$ is a solution of $\equa'$ up to $\bisimtbrc$\,. Moreover, $\nu(y_P) = \sum_{\{(\alpha,P') \mid P \step{\alpha} P'\}}\alpha.\langle x_{[P']}|\equa\rangle$ which can be equated to $\langle x_P | \equa\rangle$ by a single application of RDP.
\end{proof}

\begin{proof}[Proof of Theorem \ref{thm:canonical}]
    According to Proposition \ref{prop:canonical}, it suffices to establish that $Ax^\infty_r \vdash \langle x_P|\equa\rangle = \langle x_Q|\equa\rangle$. By applying RDP, this amounts to proving that
    \begin{align*}
        Ax^\infty_r \vdash \sum_{\{(\alpha,P') \mid P\step{\alpha}P'\}}\alpha.\langle x_{[P']}\mid\equa\rangle = \sum_{\{(\alpha,Q') \mid Q\step{\alpha}Q'\}}\alpha.\langle x_{[Q']}\mid\equa\rangle
    \end{align*}
    Let $(\alpha,P')$ such that $P \step{\alpha} P'$ and $\alpha \in A_\tau$. Since $P \bisimrtbrc Q$, there exists a transition $Q \step{\alpha} Q'$ such that $P' \bisimtbrc Q'$. Thus, $[P'] = [Q']$ and so $\langle x_{[P']}\mid\equa\rangle = \langle x_{[Q']}\mid\equa\rangle$. The same observation can be made for all $(\alpha,Q')$ such that $Q \step{\alpha} Q'$ and $\alpha \in A_\tau$. As a result, $\init{P} = \init{Q}$ and
    \begin{align*}
        Ax^\infty_r \vdash \sum_{\{(\alpha,P') \mid P\step{\alpha}P' \wedge \alpha \in A_\tau\}}\alpha.\langle x_{[P']}\mid\equa\rangle = \sum_{\{(\alpha,Q') \mid Q\step{\alpha}Q' \wedge \alpha \in A_\tau\}}\alpha.\langle x_{[Q']}\mid\equa\rangle
    \end{align*}
    Let $(\rt,P')$ be such that $P \step{\rt} P'$. Since $P \bisimrtbrc Q$, for all $X \subseteq A$ such that $\deadend{P}{X}$, there exists a transition $Q \step{\rt} Q'$ such that $\theta_X(P') \bisimtbrc \theta_X(Q')$. Thus, \hyperlink{recall}{recalling that $\langle x_R |\equa\rangle \bisim R$ for all $R$}, $\theta_X(\langle x_{[P']}\mid\equa\rangle) \bisimtbrc \theta_X(\langle x_{[Q']}\mid\equa\rangle)$, so, by Lemma \ref{lem:simplication}, $\theta_X(\langle x_{[P']}\mid\equa\rangle)\linebreak[2] \bisimb \theta_X(\langle x_{[Q']}\mid\equa\rangle)$ and hence $\rt.\theta_X(\langle x_{[P']}\mid\equa\rangle) \bisimrb \rt.\theta_X(\langle x_{[Q']}\mid\equa\rangle)$. Since $Ax^\infty$ is a subset of $Ax^\infty_r$, according to Theorem \ref{thm:completeness}, $Ax^\infty_r \vdash \rt.\theta_X(\langle x_{[P']}\mid\equa\rangle) = \rt.\theta_X(\langle x_{[Q']}\mid\equa\rangle)$. The same observation can be made for all $(\rt,Q')$ such that $Q \step{\rt} Q'$. Let $X \subseteq A$. If $P \step{\alpha}$ with $\alpha \in X \cup\{\tau\}$ then 
    \begin{align*}
        Ax^\infty_r \vdash \psi_X(\langle x_{P}|\equa\rangle) & = \sum_{\{(\alpha,P') \mid P\step{\alpha}P' \wedge \alpha \in A_\tau\}}\alpha.\langle x_{[P']}\mid\equa\rangle \\
        & = \sum_{\{(\alpha,Q') \mid Q\step{\alpha}Q' \wedge \alpha \in A_\tau\}}\alpha.\langle x_{[Q']}\mid\equa\rangle \\
        & = \psi_X(\langle x_{Q}\mid\equa\rangle)
    \end{align*}
    Otherwise, $\deadend{P}{X}$ so $\deadend{Q}{X}$, thus,
    \begin{align*}
        Ax^\infty_r \vdash \psi_X(\langle x_{P}|\equa\rangle) & = \sum_{\{(\alpha,P') \mid P\step{\alpha}P' \wedge \alpha \in A_\tau\}}\!\!\alpha.\langle x_{[P']}\mid\equa\rangle + \sum_{\{(t,P') \mid P\step{\rt}P'\}}\rt.\theta_X(\langle x_{[P']}\mid\equa\rangle) \\
        & = \sum_{\{(\alpha,Q') \mid Q\step{\alpha}Q' \wedge \alpha \in A_\tau\}}\!\!\alpha.\langle x_{[Q']}\mid\equa\rangle + \sum_{\{(t,Q') \mid Q\step{\rt}Q'\}}\rt.\theta_X(\langle x_{[Q']}\mid\equa\rangle) \\
        & = \psi_X(\langle x_{Q}\mid\equa\rangle)
    \end{align*}
    Using the reactive approximation axiom, $Ax^\infty_r \vdash \langle x_{P}|\equa\rangle = \langle x_{Q}|\equa\rangle$.
\end{proof}

\end{document}